\DeclareMathAlphabet{\mathsf} {T1}{\sfdefault}{m}{n} 
\SetMathAlphabet{\mathsf}{bold}{T1}{\sfdefault}{b}{n} 
\newcommand{\acronym}[1]{\texorpdfstring{\textsmaller{#1}}{#1}}
\ppspace\printlist[pageref][-\value{listtotal}]{pageref}}}
\numberwithin{equation}{section}
\newcommand{\D}{\mathrm{d}} 
\newcommand{\DD}{\mathrm{D}} 
\newcommand{\I}{\mathrm{i}} 
\DeclareMathOperator{\E}{e} 
\DeclareMathOperator{\Or}{O} 
\DeclareMathOperator{\tr}{tr} 
\newif\ifvectorsbold
	\let\vect\boldsymbol 
	\let\ivect\vect
	\newcommand{\hatvect}[1]{\vect{\hat{#1}}}
	\let\vect\vec
	\let\ivect\relax
	\let\hatvect\hat
\def\electron{%
    \fill[shading=ball,ball color=gray] (0,0) circle (.3);
}
\def\orbit(#1,#2,#3){%
  \draw[thick,
  rotate=#1,
  postaction=decorate,
  decoration={markings,pre=moveto,pre length=#3,
  mark=at position {#2} with {\electron},
}]
  (0,0) ellipse (1 and 4);
}
\newcommand{\drawatom}[1]{%
\begin{tikzpicture}
\clip(-8,-12) rectangle (8,4);
\begin{scope}[shift={(0,-11.5*(#1/100)*(#1/100))}]
\fill[shading=ball,ball color=black] (0,0) circle (.4);
\tikzmath { int \pos; \pos=mod(#1,100); }
\orbit(-38,-\pos/100,0cm)
\orbit(38,\pos/100,0cm)
\orbit(270,\pos/100,0cm)
\end{scope}
\end{tikzpicture}
}
\title{Post-Newtonian Description\\ of Quantum Systems\\ in Gravitational Fields}
\author{M.\,A.\,St. Philip Klaus Schwartz}
\newif\ifgenehmigt
\begin{document}

\frontmatter

\input{titlepage}

\chapter{Abstract}
\enlargethispage{3.5\baselineskip}

This thesis deals with the systematic treatment of 
quantum-mechanical systems situated in post-Newtonian 
gravitational fields. At first, we develop a framework of 
geometric background structures that define the notions of a 
post-Newtonian expansion and of weak gravitational fields. 
Next, we consider the description of single quantum 
particles under gravity, before continuing with a simple composite 
system. Starting from clearly spelled-out assumptions, 
our systematic approach allows to properly \emph{derive} 
the post-Newtonian coupling of quantum-mechanical systems 
to gravity based on first principles. This sets it 
apart from other, more heuristic approaches that are 
commonly employed, for example, in the description of 
quantum-optical experiments under gravitational influence.

Regarding single particles, we compare simple canonical 
quantisation of a free particle in curved spacetime to 
formal expansions of the minimally coupled Klein--Gordon 
equation, which may be motivated from the framework of 
quantum field theory in curved spacetimes. Specifically, 
we develop a general \acronym{WKB}-like post-Newtonian 
expansion of the Klein--Gordon equation to arbitrary order 
in the inverse of the velocity of light. Furthermore, for 
stationary spacetimes, we show that the Hamiltonians 
arising from expansions of the Klein--Gordon equation 
and from canonical quantisation agree up to linear order 
in particle momentum, independent of any expansion in the 
inverse of the velocity of light.

Concerning the topic of composite systems, we perform 
a fully detailed systematic derivation of the first order 
post-Newtonian quantum Hamiltonian describing the dynamics 
of an electromagnetically bound two-particle system which 
is situated in external electromagnetic and gravitational 
fields. This calculation is based on previous work by 
Sonnleitner and Barnett, which we significantly extend by 
the inclusion of a weak gravitational field as described 
by the Eddington--Robertson parametrised post-Newtonian 
metric.

In the last, independent part of the thesis, 
we prove two uniqueness results characterising the 
Newton--Wigner position observable for Poincaré-invariant 
\emph{classical} Hamiltonian systems: one is a direct 
classical analogue of the well-known quantum Newton--Wigner 
theorem, and the other clarifies the geometric interpretation 
of the Newton--Wigner position as ‘centre of spin’, as 
proposed by Fleming in 1965.

\begin{center}
	\emph{Keywords:} quantum systems under gravity, 
	post-Newtonian expansion, \\ post-Newtonian gravity, weak 
	gravity
\end{center}

\selectlanguage{ngerman}
\chapter*{Zusammenfassung}
\enlargethispage{3.5\baselineskip}

Diese Arbeit beschäftigt sich mit der systematischen 
Beschreibung quantenmechanischer Systeme in post-Newton'schen 
Gravitationsfeldern. Zunächst entwickeln wir geometrische 
Hintergrundstrukturen, welche die Konzepte einer 
post-Newton'schen Entwicklung und schwacher 
Gravitationsfelder zu definieren ermöglichen. Anschließend 
beschäftigen wir uns mit der Beschreibung einzelner 
Quantenteilchen unter Gravitation und wenden uns 
schließlich einem einfachen zusammengesetzten System 
zu. Unsere von klar formulierten Annahmen ausgehende 
systematische Vorgehensweise er\-möglicht es, die 
post-Newton'sche Kopplung quantenmechanischer Systeme an 
Gravitation im eigentlichen Sinne \emph{herzuleiten}. 
Dies unterscheidet sie von \mbox{anderen}, \mbox{heuristischeren} 
Herangehensweisen, wie sie beispielsweise oft zur 
Beschreibung quanten\-optischer Experimente unter Gravitation 
benutzt werden.

Für einzelne Teilchen vergleichen wir die einfache 
kanonische Quantisierung freier Teilchen in gekrümmten 
Raumzeiten mit formalen Entwicklungen der minimal 
gekoppelten Klein-Gordon-Gleichung, welche 
quantenfeldtheoretisch motiviert werden können. Konkret 
entwickeln wir eine allgemeine \acronym{WKB}-artige 
post-Newton'sche Entwicklung der Klein-Gordon-Gleichung 
zu beliebiger Ordnung im Inversen der Lichtgeschwindig\-keit. 
Ferner zeigen wir für stationäre Raumzeiten, dass die 
Hamilton-Operatoren, welche aus Entwicklungen der 
Klein-Gordon-Gleichung bzw. mit kanonischer Quantisierung 
hergeleitet werden, zu linearer Ordnung im Teilchenimpuls 
übereinstimmen, unabhängig von jeglicher Entwicklung im 
Inversen der Lichtgeschwindigkeit.

Wir leiten den in erster Ordnung post-Newton'schen 
Hamiltonoperator vollständig her, der die Dynamik eines 
elektromagnetisch gebundenen Zwei-Teilchen-Systems 
beschreibt, das sich in sowohl einem externen 
elektromagnetischen als auch einem Gravitationsfeld befindet. 
Diese Rechnung basiert auf einer Arbeit von Sonnleitner und 
Barnett, die wir durch die Einbeziehung der Gravitation 
maßgeblich erweitern.

Im letzten, unabhängigen Teil der Arbeit beweisen wir zwei 
Eindeutigkeitsresultate über die Newton-Wigner-Ortsobservable 
für Poincaré-invariante \emph{klassische} Hamilton'sche 
Systeme. Eines ist ein direktes klassisches Analogon des 
quantenmechanischen Newton-Wigner-Satzes; das andere gibt 
eine klare Charakterisierung der geometrischen Interpretation 
des Newton-Wigner-Orts als \glqq Spin-Zentrum\grqq, die 1965 
von Fleming vorgeschlagen wurde.

\vspace{-.07em}
\begin{center}
	\emph{Schlagworte:} Quantensysteme unter Gravitation, 
	post-Newton'sche Entwicklung, \\ post-Newton'sche 
	Gravitation, schwache Gravitation
\end{center}

\selectlanguage{latin}
\chapter*{Summarium\footnote{translatus de Philippo Sandero}}
\enlargethispage{3.5\baselineskip}

hoc opus est de descriptione sastematica systematium 
mechanicorum quanticorum in campis gravitalibus post 
Newtonum. primo recessas structuras geometricas elaborabimus, 
quibus consilia expansionis post Newtonum parvisque campis 
gravitalibus definiri potest. deinde descriptioni singulorum 
particulorum quanticorum sub gravitatione studebimus atque 
ultimo in systemate composito facili versabimur. ab 
praesumptionibus clare conceptis systematice procedenti 
post Newtonum copulationem systematium mechanicorum 
quanticorum ad gravitationem proprie \emph{dedicare} 
poterimus. qui modus procedendi ab aliis, heuristicis, 
velut ad experimenta optica quantica describendum utuntur, 
differt.

quod attinet ad singula particula, quantificationem 
canonicam facilem particulorum nullas vires 
experientum in spatiotemporibus curvatis cum 
expansionibus formalibus equationis Kleini Gordonique 
minime copulatae, quae ex ratione quanticorum camporum 
motivari possunt, comparabimus. proprie ad dicendum post 
Newtonum expansionem generalem \acronym{WKB}-bilem 
equitationis Kleini Gordonique ad quamlibet ordinem in 
inverso velocitatis lucis faciemus. quod praeterea 
attinet ad spatiotempora stationaria, operatores 
Hamiltoni de expansionibus equationis Kleini 
Gordonique aut cum quantificatione canonica dedicatos 
in ordine lineali inter se impetu particulorum consentire 
demonstrabimus. quod non obnoxium cuicumque expansioni 
in inverso velocitatis lucis est.

quod attinet ad systemata coniuncta, 
operatorem Hamiltoni in prima ordine post Newtonum radicite 
dedicemus, qui dynamiken systematis ex duobus particulis 
electromagnetice coniuncti describit, quod et in campo 
electromagnetico et in campo gravitale est. quae ratio 
in opere Sonnleitneri Barnettique posita est, quod 
gravitationem comprehendendo augebimus multo.

ultima in parte absoluta huius operis duos exitus 
perspicuitatis de loci quantitate Newtoni 
Wignerique, quod attinet ad systemata Hamiltoni 
classica invarianta secundum Poincareum, demonstrabimus. 
alius est analogon classicum directum mechanici quantici 
theorematis Newtoni Wignerique; alio locus Newtoni 
Wignerique pro \enquote{medio impetus rotationis interni} 
geometrice interpretatur, ut Flemingus proposit 
anno \acronym{MMDCCXVIII} ab urbe condita.

\begin{center}
	\vspace{-.30\baselineskip}
	\emph{proposita:} systemata quantica in gravitatione, 
	expansio post Newtonum, \\ gravitatio post Newtonum, 
	gravitatio parva
\end{center}

\selectlanguage{british}

\clearpage


\chapter{Acknowledgements}
\enlargethispage{-2\baselineskip}

First and foremost, I wish to thank my supervisor Domenico 
Giulini. Apart from guiding me through the research that 
lead up to this thesis and always being available for smaller 
as well as bigger questions of mine, Nico has deeply 
influenced my whole way of thinking about physics and mathematics. 
I am very grateful to have had a supervisor with such a 
sharp and conceptually clear personal way of understanding and 
explaining physical ideas.

Furthermore, I am grateful to Klemens Hammerer, Claus Kiefer, 
and Elmar Schrohe for their willingness to act as referees 
for my thesis and/or being part of the examination commission. 
I wish to thank Klemens as well for bringing to my attention 
the article \cite{sonnleitner18}, on which a significant part 
of my work in this thesis is built.

For taming all the probable and improbable uncertainties of 
university bureaucracy at several stages, even at very short 
notice by my side, I want to thank Birgit Ohlendorf.

My research was funded by the \acronym{DFG} through the 
Collaborative Research Centre 1227 DQ-mat, projects B08 and A05.

I want to thank my good friends Anke, Flo, and Rökki for their 
friendship, and for keeping me sane in phases of high workload.
Special thanks go to my friends and colleagues from our 
institute, who made the last few years into a very enjoyable 
time for me: Johannes, with whom I shared an office for two 
years, for many interesting discussions about 
physics and beyond; Timo, for introducing me to the best peanut 
sauce of Hannover; Schiden, for enduring many long explanations 
of mine about random interesting mathematical facts, and sign 
errors; as well as Deniz, Lennart, Yannic, Daniel, Konstantin, 
Ramona, Michi, Thomas, Oscar, Colin, and many others I probably 
forgot. Apart from being nice company, all of these people have 
also helped me a lot by discussing topics of my research.

For proofreading, I am indebted to Benjamin Haake, Florian 
Kranhold, Lennart Janshen, Michael Jung, and Michael Werner. 
All remaining typographical errors and stylistic flaws are, 
of course, my own fault. For the idea of providing a Latin 
abstract, and for translating it, I thank Philipp Sander.

I cannot adequately express in words my deep gratitude to my 
parents. They have loved and supported me unconditionally 
for the whole of my life, enabling me to pursue my interests 
and passions even at their own shortcoming. The same goes 
for my quasi-godmother Nana. My scientific 
interest was also greatly nurtured by my grandma. Although 
she sadly passed away quite some time ago, I know she 
would be proud of me for writing a doctoral thesis on, as she 
put it, `quantum physics, because I don't understand it'.

Last but not least, I thank Kaie for making me laugh and 
making me cry, for enduring me in bad mood as well as in 
good, for providing support and encouragement; that is, for 
always, unconditionally, being there for me and loving me.

\setcounter{tocdepth}{1}
\tableofcontents

\mainmatter


\chapter{Introduction}

Imagine we are given a quantum-mechanical system whose 
time evolution in the absence of gravity is known in 
terms of the ordinary time-dependent Schrödinger equation. 
In other words: we know the system's Hamiltonian if all 
gravitational interactions are neglected. We now ask: which 
principles do we use in order to deduce the system's 
interaction with a given external gravitational field? 
Note that by `gravitational field' we understand all the 
ten independent components $g_{\mu\nu}$ of the spacetime 
metric, subject to Einstein's field equations of general 
relativity --~or, more generally, to the equations of some 
other metric theory of gravity~-- and not just the scalar 
component $\phi$ representing the Newtonian potential.

In fact, for Newtonian gravity there is no problem at all 
in describing its coupling to ordinary quantum mechanics: 
we may simply include a background Newtonian gravitational 
potential $\phi$ into the Schrödinger equation describing 
a `non-relativistic'\footnote
	{As a matter of principle, we try to avoid the common 
	but misleading adjective `non-relativistic' to distinguish 
	Galilei-invariant dynamical laws from `relativistic' ones, 
	by which one then means those obeying Poincaré invariance. 
	It is not the validity of the relativity principle that 
	distinguishes both cases, but rather the way which that 
	principle is implemented in. Nevertheless, since we cannot 
	entirely escape traditionally established nomenclature, we 
	will occasionally use the term `non-relativistic' in the 
	sense just explained and think of it as always being put 
	between inverted commas.}
particle of mass $m$ and zero spin, giving
\begin{equation}
	\I\hbar\partial_t \psi = \left(-\frac{\hbar^2}{2m}\Delta + m \phi\right) \psi .
\end{equation}
This equation has extensively been tested in the gravitational 
field of the earth, beginning with neutron interferometry 
in the classic Colella--Overhauser--Werner experiment 
\cite{COW75} and leading up to atom interferometers of the 
Kasevich--Chu type, accomplishing, e.g., highly precise 
measurements of the gravitational acceleration $g$ on the 
earth \cite{farah14}. We ask what kind of `post-Newtonian 
corrections' to this equation arise from general relativity 
or other metric theories of gravity, considering additional 
terms involving the Newtonian potential $\phi$ as well as 
new terms involving all metric components.

The behaviour of quantum systems in general gravitational 
fields is naturally of fundamental conceptual theoretic 
interest. However, it is also of immediate practical 
importance, relating to recent experimental developments 
in quantum optics and matter-wave interferometry: 
these have now reached a degree of precision that covers 
`relativistic corrections' which were hitherto not 
considered in such settings. In particular, this includes 
couplings between `internal' and `centre of mass' degrees 
of freedom of composite systems without a Newtonian 
analogue~-- as for example induced by post-Newtonian 
gravitational fields. The most famous example for a 
possible implication of such couplings is probably the 
controversially discussed topic of gravitationally induced 
quantum dephasing \cite{zych11,pikovski15,bonder15,pang16}. 
Other experimentally inclined topics for which the 
gravity--quantum matter coupling beyond Newton is relevant 
include, for example, atom interferometric gravitational 
wave detection \cite{gao18}, quantum tests of the classical 
equivalence principle \cite{schlippert14}, or proposals 
of quantum formulations of the equivalence principle 
\cite{zych18} and tests thereof \cite{rosi17}.
\enlargethispage{\baselineskip}

Clearly, such experiments require proper `relativistic' 
treatments for their theoretical descriptions, that may 
be trusted as describing the situation in a correct way. 
However, the descriptions one finds in the literature are 
often restricted to the more or less \emph{ad hoc} addition 
of `relativistic effects' known from classical physics, 
such as velocity-dependent masses, second-order Doppler 
shifts, or redshifted energies and time dilations due to 
relative velocities and/or gravitational potentials; see, 
e.g., \cite{dimopoulos08,zych11,pikovski15,roura18,
giese19,loriani19,zych19}. Such approaches are 
conceptually dangerous for a number of reasons: they 
neither guarantee completeness and independence of the 
`relativistic effects', nor do they need to apply in 
non-classical situations where quantum properties dominate 
the dynamics. Namely, as is common in atom interferometry, 
these treatments make use in an essential way of 
semi-classical notions like `wordline' and `redshift', 
which have no immediate meaning in quantum theory unless 
the state of the system is severely restricted in an 
\emph{a priori} fashion: the overall pure state of the 
system has to be assumed to separate into the tensor 
product of a pure state for the centre of mass degrees 
of freedom with a pure state for the relative degrees of 
freedom; and furthermore, the state for the centre of 
mass has to be of semiclassical nature, so as to determine 
a worldline for which the notion of proper time can be 
defined.\footnote
	{We recall that the path integral in ordinary quantum 
	mechanics generally receives contributions from continuous 
	but nowhere differentiable paths. Only in very special 
	situations is the dominant contribution given by the action 
	along a smooth classical path, such that one may define an 
	arc length, i.e.\ a proper time.}
It may well be that these \emph{a priori} restrictions 
can be justified in specific applications within quantum 
optics and atom interferometry. However, we wish to promote 
the view that the theoretical problem of describing the 
coupling between quantum-mechanical systems and post-Newtonian 
gravity should be solved \emph{independently} of such 
restrictions, in a systematic and well-defined way. Such a 
proper systematic \emph{derivation} of the coupling will 
also make sure that all relevant `relativistic corrections' 
to the Newtonian description are present, and that none is 
included multiple times.

In answering the question of what such a systematic 
coupling procedure could look like, we have to 
address a conceptual difficulty that does not arise 
for classical systems. Namely, for classical matter 
obeying Poincaré-invariant dynamical laws, there is a 
systematic, almost algorithmic procedure one can employ 
in order to couple it to metric theories of gravity: the 
usual `minimal coupling scheme'. We recall that, in a 
nutshell, this scheme consists in a two-step process 
\cite{misner73}: first, write down the matter's dynamical 
law in a Poincaré-invariant fashion in Minkowski spacetime; 
second, replace the flat Minkowski metric $\eta$ by the 
potentially curved Lorentzian metric $g$ of spacetime, 
and the partial derivatives with respect to the affine 
inertial coordinates of Minkowski spacetime (i.e.\ 
the covariant derivatives with respect to $\eta$) by 
Levi--Civita covariant derivatives with respect to 
$g$. This gives a unique way of coupling classical 
matter fields to metric theories of gravity, up to the 
well-known issue of curvature ambiguities (arising from 
the non-commutativity of covariant derivatives in the 
curved case) and the possibility of non-minimal coupling 
(i.e.\ explicit coupling to the curvature tensor).

The minimal coupling scheme is rooted in Einstein's 
equivalence principle, whose essence is that `gravity' 
can be fully encoded in the metric geometry of spacetime, 
which is \emph{common to all matter components}. We 
stress that this is the important point, encoding the 
\emph{universality} of gravitational interaction: any 
matter component, be it some elementary particle with 
or without mass, spin, electric charge, or other features, 
or be it a macroscopic body, like a football or a planet, 
will couple to gravity in a way that only depends on one 
and the same geometry of spacetime; compare \cite{thorne73} 
and \cite{will93}. Note that this does in no way imply 
that all bodies `fall' in the same way: for a realistic 
body, which is spinning and/or possesses mass multipoles 
of higher order than the single monopole of an idealised 
`test particle', any approximate `central worldline' will 
depend on the characteristics of the body and deviate from 
that of a test particle (i.e.\ a geodesic). However, as 
long as all these deviations find their explanations in 
couplings to the spacetime geometry, no violation of the 
equivalence principle should be concluded. This remark also 
applies in connection with attempts to formulate the 
equivalence principle in quantum mechanics: simple quantum 
translations of some notion of `universality of free fall' 
--~as the Newtonian one proposed in \cite{zych18}~-- should 
not be seen as capturing any core statement of the 
equivalence principle; to the contrary, they even bear the 
danger of falsely concluding violations. Furthermore, such 
formulations depend on notions of `worldlines', and thus 
are based on \emph{a priori} assumptions concerning the 
state of the matter. We are convinced that any possible 
generally valid implementation of the equivalence principle 
into quantum mechanics should not make such assumptions. 
An extensive discussion of these important conceptual 
issues may be found in our article \cite{schwartz19:AiG}.

We now return to the more concrete question of systematic 
coupling procedures of quantum mechanics to gravitational 
fields. The above-mentioned conceptual problem which we 
face here is that the minimal coupling scheme simply 
cannot be applied in that case: ordinary quantum 
mechanics is Galilei-invariant, and so even the first step 
of the minimal coupling procedure cannot be implemented. 
As is well-known, enforcing Poincaré symmetry upon quantum 
mechanics eventually leads to the framework of 
Poincaré-invariant quantum field theory, often called 
`Relativistic Quantum Field Theory' (\acronym{RQFT}), 
whose mathematical structure and physical interpretation 
is far more complex than that of ordinary `non-relativistic' 
quantum mechanics. In particular, \acronym{RQFT} does not 
have a form similar to a usual, classical field theory on 
Minkowski spacetime~-- i.e.\ also \acronym{RQFT} cannot be 
coupled to metric gravity by a direct application of the 
minimal coupling scheme. Instead, the framework of quantum 
field theory in curved spacetimes (\acronym{QFTCS}) applies 
minimal coupling at the \emph{classical} level, and 
then employs methods to quantise the minimally coupled 
classical field theories \cite{baer09,wald94}.

So we are lead to accept the fact that it does not seem 
to be possible to couple an `already quantised' theory to 
gravity, and thus to turn to \acronym{QFTCS} as the best 
available solution for the systematic description of gravity--quantum 
matter coupling. Does that mean we would have to employ the 
whole machinery of \acronym{QFTCS} in order to just answer 
simple questions concerning matter--gravity interactions 
that go beyond the simplest couplings to the Newtonian 
potential? We think that the answer is no, at least as 
long as we are merely interested in leading order 
`relativistic corrections' below the threshold of 
quantum-field-theoretic pair production, and as long 
as the spacetime geometry is at least approximately 
stationary, such that there is a consistent field-theoretic 
concept of particles. At the same time, we think that the 
alternative to full \acronym{QFTCS} should not consist of 
\emph{ad hoc} procedures guided by more or less well 
founded `physical intuition'. Rather we should look for 
general and systematic methods that allow to derive 
the full coupling, and arguably qualify as a proper 
post-Newtonian approximation. This thesis aims to provide 
a positive contribution to this end.

\section{Plan of this thesis}

In chapter \ref{chap:geometric_structures}, we will set up 
the conceptual framework for our systematic post-Newtonian 
expansions in the following chapters: we introduce a set of 
geometric background structures that enable us to define 
the notions of weak gravitational fields and post-Newtonian 
expansions.

Based on this framework, chapter \ref{chap:PNSE} will deal 
with the systematic description of \emph{single} quantum 
particles under gravity. We introduce a simple method of 
canonical quantisation of a free particle in a post-Newtonian 
spacetime, and aim to compare its results to methods 
which are more firmly rooted in first principles. 
Therefore, motivated from \acronym{QFTCS}, we develop two 
different kinds of formal post-Newtonian expansions of the 
minimally coupled Klein--Gordon equation, and compare their 
results to those from the canonical quantisation method. 
This will lead to the conclusion that at the lowest 
relevant post-Newtonian orders, simple canonical 
quantisation may safely be employed.

Chapter \ref{chap:atom_in_gravity} will continue the 
investigation with the study of a simple \emph{composite} 
quantum system in post-Newtonian gravity. We consider a 
simple `atomic' system consisting of two electromagnetically 
bound bosonic particles, situated in an external 
electromagnetic field as well as an external gravitational 
field described by the Eddington--Robertson parametrised 
post-Newtonian metric. We give a fully detailed systematic 
derivation of the first order post-Newtonian quantum 
Hamiltonian describing the dynamics of the atomic system in 
this situation.

The last proper chapter \ref{chap:Newton--Wigner} is entirely 
independent of the rest of the thesis: it is concerned with 
the investigation of the special-relativistic localisation 
problem for classical (i.e.\ non-quantum) systems, in 
particular with characterisations of the Newton--Wigner 
position observable for such systems. Even though this 
topic is almost completely disconnected from the description 
of quantum systems in post-Newtonian gravity, it arose 
in a natural way from the investigations in chapter 
\ref{chap:atom_in_gravity}. For this reason, and due to the 
particular conceptional and mathematical beauty I (the 
author) see in the results obtained in this chapter, I decided 
to include it into this thesis.

We end with a few concluding remarks in chapter \ref{chap:conclusion}.

\section*{Publication list}

This thesis is based on the following articles, as 
indicated in the beginning of the chapters:

\begin{itemize}[align=left]
	\item[\cite{schwartz19:PNSE}] \fullcite{schwartz19:PNSE}
	\item[\cite{schwartz19:AiG}] \fullcite{schwartz19:AiG}
	\item[\cite{Schwartz.Giulini:2020}] \fullcite{Schwartz.Giulini:2020}
\end{itemize}


\chapter{Geometric structures for post-Newtonian expansions}
\label{chap:geometric_structures}

In an arbitrary general-relativistic\footnote
	{Or described by any other metric theory of gravity.}
spacetime, the concept of a `post-Newtonian \mbox{expansion}' 
does not exist \emph{per se}: to make sense of it, we 
need to introduce certain background structures that give 
meaning to notions like `weak gravitational fields' and 
`slow velocities' of objects in the spacetime. This 
chapter will be devoted to the introduction of such 
structures and the description of our post-Newtonian 
expansion framework, which will be used in the subsequent 
chapters. We also use this chapter to introduce some 
further notations and conventions.

This chapter is partly based on the sections introducing 
the corresponding concepts in \cite{schwartz19:AiG}, and 
also incorporates material from \cite{schwartz19:PNSE}.

\section{General conventions}

We use the `mostly plus' $(-{++}+)$ signature convention 
for the spacetime metric and stick, as indicated, to four 
dimensions. However, in many places our work has a 
straightforward generalisation to higher dimensions. 
The velocity of light will be denoted by $c$, and 
\emph{not} set equal to $1$.

When talking about Minkowski spacetime, we will view it 
as an affine space or, even more often, as an abstract 
differentiable manifold endowed with a Lorentzian metric, and 
not identify it with a vector space, unless otherwise 
stated.\enlargethispage*{2\baselineskip}

\section{Background structures}
\label{sec:background_structures}

As soon as gravity is geometrised in a metric sense, it 
does not make sense to speak of the `absence' of 
gravitational fields\footnote
	{This is not necessarily true in all geometric theories 
	of gravitation. For example, in teleparallel gravity 
	theories, inertial and gravitational effects can be 
	naturally separated \cite{pereira14}.},
and therefore also not of their `weakness' -- this can 
only be spoken of with respect to some background metric 
to compare the physical metric to. This background 
metric then defines the concept of `absence' of gravity.

In order to perform a Newtonian limit and to analyse the 
behaviour of physical systems and theories near this 
Newtonian limit -- that is, to perform a post-Newtonian 
expansion -- we also need some means of decomposing 
spacetime into `space' and `time'. The general idea is 
that such a decomposition of the background spacetime 
can be accomplished by considering a `time evolution' 
vector field, i.e.\ a vector field that is, with respect 
to the background metric, timelike, of constant Lorentzian 
length, and hypersurface orthogonal. We can then consider 
the integral curves of this vector field as `time', and 
the leaves of the orthogonal distribution as `space'.

Since we want the geodesic structure of the background 
spacetime, and its de\-composition into space and time, 
to be compatible with Newtonian concepts, we will take 
as the background spacetime four-dimensional Minkowski 
spacetime $(M,\eta)$ and as `time evolution' vector field 
a timelike geodesic vector field $u$ on $(M,\eta)$. Here 
$\eta$ denotes the Minkowski metric. For reasons of 
physical dimensionality, we assume $u$ to have Minkowski 
square $\eta(u,u) = -c^2$. We also fix, once and for all, 
an orientation and a time orientation on Minkowski 
spacetime, and assume $u$ to be future-directed. Sometimes, 
we will interpret $u$ as the four-velocity vector 
field of a family of inertial observers in background 
Minkowski spacetime.

That the gravitational field be weak now means that the 
physical spacetime metric on $M$, which we denote by $g$, 
deviate only little from the background Minkowski metric 
$\eta$. The notion of `deviating only little' 
will be made more precise in the following 
section. As described above, we now use $\eta$ and the 
preferred timelike vector field $u$ to decompose spacetime 
into time (integral curves of $u$) and space (hyperplanes 
$\eta$\babelhyphen{nobreak}orthogonal to $u$). We endow `space' 
with a flat Riemannian metric $\delta$, the restriction 
of $\eta$ to the hyperplanes, such that it just becomes 
ordinary flat Euclidean space. Interpreted as a tensor on 
four-dimensional spacetime (which annihilates the time 
direction $u$ and may therefore also be viewed as a purely 
`spatial' object), $\delta$ can be expressed in geometric, 
coordinate-free language as
\begin{equation}
	\delta := \eta + c^{-2} \, u^\flat \otimes u^\flat \; ,
\end{equation}
where $u^\flat := \eta(u, \cdot)$ denotes the one-form 
corresponding to $u$ via the metric. The time evolution 
vector field $u$ also allows us to define a notion 
of small / `slow' velocities~-- namely spatial velocities, 
as seen from an observer moving along $u$, being small 
compared to $c$.

We are free to use the `flat' structure of spacetime and 
space introduced by the background structures to perform 
all our computations. However, once results are 
established, we have to keep in mind that physical 
distances and times are measured with the physical metric 
$g$, not the auxiliary metric $\eta$. We will see that in 
some cases it is precisely such a re-interpretation in 
terms of the physical metric that lends the results good 
physical meaning.

For later use, we introduce the `physical spatial metric' 
$^{(3)}g$, which is the restriction of the physical 
spacetime metric $g$ to three-dimensional `space', i.e.\ to 
the orthogonal complement of the preferred vector field $u$. 
The inverse of this physical spatial metric will be denoted 
by $^{(3)}g^{-1}$.

Let us stress here that all the structures introduced 
and all the conditions of `\mbox{weakness}' and `slowness' 
mentioned are entirely independent of coordinates that 
we may choose. That is not to 
say that there may not be preferred coordinates which 
are particularly \mbox{adapted} to the given background 
structure. Indeed, such adapted coordinates \mbox{obviously} 
exist, namely positively oriented inertial coordinates 
$(x^0,x^1,x^2,x^3)$ in Minkowski spacetime $(M,\eta)$, 
with respect to some arbitrarily chosen origin, 
such that $x^0 = ct$, $u = \partial/\partial t$, and 
$\eta = \eta_{\mu\nu} \, \D x^\mu \otimes \D x^\nu$ with 
$(\eta_{\mu\nu}) = \mathrm{diag}(-1,1,1,1)$. Unless 
otherwise stated, we will always work in such coordinates 
adapted to the background structures when dealing with 
post-Newtonian expansions.

\section{Further geometric notation and conventions}
\label{sec:geometric_notation_conventions}

In our calculations, vectors and tensors will be 
represented by their components with respect to the 
chosen coordinate system $(x^\mu) = (ct, x^a)$. We let 
Greek indices run from $0$ to $3$ and Latin indices from 
$1$ to $3$, and we shall use the Einstein summation 
convention for like indices at different levels (one 
up- and one downstairs). Indices are lowered and raised 
by the physical spacetime metric $g_{\mu\nu}$ and its 
inverse $g^{\mu\nu}$, respectively. The Minkowski metric 
takes its usual diagonal form, as stated above. The 
spatial metric $\delta$ induced by the background 
structures has the usual Euclidean form with components 
$(\delta_{ab}) = \mathrm{diag}(1,1,1)$, and its inverse 
has components $(\delta^{ab}) = \mathrm{diag}(1,1,1)$.

We will often employ a `three-vector' notation, where the 
three-tuple of spatial components of some geometric object 
will be denoted by
\ifvectorsbold
	a boldface letter:
\else
	an arrow accent:
\fi
for example,
$\vect v = (v^1, v^2, v^3)$ is the `vector' of spatial 
components of some tangent vector $v$ on $M$, or 
$\vect A = (A_1, A_2, A_3)$ the `vector' of spatial 
components of some one-form $A$. When using this 
notation, a dot between two such `vectors' will denote 
the \emph{component-wise} `Euclidean scalar product', i.e.
\begin{equation}
	\vect v \cdot \vect w := \delta_{ab} v^a w^b = \sum_{a=1}^3 v^a w^a
\end{equation}
or
\begin{equation}
	\vect v \cdot \vect A := v^a A_a = \sum_{a=1}^3 v^a A_a \; .
\end{equation}
Note that the latter does not depend on $\delta$ in the 
formula, but nevertheless relies on the $3+1$ split 
induced by the background structures. Similarly, a cross 
multiplication symbol will denote the \emph{component-wise} 
vector product, i.e.
\begin{equation}
	(\vect v \times \vect w)^a := \delta^{an} \, {^{(3)}\varepsilon_{nbc}} v^b w^c
\end{equation}
where $^{(3)}\varepsilon_{abc}$ is the usual 
three-dimensional totally antisymmetric symbol. 
Geometrically, $^{(3)}\varepsilon_{abc}$ can be 
understood as the components of the spatial volume form 
induced by the Euclidean metric $\delta$. We will lower 
and raise the indices of $^{(3)}\varepsilon$ by 
$\delta_{ab}$ and $\delta^{ab}$ respectively, 
i.e.\ $^{(3)}\varepsilon^a_{\phantom{a}bc} := \delta^{an} \, {^{(3)}\varepsilon_{nbc}}$ 
etc., such that we can write 
$(\vect v \times \vect w)^a = {^{(3)}\varepsilon^a_{\phantom{a}bc}} v^b w^c$.

A boldface nabla symbol $\vect\nabla$ denotes the 
three-tuple of \emph{partial} derivatives
\begin{equation} \label{eq:Nabla}
	\vect \nabla = (\partial_1, \partial_2, \partial_3),
\end{equation}
which can be geometrically understood as the component 
representation of the spatial covariant derivatives 
with respect to the flat Euclidean metric. It will 
be used to express component-wise vector calculus 
operations in the usual short-hand notation, 
for example writing
\begin{equation}
	(\vect \nabla \times \vect A)^a = {^{(3)}\varepsilon^{abc}} \partial_b A_c
\end{equation}
for the component-wise curl of $\vect A$.

In view of the structures introduced, we stress again 
that all the operations reported here and used in the 
sequel make good geometric sense. They do depend on 
the geometric structures that we made explicit above, 
i.e. on the background metric $\eta$ and the time evolution 
vector field $u$, but they do not depend on the 
coordinates or frames that one uses in order to express 
the geometric objects (including the background 
structures) in terms of their real-valued components.
\enlargethispage{-\baselineskip}

\section{Formal expansions in \texorpdfstring{$c^{-1}$}{1/c}}
\label{sec:formal_expansions}

In order to perform a post-Newtonian expansion, we need 
some means to keep track of `how far away' from the 
Newtonian limit some term in a calculation is. A 
convenient way for doing so is to expand all relevant 
quantities as formal power series in $c^{-1}$, i.e.\ in 
the inverse of the velocity of light. A term of order 
$c^0$ then corresponds to the Newtonian limit of the 
considered quantity, and the higher-order terms give 
higher and higher orders of post-Newtonian `corrections'. 
Even though it might at first sight seem somewhat 
peculiar to perform an expansion in a dimensionful 
quantity, there is nothing to worry about when using this 
as a method to just formally keep track of post-Newtonian 
effects, since no questions of convergence ever arise. 
Note that the Newtonian limit of a quantity corresponds 
to formally taking the limit $c \to \infty$ in the power 
series. A quantity $X$ being of order (at least) $k$ in 
the formal $c^{-1}$-expansion will be denoted by
\begin{equation}
	X = \Or(c^{-k}).
\end{equation}
Let us again stress that this does not entail any 
analytic statement at all; it is just a \mbox{notation} for 
orders in formal power series. To put it differently, we 
view a post-Newtonian theory as a (formal) deformation 
of its `Newtonian limit', \mbox{implementing} the deformation of 
Galilei to Poincaré symmetry well-known at the level of 
Lie \mbox{algebras}~\cite{inonu53}.

Sometimes, we will need to consider quantities 
incorporating terms of \emph{negative} order in $c^{-1}$ 
(i.e.\ of positive order in $c$). However, we will always 
encounter but finitely many\footnote
	{In fact, only up to order $c^4$.}
negative-order terms, meaning that we are considering formal 
Laurent series in the expansion parameter $c^{-1}$. Note 
that no Newtonian limit exists for a quantity with 
non-vanishing such negative-order terms.
\enlargethispage{-\baselineskip}

In the Newtonian limit, coordinate time\footnote
	{Even though we call it `coordinate time' here, $t$ can of 
	course be characterised in a coordinate-free way as the 
	evolution parameter of integral curves of the background 
	time evolution vector field $u$.} 
$t$ shall be identified with Newtonian absolute time. 
Therefore, we have to treat $t$ as 
being of order $c^0$ in our formal expansion, instead 
of the timelike coordinate $x^0 = ct$ with dimension 
of length: were we to take $x^0$ to be of order $c^0$, 
then $t = c^{-1} x^0$ would vanish in the Newtonian 
limit. However, to the spatial coordinates $(x^a)$ we 
assign order $c^0$. This necessity of treating the time 
direction differently is, of course, well-known: it arises 
whenever one wants to obtain well-defined Newtonian limits 
of (locally) Poincaré-relativistic theories, for example in 
the context of Newton--Cartan theory \cite{ehlers81,ehlers19}.

Considering the background Minkowski metric
\begin{equation}
	\eta = \eta_{\mu\nu} \, \D x^\mu \otimes \D x^\nu = - c^2 \D t^2 + \D \vect x^2 ,
\end{equation}
we see that, due to our treating differently the time 
coordinate, it consists of terms of different order in 
$c^{-1}$: a temporal part of order $c^2$, and a spatial 
part of order $c^0$. This analogously goes for the 
inverse Minkowski metric
\begin{equation}
	\eta^{-1} = - c^{-2} \frac{\partial}{\partial t} \otimes \frac{\partial}{\partial t} + \delta^{ab} \frac{\partial}{\partial x^a} \otimes \frac{\partial}{\partial x^b} \; .
\end{equation}

We now turn to the description of the formal 
$c^{-1}$-expansion of the \emph{physical} spacetime 
metric $g$, which is to make precise the notion of $g$ 
deviating only little from the Minkowski background 
$\eta$. For the computations in chapter \ref{chap:PNSE}, 
it turns out that it is notationally easiest to label 
the coefficients in the expansion of the components of 
the \emph{inverse} metric, instead of the metric itself:
we expand the components of the inverse metric as formal 
power series
\begin{equation} \label{eq:exp_metric}
	g^{\mu\nu} = \eta^{\mu\nu} + \sum_{k=1}^\infty c^{-k} g^{\mu\nu}_{(k)} \; ,
\end{equation}
the lowest-order term being given by the components of 
the inverse Minkowski metric. Note that the coefficients 
in \eqref{eq:exp_metric} refer to the coordinates 
$(x^\mu) = (ct, x^a)$. Thus, when considering the inverse 
metric proper (and not its components), we obtain
\begin{align}
	g^{-1} &= g^{\mu\nu} \frac{\partial}{\partial x^\mu} \otimes \frac{\partial}{\partial x^\nu} \nonumber\\
	&= \eta^{-1} + \sum_{k=1}^\infty c^{-k} \left[ c^{-2} g^{00}_{(k)} \frac{\partial}{\partial t} \otimes \frac{\partial}{\partial t} 
		+ c^{-1} g^{0a}_{(k)} \frac{\partial}{\partial t} \vee \frac{\partial}{\partial x^a}
		+ g^{ab}_{(k)} \frac{\partial}{\partial x^a} \otimes \frac{\partial}{\partial x^b} \right]:
\end{align}
Coefficients carrying the same notational order label 
`$(k)$' appear in different orders of the formal expansion 
of the proper geometric object $g^{-1}$. For the sake of 
notational convenience, we also define
\begin{align} \label{eq:exp_metric_single_term}
	g^{-1}_{(k)} := g^{\mu\nu}_{(k)} \frac{\partial}{\partial x^\mu} \otimes \frac{\partial}{\partial x^\nu} \; ,
\end{align}
to which the same observation applies.

In chapter \ref{chap:atom_in_gravity}, we will 
discuss electromagnetic quantities. In that context, 
we will treat the electromagnetic four-potential form 
$A$ and the four-current density $j$ as being of formal 
expansion order $c^0$ when considered as tensor (density) 
fields. Their components with respect to our adapted 
coordinate system are then of the orders
\begin{equation}
	A_0 = \Or(c^{-1}), \, A_a = \Or(c^0), \, j^0 = \Or(c^1), \, j^a = \Or(c^0),
\end{equation}
factors of $c$ arising in them from $x^0 = ct$ involving 
a factor of $c$. This implies that the electric potential 
$\phi_\text{el.} = - c A_{0}$ and the charge density 
$\rho = \frac{1}{c} j^0$ are again quantities\footnote
	{In fact, they are -- apart from the conventional minus sign in 
	$\phi_\text{el.}$ -- simply the $t$ components of the 
	fields with respect to the coordinates $(t, x^a)$.}
of order $c^0$. In particular, for the (non-vanishing) 
components of the electromagnetic field tensor $F = \D A$, 
we have $F_{a0} = \Or(c^{-1})$ and $F_{ab} = \Or(c^0)$.

To ensure consistency in the treatment of expansion orders 
when dealing with electromagnetism, we will write 
equations in terms of the vacuum permittivity 
$\varepsilon_0$ only, to which we assign the formal order
$\varepsilon_0 = \Or(c^0)$, and avoid usage of the 
vacuum permeability $\mu_0 = 1/(\varepsilon_0 c^2)$ 
altogether.

\section{The Eddington--Robertson parametrised post-Newtonian metric}
\label{sec:ER_PPN_metric}

One of the easiest and most important physically 
relevant post-Newtonian metrics is the 
Eddington--Robertson parametrised post-Newtonian metric, 
whose components are given by
\begin{equation} \label{eq:ER_PPN_metric}
	(g_{\mu\nu}) = \begin{pmatrix}
		-1 - 2 \frac{\phi}{c^2} - 2 \beta \frac{\phi^2}{c^4} + \Or(c^{-6}) & \Or(c^{-5})\\
		\Or(c^{-5}) & (1 - 2 \gamma \frac{\phi}{c^2})\mathbb{1} + \Or(c^{-4})
	\end{pmatrix},
\end{equation}
where $\phi$ is a scalar function on spacetime that 
may be seen as the analogue of the Newtonian gravitational 
potential in this approximation scheme.

The metric also contains two dimensionless parameters 
$\beta$ and $\gamma$, the so-called `Eddington--Robertson 
parameters'. These account for possible deviations from 
\mbox{general} relativity, which corresponds to the 
values $\beta = \gamma = 1$. In that case, the metric 
\eqref{eq:ER_PPN_metric} solves the Einstein field 
equations of general relativity approximately in a 
$c^{-1}$-expansion for a static source, with $\phi$ being 
the Newtonian gravitational potential of the source. The 
metrics for different values of these parameters are then 
considered to correspond to so-called `test theories' 
against which the predictions of general relativity can 
be tested.\looseness-1

In fact, the Eddington--Robertson \acronym{PPN} metric 
(\acronym{PPN} = `parametrised post-Newtonian') is 
just the simplest of a much bigger family of 
\acronym{PPN} metrics, encompassing a large range of 
lowest-order post-Newtonian effects of metric theories 
of gravity and thus offering a large set of theories to 
test general relativity against. For an extensive 
discussion of the parametrised post-Newtonian formalism 
and its applications in tests of gravitational theory, we 
recommend the monograph \cite{will93}.

The explicit inclusion of $\beta$ and $\gamma$ 
allows us to track the consequences of post-Newtonian 
corrections in the spatial and the temporal part of the 
metric separately. It also opens the possibility to 
apply our results to potential future quantum tests of 
general relativity itself, which are, however, outside 
the scope of this thesis.

Note that even though in its true post-Newtonian origin 
the function $\phi$ appearing in the Eddington--Robertson 
\acronym{PPN} metric is time-independent, we will allow for 
it to depend on time for the sake of higher generality.

The components of the inverse metric to $g$ are easily 
obtained as
\begin{equation} \label{eq:ER_PPN_metric_inv}
	(g^{\mu\nu}) = \begin{pmatrix}
		-1 + 2 \frac{\phi}{c^2} + (2\beta - 4) \frac{\phi^2}{c^4} + \Or(c^{-6}) & \Or(c^{-5})\\
		\Or(c^{-5}) & (1 + 2 \gamma \frac{\phi}{c^2})\mathbb{1} + \Or(c^{-4})
	\end{pmatrix}.
\end{equation}


\newcommand{\KGe}{Klein--Gordon equation}

\chapter{Post-Newtonian corrections to Schrödinger equations in gravitational fields}
\label{chap:PNSE}

In this chapter, we deal with systematic methods to 
couple single, free quantum particles to post-Newtonian 
gravitational fields. More specifically, we extend a 
\mbox{\acronym{WKB}-like} \mbox{post-Newtonian} expansion of the 
minimally coupled \KGe\ after Kiefer and Singh 
\cite{kiefer91}, Lämmerzahl \cite{laemmerzahl95}, and 
Giulini and Großardt \cite{giulini12} to arbitrary order 
in $c^{-1}$, leading to Schrödinger equations describing 
a free quantum particle in a general gravitational field 
in post-Newtonian expansion. We will compare the results 
of this approach to canonical quantisation of a free 
particle in curved spacetime, following Wajima 
\emph{et al.}\ \cite{wajima97}.

Furthermore, using a more `formal', operator-algebraic 
approach, expansions of the \KGe\ and the canonical 
quantisation method are shown to lead to the same results 
for terms in the Hamiltonian up to linear order in 
particle momentum, when the particle is described with 
respect to a stationary time evolution vector field in a 
stationary spacetime. For this, no expansion in the 
inverse of the velocity of light has to be employed. This 
result means in particular that the lowest-order coupling 
to gravitomagnetism is described in the same way by both 
methods.

The material in this chapter has been published in 
\cite{schwartz19:PNSE}.

\section{Introduction}

In the existing literature, one finds two different main 
approaches to the problem of post-Newtonian `correction 
terms' for the Schrödinger equation describing a free 
quantum particle in a curved spacetime. The first, 
described, e.g., by Wajima \mbox{\emph{et al.}}\ \cite{wajima97}, 
starts from a classical description of the particle and 
applies canonical quantisation rules adapted to the 
situation (in a somewhat \emph{ad hoc} fashion) to \mbox{derive} 
a quantum-mechanical Hamiltonian. By an expansion in 
powers of $c^{-1}$ (at the stage of the classical 
\mbox{Hamiltonian}), one finds the desired correction terms. 
Other \mbox{intimately} related \mbox{methods} use path integral 
quantisation on the classical system, as, e.g., the 
\mbox{semi-classical} \mbox{calculation} by Dimopoulos \emph{et al.}\ 
\cite{dimopoulos08}. As discussed 
in the \mbox{introduction}, such a semi-classical path integral 
perspective is the most widely used method for the 
description of gravitational coupling in quantum optics.

The second, fundamentally different approach takes a 
field-theoretic perspective and derives the Schrödinger 
equation as an equation for the positive frequency 
solutions of the minimally coupled classical \KGe. 
This is accomplished by Kiefer and Singh \cite{kiefer91}, 
Lämmerzahl \cite{laemmerzahl95}, and Giulini and Großardt 
\cite{giulini12} by making a \acronym{WKB}-like ansatz 
for the Klein--Gordon field, thereby formally expanding 
the \KGe\ in powers of $c^{-1}$, in the end viewing the 
Klein--Gordon theory as a formal deformation of the 
Schrödinger theory, as explained before in section~%
\ref{sec:formal_expansions}. This second method seems 
to be more firmly rooted in first principles than the 
canonical quantisation method, since it can at least 
heuristically be motivated from quantum field theory in 
curved spacetimes (see section~\ref{sec:KG_heur}). In a 
similar vein, one can apply such expansion methods to 
the Dirac equation, leading to a proper treatment of 
fermionic particles.

Although the two methods for obtaining post-Newtonian 
Schrödinger equations described above are very different in 
spirit, they lead to comparable results in lowest orders. 
To make possible a general comparison beyond the explicit 
examples considered in the existing literature\footnote
	{Wajima \emph{et al.}\ \cite{wajima97} considered a 
	first-order post-Newtonian metric for a point-like 
	rotating source, Lämmerzahl \cite{laemmerzahl95} used 
	the first-order Eddington--Robertson \acronym{PPN} 
	metric.},
we will apply the methods to as general a metric as 
possible. In section~\ref{sec:canon_quant}, we will 
give a brief overview over the canonical quantisation 
method (and extend it to the case of time-dependent 
metrics). After a heuristic quantum-field-theoretic 
motivation for considering the classical \KGe\ in the 
description of single quantum particles in section~%
\ref{sec:KG_heur}, section~\ref{sec:WKB} will develop 
the \acronym{WKB}-like formal expansion of the \KGe\ to 
arbitrary order in $c^{-1}$ in a general metric given as 
a formal power series in $c^{-1}$, significantly extending 
existing explicit examples to the general case. This leads 
to some simple comparisons of the resulting Hamiltonian 
with the one coming from canonical quantisation.

In section~\ref{sec:mom_exp}, we consider a formal 
expansion of the \KGe\ in powers of momentum operators 
leading to a Schrödinger form of the equation. This yields 
a general statement about agreement between the canonical 
and the Klein--Gordon methods for terms in the Hamiltonian 
up to linear order in momentum in the case of a stationary 
spacetime, without any necessity of an expansion in 
powers of $c^{-1}$.

A similar general \acronym{WKB}-like post-Newtonian formal 
expansion of the \KGe\ to obtain a Schrödinger equation 
was already considered by Tagirov in \cite{tagirov90} and 
a series of follow-up papers \cite{tagirov92,tagirov96}, 
as summarised in \cite{tagirov99}; but unlike our 
approach, these works did not expand the metric, thus 
not allowing to directly apply the results to metrics 
given as a power series in $c^{-1}$. Tagirov also compared 
his \acronym{WKB}-like approach to methods of canonical 
quantisation \cite{tagirov03}, but did this only for the 
case of static metrics.

Since we are concerned mostly with conceptual questions, 
we will generally not be mathematically very rigorous in 
this chapter, and in particular not mention domains of 
definition of operators.

\section{Canonical quantisation of a free particle}
\label{sec:canon_quant}

In the following, we will describe the canonical 
quantisation approach that was used by Wajima 
\emph{et al.}\ \cite{wajima97} to derive a Hamiltonian 
for a quantum particle in the post-Newtonian gravitational 
field of a point-like rotating source. We will allow 
metrics as general as possible, and focus on the conceptual 
issues of the procedure when adapted to our geometric 
framework from chapter \ref{chap:geometric_structures}. 
We will also extend the procedure such that we are able to 
define a quantum theory in the case of a time-dependent 
metric.

The classical action for a `relativistic' point particle 
of mass $m$ in curved spacetime with metric $g$ is
\begin{equation} \label{eq:class_action}
	S = -mc \int \D \lambda \, \sqrt{-g(x'(\lambda),x'(\lambda))} = -mc \int \D \lambda \, \sqrt{-g_{\mu\nu}x'^\mu x'^\nu} \; ,
\end{equation}
where $x(\lambda)$ is the arbitrarily parametrised 
worldline of the particle. Parametrising the worldline 
by coordinate time $t = x^0/c$, i.e.\ `background time' 
measured along the background time evolution vector field 
$u$ as introduced in section \ref{sec:background_structures}, 
the classical Hamiltonian for $t$-evolution can be computed 
to be
\begin{equation} \label{eq:class_Ham}
	H = \frac{1}{\sqrt{-g^{00}}} c \left[m^2 c^2 + \left(g^{ab} - \frac{1}{g^{00}} g^{0a} g^{0b}\right) p_a p_b\right]^{1/2} \kern-.8em+ \frac{c}{g^{00}} g^{0a} p_a
\end{equation}
when expressed in terms of the components of the (inverse) 
spacetime metric, where $p_a$ are the momenta conjugate 
to $x^a$. Full details of this calculation can be found 
in appendix \ref{app:details_class_Ham}.

Note that this Hamiltonian formalism makes use of the 
decomposition of spacetime into space and time as induced 
by the background structures. In the following, we will 
denote the spacelike leaf of `space' at background time 
$t$, as given by the background structures, by 
$\Sigma_t \subset M$. The `spaces' corresponding to 
different values of $t$ may naturally be identified along 
the flow of the background time evolution vector field, 
which in our adapted coordinates is just given by 
identifying points with the same spatial coordinates, i.e.\ 
$(c t_1, x^a) \mapsto (c t_2, x^a)$. The quotient space, 
which may be viewed as `abstract' Euclidean three-space 
proper, will be denoted by $\Sigma$, and there is a natural 
embedding $\Sigma \overset{\cong}{\to} \Sigma_t \subset M$ 
for each $t$. Of course, all of this depends on the 
background structures, and thus will the quantum theory 
we are about to construct\footnote
	{In fact, the constructions of this section can also be 
	applied in a slightly different setting. We could assume 
	the spacetime to be globally hyperbolic and perform a 
	$3+1$ decomposition \cite{giulini14}: we foliate spacetime 
	$M$ into three-dimensional spacelike Cauchy surfaces $\Sigma_t$ 
	which are images of an `abstract' Cauchy surface $\Sigma$ 
	under a family of embeddings $\mathcal E_t \colon \Sigma \to M$, 
	parametrised by a `foliation parameter' $t \in \mathbb R$, 
	and introduce spacetime coordinates such that $x^a$ are 
	coordinates on $\Sigma$ and $x^0 = ct$. In this setting, 
	the embeddings $\mathcal E_t$ defining the $3+1$ 
	decomposition would constitute 	the `background structure' 
	on which the quantum theory will depend.}.

Now, we want to `canonically quantise' the classical 
Hamiltonian \eqref{eq:class_Ham}. To this end, we expand 
the square root in \eqref{eq:class_Ham} to the desired 
order in $c^{-1}$ (or in momenta, see section 
\ref{sec:mom_exp}), and afterwards replace the classical 
momentum and position variables by corresponding operators, 
satisfying the canonical commutation relations. Of course, 
for doing so we have to choose an operator ordering scheme 
for symmetrising products of momenta and (functions of) 
position. We thus obtain a quantised Hamiltonian $\hat H$, 
acting on the Hilbert space on which the position and 
momentum operators are defined, and can postulate a 
Schrödinger equation in the usual form
\begin{equation}
	\I\hbar\partial_t \psi = \hat H \psi.
\end{equation}
Let us stress once more that this Hamiltonian will depend 
not only on the background structures $\eta, u$ which 
define the post-Newtonian approximation, but also on the 
choice of operator ordering scheme, which we leave open 
in order to keep the discussion as general as possible.
\enlargethispage{-\baselineskip}

Note that, according to the Stone--von Neumann theorem, 
the Hilbert space on which the position and momentum 
operators act and the form they take are \mbox{essentially} 
uniquely determined (up to unitary equivalence) by 
demanding the canonical commutation relations\footnote
	{Of course, as is well-known, the uniqueness statement is, 
	due to the unboundedness of the operators, only strictly 
	true when considering the `exponentiated' version of the 
	canonical commutation relations, i.e.\ the Weyl relations. 
	This essentially amounts to a regularity condition, which 
	we shall also implicitly assume on physical grounds.}.
Thus, the quantum theory is completely specified by the 
choice of ordering scheme, without any further choice 
concerning a possible explicit form of the Hilbert space. 
Nevertheless, we will now discuss explicit realisations 
of the Hilbert space and the position and momentum 
operators, in order to gain a more direct geometric 
interpretation thereof. This will also become important 
when \mbox{comparing} canonical quantisation to formal expansions 
of the \KGe\ in the following sections.

Since the position variables in the classical Hamiltonian 
\eqref{eq:class_Ham} are the spatial \mbox{coordinates} $x^a$ on 
three-dimensional `space', we want the quantum position 
\mbox{operators} to directly correspond to these. That is, we 
want to define the Hilbert space as some space of 
square-integrable `wavefunctions' of the $x^a$, such that 
we can take as position operators simply the operators of 
multiplication with the coordinates, thus obtaining a 
direct interpretation of the `wavefunctions' in the Hilbert 
space as `position probability amplitude distributions'. 
The question of explicit realisation of the Hilbert space 
thus becomes a question of choice of a scalar product on 
(some subspace of) the space of functions of the $x^a$.

To be more precise, we do not just need a single Hilbert 
space: to any time $t$ we want to associate a wavefunction 
$\psi(t)$ giving rise to a position probability 
distribution \emph{on the spatial leaf $\Sigma_t$ 
corresponding to $t$}, so we need to consider an individual 
Hilbert space for each spatial leaf. But since we want to 
relate these wavefunctions by a Schrödinger equation, we 
have to somehow identify the Hilbert spaces corresponding 
to different times.
\enlargethispage{-\baselineskip}

A natural, geometric choice of scalar product on the space 
of functions on $\Sigma_t$ is the $\mathrm{L}^2$-scalar 
product with respect to the induced metric measure (compare 
\cite{wajima97}), i.e.
\begin{equation} \label{eq:canon_scalar_prod_induced}
	\langle \psi, \varphi \rangle_{\Sigma_t} := \int \D^3\ivect x \, \overline\psi \varphi \sqrt{^{(3)}g|_{\Sigma_t}} \; ,
\end{equation}
where here and in the following, we use the short-hand 
notation ${^{(3)}g} = \det(g_{ab})$ for the determinant 
of the matrix of coordinate components of the spatial 
metric, when no confusion with the spatial metric $^{(3)}g$ 
proper can arise. Consider first the case that the spatial 
metric components $g_{ab}$ be independent of $t$, i.e.\ 
that the induced geometry be `the same' for all spatial 
leaves (implicitly identifying each $\Sigma_t$ with 
`abstract space' $\Sigma$ via the natural embedding 
$\Sigma \overset{\cong}{\to} \Sigma_t$). Then the scalar 
product \eqref{eq:canon_scalar_prod_induced} is independent 
of $t$, such that the Hilbert spaces corresponding to the 
different spatial slices are canonically identified by 
simply identifying the wavefunctions (again identifying 
$\Sigma_t \cong \Sigma$). We can then define the momentum 
operator as
\begin{equation} \label{eq:canon_mom_op_induced}
	\hat p_a := -\I\hbar \, {^{(3)}g^{-1/4}} \, \partial_a ({^{(3)}g^{1/4}}\cdot),
\end{equation}
which is symmetric with respect to the scalar product 
and fulfils the canonical commutation relation 
$[x^a, \hat p_b] = \I \hbar \delta^a_b$, 
and carry out canonical quantisation as described above.

If we allow for the $g_{ab}$ to depend on $t$, the scalar 
product \eqref{eq:canon_scalar_prod_induced} depends on 
$t$ and thus the canonical map 
$\mathrm{L}^2(\Sigma, \langle\cdot, \cdot\rangle_{\Sigma_t}) 
\ni \psi \mapsto \psi \in 
\mathrm{L}^2(\Sigma, \langle\cdot, \cdot\rangle_{\Sigma_s})$ 
no longer is an isomorphism of Hilbert spaces. I.e.\ the 
natural identification from above does not work, spoiling 
the program of canonical quantisation with this concrete 
realisation / geometric interpretation of the Hilbert 
spaces. A natural solution to this problem is to instead 
consider the time-independent `flat' $\mathrm L^2$-scalar 
product
\begin{equation} \label{eq:scalar_prod_flat}
	\langle \psi_\mathrm{f}, \varphi_\mathrm{f} \rangle_\mathrm{f} := \int \D^3\ivect x \, \overline{\psi_\mathrm{f}} \, \varphi_\mathrm{f}
\end{equation}
together with the `flat' momentum operator 
$\bar p_a := -\I\hbar \partial_a$. Using these, we obtain 
a `geometric realisation' of our canonical quantisation 
Hilbert space also in the case of time-dependent $g_{ab}$. 
At first sight, this scalar product could seem less 
`geometric' than \eqref{eq:canon_scalar_prod_induced}, 
but it can be seen to have as much invariant meaning as 
the latter by realising that, geometrically speaking, the 
`flat' wavefunctions $\psi_\mathrm{f}, \varphi_\mathrm{f}$ 
be scalar \emph{densities} (of weight $1/2$) on $\Sigma$ 
instead of scalar functions. Since this choice of `flat' 
scalar product can be applied to more general situations, 
and it eases the comparison to usual Galilei-invariant 
Schrödinger theory and to the Klein--Gordon expansion 
methods to be discussed in the following, we will adopt 
it from now on, i.e.\ `canonically quantise' the expanded 
classical Hamiltonian by replacing the classical momentum 
by the flat momentum operator (applying our chosen 
ordering scheme).

As explained above, the two choices of explicit realisation 
of the Hilbert space that we described for the case of 
time-independent $g_{ab}$ have to be unitarily equivalent 
by the Stone--von Neumann theorem. The unitary operator 
implementing this equivalence can be directly read off 
from the definitions of the two scalar products, and is 
given by $\psi \mapsto \psi_\mathrm{f} = {^{(3)}g^{1/4}} \, \psi$.

\section[Formal expansions of the \KGe: heuristic motivation from quantum field theory in stationary spacetimes]{Formal expansions of the \KGe: heuristic motivation from quantum field theory\newline in stationary spacetimes}
\sectionmark{Formal expansions of the \KGe: heuristic motivation from \acronym{QFTCS}}
\label{sec:KG_heur}

In the following, we will consider formal expansions of 
the classical, minimally coupled \KGe\ for a particle of 
mass $m > 0$,
\begin{equation} \label{eq:KG}
	\left(\Box - \frac{m^2 c^2}{\hbar^2}\right)\Psi_\mathrm{KG} = 0,
\end{equation}
leading to a Schrödinger equation with post-Newtonian 
corrections. In section \ref{sec:WKB}, we shall deal with 
a \acronym{WKB}-inspired formal expansion in $c^{-1}$, 
while in section \ref{sec:mom_exp}, we will draw a 
comparison to canonical quantisation based on an expansion 
in spatial momentum. To lay a conceptual foundation for 
these investigations, we will in this section give a 
heuristic motivation for consideration of the classical 
\KGe\ from quantum field theory in curved spacetimes.

Instead of \eqref{eq:KG} one could also consider the more 
general case of a possibly \emph{non}-minimally coupled 
\KGe, i.e.\ including some curvature term. This is 
customary in modern literature on quantum field theory 
in curved spacetime, where an additional term 
$-\xi R \Psi_\mathrm{KG}$ is included in the equation, 
$R$ being the scalar curvature of the spacetime 
\cite[eq. (5.57)]{baer09}. In particular, for the choice 
of $\xi = \frac{1}{6}$ (`conformal coupling'), the equation 
becomes conformally invariant in the massless case $m=0$, 
and also in the massive case there are some arguments 
favouring the conformally coupled \KGe, in particular in 
de Sitter spacetime \cite{tagirov73}. Nevertheless, we 
will for the sake of simplicity stick with the minimally 
coupled equation in this thesis, leaving non-minimal 
coupling for possible later investigations.

Now, we turn to the advertised motivation of consideration 
of the classical \KGe\ on a heuristic level. Namely, 
the quantum field theory construction for the free 
Klein--Gordon field on a globally hyperbolic stationary 
spacetime proceeds as follows (see, e.g., 
\cite[section 4.3]{wald94}).

We consider the \KGe\ \eqref{eq:KG} on a general globally 
hyperbolic stationary spacetime, and the Klein--Gordon 
inner product, which for two solutions 
$\Psi_\mathrm{KG}, \Phi_\mathrm{KG}$ of \eqref{eq:KG} 
is given by
\begin{align} \label{eq:KG_ip_general}
	\langle\Psi_\mathrm{KG},\Phi_\mathrm{KG}\rangle_\mathrm{KG}
		&= \I\hbar c \int_\Sigma \D^3\ivect x \, \sqrt{^{(3)}g} \, n^\nu \big[ \overline{\Psi_\mathrm{KG}} \big(\nabla_\nu \Phi_\mathrm{KG}\big) - \big(\nabla_\nu \overline{\Psi_\mathrm{KG}}\big) \Phi_\mathrm{KG} \big] \nonumber\\
	&= \I\hbar c \int_\Sigma \D^3\ivect x \, \sqrt{^{(3)}g} \, n^\nu \big[\overline{\Psi_\mathrm{KG}} \big(\partial_\nu \Phi_\mathrm{KG}\big) - \big(\partial_\nu \overline{\Psi_\mathrm{KG}}\big) \Phi_\mathrm{KG} \big] ,
\end{align}
where $\Sigma$ is a spacelike Cauchy surface, $^{(3)}g$ 
is the determinant of the induced metric on $\Sigma$, 
and $n$ is the future-directed unit normal vector field 
of $\Sigma$. In the second line, which is valid in a 
coordinate basis, we used that the covariant derivative 
of a scalar function is just the ordinary exterior 
derivative, i.e.\ given by a partial derivative in the 
case of a coordinate basis. Using the \KGe\ and Gauß' 
theorem, \eqref{eq:KG_ip_general} can be shown to be 
independent of the choice of $\Sigma$ under the assumption 
that the fields satisfy suitable boundary conditions.

The Hilbert space of the quantum field theory is now the 
bosonic Fock space over the `one-particle' Hilbert space 
constructed, loosely speaking, as the completion of the 
space of classical solutions of the \KGe\ with `positive 
frequency' (with respect to the stationarity Killing field) 
with the Klein--Gordon inner product.

To be more precise, the construction of the `one-particle' 
Hilbert space is a little more involved, since it is not 
\emph{a priori} clear what is meant by `positive frequency 
solutions': at first, the space of classical solutions of 
the \KGe\ is completed in a certain inner product to obtain 
an `intermediate' Hilbert space on which the generator of 
time translations (with respect to the stationarity Killing 
field) can be shown to be a self-adjoint operator; the 
positive spectral subspace of this operator is then 
completed in the Klein--Gordon inner product to give the 
Hilbert space of one-particle states. For details on the 
construction, see \cite[section 4.3]{wald94} and the 
references cited therein.

So the one-particle sector of the free Klein--Gordon 
quantum field theory in \mbox{globally} hyperbolic stationary 
spacetime is described by an appropriate notion of 
positive \mbox{frequency} solutions of the classical \KGe, 
using the Klein--Gordon inner product. Note that in this 
representation, in which the Klein--Gordon inner product 
takes its usual form, the `naive position operator' 
(multiplying with \mbox{coordinate} position) is the well-known 
Newton--Wigner position when we are considering Minkowski 
spacetime.

At this point, the quantum-field-theoretic motivation 
of our Klein--Gordon expansion methods becomes merely 
heuristic: since in the following we will not solve the 
\KGe\ exactly, but consider formal expansions of it 
(either in powers of $c^{-1}$ or in powers of spatial 
momentum), it will not be possible to exactly determine 
the space of positive frequency solutions according to 
the procedure described above; instead, we will merely 
choose an oscillating phase factor such as to guarantee 
the solution to have positive instead of negative 
frequency in lowest order in the expansion (see 
\eqref{eq:WKB_pos_freq}). 
If analysed more rigorously, it could turn out that for 
an asymptotic solution to be of positive frequency in 
some stricter sense, additional restrictions on the 
solution have to be made, possibly altering the function 
space under consideration. I.e.\ in principle, this could 
lead to the Hamiltonian we will obtain being altered 
when considering a rigorous analytic post-Newtonian 
expansion of quantum field theory in curved spacetime, 
instead of just a formal power series expansion.

In the non-stationary case, there is no canonical notion 
of particles and thus, strictly speaking, the whole 
question about the behaviour of single quantum particles 
does not make sense. Nevertheless, for an observer moving 
on an orbit which is approximately Killing, the classical 
Klein--Gordon theory can, on a heuristic level, still be 
expected to lead to approximately correct predictions 
regarding this observer's observations.

Even if this motivation is just a heuristic, the 
\acronym{WKB}-like approach of expanding the \KGe\ in 
powers of $c^{-1}$ will allow us to view the classical 
Klein--Gordon theory as a formal deformation of the 
`non-relativistic' Schrödinger theory, and makes the 
sense in which that happens formally precise, the same 
happening for the momentum expansion.

\section{\acronym{WKB}-like expansion of the \KGe}
\label{sec:WKB}

Now, we will consider \acronym{WKB}-like formal expansions 
in $c^{-1}$ of the \KGe\ \eqref{eq:KG}, as first introduced 
by Kiefer and Singh in \cite{kiefer91} for Minkowski 
spacetime, and later considered by 
Lämmerzahl in \cite{laemmerzahl95} for the simple 
Eddington--Robertson \acronym{PPN} metric, and by Giulini 
and Großardt in \cite{giulini12} for general spherically 
symmetric metrics.

After developing the expansion of the \KGe\ to arbitrary 
order in $c^{-1}$, we will explain the transformation to 
a `flat' $\mathrm L^2$-scalar product for comparison to 
canonical quantisation, and finally consider the metric 
of the Eddington--Robertson \acronym{PPN} test theory as 
a simple explicit example.

\subsection{General derivation}

We assume the post-Newtonian physical spacetime metric 
to be given by a formal power series in $c^{-1}$ as in 
\eqref{eq:exp_metric}. Let us remind ourselves that 
we will work in our coordinate system that is adapted 
to the background structures defining the notion of 
post-Newtonian expansion.

In coordinates, the d'Alembert operator in a general 
Lorentzian metric, as acting on scalar functions, is given by
\begin{align} \label{eq:box_op}
	\Box f &= \nabla^\mu \nabla_\mu f \nonumber\\
	&= \frac{1}{\sqrt{-g}} 
		\partial_\mu(\sqrt{-g} \, g^{\mu\nu} \partial_\nu f) \nonumber\\
	&= \frac{1}{\sqrt{-g}} (\partial_\mu\sqrt{-g}) g^{\mu\nu} \partial_\nu f 
		+ \partial_\mu(g^{\mu\nu}) \partial_\nu f 
		+ g^{\mu\nu} \partial_\mu\partial_\nu f,
\end{align}
where we use the short-hand notation $g = \det(g_{\mu\nu})$ 
for the determinant of the matrix of coordinate components 
of the metric, when no confusion with the metric proper 
can arise. The second and third term in this expression 
can easily be expanded in $c^{-1}$ by inserting the 
expansion \eqref{eq:exp_metric} of the components of the 
inverse metric and using $x^0 = ct$: the third term is
\begin{align} \label{eq:box_expanded_term3}
	g^{\mu\nu} \partial_\mu \partial_\nu 
	&= -c^{-2} \partial_t^2 + \Delta 
		+ \sum_{k=1}^\infty c^{-k} g^{00}_{(k)} c^{-2} \partial_t^2 
		+ \sum_{k=1}^\infty c^{-k} 2 g^{0a}_{(k)} c^{-1} \partial_t \partial_a 
		+ \sum_{k=1}^\infty c^{-k} g^{ab}_{(k)} \partial_a \partial_b \nonumber\\
	&= -c^{-2} \partial_t^2 + \Delta 
		+ \sum_{k=3}^\infty c^{-k} g^{00}_{(k-2)} \partial_t^2 
		+ \sum_{k=2}^\infty c^{-k} 2 g^{0a}_{(k-1)} \partial_t \partial_a 
		+ \sum_{k=1}^\infty c^{-k} g^{ab}_{(k)} \partial_a \partial_b \; ,
\end{align}
where $\Delta = \delta^{ab} \partial_a \partial_b$ denotes 
the `flat' Euclidean Laplacian on three-dimensional space, 
as induced by the background structures. Similarly, the 
second term evaluates to
\begin{align} \label{eq:box_expanded_term2}
	(\partial_\mu g^{\mu\nu})\partial_\nu 
	&= \sum_{k=3}^\infty c^{-k} (\partial_t g^{00}_{(k-2)}) \partial_t 
		+ \sum_{k=2}^\infty c^{-k} (\partial_t g^{0a}_{(k-1)}) \partial_a \nonumber\\
		&\quad + \sum_{k=2}^\infty c^{-k} (\partial_a g^{0a}_{(k-1)}) \partial_t 
		+ \sum_{k=1}^\infty c^{-k} (\partial_a g^{ab}_{(k)}) \partial_b \; .
\end{align}

Since the remaining first term of \eqref{eq:box_op} 
involves the expression
\begin{equation} \label{eq:box_expanded_term1_part}
	\frac{1}{\sqrt{-g}} \partial_\mu\sqrt{-g} 
	= \frac{1}{2g} \partial_\mu g 
	= \frac{1}{2} g^{\rho\sigma} \partial_\mu g_{\rho\sigma} 
	= -\frac{1}{2} g_{\rho\sigma} \partial_\mu g^{\rho\sigma} ,
\end{equation}
we need an expression for the $c^{-1}$-expansion of the 
components of the \emph{metric}, not just the inverse 
metric. Rewriting the expansion of the inverse metric as
\begin{equation}
	g^{\mu\nu} = \left[ \left(\mathbb{1} + \sum_{k=1}^\infty c^{-k} g^{-1}_{(k)} \eta\right) \eta^{-1} \right]^{\mu\nu} ,
\end{equation}
where we used the objects $g^{-1}_{(k)}$ introduced in 
\eqref{eq:exp_metric_single_term}, we see that a formal 
Neumann series can be used to invert the power series. 
This gives the coefficients of the metric as
\begin{equation}
	g_{\mu\nu} = \left\{ \eta \left[\mathbb{1} + \sum_{n=1}^\infty \left(-\sum_{k=1}^\infty c^{-k} g^{-1}_{(k)} \eta\right)^n \right] \right\}_{\mu\nu} .
\end{equation}
Iterating the Cauchy product formula, we have
\enlargethispage{-2\baselineskip}
\begin{equation}
	\left(-\sum_{k=1}^\infty c^{-k} g^{-1}_{(k)} \eta\right)^n 
	= (-1)^n \sum_{k=1}^\infty c^{-k} \sum_{\substack{i_1+\cdots+i_n = k \\ 1 \le i_1,\ldots,i_n \le k}} g^{-1}_{(i_1)}\eta \cdots g^{-1}_{(i_n)} \eta .
\end{equation}
Using this and introducing the notation
\begin{equation}
	g^{-1}_{(k,n)} := \sum_{\substack{i_1+\cdots+i_n = k \\ 1 \le i_1,\ldots,i_n \le k}} g^{-1}_{(i_1)}\eta g^{-1}_{(i_2)}\eta \cdots g^{-1}_{(i_n)} \; ,
\end{equation}
we can write the metric as
\begin{equation}
	g_{\mu\nu} = \eta_{\mu\nu} + \sum_{k=1}^\infty c^{-k} \sum_{n=1}^\infty (-1)^n (\eta g^{-1}_{(k,n)} \eta)_{\mu\nu} \; .
\end{equation}
Thus, returning to \eqref{eq:box_expanded_term1_part} 
we obtain, using the Cauchy product formula again,
\begin{align}
	g_{\rho\sigma}\partial_\mu g^{\rho\sigma} 
		&= \left(\eta_{\rho\sigma} + \sum_{k=1}^\infty c^{-k} \sum_{n=1}^\infty (-1)^n (\eta g^{-1}_{(k,n)} \eta)_{\rho\sigma}\right) 
		\sum_{m=1}^\infty c^{-m} \partial_\mu g^{\rho\sigma}_{(m)} \nonumber\\
	&= \sum_{k=1}^\infty c^{-k} \partial_\mu \tr(\eta g^{-1}_{(k)}) 
		+ \sum_{k=2}^\infty c^{-k} \sum_{l + m = k} \; \sum_{n=1}^\infty (-1)^n (\eta g^{-1}_{(l,n)} \eta)_{\rho\sigma} \, \partial_\mu g^{\rho\sigma}_{(m)} \; ,
\end{align}
where in the sum $\sum_{l+m = k}$\,, the summation variables 
$l$ and $m$ take values $\ge 1$, which we notationally 
suppress here and in the following. Using
\begin{align}
	\sum_{l + m = k} (\eta g^{-1}_{(l,n)} \eta)_{\rho\sigma} \, \partial_\mu g^{\rho\sigma}_{(m)} 
		&= \sum_{l + m = k} \; \sum_{i_1+\cdots+i_n = l} (\eta g^{-1}_{(i_1)} \cdots g^{-1}_{(i_n)} \eta)_{\rho\sigma} \, \partial_\mu g^{\rho\sigma}_{(m)} \nonumber\\
	&= \sum_{i_1+\cdots+i_n + m = k} \tr(\eta g^{-1}_{(i_1)} \cdots g^{-1}_{(i_n)} \eta \, \partial_\mu g^{-1}_{(m)}) \nonumber\\
	&= \frac{1}{n+1} \, \partial_\mu \sum_{i_1+\cdots+i_n + m = k} \tr(\eta g^{-1}_{(i_1)} \cdots g^{-1}_{(i_n)} \eta g^{-1}_{(m)}) \nonumber\\
	&= \frac{1}{n+1} \, \partial_\mu \tr(\eta g^{-1}_{(k,n+1)}) 
\end{align}
and the facts that $g^{-1}_{(k,n)} = 0$ for $n > k$ and 
$g^{-1}_{(k,1)} = g^{-1}_{(k)}$, we can rewrite this as
\begin{align}
	g_{\rho\sigma}\partial_\mu g^{\rho\sigma} 
		&= \sum_{k=1}^\infty c^{-k} \partial_\mu \tr(\eta g^{-1}_{(k)}) 
		+ \sum_{k=2}^\infty c^{-k} \sum_{n=2}^\infty (-1)^{n-1} \frac{1}{n} \, \partial_\mu \tr(\eta g^{-1}_{(k,n)}) \nonumber\\
	&= \sum_{k=1}^\infty c^{-k} \partial_\mu \tr(\eta g^{-1}_{(k)}) 
		+ \sum_{k=1}^\infty c^{-k} \sum_{n=2}^\infty (-1)^{n-1} \frac{1}{n} \, \partial_\mu \tr(\eta g^{-1}_{(k,n)}) \nonumber\\
	&= \sum_{k=1}^\infty c^{-k} \sum_{n=1}^\infty (-1)^{n-1} \frac{1}{n} \, \partial_\mu \tr(\eta g^{-1}_{(k,n)}) .
\end{align}
Thus, we finally obtain the expansion
\enlargethispage{-2\baselineskip}
\begin{align} \label{eq:box_expanded_term1}
	&\frac{1}{\sqrt{-g}} (\partial_\mu\sqrt{-g}) g^{\mu\nu} \partial_\nu f \nonumber\\
	&\qquad= -\frac{1}{2} (g_{\rho\sigma} \, \partial_\mu g^{\rho\sigma}) g^{\mu\nu} \partial_\nu f \nonumber\\
	&\qquad= \frac{1}{2} \sum_{k=1}^\infty c^{-k} \sum_{n=1}^\infty (-1)^n \frac{1}{n} [\partial_\mu \tr(\eta g^{-1}_{(k,n)})] \left(\eta^{\mu\nu} + \sum_{m=1}^\infty c^{-m} g^{\mu\nu}_{(m)} \right) \partial_\nu f \nonumber\\
	&\qquad= \frac{1}{2} \sum_{k=1}^\infty c^{-k} \sum_{n=1}^\infty (-1)^n \frac{1}{n} [\partial_\mu \tr(\eta g^{-1}_{(k,n)})] \, \eta^{\mu\nu} \partial_\nu f \nonumber\\
		&\qquad\quad + \frac{1}{2} \sum_{k=2}^\infty c^{-k} \sum_{l+m = k} \; \sum_{n=1}^\infty (-1)^n \frac{1}{n} [\partial_\mu \tr(\eta g^{-1}_{(l,n)})] \, g^{\mu\nu}_{(m)} \partial_\nu f
\end{align}
for the first term in the d'Alembert operator \eqref{eq:box_op}.

Inserting \eqref{eq:box_expanded_term3}, 
\eqref{eq:box_expanded_term2}, and 
\eqref{eq:box_expanded_term1} into \eqref{eq:box_op} and 
sorting the sums by order of $c^{-1}$, the full expansion 
of the d'Alembert operator reads
\begin{align} \label{eq:box_expanded}
	\Box f &= \frac{1}{2} \sum_{k=4}^\infty c^{-k} \sum_{l+m = k-2} \; \sum_{n=1}^\infty (-1)^n \frac{1}{n} g^{00}_{(m)} [\partial_t \tr(\eta g^{-1}_{(l,n)})] \, \partial_t f \nonumber\\
	&\quad + \frac{1}{2} \sum_{k=3}^\infty c^{-k} \sum_{l+m = k-1} \; \sum_{n=1}^\infty (-1)^n \frac{1}{n} g^{0a}_{(m)} \left([\partial_t \tr(\eta g^{-1}_{(l,n)})] \, \partial_a f + [\partial_a \tr(\eta g^{-1}_{(l,n)})] \, \partial_t f\right) \nonumber\\
	&\quad - \frac{1}{2} \sum_{k=3}^\infty c^{-k} \sum_{n=1}^\infty (-1)^n \frac{1}{n} [\partial_t \tr(\eta g^{-1}_{(k-2,n)})] \, \partial_t f \nonumber\\
	&\quad + \sum_{k=3}^\infty c^{-k} (\partial_t g^{00}_{(k-2)}) \, \partial_t f 
		+ \sum_{k=3}^\infty c^{-k} g^{00}_{(k-2)} \partial_t^2 f \nonumber\\
	&\quad + \frac{1}{2} \sum_{k=2}^\infty c^{-k} \sum_{l+m = k} \; \sum_{n=1}^\infty (-1)^n \frac{1}{n} g^{ab}_{(m)} [\partial_a \tr(\eta g^{-1}_{(l,n)})] \, \partial_b f \nonumber\\
	&\quad + \sum_{k=2}^\infty c^{-k} \left((\partial_t g^{0a}_{(k-1)}) \, \partial_a f + (\partial_a g^{0a}_{(k-1)}) \, \partial_t f\right) 
		+ \sum_{k=2}^\infty c^{-k} \, 2 g^{0a}_{(k-1)} \partial_t\partial_a f 
		- c^{-2}\partial_t^2 f \nonumber\\
	&\quad + \frac{1}{2} \sum_{k=1}^\infty c^{-k} \sum_{n=1}^\infty (-1)^n \frac{1}{n} [\partial_a \tr(\eta g^{-1}_{(k,n)})] \delta^{ab} \partial_b f \nonumber\\
	&\quad + \sum_{k=1}^\infty c^{-k} (\partial_a g^{ab}_{(k)}) \partial_b f 
		+ \sum_{k=1}^\infty c^{-k} g^{ab}_{(k)} \partial_a \partial_b f 
		+ \Delta f .
\end{align}
\enlargethispage{-\baselineskip}

Now, we make the \acronym{WKB}-like ansatz
\begin{equation} \label{eq:WKB_ansatz}
	\Psi_\mathrm{KG} = \exp\left(\frac{\I c^2}{\hbar}S\right) \psi, \; \psi = \sum_{k=0}^\infty c^{-k} a_k
\end{equation}
for the Klein--Gordon field (compare \cite{giulini12}), 
where $S$ is a \emph{real} function; i.e.\ we separate 
off a phase factor and expand the remainder as a power 
series in $c^{-1}$. All the functions $S, a_k$ are 
assumed to be independent of the expansion parameter 
$c^{-1}$. The derivatives of the field are
\begin{equation}
	\partial_\mu \Psi_\mathrm{KG} = \frac{\I c^2}{\hbar} (\partial_\mu S) \Psi_\mathrm{KG} + \exp(\ldots) \partial_\mu\psi
\end{equation}
and
\begin{align}
	\partial_\mu\partial_\nu \Psi_\mathrm{KG} &= \exp\left(\frac{\I c^2}{\hbar}S\right) \bigg({- \frac{c^4}{\hbar^2}} (\partial_\mu S)(\partial_\nu S) \psi + \frac{\I c^2}{\hbar} \big[(\partial_\mu\partial_\nu S) \psi \nonumber\\
		&\qquad+ (\partial_\mu S) \partial_\nu\psi + (\partial_\nu S) \partial_\mu\psi\big] + \partial_\mu\partial_\nu \psi\bigg).
\end{align}

Using these and the expansion \eqref{eq:box_expanded} 
of the d'Alembert operator, we can now analyse the \KGe\ 
\eqref{eq:KG} order by order in $c^{-1}$. At the lowest 
occurring order $c^4$, we get
\begin{equation}
	-\exp\left(\frac{\I c^2}{\hbar}S\right) \frac{1}{\hbar^2} \delta^{ab} (\partial_a S) (\partial_b S) a_0 = 0,
\end{equation}
which is equivalent\footnote
	{For nontrivial solutions, i.e.\ $a_0 \ne 0$.}
to $\partial_a S = 0$. So $S$ is a function of (coordinate) 
time only. Using this, the \KGe\ has no term of order $c^3$.

At $c^2$, we get
\begin{equation}
	\exp\left(\frac{\I c^2}{\hbar}S\right) \left(\frac{1}{\hbar^2} (\partial_t S)^2 - \frac{m^2}{\hbar^2}\right) a_0 = 0,
\end{equation}
equivalent to $\partial_t S = \pm m$. Since we are 
interested in positive-frequency solutions of the \KGe, 
we choose $\partial_t S = -m$, leading to
\begin{equation} \label{eq:WKB_pos_freq}
	S = -mt
\end{equation}
(an additional constant term would lead to an irrelevant 
global phase).
\enlargethispage{-\baselineskip}

The $c^1$ coefficient leads to the equation
\begin{equation}
	-\exp\left(\frac{\I c^2}{\hbar}S\right) g^{00}_{(1)} \frac{m^2}{\hbar^2} a_0 = 0,
\end{equation}
equivalent to
\begin{equation} \label{eq:WKB_g00_lowest_order_vanishes}
	g^{00}_{(1)} = 0.
\end{equation}
Thus the requirement that the \KGe\ have solutions which 
are formal power series of the form \eqref{eq:WKB_ansatz} 
imposes restrictions on the components of the metric. 
In the following, we will freely use the vanishing of 
$g^{00}_{(1)}$.

\pagebreak
Using \eqref{eq:WKB_pos_freq} and 
\eqref{eq:WKB_g00_lowest_order_vanishes}, the positive 
frequency \KGe\ for our \acronym{WKB}-like solutions 
is equivalent to the following equation for $\psi$:
\begin{align} \label{eq:WKB_KG_psi}
	0 &= \sum_{k=5}^\infty c^{-k} \frac{1}{2} \sum_{l+m = k-2} \; \sum_{n=1}^\infty (-1)^n \frac{1}{n} g^{00}_{(m)} [\partial_t \tr(\eta g^{-1}_{(l,n)})] \, \partial_t \psi \nonumber\\
	&\quad + \sum_{k=4}^\infty c^{-k} (\partial_t g^{00}_{(k-2)}) \, \partial_t \psi 
		+ \sum_{k=4}^\infty c^{-k} g^{00}_{(k-2)} \partial_t^2 \psi \nonumber\\
	&\quad - \sum_{k=3}^\infty c^{-k} \frac{\I m}{2\hbar} \sum_{l+m = k} \; \sum_{n=1}^\infty (-1)^n \frac{1}{n} g^{00}_{(m)} [\partial_t \tr(\eta g^{-1}_{(l,n)})] \psi \nonumber\\
	&\quad + \sum_{k=3}^\infty c^{-k} \frac{1}{2} \sum_{l+m = k-1} \; \sum_{n=1}^\infty (-1)^n \frac{1}{n} g^{0a}_{(m)} \left( [\partial_t \tr(\eta g^{-1}_{(l,n)})] \, \partial_a \psi + [\partial_a \tr(\eta g^{-1}_{(l,n)})] \, \partial_t \psi\right) \nonumber\\
	&\quad - \sum_{k=3}^\infty c^{-k} \frac{1}{2} \sum_{n=1}^\infty (-1)^n \frac{1}{n} [\partial_t \tr(\eta g^{-1}_{(k-2,n)})] \, \partial_t \psi \nonumber\\
	&\quad - \sum_{k=2}^\infty c^{-k} \frac{\I m}{\hbar} (\partial_t g^{00}_{(k)}) \psi 
		- \sum_{k=2}^\infty c^{-k} \frac{2\I m}{\hbar} g^{00}_{(k)} \partial_t \psi \nonumber\\
	&\quad + \sum_{k=2}^\infty c^{-k} \frac{1}{2} \sum_{l+m = k} \; \sum_{n=1}^\infty (-1)^n \frac{1}{n} g^{ab}_{(m)} [\partial_a \tr(\eta g^{-1}_{(l,n)})] \, \partial_b \psi \nonumber\\
	&\quad + \sum_{k=2}^\infty c^{-k} \left((\partial_t g^{0a}_{(k-1)}) \, \partial_a \psi + (\partial_a g^{0a}_{(k-1)}) \, \partial_t \psi\right) 
		+ \sum_{k=2}^\infty c^{-k} \, 2 g^{0a}_{(k-1)} \partial_t \partial_a \psi - c^{-2} \partial_t^2 \psi \nonumber\\
	&\quad - \sum_{k=1}^\infty c^{-k} \frac{\I m}{2\hbar} \sum_{l+m = k+1} \; \sum_{n=1}^\infty (-1)^n \frac{1}{n} g^{0a}_{(m)} [\partial_a \tr(\eta g^{-1}_{(l,n)})] \psi \nonumber\\
	&\quad + \sum_{k=1}^\infty c^{-k} \frac{\I m}{2\hbar} \sum_{n=1}^\infty (-1)^n \frac{1}{n} [\partial_t \tr(\eta g^{-1}_{(k,n)})] \psi \nonumber\\
	&\quad + \sum_{k=1}^\infty c^{-k} \frac{1}{2} \sum_{n=1}^\infty (-1)^n \frac{1}{n} [\partial_a \tr(\eta g^{-1}_{(k,n)})] \delta^{ab} \partial_b \psi 
		+ \sum_{k=1}^\infty c^{-k} (\partial_a g^{ab}_{(k)}) \partial_b \psi \nonumber\\
	&\quad + \sum_{k=1}^\infty c^{-k} g^{ab}_{(k)} \partial_a \partial_b \psi \nonumber\\
	&\quad - \sum_{k=0}^\infty c^{-k} \frac{m^2}{\hbar^2} g^{00}_{(k+2)} \psi 
		- \sum_{k=0}^\infty c^{-k} \frac{\I m}{\hbar} (\partial_a g^{0a}_{(k+1)}) \psi 
		- \sum_{k=0}^\infty c^{-k} \frac{2\I m}{\hbar} g^{0a}_{(k+1)} \partial_a \psi \nonumber\\
	&\quad + \frac{2\I m}{\hbar} \partial_t \psi 
		+ \Delta \psi
\end{align}
\vfill\pagebreak

Inserting the expansion $\psi = \sum_{k=0}^\infty c^{-k} a_k$ 
and using the Cauchy product formula, this is equivalent to
\begin{align} \label{eq:WKB_KG}
	0 &= \sum_{k=5}^\infty c^{-k} \frac{1}{2} \sum_{l+m + \tilde k = k-2} \; \sum_{n=1}^\infty (-1)^n \frac{1}{n} g^{00}_{(m)} [\partial_t \tr(\eta g^{-1}_{(l,n)})] \, \partial_t a_{\tilde k} \nonumber\\
	&\quad + \sum_{k=4}^\infty c^{-k} \sum_{l + \tilde k = k-2} (\partial_t g^{00}_{(l)}) \, \partial_t a_{\tilde k} 
		+ \sum_{k=4}^\infty c^{-k} \sum_{l + \tilde k = k-2} g^{00}_{(l)} \partial_t^2 a_{\tilde k} \nonumber\\
	&\quad - \sum_{k=3}^\infty c^{-k} \frac{\I m}{2\hbar} \sum_{l+m + \tilde k = k} \; \sum_{n=1}^\infty (-1)^n \frac{1}{n} g^{00}_{(m)} [\partial_t \tr(\eta g^{-1}_{(l,n)})] a_{\tilde k} \nonumber\\
	&\quad + \sum_{k=3}^\infty c^{-k} \frac{1}{2} \sum_{l+m + \tilde k = k-1} \; \sum_{n=1}^\infty (-1)^n \frac{1}{n} g^{0a}_{(m)} \left([\partial_t \tr(\eta g^{-1}_{(l,n)})] \, \partial_a a_{\tilde k} + [\partial_a \tr(\eta g^{-1}_{(l,n)})] \, \partial_t a_{\tilde k}\right) \nonumber\\
	&\quad - \sum_{k=3}^\infty c^{-k} \frac{1}{2} \sum_{l + \tilde k = k-2} \; \sum_{n=1}^\infty (-1)^n \frac{1}{n} [\partial_t \tr(\eta g^{-1}_{(l,n)})] \, \partial_t a_{\tilde k} \nonumber\\
	&\quad - \sum_{k=2}^\infty c^{-k} \frac{\I m}{\hbar} \sum_{l + \tilde k = k} (\partial_t g^{00}_{(l)}) a_{\tilde k} 
		- \sum_{k=2}^\infty c^{-k} \frac{2\I m}{\hbar} \sum_{l + \tilde k = k} g^{00}_{(l)} \partial_t a_{\tilde k} \nonumber\\
	&\quad + \sum_{k=2}^\infty c^{-k} \frac{1}{2} \sum_{l+m + \tilde k = k} \; \sum_{n=1}^\infty (-1)^n \frac{1}{n} g^{ab}_{(m)} [\partial_a \tr(\eta g^{-1}_{(l,n)})] \, \partial_b a_{\tilde k} \nonumber\\
	&\quad + \sum_{k=2}^\infty c^{-k} \sum_{l + \tilde k = k-1} \left((\partial_t g^{0a}_{(l)}) \, \partial_a a_{\tilde k} + (\partial_a g^{0a}_{(l)}) \, \partial_t a_{\tilde k}\right) 
		+ \sum_{k=2}^\infty c^{-k} \, 2 \sum_{l + \tilde k = k-1} g^{0a}_{(l)} \partial_t \partial_a a_{\tilde k} \nonumber\\
	&\quad - \sum_{k=2}^\infty c^{-k} \partial_t^2 a_{k-2} 
		- \sum_{k=1}^\infty c^{-k} \frac{\I m}{2\hbar} \sum_{l+m + \tilde k = k+1} \; \sum_{n=1}^\infty (-1)^n \frac{1}{n} g^{0a}_{(m)} [\partial_a \tr(\eta g^{-1}_{(l,n)})] a_{\tilde k} \nonumber\\
	&\quad + \sum_{k=1}^\infty c^{-k} \frac{\I m}{2\hbar} \sum_{l + \tilde k = k} \; \sum_{n=1}^\infty (-1)^n \frac{1}{n} [\partial_t \tr(\eta g^{-1}_{(l,n)})] a_{\tilde k} \nonumber\\
	&\quad + \sum_{k=1}^\infty c^{-k} \frac{1}{2} \sum_{l + \tilde k = k} \; \sum_{n=1}^\infty (-1)^n \frac{1}{n} [\partial_a \tr(\eta g^{-1}_{(l,n)})] \delta^{ab} \partial_b a_{\tilde k} 
		+ \sum_{k=1}^\infty c^{-k} \sum_{l + \tilde k = k} (\partial_a g^{ab}_{(l)}) \partial_b a_{\tilde k} \nonumber\\
	&\quad + \sum_{k=1}^\infty c^{-k} \sum_{l + \tilde k = k} g^{ab}_{(l)} \partial_a \partial_b a_{\tilde k} 
		- \sum_{k=0}^\infty c^{-k} \frac{m^2}{\hbar^2} \sum_{l + \tilde k = k+2} g^{00}_{(l)} a_{\tilde k} \nonumber\\
	&\quad - \sum_{k=0}^\infty c^{-k} \frac{\I m}{\hbar} \sum_{l + \tilde k = k+1} (\partial_a g^{0a}_{(l)}) a_{\tilde k} 
		- \sum_{k=0}^\infty c^{-k} \frac{2\I m}{\hbar} \sum_{l + \tilde k = k+1} g^{0a}_{(l)} \partial_a a_{\tilde k} \nonumber\\
	&\quad + \sum_{k=0}^\infty c^{-k} \frac{2\I m}{\hbar} \partial_t a_k 
		+ \sum_{k=0}^\infty c^{-k} \Delta a_k \; ,
\end{align}
where in sums like $\sum_{l+m+\tilde k = k}$\,, $l$ and $m$ 
are $\ge 1$ as before, but $\tilde k$ is $\ge 0$.

Using the fully expanded \eqref{eq:WKB_KG}, we can obtain 
equations for the $a_k$, order by order, which can then 
be combined into a Schrödinger equation for $\psi$: at 
order $c^0$, we have
\begin{equation}
	0 = \left(-\frac{m^2}{\hbar^2} g^{00}_{(2)} - \frac{\I m}{\hbar} (\partial_a g^{0a}_{(1)}) - \frac{2\I m}{\hbar}g^{0a}_{(1)} \partial_a + \frac{2\I m}{\hbar} \partial_t + \Delta\right) a_0 \; ,
\end{equation}
i.e.\ the Schrödinger equation
\begin{equation} \label{eq:Schroedinger_WKB_exp_0}
	\I\hbar \partial_t a_0 = \left(-\frac{\hbar^2}{2m} \Delta + \frac{\I\hbar}{2} (\partial_a g^{0a}_{(1)}) + \I\hbar g^{0a}_{(1)} \partial_a + \frac{m}{2} g^{00}_{(2)}\right) a_0 \; .
\end{equation}
By the relation $\psi = a_0 + \Or(c^{-1})$, this also 
gives a Schrödinger equation for $\psi$ in $0^\mathrm{th}$ 
order in $c^{-1}$.

At order $c^{-1}$, \eqref{eq:WKB_KG} yields the following 
Schrödinger-like equation for $a_1$ with correction terms 
involving $a_0$:
\begin{align} \label{eq:Schroedinger_WKB_exp_1}
	\I\hbar \partial_t a_1 &= \left(-\frac{\hbar^2}{2m} \Delta + \frac{\I\hbar}{2} (\partial_a g^{0a}_{(1)}) + \I\hbar g^{0a}_{(1)} \partial_a + \frac{m}{2} g^{00}_{(2)}\right) a_1 \nonumber\\
		&\quad +\bigg(-\frac{\I \hbar}{4} g^{0a}_{(1)} [\partial_a \tr(\eta g^{-1}_{(1)})] + \frac{\I \hbar}{4} [\partial_t \tr(\eta g^{-1}_{(1)})] + \frac{\hbar^2}{4m} [\partial_a \tr(\eta g^{-1}_{(1)})] \delta^{ab} \partial_b \nonumber\\
			&\qquad - \frac{\hbar^2}{2m} (\partial_a g^{ab}_{(1)}) \partial_b - \frac{\hbar^2}{2m} g^{ab}_{(1)} \partial_a \partial_b + \frac{m}{2} g^{00}_{(3)} + \frac{\I \hbar}{2} (\partial_a g^{0a}_{(2)}) + \I \hbar g^{0a}_{(2)} \partial_a\bigg) a_0
\end{align}
Using $\psi = a_0 + c^{-1} a_1 + \Or(c^{-2})$, we can 
combine \eqref{eq:Schroedinger_WKB_exp_1} with 
\eqref{eq:Schroedinger_WKB_exp_0} into a Schrödinger 
equation for $\psi$ up to order $c^{-1}$:
\begin{align} \label{eq:Schroedinger_WKB}
	\I\hbar \partial_t \psi &= \Bigg[-\frac{\hbar^2}{2m} \Delta + \frac{\I\hbar}{2} (\partial_a g^{0a}_{(1)}) + \I\hbar g^{0a}_{(1)} \partial_a + \frac{m}{2} g^{00}_{(2)} + c^{-1} \bigg(-\frac{\I \hbar}{4} g^{0a}_{(1)} [\partial_a \tr(\eta g^{-1}_{(1)})] \nonumber\\
		&\qquad + \frac{\I \hbar}{4} [\partial_t \tr(\eta g^{-1}_{(1)})] + \frac{\hbar^2}{4m} [\partial_a \tr(\eta g^{-1}_{(1)})] \delta^{ab} \partial_b - \frac{\hbar^2}{2m} (\partial_a g^{ab}_{(1)}) \partial_b - \frac{\hbar^2}{2m} g^{ab}_{(1)} \partial_a \partial_b  \nonumber\\
		&\qquad + \frac{m}{2} g^{00}_{(3)} + \frac{\I \hbar}{2} (\partial_a g^{0a}_{(2)}) + \I \hbar g^{0a}_{(2)} \partial_a\bigg) + \Or(c^{-2})\Bigg] \psi =: H \psi
\end{align}

Continuing this process of evaluating \eqref{eq:WKB_KG}, 
we can, in principle, get Schrödinger equations for 
$\psi$ to arbitrary order in $c^{-1}$, i.e.\ obtain the 
Hamiltonian in the Schrödinger form of the positive 
frequency \KGe\ to arbitrary order in $c^{-1}$.
\enlargethispage{-\baselineskip}

However, when considering higher orders, a difficulty 
arises: the Schrödinger-like equations for $a_k$ begin 
to involve time derivatives of the lower order functions 
$a_l$, so we have to re-use the derived equations for the 
$a_l$ in order to get a true Schrödinger equation for 
$\psi$ (with a purely `spatial' Hamiltonian, i.e.\ not 
involving any time derivatives)~-- i.e.\ the process 
becomes recursive. As far as concrete calculations up to 
some finite order are concerned, this is merely a 
computational obstacle; but for a general analysis of the 
expansion method this poses a bigger problem, since no 
general closed form can be easily obtained. This motivated 
the study of the \KGe\ as a quadratic equation for the 
time derivative operator, leading to the `momentum 
expansion' method described in section \ref{sec:mom_exp}.
\enlargethispage{-\baselineskip}

\subsection{Transformation to `flat' scalar product and comparison with canonical quantisation}
\label{sec:WKB_trafo_flat}

To transform the Hamiltonian obtained in 
\eqref{eq:Schroedinger_WKB} from the representation of 
the Hilbert space with the Klein--Gordon inner product 
\eqref{eq:KG_ip_general} to the `flat' scalar product 
\eqref{eq:scalar_prod_flat} in order to compare it to 
the result from canonical quantisation, we note that 
for two positive frequency solutions 
$\Psi_\mathrm{KG} = \exp(-\I mc^2 t/\hbar) \psi$ and 
$\Phi_\mathrm{KG} = \exp(-\I mc^2 t/\hbar) \varphi$, 
the Klein--Gordon inner product is given by
\begin{align} \label{eq:KG_ip}
	\langle\Psi_\mathrm{KG},\Phi_\mathrm{KG}\rangle_\mathrm{KG} 
		&= \I\hbar c \int \D^3\ivect x \, \sqrt{^{(3)}g} \, g^{0\nu} [(\partial_\nu \overline{\Psi_\mathrm{KG}}) \Phi_\mathrm{KG} - \overline{\Psi_\mathrm{KG}} (\partial_\nu \Phi_\mathrm{KG})] \frac{1}{\sqrt{-g^{00}}} \nonumber\\
	&= \int \D^3\ivect x \, \sqrt{^{(3)}g} \, \Bigg(\sqrt{-g^{00}} \left[2mc^2 \overline{\psi}\varphi + \overline{(H \psi)}\varphi + \overline\psi (H \varphi)\right]
		\nonumber\\&\qquad + \I\hbar c \frac{g^{0a}}{\sqrt{-g^{00}}} \left[\overline{(\partial_a \psi)}\varphi - \overline\psi(\partial_a \varphi)\right] \Bigg) ,
\end{align}
where we used our adapted coordinates and chose 
$\Sigma = \{t = \mathrm{const.}\}$ in the general form 
\eqref{eq:KG_ip_general} of the Klein--Gordon inner product.

Using $\sqrt{-g^{00}} = 1 + \Or(c^{-2})$, 
$g^{0a} (-g^{00})^{-1/2} = \Or(c^{-1})$, and 
$H = \Or(c^{0})$, we get
\begin{equation}
	\frac{1}{2mc^2} \langle \Psi_\mathrm{KG}, \Phi_\mathrm{KG} \rangle_\mathrm{KG} = \int \D^3\ivect x \, \sqrt{^{(3)}g} \, [\overline{\psi}\varphi + \Or(c^{-2})] .
\end{equation}
For this to equal the `flat' scalar product 
$\int \D^3\ivect x \, \overline{\psi_\mathrm{f}} \, \varphi_\mathrm{f}$, 
we see that the `flat wavefunction' has to have the form 
$\psi_\mathrm{f} = {^{(3)}g^{1/4}} \, \psi + \Or(c^{-2})$ 
and therefore evolves according to the Schrödinger equation 
$\I\hbar \partial_t \psi_\mathrm{f} = H_\mathrm{f} \, \psi_\mathrm{f}$ 
with the `flat Hamiltonian'
\begin{equation}
	H_\mathrm{f} = \I\hbar\left(\partial_t {^{(3)}g^{1/4}}\right) \, {^{(3)}g^{-1/4}} 
		+ {^{(3)}g^{1/4}} \, H \left({^{(3)}g^{-1/4}} \cdot\right) + \Or(c^{-2}).
\end{equation}

Using\footnote
	{The metric determinant satisfies 
	$g^{-1} = -1 - c^{-1} \tr(\eta g^{-1}_{(1)}) + \Or(c^{-2})$. 
	Using the well-known identity ${^{(3)}g} = g^{00} g$ for a 
	$3+1$ decomposed metric, this gives ${^{(3)}g} = g^{00} g 
	= [-1 + \Or(c^{-2})] [-1 + c^{-1} \tr(\eta g^{-1}_{(1)}) + \Or(c^{-2})] 
	= 1 - c^{-1} \tr(\eta g^{-1}_{(1)}) + \Or(c^{-2})$.}
${^{(3)}g^{1/4}} = 1 - c^{-1}\frac{1}{4} \tr(\eta g^{-1}_{(1)}) + \Or(c^{-2})$ 
and noting that conjugation with a multiplication operator 
leaves multiplication operators invariant, we obtain
\begin{align} \label{eq:Schroedinger_WKB_flat}
	H_\mathrm{f} 
		&= -\I\hbar c^{-1} \frac{1}{4} [\partial_t \tr(\eta g^{-1}_{(1)})] 
		+ H - c^{-1}\frac{\hbar^2}{8m} [\Delta, \tr(\eta g^{-1}_{(1)})] \nonumber\\
		&\quad + c^{-1} \frac{\I\hbar}{4} g^{0a}_{(1)} [\partial_a, \tr(\eta g^{-1}_{(1)})] 
		+ \Or(c^{-2}) \nonumber\\
	&= -\I\hbar c^{-1} \frac{1}{4} [\partial_t \tr(\eta g^{-1}_{(1)})] 
		+ H - c^{-1}\frac{\hbar^2}{8m} \left([\Delta \tr(\eta g^{-1}_{(1)})] 
		+ 2[\partial_a \tr(\eta g^{-1}_{(1)})] \delta^{ab} \partial_b\right) \nonumber\\
		&\quad + c^{-1} \frac{\I\hbar}{4} g^{0a}_{(1)} [\partial_a \tr(\eta g^{-1}_{(1)})] 
		+ \Or(c^{-2}) \nonumber\\
	&= -\frac{\hbar^2}{2m} \Delta 
		+ \frac{\I\hbar}{2} (\partial_a g^{0a}_{(1)}) 
		+ \I\hbar g^{0a}_{(1)} \partial_a + \frac{m}{2} g^{00}_{(2)} 
		+ c^{-1} \bigg(-\frac{\hbar^2}{2m} (\partial_a g^{ab}_{(1)}) \, \partial_b 
		- \frac{\hbar^2}{2m} g^{ab}_{(1)} \partial_a \partial_b \nonumber\\
		&\quad + \frac{m}{2} g^{00}_{(3)} 
		+ \frac{\I \hbar}{2} (\partial_a g^{0a}_{(2)}) 
		+ \I \hbar g^{0a}_{(2)} \partial_a 
		- \frac{\hbar^2}{8m} [\Delta \tr(\eta g^{-1}_{(1)})]\bigg) 
		+ \Or(c^{-2}) \nonumber\\
	&= -\frac{\hbar^2}{2m} \Delta 
		- \frac{1}{2} \left\{g^{0a}_{(1)}, -\I\hbar\partial_a\right\} 
		+ \frac{m}{2} g^{00}_{(2)} 
		+ c^{-1} \bigg(\frac{1}{2m} (-\I\hbar) \partial_a \Big(g^{ab}_{(1)} (-\I\hbar)\partial_b \cdot \Big) \nonumber\\
		&\quad + \frac{m}{2} g^{00}_{(3)} 
		- \frac{1}{2} \left\{g^{0a}_{(2)}, -\I\hbar\partial_a\right\} 
		- \frac{\hbar^2}{8m} [\Delta \tr(\eta g^{-1}_{(1)})]\bigg) 
		+ \Or(c^{-2}),
\end{align}
where $\{A,B\} = AB + BA$ denotes the anticommutator. This 
is the Hamiltonian \mbox{appearing} in the `flat' Schrödinger 
form of the positive frequency Klein--Gordon \mbox{equation} up to order $c^{-1}$, 
obtained by the \acronym{WKB}-like approximation in a 
general metric.
\enlargethispage{-2\baselineskip}

For comparison of this result with the canonical 
quantisation scheme, we have to subtract the rest energy 
$mc^2$ from the classical Hamiltonian of equation 
\eqref{eq:class_Ham}, corresponding to the phase factor 
separated off the Klein--Gordon field, and expand it 
in $c^{-1}$, yielding
\begin{align}
	H_\mathrm{class} 
		&= \frac{1}{\sqrt{-g^{00}}} c \left[m^2 c^2 + \left(g^{ab} - \frac{1}{g^{00}} g^{0a} g^{0b}\right) p_a p_b\right]^{1/2} 
		\kern-.8em- mc^2 + \frac{c}{g^{00}} g^{0a} p_a \nonumber\\
	&= \frac{m}{2} g^{00}_{(2)} + \frac{\vect p^2}{2m} 
		- g^{0a}_{(1)} p_a + c^{-1} \left(\frac{m}{2} g^{00}_{(3)} 
		+ g^{ab}_{(1)} \frac{p_a p_b}{2m} - g^{0a}_{(2)} p_a\right) 
		+ \Or(c^{-2}).
\end{align}
Comparing this with \eqref{eq:Schroedinger_WKB_flat}, 
we see that by `canonical quantisation' of this classical 
Hamiltonian using the rule `$p_i \to -\I\hbar\partial_i$', 
we can reproduce, using a specific ordering scheme, all 
terms appearing in the \acronym{WKB} expansion, apart from 
$-\frac{\hbar^2}{8mc} [\Delta \tr(\eta g^{-1}_{(1)})]$. For 
this last term to arise by naive canonical quantisation, 
consisting only of symmetrising according to some ordering 
scheme and replacing momenta by operators, in the classical 
Hamiltonian there would have to be a term proportional to 
$\frac{\vect p^2}{mc} \tr(\eta g^{-1}_{(1)}) 
= \frac{\vect p^2}{mc} \delta_{ab} g^{ab}_{(1)}$, which is 
not the case.
\enlargethispage{-2\baselineskip}

As the most simple non-trivial example, for the 
`Newtonian' metric with line element
\begin{equation}
	\D s^2 = -\left(1+2\frac{\phi}{c^2}\right)c^2\D t^2 + \D \vect x^2 + \Or(c^{-2}),
\end{equation}
the inverse metric has components
\begin{equation}
	(g^{\mu\nu}) = \begin{pmatrix}
		-1 + 2 \frac{\phi}{c^2} + \Or(c^{-4}) & \Or(c^{-3})\\
		\Or(c^{-3}) & \mathbb{1} + \Or(c^{-2})
	\end{pmatrix},
\end{equation}
leading to the quantum Hamiltonian 
$H = -\frac{\hbar^2}{2m}\Delta + m \phi + \Or(c^{-2})$ 
in both schemes, i.e.\ just the standard Hamiltonian with 
Newtonian potential.

The occurrence of an extra term in a geometrically 
motivated quantum theory which one cannot arrive at by 
naive canonical quantisation is reminiscent of the 
occurrence of a `quantum-mechanical potential' term in 
the Hamiltonian found by DeWitt in his 1952 treatment of 
quantum motion in a curved space \cite{dewitt52}: by 
demanding the (free part of the) Hamiltonian to be given by 
$H^\mathrm{DeWitt} = -\frac{\hbar^2}{2m} {^{(3)} \hspace{-.2em} \Delta_\mathrm{LB}}$ 
in terms of the spatial Laplace--Beltrami operator 
$^{(3)}\hspace{-.2em}\Delta_\mathrm{LB}$ (induced by the 
\emph{physical} spatial metric ${^{(3)}g}$, not the 
background flat one), it turns out to have the form 
$H^\mathrm{DeWitt} = \frac{1}{2m}\hat p_a \, {^{(3)}g^{ab}} \, \hat p_b + \hbar^2 Q$ 
of a sum of a naively canonically quantised kinetic term\footnote
	{Note that DeWitt uses the `geometric' scalar product 
	\eqref{eq:canon_scalar_prod_induced}, not the `flat' one.}
and the quantum-mechanical potential\footnote
	{Using the form 
	\begin{align}
		-\hbar^2 \, {^{(3)}\hspace{-.2em} \Delta_\mathrm{LB}} 
			&= -\hbar^2 \frac{1}{\sqrt{^{(3)}g}} \partial_a \left( \sqrt{^{(3)}g} \, {^{(3)}g^{ab}} \partial_b \, \cdot \right) \nonumber\\
		&= {^{(3)}g^{-1/4}} \, \hat p_a \, {^{(3)}g^{1/2}} \, {^{(3)}g^{ab}} \, \hat p_b \, {^{(3)}g^{-1/4}}
	\end{align}
	of the Laplace--Beltrami operator in terms of the momentum 
	operator \eqref{eq:canon_mom_op_induced}, it can be 
	expressed as 
	\begin{equation}
		-\hbar^2 \, {^{(3)}\hspace{-.2em} \Delta_\mathrm{LB}} 
			= \hat p_a \, {^{(3)}g^{ab}} \, \hat p_b 
			- {^{(3)}g^{-1/4}} [\hat p_a, {^{(3)}g^{ab}} [\hat p_b, {^{(3)}g^{1/4}}]],
	\end{equation}
	giving the above expression for the quantum-mechanical potential.}
$\hbar^2 Q = \frac{\hbar^2}{2m} {^{(3)}g^{-1/4}} \partial_a ({^{(3)}g^{ab}} \partial_b {^{(3)}g^{1/4}})$.

In fact, for our metric \eqref{eq:exp_metric}, in lowest 
order in $c^{-1}$ the quantum-mechanical potential is 
given by $\hbar^2 Q = -\frac{\hbar^2}{8mc} \Delta (\delta_{ab} g^{ab}_{(1)}) 
+ \Or(c^{-2}) = -\frac{\hbar^2}{8mc}[\Delta \tr(\eta g^{-1}_{(1)})] 
+ \Or(c^{-2})$, thus reproducing the additional term 
arising in the \acronym{WKB} method. This apparent 
connection of our \acronym{WKB}-like expansion to the 
three-dimensional `spatial' geometry seems interesting, 
but further investigation in this direction goes beyond 
the scope of this thesis, since in this post-Newtonian 
context, the explicit comparison to the Newtonian limit 
--~which also includes flat space~-- is the specific 
subject of interest.

Note that one could argue that DeWitt's Hamiltonian 
\emph{can} be arrived at by canonical quantisation in 
some sense, since the Laplace--Beltrami operator can be 
written as $- \hbar^2 \, {^{(3)}\hspace{-.2em} \Delta_\mathrm{LB}} 
= {^{(3)}g^{-1/4}} \, \hat p_a \, {^{(3)}g^{1/2}} \, {^{(3)}g^{ab}} \, \hat p_b \, {^{(3)}g^{-1/4}}$ 
in terms of the momentum operator \eqref{eq:canon_mom_op_induced} 
corresponding to the `geometric' scalar product 
\eqref{eq:canon_scalar_prod_induced} which was used 
by DeWitt. However, such a `clever rewriting' of the 
Newtonian kinetic term in the classical Hamiltonian as 
$\frac{1}{2m} {^{(3)}g^{ab}} p_a p_b 
= \frac{1}{2m} {^{(3)}g^{-1/4}} \, p_a \, {^{(3)}g^{1/2}} \, {^{(3)}g^{ab}} \, p_b \, {^{(3)}g^{-1/4}}$ 
before replacing momenta by operators involves more than 
just choosing some symmetrised operator ordering, and 
thus is not part of what we called `canonical 
quantisation' above.

\subsection{The Eddington--Robertson \acronym{PPN} metric as an explicit example}
\label{sec:WKB_ER_PPN}

We now will apply the \acronym{WKB}-like expansion method 
to the Eddington--Robertson parametrised post-Newtonian 
metric as given by \eqref{eq:ER_PPN_metric}, 
\eqref{eq:ER_PPN_metric_inv}.

Inserting the metric components, the equations arising for 
the coefficient functions $a_0, a_1$ from \eqref{eq:WKB_KG} 
at orders $c^0, c^{-1}$ are simply the Schrödinger equations
\begin{equation} \label{eq:Schroedinger_WKB_ER_exp_0}
	\I\hbar \partial_t a_i = \left(-\frac{\hbar^2}{2m} \Delta + m\phi\right) a_i, \quad i = 0,1.
\end{equation}
At orders $c^{-2}, c^{-3}$, we get --~again for $i = 0,1$~--
\begin{align}
	0 &= \bigg[- \frac{\I m}{\hbar} (\partial_t g^{00}_{(2)}) 
			- \frac{2\I m}{\hbar} g^{00}_{(2)} \partial_t - \partial_t^2 
			+ \frac{\I m}{2\hbar} \bigg( -[\partial_t \tr(\eta g^{-1}_{(2)})] 
				+ \frac{1}{2} [\partial_t \tr(\eta \underbrace{g^{-1}_{(2,2)}}_{\mathclap{= g^{-1}_{(1)} \eta g^{-1}_{(1)} = 0}} )] \bigg) \nonumber\\
			&\qquad + \frac{1}{2} \left( -[\partial_a \tr(\eta g^{-1}_{(2)})] \delta^{ab} \partial_b 
				+ \frac{1}{2} [\partial_a \tr(\eta g^{-1}_{(2,2)})] \delta^{ab} \partial_b \right) \nonumber\\
			&\qquad + (\partial_a g^{ab}_{(2)}) \, \partial_b 
			+ g^{ab}_{(2)} \partial_a \partial_b 
			- \frac{m^2}{\hbar^2} g^{00}_{(4)} \bigg] a_i 
		+ \left(-\frac{m^2}{\hbar^2} g^{00}_{(2)} 
			+ \frac{2\I m}{\hbar} \partial_t + \Delta\right) a_{i+2} \nonumber\displaybreak[0]\\
	&= \bigg(- \frac{4\I m}{\hbar} \phi \partial_t - \partial_t^2 
			- \frac{\I m}{\hbar} (3\gamma + 1) (\partial_t \phi) 
			- (\gamma - 1) (\partial_a \phi) \delta^{ab} \partial_b \nonumber\\
			&\qquad + 2\gamma \phi \Delta 
			- \frac{m^2}{\hbar^2} (2\beta - 4) \phi^2 \bigg) a_i 
		+ \left(-\frac{2 m^2}{\hbar^2} \phi 
			+ \frac{2\I m}{\hbar} \partial_t + \Delta\right) a_{i+2} \; ,
\end{align}
or equivalently the Schrödinger-like equations
\begin{align}
	\I\hbar \partial_t a_{i+2} 
		&= \left(-\frac{\hbar^2}{2m} \Delta + m\phi\right) a_{i+2} 
		+ \bigg(2\I\hbar \phi \partial_t 
			+ \frac{\hbar^2}{2m} \partial_t^2 
			+ \frac{\I\hbar}{2} (3\gamma + 1) (\partial_t \phi) \nonumber\\
		&\qquad + \frac{\hbar^2}{2m} (\gamma - 1) (\partial_a \phi) \delta^{ab} \partial_b 
			- \frac{\hbar^2}{m} \gamma \phi \Delta 
			+ \frac{m}{2} (2\beta - 4) \phi^2 \bigg) a_i
\end{align}
for $a_2, a_3$. Using the Schrödinger equation 
\eqref{eq:Schroedinger_WKB_ER_exp_0} for $a_0, a_1$, we have
\begin{align}
	\frac{\hbar^2}{2m} \partial_t^2 a_i 
		&= - \frac{\I\hbar}{2m} \partial_t \left(-\frac{\hbar^2}{2m} \Delta + m\phi\right) a_i 
		= -\frac{\I\hbar}{2} (\partial_t \phi) a_i 
			- \frac{1}{2m} \left(-\frac{\hbar^2}{2m} \Delta + m\phi\right) \I\hbar \partial_t a_i \nonumber\\
	&= -\frac{\I\hbar}{2} (\partial_t \phi) a_i 
		- \frac{1}{2m} \left(-\frac{\hbar^2}{2m} \Delta + m\phi\right)^2 a_i \nonumber\\
	&= -\frac{\I\hbar}{2} (\partial_t \phi) a_i 
		-\frac{\hbar^4}{8m^3} \Delta \Delta a_i 
		+ \frac{\hbar^2}{4m} \Delta (\phi a_i) 
		+ \frac{\hbar^2}{4m} \phi \Delta a_i 
		- \frac{m}{2} \phi^2 a_i \nonumber\\
	&= -\frac{\I\hbar}{2} (\partial_t \phi) a_i 
		-\frac{\hbar^4}{8m^3} \Delta \Delta a_i 
		+ \frac{\hbar^2}{4m} (\Delta\phi) a_i 
		+ \frac{\hbar^2}{2m} (\partial_a \phi) \delta^{ab} \partial_b a_i 
		+ \frac{\hbar^2}{2m} \phi \Delta a_i 
		- \frac{m}{2} \phi^2 a_i \; ,
\end{align}
and thus the equation for $a_2, a_3$ becomes
\begin{align} \label{eq:Schroedinger_WKB_ER_exp_2}
	\I\hbar \partial_t a_{i+2} 
	&= \left(-\frac{\hbar^2}{2m} \Delta + m\phi\right) a_{i+2}
	+ \bigg(-\frac{\hbar^4}{8m^3} \Delta \Delta 
		+ \frac{\hbar^2}{4m} (\Delta\phi) 
		+ \frac{3 \I\hbar}{2} \gamma (\partial_t \phi) \nonumber\\
		&\qquad + \frac{\hbar^2}{2m} \gamma (\partial_a \phi) \delta^{ab} \partial_b 
		- \frac{\hbar^2}{2m} (2\gamma + 1) \phi \Delta 
		+ \frac{m}{2} (2\beta - 1) \phi^2 \bigg) a_i \; .
\end{align}
At higher orders, the coefficients in the expanded \KGe\ 
\eqref{eq:WKB_KG} are undetermined, since the metric 
components are undetermined.

Combining the equations \eqref{eq:Schroedinger_WKB_ER_exp_0} 
for $a_0, a_1$ and \eqref{eq:Schroedinger_WKB_ER_exp_2} for 
$a_2, a_3$, the Hamiltonian in the Schrödinger equation 
$\I\hbar \partial_t \psi = H \psi$ for the `wavefunction' 
(i.e.\ phase-shifted positive-frequency Klein--Gordon 
field) $\psi$ reads
\begin{align}
	H &= -\frac{\hbar^2}{2m} \Delta + m\phi
	+ \frac{1}{c^2}\bigg(-\frac{\hbar^4}{8m^3} \Delta \Delta 
		+ \frac{\hbar^2}{4m} (\Delta\phi) 
		+ \frac{3 \I\hbar}{2} \gamma (\partial_t \phi) \nonumber\\
		&\qquad + \frac{\hbar^2}{2m} \gamma (\partial_a \phi) \delta^{ab} \partial_b 
		- \frac{\hbar^2}{2m} (2\gamma + 1) \phi \Delta 
		+ \frac{m}{2} (2\beta - 1) \phi^2 \bigg) + \Or(c^{-4}),
\end{align}
reproducing, up to notational differences and the fact that 
we did not consider coupling to an electromagnetic field, 
the result of Lämmerzahl \cite[eq. (8)]{laemmerzahl95}.

To transform to the flat scalar product, we note that in 
our metric and using this Hamiltonian, the Klein--Gordon 
inner product \eqref{eq:KG_ip} is given by
\begin{equation} \label{eq:ER_PPN_KG_ip}
	\frac{1}{2mc^2} \langle \Psi_\mathrm{KG}, \Phi_\mathrm{KG} \rangle_\mathrm{KG} 
	= \int \D^3\ivect x \, \sqrt{^{(3)}g} \, \left(\overline{\psi}\varphi - \frac{\hbar^2}{2m^2c^2} \overline{\psi} \Delta\varphi + \Or(c^{-4}) \right) .
\end{equation}
Note that in the brackets, we did not need to expand 
any further since the factor $\sqrt{^{(3)}g}$ is 
only determined up to $\Or(c^{-4})$ by the metric 
\eqref{eq:ER_PPN_metric}. For the expression 
\eqref{eq:ER_PPN_KG_ip} to equal the flat 
scalar product 
$\int \D^3\ivect x \, \overline{\psi_\mathrm{f}} \, \varphi_\mathrm{f}$, 
the flat wavefunction has to have the form $\psi_\mathrm{f} 
= \left(1 - \frac{\hbar^2}{2m^2c^2}\Delta\right)^{1/2} {^{(3)}g^{1/4}} \, \psi + \Or(c^{-4})$ 
(note that $\frac{1}{c^2} \Delta$ commutes with $^{(3)}g$ 
up to higher-order terms), resulting in the flat Hamiltonian
\begin{align}
	H_\mathrm{f} &= \I\hbar \left(\partial_t {^{(3)}g^{1/4}}\right) {^{(3)}g^{-1/4}} \nonumber\\
		&\quad+ \left(1 - \frac{\hbar^2}{2m^2c^2}\Delta\right)^{1/2} \kern-.5em {^{(3)}g^{1/4}} \, H \, {^{(3)}g^{-1/4}} \left(1 - \frac{\hbar^2}{2m^2c^2}\Delta\right)^{-1/2} \kern-1em+ \Or(c^{-4}).
\end{align}
Using ${^{(3)}g^{1/4} = 1 - \frac{3}{2} \gamma \frac{\phi}{c^2} + \Or(c^{-4})}$ 
and $\left(1 - \frac{\hbar^2}{2m^2c^2}\Delta\right)^{1/2} 
= 1 - \frac{\hbar^2}{4m^2c^2}\Delta + \Or(c^{-4})$, 
this yields
\begin{align}
	H_\mathrm{f} 
	&= -\I\hbar\left(\partial_t \frac{3}{2} \gamma \frac{\phi}{c^2} \right) 
		+ H + \left[-\frac{3}{2}\gamma\frac{\phi}{c^2}, -\frac{\hbar^2}{2m} \Delta\right] 
		+ \left[- \frac{\hbar^2}{4m^2c^2}\Delta, m\phi\right] + \Or(c^{-4}) \nonumber\\
	&= -\frac{3\I\hbar}{2c^2} \gamma (\partial_t \phi) 
		+ H - \frac{\hbar^2}{4mc^2} (3\gamma + 1) [\Delta, \phi] 
		+ \Or(c^{-4}) \nonumber\\
	&= -\frac{3\I\hbar}{2c^2} \gamma (\partial_t \phi) 
		+ H - \frac{\hbar^2}{4mc^2} (3\gamma + 1) ((\Delta \phi) 
		+ 2 (\partial_a \phi) \delta^{ab} \partial_b) 
		+ \Or(c^{-4}) \nonumber\\
	&= -\frac{\hbar^2}{2m} \Delta + m\phi
		+ \frac{1}{c^2} \bigg(-\frac{\hbar^4}{8m^3} \Delta \Delta 
			- \frac{3\hbar^2}{4m} \gamma (\Delta\phi) \nonumber\\
			&\qquad - \frac{\hbar^2}{2m} (2\gamma + 1) (\partial_a \phi) \delta^{ab} \partial_b 
			- \frac{\hbar^2}{2m} (2\gamma + 1) \phi \Delta 
			+ \frac{m}{2} (2\beta - 1) \phi^2 \bigg) + \Or(c^{-4}),
\end{align}
reproducing the flat Hamiltonian of Lämmerzahl 
\cite[eq. (16)]{laemmerzahl95}.

In comparison, the classical Hamiltonian (minus the 
rest energy) expands to
\begin{align}
	H_\mathrm{class} 
		&= \frac{1}{\sqrt{-g^{00}}} c \left[m^2 c^2 + \left(g^{ab} - \frac{1}{g^{00}} g^{0a} g^{0b}\right) p_a p_b\right]^{1/2} 
		\kern-.8em- mc^2 + \frac{c}{g^{00}} g^{0a} p_b \nonumber\\
	&= \frac{\vect p^2}{2m} + m\phi 
		+ c^{-2} \left(-\frac{(\vect p^2)^2}{8m^3} 
			+ \frac{m \phi^2}{2}(2\beta - 1) 
			+ \frac{\phi}{2m}(2\gamma + 1) \vect p^2\right) 
			+ \Or(c^{-4}).
\end{align}
By canonical quantisation of this, we cannot reproduce 
the Hamiltonian obtained from the \acronym{WKB} expansion 
in the case of a general $\gamma$, but just for some 
special choices of~$\gamma$, depending on the ordering 
scheme: for example, in the anticommutator ordering scheme, 
we would quantise the classical function $\phi \vect p^2$ 
as
\begin{equation}
	\frac{1}{2} \{-\hbar^2\Delta, \phi\} 
	= -\frac{\hbar^2}{2} (\Delta\phi) 
		- \hbar^2 (\partial_a \phi) \delta^{ab} \partial_b 
		- \hbar^2 \phi \Delta \; ,
\end{equation}
reproducing the \acronym{WKB} Hamiltonian in the case 
of $\gamma = 1$; but when quantising it as 
$-\hbar^2 \delta^{ab} \partial_a (\phi \partial_b \, \cdot) 
= -\hbar^2 (\partial_a \phi) \delta^{ab} \partial_b 
	- \hbar^2 \phi \Delta$, 
this would lead to agreement with the \acronym{WKB} 
Hamiltonian for $\gamma = 0$. Note however that this 
difference concerns a term proportional to $\Delta \phi$, 
the Laplacian of the Newtonian potential. By the Newtonian 
gravitational field equation, this term is (in lowest 
order) proportional to the mass density generating the 
gravitational field. Thus it is irrelevant in physical 
situations concerning the outside of the generating matter 
distribution, for example in quantum-optical experiments 
in the gravitational field of the earth taking place 
outside of the earth. Nevertheless, this example shows 
that the way in which \acronym{PPN} parameters enter a 
quantum description delicately depends on the quantisation 
method.

\section{General comparison of the two methods by momentum expansion}
\label{sec:mom_exp}

We will now describe a method by which general statements 
about similarities and differences between the two 
approaches explained above can be made in the case of 
stationary spacetimes, without any post-Newtonian expansion 
in $c^{-1}$. Instead, we consider `potential' terms and 
terms linear, quadratic, \ldots\ in momentum, i.e.\ we 
perform a (formal) expansion in momenta. Of course, this 
also amounts to somewhat of a post-Newtonian expansion~-- 
although just relating to the particle momentum/velocity, 
not the gravitational field \emph{per se}.

\subsection{The \KGe\ as a quadratic equation for the Hamiltonian}

We assume a \emph{stationary} physical spacetime such that 
the background time evolution vector field\footnote
	{In fact, for the `momentum expansion' to be developed in 
	the following we do not need to expand the physical metric 
	in any way, and thus we do not need a background metric to 
	define a notion of `absence of gravity'. Nevertheless, we 
	need a notion of `space'~-- but this could also be given 
	by something else than the orthogonal complement of the 
	stationarity field with respect to a background metric. 
	In any case, our approach based on a background metric 
	leads to a decomposition as needed in an easy and 
	well-defined geometric way.}
$u = \partial_t$ 
is (a constant multiple of) the stationarity Killing field, 
i.e. $\partial_t g_{\mu\nu} = 0$. The coordinate expression 
for the d'Alembert operator on functions is thus
\begin{align}
	\Box f &= \frac{1}{\sqrt{-g}} \partial_\mu(\sqrt{-g} g^{\mu\nu} \partial_\nu f) \nonumber\\
	&= \frac{1}{\sqrt{-g}} (\partial_\mu\sqrt{-g}) g^{\mu\nu} \partial_\nu f 
		+ (\partial_\mu g^{\mu\nu}) \partial_\nu f 
		+ g^{\mu\nu} \partial_\mu\partial_\nu f \nonumber\\
	&= \frac{1}{2g} (\partial_a g) g^{a\nu} \partial_\nu f 
		+ (\partial_a g^{a\nu}) \partial_\nu f 
		+ g^{\mu\nu} \partial_\mu\partial_\nu f.
\end{align}
Hence, the minimally coupled \KGe\ reads
\begin{align}
	0 &= \left(\Box - \frac{m^2c^2}{\hbar^2}\right)\Psi \nonumber\\
	&= \frac{1}{c} \frac{1}{2g} (\partial_a g) g^{0a} \partial_t \Psi 
		+ \frac{1}{2g} (\partial_a g) g^{ab} \partial_b \Psi 
		+ \frac{1}{c} (\partial_a g^{0a}) \, \partial_t \Psi 
		+ (\partial_a g^{ab}) \partial_b \Psi \nonumber\\
		&\quad + \frac{1}{c^2} g^{00} \partial_t^2 \Psi 
		+ \frac{2}{c} g^{0a} \partial_a \partial_t \Psi 
		+ g^{ab} \partial_a \partial_b \Psi 
		- \frac{m^2c^2}{\hbar^2}\Psi.
\end{align}
This means that the space of solutions of the \KGe\ is 
the kernel of $\mathcal P(\I\hbar \partial_t)$, where for 
an operator $A$ acting on the functions on the spacetime, 
$\mathcal P(A)$ is the following operator:
\begin{align}
	\mathcal P(A) = 
		&-\frac{\I}{\hbar c} \frac{1}{2g} (\partial_a g) g^{0a} A 
		+ \frac{1}{2g} (\partial_a g) g^{ab} \partial_b 
		- \frac{\I}{\hbar c} (\partial_a g^{0a}) A 
		+ (\partial_a g^{ab}) \partial_b \nonumber\\
		&- \frac{1}{\hbar^2 c^2} g^{00} A^2 
		- \frac{2\I}{\hbar c} g^{0a} \partial_a \circ A 
		+ g^{ab} \partial_a \partial_b 
		- \frac{m^2c^2}{\hbar^2}
\end{align}
\enlargethispage{-\baselineskip}

Thus, wanting to write the \KGe\ in the form of a 
Schrödinger equation $\I\hbar\partial_t \Psi = H \Psi$ 
(and thus restricting to the solutions of the \KGe\ for 
which this is possible), we see that this can be achieved 
by demanding the Hamiltonian $H$ to be a solution of the 
quadratic operator equation
\begin{equation} \label{eq:KG_operator_expression}
	0 = \mathcal P(H)
\end{equation}
and be composed only of spatial derivative operators 
and coefficients of the metric, not involving any time 
derivatives: stationarity of the metric then implies 
$[\partial_t, H] = 0$, such that the Schrödinger equation 
yields $(\I\hbar\partial_t)^2 \Psi = \I\hbar\partial_t H \Psi 
= H \I\hbar\partial_t \Psi = H^2 \Psi$, leading to 
$\mathcal P(\I\hbar\partial_t) \Psi = \mathcal P(H) \Psi = 0$ 
by \eqref{eq:KG_operator_expression}; i.e.\ every solution 
of the Schrödinger equation is also a solution of the \KGe.

In the following, we will solve equation 
\eqref{eq:KG_operator_expression} by expanding $H$ as a 
formal power series in spatial derivative operators, i.e.\ 
momentum operators. The two possible solutions we will 
obtain for $H$ correspond to positive and negative 
frequency solutions of the \KGe, respectively.
\enlargethispage{-\baselineskip}

\subsection{Momentum expansion and first-order solution}

We expand $H$ as $H = H_{(0)} + H_{(1)} + \Or(\partial_a^2)$, 
where $H_{(k)}$ includes all terms involving $k$ spatial 
derivative operators. Using this notation, the lowest 
order term of \eqref{eq:KG_operator_expression}, involving 
no spatial derivatives, reads
\begin{equation}
	0 = -\frac{1}{\hbar^2 c^2} g^{00} H_{(0)}^2 - \frac{m^2c^2}{\hbar^2},
\end{equation}
giving
\begin{equation} \label{eq:KG_operator_H0}
	H_{(0)} = \frac{mc^2}{\sqrt{-g^{00}}}
\end{equation}
where we choose the positive square root since we are 
interested in positive frequency solutions of the \KGe.

At order $\partial_a^1$, equation \eqref{eq:KG_operator_expression} 
gives
\begin{align}
	0 = &-\frac{\I}{\hbar c} \frac{1}{2g} (\partial_a g) g^{0a} H_{(0)} 
		- \frac{\I}{\hbar c} (\partial_a g^{0a}) H_{(0)} \nonumber\\
	&- \frac{1}{\hbar^2 c^2} g^{00} (2 H_{(0)} H_{(1)} 
		+ [H_{(1)}, H_{(0)}]) 
		- \frac{2\I}{\hbar c} g^{0a} \partial_a\circ H_{(0)} \; .
\end{align}
Writing $H_{(1)} = H_{(1,M)} + H_{(N,C)}^a \partial_a$ 
where $H_{(1,M)}$ is a \emph{m}ultiplication operator 
(involving one spatial differentiation of some function) 
and $H_{(N,C)}^a$ are \emph{c}oefficient functions 
\emph{n}ot involving any differentiations, we have 
$[H_{(1)}, H_{(0)}] = [H_{(1,C)}^a \partial_a, H_{(0)}] 
= H_{(1,C)}^a (\partial_a H_{(0)})$. Thus, the equation 
reads
\begin{align} \label{eq:KG_operator_expansion_order_1}
	0 = &-\frac{\I}{\hbar c} \frac{1}{2g} (\partial_a g) g^{0a} H_{(0)} 
		- \frac{\I}{\hbar c} (\partial_a g^{0a}) H_{(0)} 
		- \frac{2g^{00}}{\hbar^2 c^2} H_{(0)} H_{(1)} \nonumber\\
		&- \frac{g^{00}}{\hbar^2 c^2} H_{(1,C)}^a (\partial_a H_{(0)}) 
		- \frac{2\I}{\hbar c} g^{0a} (\partial_a H_{(0)}) 
		- \frac{2\I}{\hbar c} g^{0a} H_{(0)} \partial_a \; .
\end{align}
The right-hand side now has two different components: 
a multiplication operator and an operator differentiating 
the function it acts upon. We demand that these components 
vanish independently. The `differentiating part' of 
\eqref{eq:KG_operator_expansion_order_1} is
\begin{equation}
	0 = - \frac{2g^{00}}{\hbar^2 c^2} H_{(0)} H_{(1,C)}^a \partial_a 
		- \frac{2\I}{\hbar c} g^{0a} H_{(0)} \partial_a \; ,
\end{equation}
or equivalently
\begin{equation} \label{eq:KG_operator_H1C}
	H_{(1,C)}^a = - \I \hbar c \frac{g^{0a}}{g^{00}} \; .
\end{equation}
Using this, the multiplication operator part of 
\eqref{eq:KG_operator_expansion_order_1} reads
\begin{equation}
	0 = -\frac{\I}{\hbar c} \frac{1}{2g} (\partial_a g) g^{0a} H_{(0)} 
	- \frac{\I}{\hbar c} (\partial_a g^{0a}) H_{(0)} 
	- \frac{2g^{00}}{\hbar^2 c^2} H_{(0)} H_{(1,M)} 
	- \frac{\I}{\hbar c} g^{0a} (\partial_a H_{(0)}),
\end{equation}
giving
\begin{equation} \label{eq:KG_operator_H1M}
	H_{(1,M)} = 
		- \frac{\I\hbar c}{4g^{00} g} (\partial_a g) g^{0a} 
		- \frac{\I\hbar c}{2g^{00}} (\partial_a g^{0a}) 
		- \frac{\I\hbar c}{2g^{00}} g^{0a} \frac{1}{H_{(0)}} (\partial_a H_{(0)}).
\end{equation}
Since $\frac{1}{H_{(0)}} (\partial_a H_{(0)}) 
= \sqrt{-g^{00}} \, \partial_a \frac{1}{\sqrt{-g^{00}}} 
= \frac{g^{00}}{2} \, \partial_a \frac{1}{g^{00}}$, equations 
\eqref{eq:KG_operator_H0}, \eqref{eq:KG_operator_H1C} and 
\eqref{eq:KG_operator_H1M} together yield the result
\begin{equation} \label{eq:KG_operator_H_complete}
	H = \frac{mc^2}{\sqrt{-g^{00}}} 
		- \frac{\I\hbar c}{4g^{00} g} (\partial_a g) g^{0a} 
		- \frac{\I\hbar c}{2g^{00}} (\partial_a g^{0a}) 
		- \frac{\I\hbar c}{4} g^{0a} \left(\partial_a \frac{1}{g^{00}}\right) 
		- \I \hbar c \frac{g^{0a}}{g^{00}} \partial_a 
		+ \Or(\partial_a^2)
\end{equation}
for the Hamiltonian in the Schrödinger form
\begin{equation}
	\I\hbar \partial_t \Psi = H \Psi
\end{equation}
of the positive frequency \KGe, at first order in momenta.

\subsection{Transformation to `flat' scalar product and comparison with canonical quantisation}

To transform this Hamiltonian to the `flat' scalar product, 
we note that for two positive frequency solutions $\Psi$ 
and $\Phi$, the Klein--Gordon inner product is given by
\begin{align}
	\langle\Psi,\Phi\rangle_\mathrm{KG} 
		&= \I\hbar c \int \D^3\ivect x \, \sqrt{^{(3)}g} \, g^{0\nu} [(\partial_\nu \overline\Psi)\Phi - \overline\Psi(\partial_\nu \Phi)] \frac{1}{\sqrt{-g^{00}}} \nonumber\\
	&= \int \D^3\ivect x \, \sqrt{^{(3)}g} \, \Bigg(\sqrt{-g^{00}} \left[\overline{(H \Psi)}\Phi 
		+ \overline\Psi(H \Phi)\right] \nonumber\\
		&\qquad+ \I\hbar c \frac{g^{0a}}{\sqrt{-g^{00}}} \left[ \overline{(\partial_a \Psi)} \Phi - \overline\Psi (\partial_a \Phi) \right] \Bigg) \nonumber\\
	\text{(using \eqref{eq:KG_operator_H_complete})}
		\quad &= \int \D^3\ivect x \, \sqrt{^{(3)}g} \, 2mc^2 \, \overline\Psi \Phi
		+ \Or(\partial_a^2).
\end{align}
For this to equal the `flat' scalar product 
$\int \D^3\ivect x \, \overline{\Psi_\mathrm{f}} \, \Phi_\mathrm{f}$, 
we see that the `flat wavefunction' has to have the form 
$\Psi_\mathrm{f} = \sqrt{2mc^2} \, {^{(3)}g^{1/4}} \, \Psi 
+ \Or(\partial_a^2)$, and therefore evolves according to 
the Schrödinger equation $\I\hbar \partial_t \Psi_\mathrm{f} 
= H_\mathrm{f} \Psi_\mathrm{f}$ with the `flat Hamiltonian'
\begin{equation}
	H_\mathrm{f} = {^{(3)}g^{1/4}} \, H \, \left({^{(3)}g^{-1/4}} \cdot\right) 
		+ \Or(\partial_a^2).
\end{equation}
For calculating $H_\mathrm{f}$ from $H$, we note that 
conjugating with a multiplication operator leaves 
multiplication operators invariant and that
\begin{align}
	{^{(3)}g^{1/4}} \, \partial_a \left({^{(3)}g^{-1/4}} \, \cdot\right) 
		&= \partial_a - \frac{1}{4} \left[\partial_a \ln\left({^{(3)}g}\right)\right] \nonumber\\
	&= \partial_a - \frac{1}{4} \left[\partial_a \ln\left(g^{00} g\right)\right] \nonumber\\
	&= \partial_a - \frac{1}{4} \, \frac{1}{g} (\partial_a g) 
		- \frac{1}{4} \, \frac{1}{g^{00}} \left(\partial_a g^{00}\right),
\end{align}
yielding the final result
\begin{align}
	H_\mathrm{f} &= \frac{mc^2}{\sqrt{-g^{00}}} 
		- \frac{\I\hbar c}{4g^{00} g} (\partial_a g) g^{0a} 
		- \frac{\I\hbar c}{2g^{00}} \left(\partial_a g^{0a}\right) 
		- \frac{\I\hbar c}{4} g^{0a} \left(\partial_a \frac{1}{g^{00}}\right) \nonumber\\
		&\quad - \I \hbar c \frac{g^{0a}}{g^{00}} \left(\partial_a 
			- \frac{1}{4 g} (\partial_a g) 
			- \frac{1}{4 g^{00}} \left(\partial_a g^{00}\right)\right) 
		+ \Or(\partial_a^2) \nonumber\\
	&= \frac{mc^2}{\sqrt{-g^{00}}} 
		- \frac{\I\hbar c}{2} \left(\partial_a \frac{g^{0a}}{g^{00}}\right) 
		- \I \hbar c \frac{g^{0a}}{g^{00}} \partial_a 
		+ \Or(\partial_a^2) \nonumber\\
	&= \frac{mc^2}{\sqrt{-g^{00}}} 
		+ c \frac{1}{2} \left\{ \frac{g^{0a}}{g^{00}}, -\I\hbar \partial_a \right\} 
		+ \Or(\partial_a^2).
\end{align}
\enlargethispage{-\baselineskip}

Looking at the momentum expansion of the classical 
Hamiltonian
\begin{align}
	H_\mathrm{class} &= \frac{1}{\sqrt{-g^{00}}} c \left[ m^2 c^2 
		+ \left(g^{ab} - \frac{1}{g^{00}} g^{0a} g^{0b}\right) p_a p_b \right]^{1/2} 
		\kern-.8em+ \frac{c}{g^{00}} g^{0a} p_a \nonumber\\
	&= \frac{mc^2}{\sqrt{-g^{00}}} 
		+ \frac{c}{g^{00}} g^{0a} p_a + \Or(p_a^2),
\end{align}
we see that `canonical quantisation' of this Hamiltonian 
will lead to the same `potential term' and to the same 
term linear in momentum as did the Klein--Gordon equation, 
regardless of the adopted ordering scheme. The reason for 
this is that for terms of linear order in momentum, any 
ordering scheme leads to `anticommutator quantisation', 
as is easily shown:

Any general canonically quantised, arbitrarily symmetrised 
operator of linear order in momentum is the sum of 
terms of the form $\hat A = \frac{1}{2} (f \bar p_a h 
+ h \bar p_a f)$, where $f, h$ are real-valued functions 
of position (here identified with the corresponding 
self-adjoint multiplication operators). The classical phase 
space function corresponding to $\hat A$ is $A = \frac{1}{2} 
(f p_a h + h p_a f) = f h p_a$. Rewriting $\hat A$ as
\begin{align}
	\hat A &= \frac{1}{2} (f \bar p_a h + h \bar p_a f) 
		= \frac{1}{2} \big(\bar p_a f h + [f, \bar p_a] h 
		+ h f \bar p_a + h [\bar p_a, f] \big) \nonumber\\
	&= \frac{1}{2} \big(\bar p_a f h + (\I\hbar \partial_a f) h 
		+ h f \bar p_a - h (\I\hbar \partial_a f) \big) 
		= \frac{1}{2} (\bar p_a f h + h f \bar p_a) \nonumber\\
	&= \frac{1}{2} \{f h, \bar p_a\},
\end{align}
we thus see that it arises from $A$ by `anticommutator 
quantisation', as desired.

We thus have shown that in stationary post-Newtonian 
spacetimes, the Hamiltonians obtained by naive canonical 
quantisation of free particle motion and by formally 
expanding the \KGe\ agree to linear order in momentum. In 
particular, this means that the lowest-order coupling to 
gravitomagnetic fields agrees in both methods.


\newcommand{\ul}{\underline} 

\newcommand{\SBtag}[1]{\tag{\cite{sonnleitner18}.#1}}
\newcommand{\SBtagc}[1]{\tag{\cite{sonnleitner18}.#1$\star$}}
\newcommand{\corr}{($\star$)}
\newcommand{\red}[1]{\textcolor{red}{#1}}
\newcommand{\redbin}[1]{\mathbin{\textcolor{red}{#1}}}

\chapter{Post-Newtonian Hamiltonian description of an atom in a\texorpdfstring{\newline}{} weak gravitational field}
\chaptermark{Post-Newtonian Hamiltonian description of an atom in a weak gravitational field}
\label{chap:atom_in_gravity}

In this chapter, we extend the systematic calculation 
of an `approximately relativistic', i.e.\ first order 
post-Newtonian, Hamiltonian for centre of mass and internal 
dynamics of an electromagnetically bound two-particle 
system by Sonnleitner and Barnett \cite{sonnleitner18} to 
the case including a weak post-Newtonian gravitational 
background field, described by the Eddington--Robertson 
\acronym{PPN} metric. Starting from a properly relativistic 
description of the situation, this approach allows to 
systematically \emph{derive} the coupling of the model 
system to gravity, instead of `guessing' it by means of 
classical notions of `relativistic effects'.\looseness-1
\enlargethispage{\baselineskip}

This chapter is based on material that has been published 
in \cite{schwartz19:AiG}. However, here we significantly 
extend the published results by dropping the approximating 
assumption of constant gravitational potential over the 
extent of the system. We also clarify a small inconsistency 
that was present in the treatment of the non-gravitational 
case by Sonnleitner and Barnett in \cite{sonnleitner18}, 
as well as in our published article \cite{schwartz19:AiG}.

\section{Introduction}

Motivated by inconsistencies in the usual approximative 
Galilei-relativistic description of quantum-optical 
interactions of atoms with light, which by an \emph{ad hoc} 
semi-classical argumentation are easily seen to be possibly 
resolved in a special-relativistic description, Sonnleitner 
and Barnett have developed in \cite{sonnleitner18} a fully 
systematic derivation of an `approximately relativistic' 
Hamiltonian describing a simple atom in an external 
electromagnetic field. It is the purpose of this chapter 
to extend this so as to also include gravity approximately, 
more precisely a post-Newtonian gravitational field as 
described by the Eddington--Robertson \acronym{PPN} metric. 
As discussed in the introduction, such a generalisation 
is, apart from its conceptual value, of immediate interest 
for describing and devising quantum-optical experiments 
in gravitational fields, e.g. in atom interferometry.

The greatest value of Sonnleitner and Barnett's basing 
their whole calculation in \cite{sonnleitner18} on a 
properly relativistic treatment of the situation (an atom 
interacting with an external electromagnetic field) can 
be seen in allowing a systematic derivation of a complete 
description without any \emph{ad hoc} assumptions. In the 
end, the first-order post-Newtonian Hamiltonian they 
obtained could \emph{then} be used to interpret aspects 
of the situation in terms of classical `relativistic 
corrections'. For example, the `centre of mass' part of 
the final Hamiltonian has the form of a single-particle 
kinetic Hamiltonian, where the rôle of the rest mass of 
this particle is played by the total mass-energy of the 
atom, i.e.\ the sum of the rest masses of the constituent 
particles and the internal atomic energy divided by 
$c^2$. Thus, the computation in \cite{sonnleitner18} 
explicitly shows that this physically intuitive picture 
of a `composite particle', suggested by mass--energy 
equivalence, can, in fact, be derived in a controlled 
and systematic approximation scheme, rather than merely 
made plausible from semi-intuitive physical considerations.

As will be shown by our calculations, a similar 
interpretation is possible for the situation including 
external gravitational fields: when expressing the final 
Hamiltonian using the physical spacetime metric, an 
intuitive `composite point particle' picture including 
the `mass defect' due to mass--energy equivalence will 
again be available for the centre of mass dynamics. This 
lends justification based on detailed calculations within 
systematic approximation schemes to some of the naiver 
approaches that are based on \emph{a priori} assumptions 
concerning the gravity--matter coupling.

In section \ref{sec:situation_CS}, we set up the background 
for our calculations: after describing the physical 
system under consideration, we will give a somewhat 
detailed exposition of the method of computation in 
\cite{sonnleitner18}, in which we will also address an 
inconsistency of the original approach. Then we will 
discuss how our geometric post-Newtonian expansion 
framework introduced in chapter \ref{chap:geometric_structures} 
allows us to develop our gravitational calculation in 
parallel to that from \cite{sonnleitner18}.

In the following, we will compute in detail the
`gravitational corrections' to the calculation by 
Sonnleitner and Barnett \cite{sonnleitner18} arising 
from the presence of the gravitational field. Section 
\ref{sec:coupling_GF_particles} will deal with the coupling 
of the gravitational field to the kinetic terms of the 
particles only, ignoring couplings of the gravitational 
to the electromagnetic field.\looseness-1

In section \ref{sec:coupling_GF_EMF}, we will then compute 
the Lagrangian of the electromagnetic field in the presence 
of the gravitational field. This allows us to compute the 
total Hamiltonian describing the atomic system in section 
\ref{sec:total_Hamiltonian_comp}, by repeating the calculation 
from section \ref{sec:coupling_GF_particles} while including 
the `gravitational corrections' to electromagnetism as 
obtained in section~\ref{sec:coupling_GF_EMF}. The 
resulting Hamiltonian will then be interpreted in terms of 
the physical spacetime metric and compared to earlier 
results in the remainder of section \ref{sec:total_Hamiltonian}.

In sections \ref{sec:coupling_GF_particles} and 
\ref{sec:total_Hamiltonian_comp}, we will very closely follow the 
calculation from and presentation in \cite{sonnleitner18}. 
For the reader's convenience, we have reproduced all the 
relevant formulae from \cite{sonnleitner18} that are used 
in our calculation in section \ref{sec:work_sonnleitner_barnett}, 
in which we describe Sonnleitner and Barnett's work. We 
use the original numbering, prepended with 
`\cite{sonnleitner18}.', so for example 
\eqref{eq:Hamiltonian_com_cross_orig} refers to equation 
(25f) of \cite{sonnleitner18}. As some of the equations 
from \cite{sonnleitner18} contain minor errors (mostly 
sign errors), we here give corrected versions. The 
corresponding equation numbers are marked with a star, 
e.g.\ \eqref{eq:Hamiltonian_class_orig}.

A calculation using methods very similar to those of 
\cite{sonnleitner18} including external gravitational 
fields was performed by Marzlin already in 1995 
\cite{marzlin95}\footnote
	{I am grateful to Alexander Friedrich for pointing out 
	this reference to me.};
but unlike Sonnleitner and Barnett in \cite{sonnleitner18} 
or our calculation in the following, Marzlin did not 
perform a full first-order post-Newtonian expansion and 
instead focused on the electric dipole coupling only.

\section{A composite system in external electromagnetic and gravitational fields}
\label{sec:situation_CS}

We consider a simple system consisting of two 
particles without spin, with respective electric charges 
$e_1, e_2$, masses $m_1, m_2$, and spatial positions 
$\vect r_1, \vect r_2$. For simplicity we assume the 
charges to be equal and opposite, i.e.\ $e_2 = -e_1 =: e$. 
In what follows, we will take into account their mutual 
electromagnetic interaction, but neglect their mutual 
gravitational interaction. This two-particle system, which 
we will sometimes refer to as `atom', will be placed in 
an external electromagnetic field, which we will take into 
account, as well as an external gravitational field, which 
we will also take into account. It is our inclusion of the 
latter that extends the previous study \cite{sonnleitner18}.

\subsection{External electromagnetic fields -- the work of Sonnleitner and Barnett}
\label{sec:work_sonnleitner_barnett}

In \cite{sonnleitner18}, Sonnleitner and Barnett describe a 
systematic method to obtain an `approximately relativistic' 
quantum Hamiltonian for a system as described above 
interacting with an external electromagnetic field, where 
`approximately relativistic' refers to the inclusion of 
lowest order post-Newtonian correction terms, i.e.\ of 
order $c^{-2}$. Their work was motivated by their own 
observation \cite{sonnleitner17,sonnleitner18a} that the 
electromagnetic interaction of a decaying atom, which in 
\acronym{QED} follows an intrinsically special-relativistic 
symmetry (i.e.\ Poincaré invariance), will give rise to 
unnaturally looking friction-like terms that seem to 
contradict the relativity principle (which, of course, they 
don't) if interpreted in a `non-relativistic' (i.e.\ 
Galilei-invariant) setting of ordinary quantum mechanics. 
Their correct conclusion in \cite{sonnleitner18} was that 
this confusion can be altogether avoided by replacing this 
`hotchpotch' (their wording, see last line on p. 042106-9 
of \cite{sonnleitner18}) of symmetry concepts by a 
systematic post-Newtonian derivation starting from a 
common, manifestly Poincaré-symmetric description.

As our development will closely follow theirs, we will now 
describe the strategy of \cite{sonnleitner18} in some 
detail. In the course of this, we will also reproduce all 
formulae from \cite{sonnleitner18} that will be used in the 
remainder of this chapter. We use the original numbering, 
prepended with `\cite{sonnleitner18}.'. For formulae 
containing errors in \cite{sonnleitner18} (mostly sign 
errors), we give here a corrected version; the corrections 
are highlighted in \red{red} and the number is marked 
with a star. In addition to that, there is a conceptual 
inconsistency in the treatment in \cite{sonnleitner18} that 
we will address below. This will necessitate some 
further (rather small) amendments to the equations, which 
will be marked in the same way as the other errors.
\enlargethispage{-\baselineskip}

Sonnleitner and Barnett start with the classical 
Poincaré-invariant Lagrangian function describing two 
particles interacting with electromagnetic potentials\footnote
	{In the absence of gravity, as this is the situation considered in \cite{sonnleitner18}.}:
\begin{align*} \label{eq:Lagrangian_class_start_orig}
	L = &- \sum_{i=1,2} m_i c^2 \sqrt{1 - \dot{\vect r}_i^2 / c^2} 
		+ \int \D^3\ivect x \, (\vect j \cdot \vect A_\text{tot.} - \rho \phi_\text{el.,tot.}) \\
	&+ \frac{\varepsilon_0}{2} \int \D^3\ivect x \, [(\partial_t \vect A_\text{tot.} + \vect \nabla \phi_\text{el.,tot.})^2 
		- c^2 (\vect\nabla \times \vect A_\text{tot.})^2]. \SBtag{4}
\end{align*}
Note that we have changed the variable name of the total 
electric potential to $\phi_\text{el.,tot.}$ so as to avoid 
confusion with the Newtonian gravitational potential 
$\phi$ from the Eddington--Robertson \acronym{PPN} metric. 
$\vect j$ denotes the electric current density of the 
particles, and $\rho$ the charge density.

Sonnleitner and Barnett then split the electromagnetic 
potentials into `internal' (i.e.\ generated by the 
particles) and `external' parts, employ the Coulomb gauge, 
and solve the Maxwell equations for the internal part in 
lowest order, expressing the solutions in terms of the 
particles' positions and velocities (see the solutions in 
\eqref{eq:el_pot_orig} and \eqref{eq:mag_pot_orig} at the 
end of this section). The total vector potential, which is 
a transverse field (in the Helmholtz decomposition) due to 
the gauge condition, is split as $\vect A_\text{tot.}^\perp 
= \vect A^\perp + \vect{\mathcal A}^\perp$ 
where $\vect A^\perp$ denotes the external and 
$\vect{\mathcal A}^\perp$ the internal part. Due to the 
absence of external electric charges and the gauge 
condition, the external electric potential vanishes, such 
that $\phi_\text{el.,tot.} = \phi_\text{el.}$ is purely 
internal.

The idea is now to insert the solutions for the internal 
potentials into the Lagrangian \eqref{eq:Lagrangian_class_start_orig} 
and expand the kinetic terms for the particles, so as to 
obtain a post-Newtonian Lagrangian on which to base the 
further derivation. However, at this stage an inconsistency 
is introduced into the framework of \cite{sonnleitner18}, 
which we are now going to explain. Sonnleitner and Barnett 
want to keep the external vector potential $\vect A^\perp$ 
as a dynamical variable; as such, its equations of motion 
have to be the vacuum Maxwell equations (i.e.\ without any 
source term), while it still has to enter the equations 
of motion of the particles themselves. This is indeed the 
case for the Lagrangian which arises from directly inserting 
the internal potentials as obtained by solving the 
Maxwell equations: variation of the action given by 
this Lagrangian leads to Euler--Lagrange equations 
just as desired. This Lagrangian, however, contains 
second-order time derivatives of the particle positions, 
spoiling the application of conventional Hamiltonian 
formalism. This problem does not show up when following 
Sonnleitner and Barnett, since the problematic terms 
are related to formally diverging backreaction terms 
and are therefore disregarded from the Lagrangian in 
\cite{sonnleitner18}. However, \emph{this} last neglection 
is problematic if one keeps the external vector potential 
$\vect A^\perp$ as dynamical: the just-eliminated terms 
would have been the ones ensuring the vacuum Maxwell 
equations as equations of motion for the external 
potential~-- without them, the Lagrangian gives, again, 
the \emph{sourced} Maxwell equations for the external 
potential, and the formalism becomes inconsistent. This 
inconsistency was not addressed in \cite{sonnleitner18}, 
and we were also not aware of it at the time of publication 
of our article \cite{schwartz19:AiG}.
\enlargethispage{-\baselineskip}

However, as it turns out, there is a very easy way to 
remedy this problem: we proceed almost exactly like 
Sonnleitner and Barnett did, the only difference being that 
we remove the external vector potential $\vect A^\perp$ 
from its role as dynamical degree of freedom, treating it 
as a given external field instead (satisfying the vacuum 
Maxwell equations). This way we can ensure the consistency 
of the equations of motion while still performing the 
internal--external field split\footnote
	{By employing some form of perturbation theory on a given 
	non-zero classical background, as is sometimes used in 
	quantum optics, it is probably possible to 
	render the split into internal and external fields 
	consistent while still keeping \emph{some} electromagnetic~/ 
	photonic degrees of freedom as dynamical variables. 
	However, I (the author) am not well enough acquainted 
	with such techniques --~I 
	myself being, more or less, a classical relativist~-- and 
	thus restrict to those parts of the argumentation which I 
	am confident of. If such a perturbation-theoretic treatment 
	is indeed possible, it should be easily applicable to the 
	results we will derive below.}.
The one point in Sonnleitner and Barnett's derivation where 
one might be questioning if it still works without the 
electromagnetic field being a dynamical variable, namely 
the \acronym{PZW} transformation, will turn out to still be 
applicable just fine when reinterpreted in the right way, 
as we will explain below. Note that although the external 
field is eliminated as a dynamical variable, when Legendre 
transforming the Lagrangian in order to go over to the 
Hamiltonian formalism, we are going to add a term 
corresponding to the external field to the resulting 
Hamiltonian, such as to stay as close as possible to the 
original work of \cite{sonnleitner18}, and to obtain the 
correct value for the energy, including the external field 
energy\footnote
	{And to make our results as easily amenable as possible to 
	a potential perturbation-theoretic treatment~/ interpretation 
	as alluded to in the previous footnote.}.

Inserting the internal potential solutions and expanding 
the kinetic terms for the particles to order $c^{-2}$ 
(disregarding the rest energy term), as well as neglecting 
electromagnetic terms of order $\Or(c^{-4})$ and dropping 
terms related to formally diverging backreaction terms, 
one arrives at the post-Newtonian Lagrangian
\begin{align*}
	L\red{(\vect r_1, \dot{\vect r}_1, \vect r_2, \dot{\vect r}_2)} 
	&= L_\text{Darwin}(\vect r_1, \dot{\vect r}_1, \vect r_2, \dot{\vect r}_2) 
		+ \frac{\varepsilon_0}{2} \int \D^3\ivect x \, [(\partial_t \vect A^\perp)^2
			\\&\qquad - c^2 (\vect\nabla \times \vect A^\perp)^2] 
		+ \int \D^3\ivect x \, \vect j \cdot \vect A^\perp 
	\; , \SBtagc{8} \label{eq:Lagrangian_class_postNewt_orig} \\
	L_\text{Darwin}(\vect r_1, \dot{\vect r}_1, \vect r_2, \dot{\vect r}_2) 
	&= \frac{m_1 \dot{\vect r}_1^2}{2} 
		+ \frac{m_1 \dot{\vect r}_1^4}{8c^2} 
		+ \frac{m_2 \dot{\vect r}_2^2}{2} 
		+ \frac{m_2 \dot{\vect r}_2^4}{8c^2}
		\\&\quad - \frac{1}{4\pi\varepsilon_0} \, \frac{e_1 e_2}{r} \left(1 
			- \frac{\dot{\vect r}_1 \cdot \dot{\vect r}_2}{2c^2}\right) 
		+ \frac{e_1 e_2}{4\pi\varepsilon_0} \, \frac{(\dot{\vect r}_1 \cdot \vect r) (\dot{\vect r}_2 \cdot \vect r)}{2 r^3 c^2} 
	\; , \SBtag{9} \label{eq:Lagrangian_class_Darwin_orig}
\end{align*}
where $\vect r = \vect r_1 - \vect r_2$ and 
$r = |\vect r|$. Note that here, as explained above, 
$\vect A^\perp$ is treated as a given external field that 
appears in the Lagrangian, not a dynamical variable. 
$L_\text{Darwin}$ is the famous Darwin Lagrangian 
\cite{darwin20}, involving `correction terms' to the 
Coulomb potential arising from the internal atomic motion.

This classical Lagrangian is then Legendre transformed to 
obtain a classical Hamiltonian. As explained above, in 
order to get the correct value for the energy, including 
the external field energy, we add a term as one would 
obtain when Legrendre transforming also with respect to 
the external field, even though it is not a dynamical 
variable. We also use the notation $\vect\Pi^\perp 
= \varepsilon_0 \partial_t \vect A^\perp$ for the 
`would-be canonical momentum' conjugate to the external 
field, but have to keep in mind that it is a fixed field, 
not a real momentum conjugate to any configuration 
variable. As would be the case for a `true' electromagnetic 
canonical momentum, $-\vect\Pi^\perp / \varepsilon_0 
= -\partial_t \vect A^\perp = \vect E^\perp$ is, physically 
speaking, the external electric field.

Keeping these caveats in mind, the classical Hamiltonian 
reads
\begin{align*} \label{eq:Hamiltonian_class_orig}
	H &= \frac{\bar{\vect p}_1^2}{2 m_1} 
		\redbin{-} \frac{\bar{\vect p}_1^4}{8 m_1^3 c^2} 
		+ \frac{\bar{\vect p}_2^2}{2 m_2} 
		\redbin{-} \frac{\bar{\vect p}_2^4}{8 m_2^3 c^2} 
		+ \frac{1}{4\pi\varepsilon_0} \, \frac{e_1 e_2}{r} 
			\left(1 - \frac{\bar{\vect p}_1 \cdot \bar{\vect p}_2}{2 m_1 m_2 c^2}\right)
		\\&\quad - \frac{e_1 e_2}{4\pi\varepsilon_0} \, \frac{(\bar{\vect p}_1 \cdot \vect r)(\bar{\vect p}_2 \cdot \vect r)}{2 r^3 c^2 \red{m_1 m_2}} 
		+ \frac{\varepsilon_0}{2} \int \D^3\ivect x \, [(\vect\Pi^\perp / \varepsilon_0)^2 
			+ c^2(\vect\nabla \times \vect A^\perp)^2] 
		, \SBtagc{12}
\end{align*}
where $\bar{\vect p}_i = \vect p_i 
\redbin{-} e_i \vect A^\perp(\vect r_i)$ \corr.

This classical Hamiltonian is now canonically quantised 
to obtain a quantum Hamiltonian in what Sonnleitner and 
Barnett call the `minimal coupling form'. They then 
perform a Power--Zienau--Woolley (\acronym{PZW}) unitary 
transformation \cite{power59, woolley71, babiker83} 
together with a multipolar expansion of the external field 
in order to transform the Hamiltonian into a so-called 
`multipolar form'. The details of this, including the 
neccessary amendments due to $\vect A^\perp$ no longer 
being a dynamical field, are as follows.

The \acronym{PZW} transformation operator is
\begin{equation*} \SBtagc{14} \label{eq:PZW}
	U = \mathrm{e}^{-\I\Lambda} 
	= \exp\left[\redbin{-} \frac{\I}{\hbar} \int\D^3\ivect x \, 
		\vect{\mathcal P}(\vect x, t) \cdot \vect A^\perp(\vect x, t)\right] ,
\end{equation*}
where $\vect{\mathcal P}$ is the polarisation field
\begin{equation*} \SBtag{15}
	\vect{\mathcal P}(\vect x, t) = \sum_{i=1,2} e_i [\vect r_i(t) - \vect R(t)] 
		\int_0^1 \D\lambda \, \delta \big(\vect x - \vect R(t) 
			- \lambda [\vect r_i(t) - \vect R(t)] \big).
\end{equation*}
The transformation amounts to the following change of 
canonical momenta:
\begin{equation*} \SBtag{19a}
	\vect p_i \to U \vect p_i U^\dagger = \vect p_i + \hbar \vect\nabla_{\vect r_i} \Lambda
\end{equation*}
Since we treat the external field as non-dynamical, none 
of the variables corresponding to it change under the 
transformation. However, to reflect the change that 
\emph{would} happen if $\vect A^\perp$ were still a 
dynamical field\footnote
	{Again with the intent of staying as close as possible to 
	the original work \cite{sonnleitner18}, and to allow a 
	possible perturbation-theoretic reinterpretation.},
we introduce the notation $\tilde{\vect\Pi}^\perp 
:= \vect\Pi^\perp - \vect{\mathcal P}^\perp$ for the 
`would-be canonical field momentum' after the \acronym{PZW} 
transformation, amounting to the change 
\begin{equation*} \SBtagc{19b}
	\vect\Pi^\perp(\vect x) \to \red{\tilde{\textcolor{black}{\vect\Pi}}}^\perp(\vect x) \redbin{+} \vect{\mathcal P}^\perp(\vect x)
\end{equation*}
in the Hamiltonian. Physically, in line with the usual 
interpretation for the canonical field momentum after 
a \acronym{PZW} transformation \cite{babiker83}, 
$-\tilde{\vect\Pi}^\perp = -\vect\Pi^\perp 
+ \vect{\mathcal P}^\perp(\vect x) = \vect D^\perp$ 
is the electric displacement field. Note 
that in \cite{sonnleitner18}, the somewhat misleading 
notation $\vect E^\perp$ is used for the quantity 
`$-(\text{external field momentum after \acronym{PZW} 
trafo})^\perp / \varepsilon_0$', as if it corresponded to 
an electric field proper.

In electric dipole approximation, i.e.\ expanding to 
first order in $\bar{\vect r}_i := \vect r_i - \vect R$, 
and using $\sum_{j=1,2} e_j = 0$, one finds (see 
\cite{sonnleitner18} for details)
\begin{equation*} \SBtagc{21} \label{eq:SB_21}
	\hbar \vect\nabla_{\vect r_{1,2}} \Lambda 
	\simeq \red{e_{1,2}} [\vect A^\perp(\vect R) 
			+ (\bar{\vect r}_{1,2} \cdot \vect\nabla) \vect A^\perp(\vect R)]
		+ \frac{e_1 \vect r_1 + e_2 \vect r_2}{2} 
			\times [\vect\nabla \times \vect A^\perp(\vect R)]. 
\end{equation*}
Thus, under the \acronym{PZW} transformation and 
the dipole approximation the momenta transform as 
$\vect p_i \redbin{-} e_i \vect A(\vect r_i) 
\to \vect p_i + \vect d \times \vect B(\vect R) / 2$ \corr, 
where $\vect d = e_1 \vect r_1 + e_2 \vect r_2$ is the 
electric dipole moment.

Terms of the form
\begin{equation*} \SBtag{22} \label{eq:SB_negl_interaction}
	\frac{\vect p_i \cdot [\vect d \times \vect B(\vect R)]}{m_i m_j c^2} 
	\propto \frac{|\vect p_i|}{m_i c} \, \frac{|\vect d \cdot \vect E(\vect R)|}{m_j c^2}
\end{equation*}
are neglected, since the atom--light interaction energy 
is assumed much smaller than the internal atomic energy, 
which is in turn much smaller than the rest energies of 
the particles. The multipolar Hamiltonian in electric 
dipole approximation is then
\begin{align*} \label{eq:Hamiltonian_mult_orig}
	H_\text{[mult]} 
	&\simeq \frac{[\vect p_1 + \frac{1}{2} \vect d \times \vect B(\vect R)]^2}{2 m_1} 
		+ \frac{[\vect p_2 + \frac{1}{2} \vect d \times \vect B(\vect R)]^2}{2 m_2}
		\\&\quad - \frac{e^2}{4\pi\varepsilon_0 r} 
		+ \frac{\varepsilon_0}{2} \int \D^3\ivect x \, [ (\red{\tilde{\textcolor{black}{\vect\Pi}}}^\perp 
				\redbin{+} \vect{\mathcal P}_d^\perp)^2 / \varepsilon_0^2 
			+ c^2 \vect B^2]
		\\&\quad \redbin{-} \frac{\vect p_1^4}{8 m_1^3 c^2} 
		\redbin{-} \frac{\vect p_2^4}{8 m_2^3 c^2} 
		+ \frac{e^2}{16\pi\varepsilon_0 c^2 m_1 m_2}
			\\&\quad \times \left[ \vect p_1 \cdot \frac{1}{r} \vect p_2 
				+ (\vect p_1 \cdot \vect r) \frac{1}{r^3} (\vect r \cdot \vect p_2) 
				+ (1 \leftrightarrow 2) \right]
	, \SBtagc{23}
\end{align*}
where $\vect{\mathcal P}_d = \redbin{+} \vect d \, 
\delta(\vect x - \vect R)$ \corr\ is the polarisation 
in electric dipole approximation.

Then, introducing Newtonian centre of mass and relative 
coordinates $\vect R, \vect r$, and the corresponding 
canonical momenta $\vect P, \vect p_{\vect r}$, Sonnleitner 
and Barnett arrive at what they call the centre of mass 
Hamiltonian:
\begin{subequations}
	\makeatletter
	\def\@currentlabel{\cite{sonnleitner18}.25$\star$}
	\makeatother
	\label{eq:Hamiltonian_com_orig}
\begin{align*}
	H_\text{[com]} &= H_\text{C} + H_\text{A} + H_\text{AL} 
		+ H_\text{L} + H_\text{X} \SBtag{25a} \\
	H_\text{C} &= \frac{\vect P^2}{2M} \left[1 
		- \frac{\vect P^2}{4M^2 c^2} 
		- \frac{1}{M c^2} \left(\frac{\vect p_{\vect r}^{\red{2}}}{2\mu} 
			- \frac{e^2}{4\pi\varepsilon_0 r}\right)\right]
	\SBtagc{25b} \\
	H_\text{A} &= \frac{\vect p_{\vect r}^2}{2\mu} \left(1 
			- \frac{m_1^3 + m_2^3}{M^3} \,
			\frac{\vect p_{\vect r}^2}{4 \mu^2 c^2}\right) 
		- \frac{e^2}{4\pi\varepsilon_0}
		\\&\quad \times \left[\frac{1}{r} 
			+ \frac{1}{2\mu M c^2} \left( \vect p_{\vect r} \cdot \frac{1}{r} \vect p_{\vect r} 
				+ \vect p_{\vect r} \cdot \vect r \frac{1}{r^3} \vect r \cdot \vect p_{\vect r} \right) \right]
	\SBtag{25c} \\
	H_\text{AL} &= - \vect d \cdot \red{\frac{\vect D^\perp (\vect R)}{\varepsilon_0}} 
		+ \frac{1}{2M} \{\vect P \cdot [\vect d \times \vect B(\vect R)] + \text{H.c.}\}
		\\&\quad - \frac{m_1 - m_2}{\red{4} m_1 m_2} \{\vect p_{\vect r} \cdot [\vect d \times \vect B(\vect R)] + \text{H.c.}\}
		\\&\quad + \frac{1}{8\mu} (\vect d \times \vect B(\vect R))^2 
		+ \frac{1}{2\varepsilon_0} \int\D^3\ivect x \, {\vect{\mathcal P}_d^\perp}^2 (\vect x, t)
	\SBtagc{25d} \label{eq:Hamiltonian_com_AL_orig} \\
	H_\text{L} &= \frac{\varepsilon_0}{2} \int\D^3\ivect x \, 
		[\red{(\vect D^\perp / \varepsilon_0)}^2 + c^2 \vect B^2]
	\SBtagc{25e} \\
	H_\text{X} &= - \frac{(\vect P \cdot \vect p_{\vect r})^2}{2 M^2 \mu c^2} 
		+ \frac{e^2}{4\pi\varepsilon_0 r} \, \frac{(\vect P\cdot \vect r / r)^2}{2 M^2 c^2}
		\\&\quad + \frac{m_1 - m_2}{2\mu M^2 c^2} \bigg\{ (\vect P \cdot \vect p_{\vect r}) \vect p_{\vect r}^2 / \mu 
			- \frac{e^2}{8\pi\varepsilon_0}
				\\&\quad \times \left[\frac{1}{r} \vect P \cdot \vect p_{\vect r} 
				+ \frac{1}{r^3} (\vect P \cdot \vect r) (\vect r \cdot \vect p_{\vect r}) + \text{H.c.}\right] \bigg\}
	\SBtag{25f} \label{eq:Hamiltonian_com_cross_orig}
\end{align*}
\end{subequations}
Note that the Hamiltonian has been expressed in a form in 
which the external field enters in terms of the magnetic 
field $\vect B = \vect\nabla \times \vect A^\perp$ and the 
electric displacement field $\vect D^\perp = - \tilde{\vect\Pi}^\perp 
= - \varepsilon_0 \partial_t \vect A^\perp + \vect{\mathcal P}^\perp$ 
(which was, as mentioned above, a little misleadingly called 
$\varepsilon_0 \vect E^\perp$ in \cite{sonnleitner18}). The 
Hamiltonian is split into terms that may be interpreted as 
describing the central motion of the atom ($H_\text{C}$), 
the internal atomic motion ($H_\text{A}$), the interaction 
between the atom and the external (`light') field 
($H_\text{AL}$), and a term giving the external 
electromagnetic field energy ($H_\text{L}$), as well as 
`cross terms' ($H_\text{X}$) coupling the relative degrees 
of freedom to the central momentum $\vect P$.

In order to eliminate this cross-term coupling, Sonnleitner 
and Barnett perform a final canonical transformation to 
new coordinates $\vect Q, \vect q$ and momenta $\vect P, 
\vect p$, which leaves the Hamiltonian unchanged up to 
terms of order $c^{-4}$ except for elimination of the 
cross terms and the replacements $(\vect R, \vect r, 
\vect p_{\vect r}) \to (\vect Q, \vect q, \vect p)$. This 
canonical transformation reads as follows:
\begin{subequations}
	\makeatletter
	\def\@currentlabel{\cite{sonnleitner18}.26}
	\makeatother
	\label{eq:coords_decoup_orig}
\begin{align*}
	\vect R &= \vect Q + \frac{m_1 - m_2}{2 M^2 c^2} 
			\left[ \left(\frac{\vect p^2}{2\mu} \vect q + \text{H.c.}\right) 
		- \frac{e^2}{4\pi \varepsilon_0 q} \vect q \right]
		\\&\quad - \frac{1}{4 M^2 c^2} [(\vect q \cdot \vect P) \vect p 
			+ (\vect P \cdot \vect p) \vect q + \text{H.c.}] 
	\SBtag{26a} \\
	\vect r &= \vect q + \frac{m_1 - m_2}{2 \mu M^2 c^2} [(\vect q \cdot \vect P) \vect p + \text{H.c.}] - \frac{\vect q \cdot \vect P}{2 M^2 c^2} \vect P \SBtag{26b} \\
	\vect p_{\vect r} &= \vect p + \frac{\vect p \cdot \vect P}{2 M^2 c^2} \vect P - \frac{m_1 - m_2}{2 M^2 c^2}
		\\&\quad\times \left[ \frac{\vect p^2}{\mu} \vect P - \frac{e^2}{4\pi \varepsilon_0} \left(\frac{1}{q} \vect P - \frac{1}{q^3} (\vect P \cdot \vect q) \vect q\right) \right] \SBtag{26c}
\end{align*}
\end{subequations}

Finally, the internal electromagnetic potentials to our 
order of approximation (thus in particular neglecting 
retardation), as obtained by solving the internal Maxwell 
equations, are as follows:
\begin{align*}
	\phi_\text{el.,ng}(\vect x,t) &= \frac{1}{4\pi\varepsilon_0} \int\D^3\ivect x' \, \frac{\rho(\vect x',t)}{|\vect x - \vect x'|} \SBtag{A1} \label{eq:el_pot_orig} \\
	\vect{\mathcal A}^\perp_\text{ng}(\vect x,t) &\simeq \frac{1}{4\pi \varepsilon_0 c^2} \int\D^3\ivect x' \, \frac{\vect j(\vect x',t)}{|\vect x - \vect x'|} + \frac{1}{(4\pi)^2 \varepsilon_0 c^2} \int\D^3\ivect x'
		\\&\quad \times \int\D^3\ivect x'' \, \frac{\vect x - \vect x'}{|\vect x - \vect x'|^3} \, \frac{\vect j(\vect x'',t) \cdot (\vect x' - \vect x'')}{|\vect x' - \vect x''|^3}\\
		&= \frac{1}{8\pi \varepsilon_0 c^2} \sum_{i=1,2} e_i \left\{ \frac{\dot{\vect r}_i}{|\vect x - \vect r_i|} + \frac{(\vect x - \vect r_i) [\dot{\vect r}_i \cdot (\vect x - \vect r_i)]}{|\vect x - \vect r_i|^3} \right\} \SBtag{A3} \label{eq:mag_pot_orig}
\end{align*}
Here we have changed the variable names of the potentials 
to conform to our notation --~in particular we added the 
suffix `ng', standing for `non-gravitational'~-- and 
expressed the magnetic potential in terms of $\varepsilon_0$ 
instead of $\mu_0 = 1/(\varepsilon_0 c^2)$.

\subsection{Including weak external gravitational fields}
\label{sec:including_gravity}

As already stated above, our contribution in this 
chapter will consist in generalising the calculation 
of \cite{sonnleitner18} to the case of the atom 
being situated in a weak external gravitational 
field in addition to the electromagnetic field already 
considered in  \cite{sonnleitner18}. Our aim is to 
likewise obtain an `approximately relativistic', 
i.e.\ first-order post-Newtonian, Hamiltonian 
describing this situation. The gravitational field will 
be described by the Eddington--Robertson \acronym{PPN} 
metric as introduced in section \ref{sec:ER_PPN_metric}.

Our post-Newtonian expansion scheme as laid out in chapter 
\ref{chap:geometric_structures}, based on the introduction 
of geometric background structures that give meaning 
to `weak' gravitational fields and `slow' velocities 
in the setting of a non-flat spacetime, provides the 
conceptual and computational basis which will allow us 
to implement the post-Newtonian expansion employed in 
\cite{sonnleitner18} also in the gravitational case. 
This enables us to develop our calculation in great 
parallel with that of \cite{sonnleitner18}: we use the 
`flat' background structure to perform our computations, 
the benefit being the aimed-for direct comparison with 
\cite{sonnleitner18}. In the course of our derivation, 
`gravitational correction terms' to the non-gravitational 
formulae will show up. However, as already alluded to in 
section \ref{sec:background_structures}, it often is of 
great physical value to re-express the obtained results in 
terms of the physical metric $g$ instead of the background 
metric $\eta$. For example, the results will contain 
geometric operations, like scalar products, which may be 
taken using either of the metric structures provided by 
the formalism. What may at first appear as a more or less 
complicated gravitational correction to the flat space 
result will often, in fact, turn out to be a simple and 
straightforward transcription of the latter into the proper 
physical metric, as one might have anticipated from some 
more or less naive working-version of the equivalence 
principle. Interpretational issues like this are well-known 
in the literature on gravitational couplings of quantum
systems; see, e.g., \cite{marzlin95,laemmerzahl95}. For 
us, too, they will once more turn out to be relevant 
in connection with the total Hamiltonian in section 
\ref{sec:total_Hamiltonian}. We will derive and interpret 
the relevant gravitational terms relative to the background 
structures $(\eta,u)$ in order to keep the analogy with the 
computation in \cite{sonnleitner18}, but then we shall 
re-interpret the results in terms of the proper physical 
metric $g$ in order to reveal their naturalness.

Since we are interested in a lowest-order post-Newtonian 
description, we will work up to (and including) terms of 
order $c^{-2}$ and neglect higher order terms. In fact, 
corrections of higher order cannot be treated in a simple 
Hamiltonian formalism as employed here, without explicitly 
including the internal electromagnetic field degrees 
of freedom as dynamical variables: elimination of the 
internal field variables by solving Maxwell's equations 
will introduce retardation effects at higher orders, thus 
leading to an action that is non-local in time, spoiling 
the application of conventional Hamiltonian formalism.

\section{Coupling the gravitational field to the particles}
\label{sec:coupling_GF_particles}

In this section we will work out the influence of the 
gravitational field when coupled to the kinetic terms 
of the particles only, ignoring its couplings to the 
electromagnetic field. The latter will be the subject 
of the following sections.

Starting from the Lagrangian for our atom in the 
\emph{absence} of gravity and adding the `gravitational 
corrections' to the kinetic terms of the particles, we 
will then repeat the calculation of \cite{sonnleitner18} to 
obtain a quantum Hamiltonian in centre of mass coordinates.

\subsection{The classical Hamiltonian}

For a single free point particle with mass $m$ and 
position $\vect x$, the classical kinetic Lagrangian 
(parametrising the worldline by coordinate time) in our 
metric \eqref{eq:ER_PPN_metric} reads
\begin{align}\label{eq:Lagrangian_point_grav}
	L_\text{point} 
		&= -mc^2 \sqrt{-g_{\mu\nu} \dot x^\mu \dot x^\nu/c^2} \nonumber\\
	&= \frac{m \dot{\vect x}^2}{2} \left(1 + \frac{\dot{\vect x}^2}{4 c^2}\right) 
		- m c^2 - m\phi \left(1 + (2\beta-1) \frac{\phi}{2c^2}\right) 
		- \frac{2\gamma + 1}{2} \, \frac{m\phi}{c^2} \dot{\vect x}^2 
		+ \Or(c^{-4}).
\end{align}
Now considering our two-particle system, the kinetic 
terms for the particles in gravity are given as the sum 
of two terms as in \eqref{eq:Lagrangian_point_grav}. 
These lowest-order `gravitationally corrected' kinetic 
terms we include into the classical Lagrangian from 
\eqref{eq:Lagrangian_class_start_orig}\footnote
	{We remind the reader that all the equations from 
	\cite{sonnleitner18} that we refer to explicitly are 
	reproduced in section \ref{sec:work_sonnleitner_barnett}.},
which described two particles interacting with an 
electromagnetic field in the \emph{absence} of gravity.

Eliminating the internal electromagnetic fields literally 
as in the non-gravitational case, we arrive at the 
post-Newtonian classical Lagrangian
\begin{align}
	L_\text{new} &= L - m_1 \phi(\vect r_1) - m_2 \phi(\vect r_2) 
		- \frac{2\gamma + 1}{2} \, \frac{m_1\phi(\vect r_1)}{c^2} \, \dot{\vect r}_1^2 
		- \frac{2\gamma + 1}{2} \, \frac{m_2\phi(\vect r_2)}{c^2} \dot{\vect r}_2^2 \nonumber
		\\&\quad - (2\beta-1) \frac{m_1 \phi(\vect r_1)^2}{2c^2} 
		- (2\beta-1) \frac{m_2 \phi(\vect r_2)^2}{2c^2}
\end{align}
describing our electromagnetically bound two-particle 
system in the given external electromagnetic field. Here 
$L$ is the final classical Lagrangian from 
\eqref{eq:Lagrangian_class_postNewt_orig} and 
\eqref{eq:Lagrangian_class_Darwin_orig}. Note that, as 
explained in section \ref{sec:work_sonnleitner_barnett}, 
for reasons of consistency, we view the external vector 
potential as a given background field, not as a dynamical 
variable.

Legendre transforming this Lagrangian with respect to the 
particle velocities $\dot{\vect r}_i$ and adding a term 
as one would obtain when Legrendre transforming also with 
respect to the external electromagnetic vector potential 
(see section \ref{sec:work_sonnleitner_barnett} before 
\eqref{eq:Hamiltonian_class_orig}), we obtain the total 
classical Hamiltonian
\begin{align} \label{eq:Hamiltonian_class_no_em}
	H_\text{new} &= H + m_1 \phi(\vect r_1) + m_2 \phi(\vect r_2) 
		+ \frac{2\gamma+1}{2 m_1 c^2} \phi(\vect r_1) \bar{\vect p}_1^2 
		+ \frac{2\gamma+1}{2 m_2 c^2} \phi(\vect r_2) \bar{\vect p}_2^2 \nonumber
		\\&\quad+ (2\beta-1) \frac{m_1 \phi(\vect r_1)^2}{2c^2} 
		+ (2\beta-1) \frac{m_2 \phi(\vect r_2)^2}{2c^2}.
\end{align}
Here $H$ is the classical Hamiltonian from 
\eqref{eq:Hamiltonian_class_orig} and $\bar{\vect p}_i 
= \vect p_i - e_i \vect A^\perp(\vect r_i)$ is the 
kinetic momentum. Note that we dropped all terms that 
go beyond our order of approximation.
\enlargethispage{\baselineskip}

\subsection{Canonical quantisation and \acronym{PZW} transformation to a multipolar Hamiltonian}

Now, we canonically quantise this Hamiltonian and perform 
the \acronym{PZW} transformation and electric dipole 
approximation used in \cite{sonnleitner18} to arrive 
at the `multipolar' Hamiltonian from 
\eqref{eq:Hamiltonian_mult_orig}. 
Neglecting terms of the form 
$\frac{\vect p_i \cdot [\vect d \times \vect B(\vect R)]}{m_i m_j c^2}$ 
as in \eqref{eq:SB_negl_interaction}, in our gravitational 
correction terms from \eqref{eq:Hamiltonian_class_no_em} 
these transformations amount just to the replacement 
$\bar{\vect p}_i \to \vect p_i$ (compare section 
\eqref{sec:work_sonnleitner_barnett} from \eqref{eq:PZW} 
to \eqref{eq:SB_21}). Hence the multipolar Hamiltonian 
including the gravitational correction terms is
\begin{align} \label{eq:Hamiltonian_mult_no_em}
	H_\text{[mult],new} 
		&= H_\text{[mult]} + m_1 \phi(\vect r_1) + m_2 \phi(\vect r_2) 
		+ \frac{2\gamma+1}{2 m_1 c^2} \vect p_1 \cdot \phi(\vect r_1) \vect p_1 
		+ \frac{2\gamma+1}{2 m_2 c^2} \vect p_2 \cdot \phi(\vect r_2) \vect p_2 \nonumber
		\\&\quad+ (2\beta-1) \frac{m_1 \phi(\vect r_1)^2}{2c^2} 
		+ (2\beta-1) \frac{m_2 \phi(\vect r_2)^2}{2c^2},
\end{align}
where $H_\text{[mult]}$ is the multipolar Hamiltonian 
from \eqref{eq:Hamiltonian_mult_orig}.

Now that we are on the quantum level, we had to choose 
a symmetrised operator ordering for the $\vect p^2\phi$ 
terms. We chose an ordering of the `obvious' form 
$\vect p \cdot \phi \vect p$. As we have seen in section 
\ref{sec:WKB_ER_PPN}, this operator ordering also results 
from the description of single quantum particles in 
an Eddington--Robertson \acronym{PPN} metric by our 
\acronym{WKB}-like expansion of the Klein--Gordon equation, 
if we neglect terms proportional to $\Delta\phi$ (which 
vanish outside the matter generating the Newtonian 
potential $\phi$, and thus are irrelevant in physical 
situations concerning the outside of the generating matter 
distribution).

\subsection{Introduction of centre of mass variables}

We now want to express the correction terms in (Newtonian) 
centre of mass and relative variables,
\begin{align}
	\vect R &= \frac{m_1 \vect r_1 + m_2 \vect r_2}{M} \; , 
	& \vect r &= \vect r_1 - \vect r_2 \; ,\\
	\vect P &= \vect p_1 + \vect p_2 \; , 
	& \vect p_{1,2} &= \frac{m_{1,2}}{M} \vect P \pm \vect p_{\vect r} \; ,
\end{align}
where $M = m_1 + m_2$. To this end, we expand the 
gravitational potential $\phi$ around the centre of 
mass position $\vect R$ \emph{in linear order}. In this 
approximation, we have 
$m_1 \phi(\vect r_1) + m_2 \phi(\vect r_2) = M \phi(\vect R)$ and 
$m_1 \phi(\vect r_1)^2 + m_2 \phi(\vect r_2)^2 = M \phi(\vect R)^2$. 
Furthermore using
\begin{align}
	\vect p_{1,2} \cdot \phi(\vect r_{1,2}) \vect p_{1,2} 
	&= \left(\frac{m_{1,2}}{M} \vect P \pm \vect p_{\vect r}\right) \cdot 
		\phi(\vect r_{1,2}) 
		\left(\frac{m_{1,2}}{M} \vect P \pm \vect p_{\vect r}\right) \nonumber \\
	&= \frac{m_{1,2}^2}{M^2} \vect P \cdot \phi(\vect r_{1,2}) \vect P 
		\pm \frac{m_{1,2}}{M} (\vect P \cdot \phi(\vect r_{1,2}) \vect p_{\vect r} + \text{H.c.}) 
		+ \vect p_{\vect r} \cdot \phi(\vect r_{1,2}) \vect p_{\vect r}
\end{align}
and the relations $\phi(\vect r_1) - \phi(\vect r_2) 
= \vect r \cdot \vect\nabla\phi(\vect R)$ as well as
\begin{align}
	\frac{1}{m_1} \phi(\vect r_1) + \frac{1}{m_2} \phi(\vect r_2) 
	&= \left(\frac{1}{m_1} + \frac{1}{m_2}\right) \phi(\vect R) 
		+ \frac{1}{M} \left(\frac{m_2}{m_1} 
			- \frac{m_1}{m_2}\right) 
			\vect r \cdot \vect\nabla\phi(\vect R) \nonumber \\
	&= \frac{1}{\mu} \phi(\vect R) 
		- \frac{m_1 - m_2}{m_1 m_2} \vect r \cdot \vect\nabla\phi(\vect R)
\end{align}
where $\mu = \frac{m_1 m_2}{M}$ is the system's reduced 
mass, we arrive at the centre of mass Hamiltonian
\begin{align} \label{eq:Hamiltonian_com_no_em}
	H_\text{[com],new} &= H_\text{[com]} + M \phi(\vect R) 
		+ (2\beta-1) \frac{M \phi(\vect R)^2}{2 c^2} 
		+ \frac{2\gamma+1}{2Mc^2} \vect P \cdot \phi(\vect R) \vect P \nonumber
		\\&\quad+ \frac{2\gamma+1}{2\mu c^2} \vect p_{\vect r}^2 \phi(\vect R) 
		+ \frac{2\gamma+1}{2Mc^2} \left[\vect P \cdot (\vect r \cdot \vect\nabla\phi(\vect R)) \vect p_{\vect r} + \text{H.c.}\right] \nonumber
		\\&\quad- \frac{2\gamma+1}{2c^2} \, \frac{m_1 - m_2}{m_1 m_2} \vect p_{\vect r} \cdot (\vect r \cdot \vect\nabla\phi(\vect R)) \vect p_{\vect r} \; ,
\end{align}
where $H_\text{[com]}$ is the centre of mass Hamiltonian 
from \eqref{eq:Hamiltonian_com_orig}.
\enlargethispage{\baselineskip}

This can, similarly to \cite{sonnleitner18}, be brought 
into the form
\begin{equation}
	H_\text{[com],new} = H_\text{C,new} + H_\text{A,new} 
		+ H_\text{AL} + H_\text{L} + H_\text{X} + H_\text{deriv,new},
\end{equation}
where
\begin{equation}
	H_\text{C,new} = H_\text{C} 
		+ \frac{2\gamma+1}{2Mc^2} \vect P \cdot \phi(\vect R) \vect P 
		+ \left(M + \frac{\vect p_{\vect r}^2}{2\mu c^2}\right) \phi(\vect R) 
		+ (2\beta-1) \frac{M \phi(\vect R)^2}{2 c^2}
\end{equation}
describes the dynamics of the centre of mass and
\begin{equation} \label{eq:Hamiltonian_com_atom_no_em}
	H_\text{A,new} = H_\text{A} 
		+ 2\gamma\frac{\phi(\vect R)}{c^2} \frac{\vect p_{\vect r}^2}{2\mu} 
		- \frac{2\gamma+1}{2c^2} \, \frac{m_1 - m_2}{m_1 m_2} \vect p_{\vect r} 
			\cdot (\vect r \cdot \vect\nabla\phi(\vect R)) \vect p_{\vect r}
\end{equation}
describes the internal dynamics of the atom, both 
modified in comparison to \cite{sonnleitner18}. Here, 
we have included the term 
$2\gamma\frac{\phi(\vect R)}{c^2} \frac{\vect p_{\vect r}^2}{2\mu}$ 
into $H_\text{A,new}$ since it can be combined with 
$\frac{\vect p_{\vect r}^2}{2\mu}$ from $H_\text{A}$ into
\begin{equation} \label{eq:internal_kin_metr}
	\frac{\vect p_{\vect r}^2}{2\mu} \left(1 + 2\gamma\frac{\phi(\vect R)}{c^2}\right) 
	= \frac{{^{(3)}g^{-1}_{\vect R}} (\vect p_{\vect r}, \vect p_{\vect r})}{2\mu},
\end{equation}
giving the geometrically correctly expressed Newtonian 
internal kinetic energy, using the metric square of the 
internal momentum. Here $^{(3)}g_{\vect R}^{-1}$ denotes 
the inverse of the physical spatial metric at position 
$\vect R$, as explained in section \ref{sec:background_structures}.

The terms $H_\text{AL}$, $H_\text{L}$, and $H_\text{X}$ 
containing, respectively, the atom-light interaction terms, 
the external electromagnetic field energy, and the `cross 
terms' are not changed compared to \cite{sonnleitner18}. 
The new final summand
\begin{equation}
	H_\text{deriv,new} 
	= \frac{2\gamma+1}{2Mc^2} 
		[\vect P \cdot (\vect r 
			\cdot \vect\nabla\phi(\vect R)) \vect p_{\vect r} + \text{H.c.}]
\end{equation}
is an additional central--internal `cross term' involving 
the derivative $\vect\nabla\phi$ of the gravitational 
potential.

Note that if we assumed that the gravitational potential 
$\phi$ vary slowly over the extension of the atom, we could 
neglect the terms $\vect r \cdot \vect\nabla\phi(\vect R)$. 
However, such terms might turn out interesting for 
experimental applications employing large superpositions.

\section{Coupling the gravitational to the electromagnetic field}
\label{sec:coupling_GF_EMF}

Having determined the gravitational field's coupling to 
the particles in the previous section, we now turn to its 
coupling to the electromagnetic field, whose Lagrangian 
in the presence of gravity we will compute in this section. 
In the following section \ref{sec:total_Hamiltonian} we 
will then combine all couplings into a single Hamiltonian.

\subsection{Solution of the gravitationally modified Maxwell equations}

The electromagnetic part of the total action of our system, 
including interaction with matter, is
\begin{equation} \label{eq:em_action}
	S_\text{em} = \int\D t \, \D^3\ivect x \, \sqrt{-g} 
		\left(-\frac{\varepsilon_0 c^2}{4} F_{\text{tot.}\mu\nu} F_\text{tot.}^{\mu\nu} + J^\mu A_{\text{tot.}\mu}\right),
\end{equation}
where $g$ denotes the determinant of the matrix $(g_{\mu\nu})$ 
of metric components, $J = J^\mu \partial_\mu$ is 
the four-current `density' vector field, $A_\text{tot.} 
= A_{\text{tot.}\mu} \D x^\mu$ is the total (i.e. 
not decomposed into internal and external parts) 
electromagnetic four-potential form, and $\D A_\text{tot.} 
= F_\text{tot.} = F_{\text{tot.}\mu\nu} \D x^\mu \otimes\D x^\nu 
= (\partial_\mu A_{\text{tot.}\nu} - \partial_\nu A_{\text{tot.}\mu}) 
\D x^\mu \otimes\D x^\nu$ is the electromagnetic 
field tensor. This is the standard action describing 
electromagnetism in a gravitational field, which is 
obtained by minimally coupling the special-relativistic 
action for electromagnetism \cite{jackson98} to a general 
spacetime metric \cite{misner73,hawking73}.

Note that $J^\mu$ are the components of a proper vector 
field and not of a density; their relation to the 
four-current \emph{density} with components $j^\mu$, in 
terms of which the interaction part of the action takes the 
form $\int\D t \, \D^3\ivect x \, j^\mu A_{\text{tot.}\mu}$, is 
given by
\begin{equation}
	J^\mu = \frac{1}{\sqrt{-g}} j^\mu.
\end{equation}
The current density of our system of two particles is given 
by\footnote
	{For a single particle of charge $q$ on an arbitrarily 
	parametrised timelike worldline $r^\mu(\lambda)$, the 
	current density is given by
	\[j^\mu(x) = q c\int\D\lambda \, \frac{\D r^\mu}{\D\lambda} 
		\delta^{(4)}\bigl(x - r(\lambda)\bigr).\]
	Parametrising by coordinate time and considering two 
	particles, we arrive at the above expression.}
\begin{equation} \label{eq:current_dens}
	j^\mu(t,\vect x) = \sum_{i=1}^2 e_i 
		\delta^{(3)}(\vect x - \vect r_i(t)) \dot r_i^\mu(t),
\end{equation}
where the dot denotes differentiation with respect to 
coordinate time $t$. The charge density is
\begin{equation} \label{eq:charge_dens}
	\rho = \frac{1}{c} j^0.
\end{equation}
Similarly, the electric potential is
\begin{equation}
	\phi_\text{el.,tot.} = - c A_{\text{tot.}0}.
\end{equation}

The Maxwell equations obtained by varying the action 
with respect to $A_{\text{tot.}\mu}$ take the form
\begin{equation}
	\nabla_\mu F_\text{tot.}^{\mu\nu} 
	= - \frac{1}{\varepsilon_0 c^2} J^\nu
\end{equation}
in terms of the current vector field, or
\begin{equation} \label{eq:Maxwell_general_cov}
	\nabla_\mu F_\text{tot.}^{\mu\nu} 
	= - \frac{1}{\varepsilon_0 c^2} \, \frac{1}{\sqrt{-g}} j^\nu
\end{equation}
in terms of the current density. It will be useful to 
consider the form
\begin{equation} \label{eq:Maxwell_general}
	\nabla^\mu F_{\text{tot.}\mu\nu} 
	= - \frac{1}{\varepsilon_0 c^2} \, \frac{1}{\sqrt{-g}} j_\nu
\end{equation}
instead.

We employ the `background Coulomb gauge' condition
\begin{equation} \label{eq:gauge}
	0 = \vect\nabla \cdot \vect A_\text{tot.} 
	= \delta^{ab} \partial_a A_{\text{tot.}b},
\end{equation}
implying in particular $\delta^{ab} \partial_a F_{\text{tot.}b\mu} 
= \Delta A_{\text{tot.}\mu}$ where $\Delta = \delta^{ab} \partial_a \partial_b$ 
denotes the `flat' Euclidean Laplacian defined 
by the background structures. In terms of the 
Helmholtz decomposition, the gauge condition 
means $\vect A_\text{tot.}^\parallel = 0$, i.e.\ 
$\vect A_\text{tot.} = \vect A_\text{tot.}^\perp$.

\subsubsection{Divergence of the field strength tensor}

Using the Christoffel symbols of the Eddington--Robertson 
\acronym{PPN} metric, which are computed in full detail in 
appendix \ref{app:Christoffel}, we can now calculate the 
components of the divergence of the field strength tensor 
$F_\text{tot.}$. For the calculations, we remind the reader 
that the components of the field tensor are of the orders 
$F_{\text{tot.}a0} = \Or(c^{-1})$ and $F_{\text{tot.}ab} = \Or(c^0)$, 
as explained in section \ref{sec:formal_expansions}. The 
$0$ component of the divergence now is as follows:
\begin{align}
	\nabla^\mu F_{\text{tot.}\mu0} 
	&= g^{\mu\rho} (\partial_\rho F_{\text{tot.}\mu0} 
		- \Gamma^\sigma_{\rho\mu} F_{\text{tot.}\sigma0} 
		- \Gamma^\sigma_{\rho0} F_{\text{tot.}\mu\sigma}) \nonumber\\
	&= {\underbrace{g^{a\rho} \partial_\rho F_{\text{tot.}a0}}
			_{\mathrlap{= g^{ab} \partial_b F_{\text{tot.}a0} + \Or(c^{-7})}}}
		- g^{\mu\rho} \Gamma^a_{\rho\mu} F_{\text{tot.}a0}
		- \underbrace{g^{\mu\rho} \Gamma^\sigma_{\rho0} F_{\text{tot.}\mu\sigma}}
			_{\mathrlap{= g^{00} \Gamma^a_{00} F_{\text{tot.}0a} 
				+ g^{ab} \Gamma^0_{b0} F_{\text{tot.}a0} 
				+ g^{ab} \Gamma^c_{b0} F_{\text{tot.}ac} 
				+ \Or(c^{-7})}} \nonumber\\
	&= \left(1 + 2 \gamma \frac{\phi}{c^2}\right) \delta^{ab} \partial_b F_{\text{tot.}a0}
		- (\gamma-1) \delta^{ab} \frac{\partial_b \phi}{c^2} F_{\text{tot.}a0}
		+ \delta^{ab} \frac{\partial_b \phi}{c^2} F_{\text{tot.}0a} 
		- \delta^{ab} \frac{\partial_b \phi}{c^2} F_{\text{tot.}a0} \nonumber\\
		&\quad + \cancelto{0}{\delta^{ab} \gamma \delta^c_b \frac{\partial_t\phi}{c^3} F_{\text{tot.}ac}}
		+ \Or(c^{-5}) \nonumber\\
	&= \left(1 + 2 \gamma \frac{\phi}{c^2}\right) \Delta A_{\text{tot.}0} 
		- (\gamma+1) \delta^{ab} \frac{\partial_b \phi}{c^2} (\partial_a A_{\text{tot.}0} 
			- \partial_0 A^\perp_{\text{tot.}a}) 
		+ \Or(c^{-5}) \nonumber\\
	&= -\frac{1}{c} \left(1 + 2 \gamma \frac{\phi}{c^2}\right) \Delta \phi_\text{el.,tot.} 
		+ (\gamma+1) \delta^{ab} \frac{\partial_b \phi}{c^3} (\partial_a \phi_\text{el.,tot.} 
			+ \partial_t A^\perp_{\text{tot.}a}) 
		+ \Or(c^{-5})
\end{align}
Employing `three-vector' notation as introduced in section 
\ref{sec:geometric_notation_conventions}, this is equivalent to
\begin{equation}
	c \nabla^\mu F_{\text{tot.}\mu0} 
	= - \left(1 + 2 \gamma \frac{\phi}{c^2}\right) \Delta \phi_\text{el.,tot.} 
		+ (\gamma+1) \frac{\vect\nabla \phi}{c^2} \cdot (\vect\nabla \phi_\text{el.,tot.} 
		+ \partial_t \vect A^\perp_\text{tot.}) + \Or(c^{-4}),
\end{equation}
or (multiplying by $(1 - 2 \gamma \frac{\phi}{c^2})$) to
\begin{equation} \label{eq:field_divergence_0}
	\Delta \phi_\text{el.,tot.} 
	= - \left(1 - 2 \gamma \frac{\phi}{c^2}\right) c \nabla^\mu F_{\text{tot.}\mu0} 
		+ (\gamma+1) \frac{\vect\nabla \phi}{c^2} \cdot (\vect\nabla \phi_\text{el.,tot.} 
		+ \partial_t \vect A^\perp_\text{tot.}) 
		+ \Or(c^{-4}).
\end{equation}

For the spatial components, we obtain
\begin{align}
	\nabla^\mu F_{\text{tot.}\mu a} 
	&= g^{\mu\rho} (\partial_\rho F_{\text{tot.}\mu a} 
		- \Gamma^\sigma_{\rho\mu} F_{\text{tot.}\sigma a} 
		- \Gamma^\sigma_{\rho a} F_{\text{tot.}\mu\sigma}) \nonumber\\
	&= g^{00} \partial_0 F_{\text{tot.}0a} 
		+ g^{bc} \partial_c F_{ba}
		- \underbrace{g^{\mu\rho} \Gamma^0_{\rho\mu}}
			_{\mathclap{= \Or(c^{-3})}} 
			F_{\text{tot.}0a} 
		- \underbrace{g^{\mu\rho} \Gamma^b_{\rho\mu}}
			_{\mathrlap{= (\gamma-1) \delta^{bc} \frac{\partial_c \phi}{c^2} 
				+ \Or(c^{-4})}} 
			F_{\text{tot.}ba} \nonumber\\
		&\quad - \underbrace{g^{\mu\rho} \Gamma^\sigma_{\rho a} F_{\text{tot.}\mu\sigma}}
			_{\mathrlap{= g^{00} \Gamma^b_{0a} F_{\text{tot.}0b} 
				+ g^{bc} \Gamma^\sigma_{ca} F_{\text{tot.}b\sigma} 
				+ \Or(c^{-7})}} 
		+ \Or(c^{-6}) \nonumber\displaybreak[0]\\
	&= - \partial_0 F_{\text{tot.}0a} 
		+ \left(1 + 2 \gamma \frac{\phi}{c^2}\right) \delta^{bc} \partial_c F_{\text{tot.}ba}
		- (\gamma-1) \delta^{bc} \frac{\partial_c \phi}{c^2} F_{\text{tot.}ba} \nonumber\\
		&\quad - \underbrace{\delta^{bc} \Gamma^d_{ca} F_{\text{tot.}bd}}
			_{\mathrlap{= - 2 \gamma \delta^{bc} \frac{\partial_c \phi}{c^2} F_{\text{tot.}ba} 
				+ \Or(c^{-4})}}
		+ \Or(c^{-4}) \nonumber\\
	&= - \frac{1}{c^2} \partial_t (\partial_t A^\perp_{\text{tot.}a} 
			+ \partial_a \phi_\text{el.,tot.}) 
		+ \left(1 + 2 \gamma \frac{\phi}{c^2}\right) \Delta A^\perp_{\text{tot.}a} \nonumber\\
		&\quad + (\gamma+1) \delta^{bc} \frac{\partial_c \phi}{c^2} 
			(\partial_b A^\perp_{\text{tot.}a} - \partial_a A^\perp_{\text{tot.}b})
		+ \Or(c^{-4}).
\end{align}
Multiplying by $(1 - 2 \gamma \frac{\phi}{c^2})$, this 
is equivalent to
\begin{align} \label{eq:field_divergence_spatial}
	(\Delta - c^{-2} \partial_t^2) A^\perp_{\text{tot.}a} 
	&= \left(1 - 2 \gamma \frac{\phi}{c^2}\right) \nabla^\mu F_{\text{tot.}\mu a} 
		+ \frac{1}{c^2} \partial_a \partial_t \phi_\text{el.,tot.} \nonumber\\
		&\quad - (\gamma+1) \delta^{bc} \frac{\partial_c \phi}{c^2} 
			(\partial_b A^\perp_{\text{tot.}a} - \partial_a A^\perp_{\text{tot.}b}) 
		+ \Or(c^{-4}).
\end{align}

\subsubsection{The source terms and the explicit form of the Maxwell equations}

We now consider the right hand side of the Maxwell 
equations \eqref{eq:Maxwell_general}, i.e.\ the source term 
$-\frac{1}{\varepsilon_0 c^2} \, \frac{1}{\sqrt{-g}} j_\nu$. 
Using
\begin{equation}
	\frac{1}{\sqrt{-g}} = 1 + (3\gamma-1) \frac{\phi}{c^2} + \Or(c^{-4})
\end{equation}
and the metric coefficients, we can easily express the 
source term in terms of the charge and current densities: 
the $0$ component is
\begin{align} \label{eq:Maxwell_source_0}
	-\frac{1}{\varepsilon_0 c^2} \, \frac{1}{\sqrt{-g}} j_0 
	&= - \frac{1}{\varepsilon_0 c^2} \, \frac{1}{\sqrt{-g}} (g_{00} j^0 
		+ \underbrace{g_{0a} j^a}
			_{\mathclap{= \Or(c^{-5})}}) \nonumber\\
	&= \frac{1}{\varepsilon_0 c} \, \frac{1}{\sqrt{-g}} (-g_{00} \rho 
		+ \Or(c^{-6})) \nonumber\\
	&= \frac{1}{\varepsilon_0 c} \left(1 + (3\gamma+1) \frac{\phi}{c^2}\right) \rho 
		+ \Or(c^{-5}),
\end{align}
and the spatial components are
\begin{align} \label{eq:Maxwell_source_spatial}
	-\frac{1}{\varepsilon_0 c^2} \frac{1}{\sqrt{-g}} j_a 
	&= - \frac{1}{\varepsilon_0 c^2} \, \frac{1}{\sqrt{-g}} (g_{ab} j^b 
		+ \underbrace{g_{a0} j^0}
			_{\mathclap{= \Or(c^{-4})}}) \nonumber\\
	&= -\frac{1}{\varepsilon_0 c^2} \left(1 + (\gamma-1) \frac{\phi}{c^2}\right) 
		\delta_{ab} j^b + \Or(c^{-6}).
\end{align}

Using the source terms \eqref{eq:Maxwell_source_0}, 
\eqref{eq:Maxwell_source_spatial} and the re-arranged 
field strength divergences \eqref{eq:field_divergence_0}, 
\eqref{eq:field_divergence_spatial}, the Maxwell equations 
\eqref{eq:Maxwell_general} are equivalent to the following 
equations:
\begin{subequations} \label{eq:Maxwell_total}
\begin{align}
	\Delta \phi_\text{el.,tot.} 
	&= - \frac{1}{\varepsilon_0} \left(1 + (\gamma+1) \frac{\phi}{c^2}\right) \rho \nonumber\\
		&\quad + (\gamma+1) \frac{\vect\nabla \phi}{c^2} 
			\cdot (\vect\nabla \phi_\text{el.,tot.} 
		+ \partial_t \vect A^\perp_\text{tot.}) 
		+ \Or(c^{-4}) \\
	(\Delta - c^{-2} \partial_t^2) A^\perp_{\text{tot.}a} 
	&= -\frac{1}{\varepsilon_0 c^2} \delta_{ab} j^b 
		+ \frac{1}{c^2} \partial_a \partial_t \phi_\text{el.,tot.} \nonumber\\
		&\quad - (\gamma+1) \delta^{bc} \frac{\partial_c \phi}{c^2} 
			(\partial_b A^\perp_{\text{tot.}a} - \partial_a A^\perp_{\text{tot.}b}) 
		+ \Or(c^{-4})
\end{align}
\end{subequations}

Now, as done in \cite{sonnleitner18}, we split the total 
potentials $(\phi_\text{el.,tot.}, \vect A^\perp_\text{tot.})$ 
into internal and external parts, both satisfying the gauge 
condition, where the internal potentials $(\phi_\text{el.}, 
\vect{\mathcal A}^\perp)$ satisfy the Maxwell equations 
with the internal charge and current densities as sources, 
and the external potentials $(\phi_\text{el.,ext.}, 
\vect A^\perp)$ the vacuum Maxwell equations. Note that 
the internal electric potential does not carry a subscript 
`int.' or similar, as opposed to the external one. 
Similarly, we write $F_{\text{tot.}\mu\nu} 
= \mathcal F_{\mu\nu} + F_{\mu\nu}$, where $\mathcal F 
= \D\mathcal A$ is the internal and $F = \D A$ is the 
external field tensor (employing the obvious notation 
$\mathcal A_0 = - \frac{1}{c} \phi_\text{el.}$, $A_0 
= - \frac{1}{c} \phi_\text{el.,ext.}$).

\subsubsection{Solution of the internal Maxwell equations}

From \eqref{eq:Maxwell_total}, the Maxwell equations 
for the internal potentials are as follows:
\begin{subequations} \label{eq:Maxwell_int}
\begin{align}
	\label{eq:Maxwell_int_Poisson}
	\Delta \phi_\text{el.} 
	&= - \frac{1}{\varepsilon_0} \left(1 + (\gamma+1) \frac{\phi}{c^2}\right) \rho 
		+ (\gamma+1) \frac{\vect\nabla \phi}{c^2} 
			\cdot (\vect\nabla \phi_\text{el.} + \partial_t \vect{\mathcal A}^\perp) 
		+ \Or(c^{-4}) \\
	\label{eq:Maxwell_int_wave}
	(\Delta - c^{-2} \partial_t^2) \mathcal A^\perp_a 
	&= -\frac{1}{\varepsilon_0 c^2} \delta_{ab} j^b 
		+ \frac{1}{c^2} \partial_a \partial_t \phi_\text{el.} \nonumber\\
		&\quad - (\gamma+1) \delta^{bc} \frac{\partial_c \phi}{c^2} 
			(\partial_b \mathcal A^\perp_a - \partial_a \mathcal A^\perp_b) 
		+ \Or(c^{-4})
\end{align}
\end{subequations}

We will now solve \eqref{eq:Maxwell_int} perturbatively in 
a formal expansion in $c^{-2}$. Expanding the potentials 
as $\phi_\text{el.} = \phi_\text{el.}^{(0)} 
+ c^{-2} \phi_\text{el.}^{(2)} + \Or(c^{-4})$ and 
$\vect{\mathcal A}^\perp = \vect{\mathcal A}^{\perp(0)} 
+ c^{-2} \vect{\mathcal A}^{\perp(2)} + \Or(c^{-4})$, the 
lowest orders of the Poisson equation for $\phi_\text{el.}$ 
read
\begin{subequations}
\begin{align}
	\label{eq:Poisson_order_0}
	\Delta \phi_\text{el.}^{(0)} &= - \frac{1}{\varepsilon_0} \rho, \\
	\label{eq:Poisson_order_2}
	\Delta \phi_\text{el.}^{(2)} 
	&= - \frac{1}{\varepsilon_0} (\gamma+1) \phi \rho 
		+ (\gamma+1) \vect\nabla \phi \cdot (\vect\nabla \phi_\text{el.}^{(0)} 
		+ \partial_t \vect{\mathcal A}^{\perp(0)}),
\end{align}
\end{subequations}
and the lowest orders of the wave equation for 
$\vect{\mathcal A}^\perp$ are
\begin{subequations}
\begin{align}
	\label{eq:wave_order_0}
	(\Delta - c^{-2} \partial_t^2) \mathcal A_a^{\perp(0)} &= 0, \\
	\label{eq:wave_order_2}
	(\Delta - c^{-2} \partial_t^2) \mathcal A_a^{\perp(2)} 
	&= -\frac{1}{\varepsilon_0} \delta_{ab} j^b 
		+ \partial_a \partial_t \phi_\text{el.}^{(0)} 
		- (\gamma+1) \delta^{bc} \frac{\partial_c \phi}{c^2} 
			(\partial_b \mathcal A_a^{\perp(0)} 
		- \partial_a \mathcal A_b^{\perp(0)}).
\end{align}
\end{subequations}
Being the usual, `non-gravitational' Poisson equation, 
\eqref{eq:Poisson_order_0} gives
\begin{equation}
	\phi_\text{el.}^{(0)} = \phi_\text{el.,ng} \; ,
\end{equation}
where $\phi_\text{el.,ng}$ is the internal electric 
potential solution in the absence of gravity as given 
by \eqref{eq:el_pot_orig}.

For the wave equation \eqref{eq:Maxwell_int_wave} we 
are interested in purely retarded solutions without any 
additional radiative terms, since the internal potentials 
shall correspond to just `what is generated by the 
particles'. Therefore, \eqref{eq:wave_order_0} directly 
implies $\vect{\mathcal A}^{\perp(0)} = 0$.

Thus, \eqref{eq:wave_order_2} reduces to the 
`non-gravitational' wave equation for the potential 
$\vect{\mathcal A}^\perp_\text{ng}$, but applied to 
$c^{-2} \vect{\mathcal A}^{\perp(2)}$, implying 
$c^{-2} \vect{\mathcal A}^{\perp(2)} 
= \vect{\mathcal A}^\perp_\text{ng}$, where 
$\vect{\mathcal A}^\perp_\text{ng}$ is the 
non-gravitational retarded solution, expanded to 
lowest non-vanishing order in $c^{-1}$, as given by 
\eqref{eq:mag_pot_orig}. Hence we have
\begin{equation} \label{eq:internal_vect_pot_result}
	\vect{\mathcal A}^\perp 
	= \vect{\mathcal A}^\perp_\text{ng} + \Or(c^{-4}) 
	= \Or(c^{-2}).
\end{equation}

Finally, solving \eqref{eq:Poisson_order_2} directly gives
\begin{align}
	\phi_\text{el.}^{(2)}(\vect x,t) 
	&= \frac{\gamma+1}{4\pi \varepsilon_0} 
		\int\D^3\ivect x' \frac{\phi(\vect x',t) \rho(\vect x',t)}{|\vect x - \vect x'|} 
	- \frac{\gamma+1}{4\pi} \int\D^3\ivect x' \frac{1}{|\vect x - \vect x'|} 
		\left(\vect\nabla \phi \cdot \vect\nabla \phi_\text{el.}^{(0)} \right)(\vect x',t).
\end{align}

For later convenience, we will now compute the interaction 
integral $-\frac{1}{2} \int \D^3\ivect x \, \rho \phi_\text{el.}$. 
We suppress time dependence in the notation. Using the 
explicit form of the charge density, $\rho(\vect x) 
= e_1 \delta^{(3)}(\vect x - \vect r_1) 
+ e_2 \delta^{(3)}(\vect x - \vect r_2)$, and dropping 
infinite self-interaction terms, we obtain
\begin{align}
	-\frac{1}{2} \int\D^3\ivect x \, \rho \phi_\text{el.} 
	&= -\frac{e_1 e_2}{4\pi \varepsilon_0 r} \left(1 + (\gamma+1) 
			\frac{\phi(\vect r_1) + \phi(\vect r_2)}{2c^2}\right) \nonumber\\
		&\quad + \underbrace{\frac{\gamma+1}{8\pi c^2} \int\D^3\ivect x 
			\left( \frac{e_1}{|\vect x - \vect r_1|} 
				+ \frac{e_2}{|\vect x - \vect r_2|} \right) 
			\vect\nabla \phi \cdot \vect\nabla \phi_\text{el.}^{(0)}}
			_{= \frac{\varepsilon_0(\gamma+1)}{2c^2} \int\D^3\ivect x \, 
				\phi_\text{el.}^{(0)} \vect\nabla \phi 
				\cdot \vect\nabla \phi_\text{el.}^{(0)}}
		+ \Or(c^{-4}),
\end{align}
where we used the explicit form of the lowest-order 
potential $\phi_\text{el.}^{(0)}$. For the last integral, 
partial integration gives
\begin{align}
	\int\D^3\ivect x \, \phi_\text{el.}^{(0)} \vect\nabla \phi 
			\cdot \vect\nabla \phi_\text{el.}^{(0)} 
	&= - \int\D^3\ivect x \, \phi_\text{el.}^{(0)} \vect\nabla \cdot 
			\left(\phi_\text{el.}^{(0)} \vect\nabla \phi\right) \nonumber\\
	&= - \int\D^3\ivect x \, \phi_\text{el.}^{(0)} \vect\nabla \phi_\text{el.}^{(0)} 
			\cdot \vect\nabla \phi 
		- \int\D^3\ivect x \left(\phi_\text{el.}^{(0)}\right)^2 \Delta\phi,
\end{align}
implying
\begin{equation} \label{eq:interaction_internal_electric_Delta_phi}
	\int\D^3\ivect x \, \phi_\text{el.}^{(0)} \vect\nabla \phi 
			\cdot \vect\nabla \phi_\text{el.}^{(0)} 
	= -\frac{1}{2} \int\D^3\ivect x \, \left(\phi_\text{el.}^{(0)}\right)^2 \Delta\phi.
\end{equation}
In the following, we will neglect this term: due to the 
Newtonian field equation, $\Delta\phi$ is non-vanishing 
only inside the matter generating the gravitational 
potential, and $\phi_\text{el.}^{(0)}$ is negligibly small 
there for an atom situated outside of this matter (e.g.\ 
in a quantum-optical experiment outside of the earth).
Thus, the relevant part of the above interaction integral 
is just
\begin{equation} \label{eq:interaction_internal_electric}
	-\frac{1}{2} \int\D^3\ivect x \, \rho \phi_\text{el.} 
	= -\frac{e_1 e_2}{4\pi \varepsilon_0 r} \left(1 
			+ (\gamma+1) \frac{\phi(\vect r_1) + \phi(\vect r_2)}{2c^2}\right) 
		+ \Or(c^{-4}).
\end{equation}
\enlargethispage{-2\baselineskip}

\subsubsection{The external Maxwell equations}

We will now consider the Maxwell equations for the external 
potentials. Since we assume the absence of external 
charges, the Poisson equation for $\phi_\text{el.,ext.}$ 
reads as follows:
\begin{equation}
	\Delta \phi_\text{el.,ext.} 
	= (\gamma+1) \frac{\vect\nabla \phi}{c^2} 
			\cdot (\vect\nabla \phi_\text{el.,ext.} + \partial_t \vect A^\perp) 
		+ \Or(c^{-4})
\end{equation}
Solving this equation perturbatively as for the internal 
potentials, we obtain the solution
\begin{equation}
	\phi_\text{el.,ext.}(\vect x,t) 
	= -\frac{\gamma+1}{4\pi c^2} \int\D^3\ivect x' 
		\frac{1}{|\vect x - \vect x'|} (\vect\nabla\phi 
			\cdot \partial_t \vect A^\perp)(\vect x',t) 
		+ \Or(c^{-4}),
\end{equation}
expressed solely in terms of the external vector potential. 
In fact, we will not need this explicit form of the 
potential, but just the expansion order
\begin{equation} \label{eq:external_el_pot_order}
	\phi_\text{el.,ext.} = \Or(c^{-2}).
\end{equation}
Now considering the wave equation for the vector potential 
$\vect A^\perp$, which due to the absence of external 
currents and the above result on $\phi_\text{el.,ext.}$ 
reads
\begin{align}
	(\Delta - c^{-2} \partial_t^2) A^\perp_a 
	&= \frac{1}{c^2} \partial_a \partial_t \phi_\text{el.,ext.} 
		- (\gamma+1) \delta^{bc} \frac{\partial_c \phi}{c^2} 
			(\partial_b A^\perp_a - \partial_a A^\perp_b) 
		+ \Or(c^{-4}) \nonumber\\
	&= -(\gamma+1) \delta^{bc} \frac{\partial_c \phi}{c^2} 
			(\partial_b A^\perp_a - \partial_a A^\perp_b) 
		+ \Or(c^{-4}),
\end{align}
and employing a further expansion $\vect A^\perp 
= \vect A^{\perp(0)} + c^{-2} \vect A^{\perp(2)} 
+ \Or(c^{-4})$, we obtain in lowest order
\begin{equation}
	(\Delta - c^{-2} \partial_t^2) \vect A^{\perp(0)} = 0.
\end{equation}
Differently to the internal case, we now allow for 
radiative solutions\footnote
	{At the end of the day, the idea is to put an atom 
	into a laser beam.},
thus \emph{not} getting $\vect A^{\perp(0)} = 0$. However, 
we can conclude that $\partial_a \vect A^\perp 
= \Or(c^{-1} \partial_t \vect A^\perp)$. Treating 
$\partial_t \vect A^\perp$, which corresponds (up to 
a gravitational correction factor of order unity) to 
the external electric field, as being of order $c^0$, 
we thus have
\begin{equation} \label{eq:external_vect_pot_spatial_deriv_order}
	\partial_a \vect A^\perp = \Or(c^{-1}).
\end{equation}
\enlargethispage{-2\baselineskip}

\subsection{Computation of the electromagnetic Lagrangian}

We will now compute the electromagnetic Lagrangian
\begin{equation} \label{eq:L_em}
	L_\text{em} = \int\D^3\ivect x \, \left(
		-\frac{\varepsilon_0 c^2}{4} \sqrt{-g} \, 
			F_{\text{tot.}\mu\nu} F_\text{tot.}^{\mu\nu} 
		+ j^\mu A_{\text{tot.}\mu}\right),
\end{equation}
which follows from the action \eqref{eq:em_action}.

The internal kinetic Maxwell term is
\begin{align}
	&-\frac{\varepsilon_0 c^2}{4} \int\D^3\ivect x \, \sqrt{-g} \, 
			\mathcal F_{\mu\nu} \mathcal F^{\mu\nu} \nonumber \\
	&\quad= -\frac{\varepsilon_0 c^2}{2} \int\D^3\ivect x \, \sqrt{-g} \, 
			\partial_\mu \mathcal A_\nu \mathcal F^{\mu\nu} \nonumber \\
	\text{(P.I.)} &\quad= \frac{\varepsilon_0 c^2}{2} \int\D^3\ivect x \, \sqrt{-g} \, 
			\mathcal A_\nu \nabla_\mu \mathcal F^{\mu\nu} 
		- \frac{\varepsilon_0 c^2}{2} \int\D^3\ivect x \, 
			\partial_0(\sqrt{-g} \mathcal A_\nu \mathcal F^{0\nu}).
\end{align}
The first integral on the right-hand side is equal to 
$-\frac{1}{2} \int\D^3\ivect x \, \mathcal A_\nu j^\nu$ 
by the internal part of the general Maxwell equations 
\eqref{eq:Maxwell_general_cov}, and for the second 
integral we obtain
\begin{align}
	&\hspace{-3em} -\frac{\varepsilon_0 c^2}{2} \int\D^3\ivect x \, \partial_0(\sqrt{-g} \, 
		\mathcal A_\nu \mathcal F^{0\nu}) \nonumber\\
	&= - \frac{\varepsilon_0 c^2}{2} \int\D^3\ivect x \, \partial_0(\sqrt{-g} \, 
		\mathcal A_a \mathcal F^{0a}) \nonumber\\
	&\quad \text{(using $\mathcal F^{0a} = g^{00} g^{ab} \mathcal F_{0b} + \Or(c^{-5})$)} \nonumber\\
	&= - \frac{\varepsilon_0 c^2}{2} \int\D^3\ivect x \, \partial_0 \left(\sqrt{-g} \, 
			g^{00} g^{ab} \mathcal A_a (\partial_0 \mathcal A_b 
				- \partial_b \mathcal A_0) \right) 
		+ \Or(c^{-4}) \nonumber\\
	&= - \frac{\varepsilon_0}{2} \int\D^3\ivect x \, \partial_t \left(\sqrt{-g} \, 
			g^{00} g^{ab} \mathcal A_a (\partial_t \mathcal A_b 
				+ \partial_b \phi_\text{el.}) \right) 
		+ \Or(c^{-4}) \nonumber\\
	\text{(P.I.)} \quad 
	&= - \frac{\varepsilon_0}{2} \int\D^3\ivect x \, \partial_t \left(\sqrt{-g} \, 
			g^{00} g^{ab} \mathcal A_a \partial_t \mathcal A_b \right) \nonumber\\
		&\quad + \frac{\varepsilon_0}{2} \int\D^3\ivect x \, \partial_t 
			\Big( \underbrace{\partial_b \left(\sqrt{-g} \, g^{00} g^{ab}\right)}
					_{= \Or(c^{-2})} 
			\mathcal A_a \phi_\text{el.} \Big) \nonumber\\
		&\quad + \frac{\varepsilon_0}{2} \int\D^3\ivect x \, \partial_t \Big(\sqrt{-g} \, 
			g^{00} \underbrace{g^{ab} \partial_b \mathcal A_a}
				_{\mathclap{= (1 + 2\gamma\frac{\phi}{c^2}) \delta^{ab} 
						\partial_b \mathcal A^\perp_a + \Or(c^{-4}) 
					= \Or(c^{-4})}} 
			\phi_\text{el.} \Big)
		+ \Or(c^{-4}) \nonumber\\
	&= \Or(c^{-4}),
\end{align}
where in the partial integration step we used the gauge 
condition \eqref{eq:gauge} and that $\mathcal A_a$ is of 
order $c^{-2}$ according to \eqref{eq:internal_vect_pot_result}. 
Thus, the `purely internal' contribution of 
electromagnetism to the Lagrangian, including the explicit 
coupling term of the internal potential to the current, is
\begin{align} \label{eq:L_em_int}
	L_\text{em,int.} 
	&= \int\D^3\ivect x \, \left(-\frac{\varepsilon_0 c^2}{4} \sqrt{-g} \, 
				\mathcal F_{\mu\nu} \mathcal F^{\mu\nu} 
			+ j^\mu \mathcal A_\mu\right) \nonumber\\
	&= \frac{1}{2} \int\D^3\ivect x \, j^\mu \mathcal A_\mu 
		+ \Or(c^{-4}) \nonumber\\
	&= \frac{1}{2} \int\D^3\ivect x \, (\vect j \cdot \vect{\mathcal A}^\perp 
			- \rho \phi_\text{el.}) 
		+ \Or(c^{-4}).
\end{align}

To compute the purely external and mixed external-internal 
contributions to the electromagnetic Lagrangian, we first 
explicitly compute the kinetic Maxwell term in terms 
of the potentials. Inserting the explicit form of the 
\acronym{PPN} metric, we obtain
\begin{align} \label{eq:kin_Maxwell_total}
	-\frac{\varepsilon_0 c^2}{4} \sqrt{-g} \, 
			F_{\text{tot.}\mu\nu} F_\text{tot.}^{\mu\nu}
	&= \frac{\varepsilon_0}{2} \sqrt{-g} \, \bigg[-g^{00} g^{ab} (\partial_t A_{\text{tot.}a} 
				+ \partial_a \phi_{\text{el.,tot.}}) 
			(\partial_t A_{\text{tot.}b} 
				+ \partial_b \phi_{\text{el.,tot.}}) \nonumber\\
		&\qquad- c^2 (g^{ab} g^{cd} - g^{ad} g^{cb}) \, 
			\partial_a A_{\text{tot.}c} \, \partial_b A_{\text{tot.}d}\bigg] 
		+ \Or(c^{-4}) \nonumber\displaybreak[0]\\
	&= \frac{\varepsilon_0}{2} \bigg[ \left(1 - (\gamma+1) \frac{\phi}{c^2}\right) 
				(\partial_t \vect A_\text{tot.} 
					+ \vect\nabla \phi_{\text{el.,tot.}})^2 \nonumber\\
			&\qquad- c^2 \left(1 + (\gamma+1) \frac{\phi}{c^2}\right) 
				(\vect\nabla \times \vect A_\text{tot.})^2 \bigg] 
		+ \Or(c^{-4}).
\end{align}
Note that according to \eqref{eq:internal_vect_pot_result} 
and \eqref{eq:external_vect_pot_spatial_deriv_order} we 
have $\vect\nabla \times \vect A_\text{tot.} = \Or(c^{-1})$, 
such that the second term in the square brackets does 
indeed include terms up to (and including) order $c^{-2}$, 
such that the total given expansion order makes sense. 
We also recall that, as introduced in section 
\ref{sec:geometric_notation_conventions}, 
$\vect\nabla \times \vect A_\text{tot.}$ denotes the 
`component-wise curl' of $\vect A_\text{tot.}$, which 
is a well-defined geometric operation (i.\,e.\ independent 
of coordinates) once we have introduced the background 
structures.

The internal-internal term of \eqref{eq:kin_Maxwell_total} 
was considered above in \eqref{eq:L_em_int}. The purely 
external term gives
\begin{align} \label{eq:L_em_ext}
	L_\text{em,ext.} 
	&= -\frac{\varepsilon_0 c^2}{4} \int\D^3\ivect x \, \sqrt{-g} \, 
			F_{\mu\nu} F^{\mu\nu} \nonumber\\
	&= \frac{\varepsilon_0}{2} \int\D^3\ivect x \, 
			\bigg[ \left(1 - (\gamma+1) \frac{\phi}{c^2}\right) 
					\Big((\partial_t \vect A^\perp)^2 
						+ 2 \partial_t \vect A^\perp 
							\cdot \underbrace{\vect\nabla \phi_\text{el.,ext.}}
							_{\overset{\eqref{eq:external_el_pot_order}}{=} \Or(c^{-2})} \nonumber\\
						&\qquad + \underbrace{(\vect\nabla \phi_\text{el.,ext.})^2}
							_{\overset{\eqref{eq:external_el_pot_order}}{=} \Or(c^{-4})} \Big) 
				- c^2 \left(1 + (\gamma+1) \frac{\phi}{c^2}\right) 
					(\vect\nabla \times \vect A^\perp)^2 \bigg] 
		+ \Or(c^{-4}) \nonumber\\
	&\quad \text{(using P.I., \eqref{eq:gauge})} \nonumber\\
	&= \frac{\varepsilon_0}{2} \int\D^3\ivect x \, 
			\bigg[ \left(1 - (\gamma+1) \frac{\phi}{c^2}\right) 
					(\partial_t \vect A^\perp)^2 \nonumber\\
				&\quad- c^2 \left(1 + (\gamma+1) \frac{\phi}{c^2}\right) 
					(\vect\nabla \times \vect A^\perp)^2 \bigg]
		+ \Or(c^{-4}).
\end{align}

For the external-internal mixed term plus the interaction of the external potential with the current, we obtain
\begin{align} \label{eq:L_em_ext_int}
	L_\text{em,ext.-int.} 
	&= \int\D^3\ivect x \, j^\mu A_\mu 
		- \frac{\varepsilon_0 c^2}{2} \int\D^3\ivect x \, \sqrt{-g} \, 
			\mathcal F_{\mu\nu} F^{\mu\nu} \nonumber\\
	&= \int\D^3\ivect x \, (\vect j \cdot \vect A^\perp 
			- \rho \phi_\text{el.,ext.}) \nonumber\\
		&\quad+ \varepsilon_0 \int\D^3\ivect x \, \bigg[ 
			\left(1 - (\gamma+1) \frac{\phi}{c^2}\right) 
				(\partial_t \vect{\mathcal A}^\perp + \vect\nabla \phi_\text{el.}) 
				\cdot (\partial_t \vect A^\perp + \vect\nabla \phi_\text{el.,ext.}) \nonumber\\
			&\qquad- c^2 \left(1 + (\gamma+1) \frac{\phi}{c^2}\right) 
				(\vect\nabla \times \vect{\mathcal A}^\perp)
				\cdot (\vect\nabla \times \vect A^\perp) \bigg]
		+ \Or(c^{-4}) \nonumber\displaybreak[0]\\
	&\quad \text{(using \eqref{eq:internal_vect_pot_result}, \eqref{eq:external_el_pot_order})} \nonumber\\
	&= \int\D^3\ivect x \, (\vect j \cdot \vect A^\perp - \rho \phi_\text{el.,ext.}) \nonumber\\
		&\quad+ \varepsilon_0 \int\D^3\ivect x \, \bigg[ (\partial_t \vect{\mathcal A}^\perp) 
				\cdot (\partial_t \vect A^\perp) \nonumber\\
			&\qquad- c^2 \left(1 + (\gamma+1) \frac{\phi}{c^2}\right) 
				(\vect\nabla \times \vect{\mathcal A}^\perp) 
				\cdot (\vect\nabla \times \vect A^\perp) \nonumber\\
			&\qquad+ \left(1 - (\gamma+1) \frac{\phi}{c^2}\right) 
				\vect\nabla \phi_\text{el.} \cdot \partial_t \vect A^\perp 
			+ \vect\nabla \phi_\text{el.} \cdot \vect\nabla \phi_\text{el.,ext.} \bigg]
		+ \Or(c^{-4}) \nonumber\\
	&\quad \text{(using P.I., \eqref{eq:gauge}, \eqref{eq:Maxwell_int_Poisson})} \nonumber\\
	&= \int\D^3\ivect x \, \vect j \cdot \vect A^\perp 
		+ \varepsilon_0 \int\D^3\ivect x \, \bigg[ (\partial_t \vect{\mathcal A}^\perp) 
				\cdot (\partial_t \vect A^\perp) \nonumber\\
			&\qquad- c^2 \left(1 + (\gamma+1) \frac{\phi}{c^2}\right) 
				(\vect\nabla \times \vect{\mathcal A}^\perp) 
				\cdot (\vect\nabla \times \vect A^\perp) \bigg] \nonumber\\
		&\qquad+ \varepsilon_0 \int\D^3\ivect x \, (\gamma+1) \, \phi_\text{el.}^{(0)} 
			\frac{\vect\nabla\phi}{c^2} \cdot \partial_t \vect A^\perp
		+ \Or(c^{-4}).
\end{align}
Following appendix B of \cite{sonnleitner18}, we will 
neglect the second integral in this expression since 
it is related to formally diverging backreaction terms.

Adding the Lagrangians \eqref{eq:L_em_int}, 
\eqref{eq:L_em_ext} and \eqref{eq:L_em_ext_int}, the total 
post-Newtonian electromagnetic Lagrangian (with the 
above-mentioned neglections following \cite{sonnleitner18}) 
reads
\begin{align} \label{eq:Lagrangian_em_grav}
	L_\text{em} 
	&= \frac{1}{2} \int\D^3\ivect x \, (\vect j \cdot \vect{\mathcal A}^\perp 
			- \rho \phi_\text{el.}) 
		+ \int\D^3\ivect x \, \vect j \cdot \vect A^\perp \nonumber\\
		&\quad + \frac{\varepsilon_0}{2} \int\D^3\ivect x \, 
			\bigg[ \left(1 - (\gamma+1) \frac{\phi}{c^2}\right) 
				(\partial_t \vect A^\perp)^2 
			- c^2 \left(1 + (\gamma+1) \frac{\phi}{c^2}\right) 
				(\vect\nabla \times \vect A^\perp)^2 \bigg] \nonumber\\
		&\quad+ \varepsilon_0 \int\D^3\ivect x \, (\gamma+1) \, \phi_\text{el.}^{(0)} 
			\frac{\vect\nabla\phi}{c^2} \cdot \partial_t \vect A^\perp
		+ \Or(c^{-4}).
\end{align}
We remind the reader that, as for the non-gravitational 
calculation discussed in section~\ref{sec:work_sonnleitner_barnett},
$\vect A^\perp$ is treated as a given external field that 
appears in the Lagrangian, not a dynamical variable. 
Inserting the internal magnetic potential 
\eqref{eq:internal_vect_pot_result} and using the electric 
interaction integral \eqref{eq:interaction_internal_electric} 
computed above, for the internal term we obtain (dropping 
infinite self-interaction terms)
\begin{align} \label{eq:Lagrangian_em_grav_internal_explicit}
	\frac{1}{2} \int\D^3\ivect x \, (\vect j \cdot \vect{\mathcal A}^\perp 
			- \rho \phi_\text{el.}) 
	&= -\frac{e_1 e_2}{4\pi \varepsilon_0 r} \left(1 + (\gamma+1) 
			\frac{\phi(\vect r_1) + \phi(\vect r_2)}{2c^2}\right) \nonumber\\
		&\quad+ \frac{e_1 e_2}{8\pi \varepsilon_0 c^2} 
			\left[ \frac{\dot{\vect r}_1 \cdot \dot{\vect r}_2}{r} 
				+ \frac{(\dot{\vect r}_1 \cdot \vect r) (\dot{\vect r}_2 \cdot \vect r)}{r^3} \right]
		+ \Or(c^{-4}).
\end{align}

\section{The total Hamiltonian including all interactions}
\label{sec:total_Hamiltonian}

In this section we collect all previous findings and 
combine them into the total Hamiltonian that characterises 
the dynamics of our two-particle system that is now also 
exposed to a non-trivial gravitational field. We will see 
that the Hamiltonian suffers various `corrections' as 
compared to the gravity-free case, and that these terms 
acquire an intuitive interpretation if re-expressed in 
terms of the physical spacetime metric $g$.

\subsection{Computation of the Hamiltonian} 
\label{sec:total_Hamiltonian_comp}

We will now compute the total Hamiltonian describing the 
atom in external electromagnetic and gravitational fields 
by repeating the calculation from section 
\ref{sec:coupling_GF_particles} while including the 
`gravitational corrections' to electromagnetism obtained 
in section \ref{sec:coupling_GF_EMF}.

Comparing the gravitationally corrected electromagnetic 
Lagrangian as given by \eqref{eq:Lagrangian_em_grav}, 
\eqref{eq:Lagrangian_em_grav_internal_explicit} to the 
one without gravitational field ($\phi = 0$), we see 
that (at our order of approximation) the differences 
consist of new prefactors involving $\phi$ in the external 
electromagnetic term and the internal Coulomb interaction 
term, as well as an additional term involving the derivative 
$\vect\nabla\phi$ of the gravitational potential (last line of 
\eqref{eq:Lagrangian_em_grav_internal_explicit}). Thus, 
when calculating the Hamiltonian, we have to take care 
of these changes compared to the discussion of section 
\ref{sec:coupling_GF_particles}.

\subsubsection{The classical Hamiltonian}

As explained in section \ref{sec:work_sonnleitner_barnett}, 
although the external field is not treated as a dynamical 
variable, when Legendre transforming the Lagrangian 
in order to compute a Hamiltonian we will add a term 
corresponding to the energy of the external field. We 
also use the notation $\vect\Pi^\perp$ for the `would-be 
canonical momentum' conjugate to the external field (i.e.\ 
the canonical momentum if $\vect A^\perp$ were a dynamical 
variable). For our Lagrangian, this `would-be canonical 
momentum' is
\begin{align} \label{eq:field_momentum_with_gravity}
	\vect\Pi^\perp 
	&= \frac{\delta L_\text{em}}{\delta(\partial_t \vect A^\perp)} \nonumber\\
	&= \varepsilon_0 \left(1 - (\gamma+1) \frac{\phi}{c^2}\right) \partial_t \vect A^\perp 
		+ \varepsilon_0 (\gamma+1) 
			\left(\phi_\text{el.}^{(0)} \frac{\vect\nabla\phi}{c^2}\right)^{\kern-.3em\perp} 
		+ \Or(c^{-4}).
\end{align}
Inverting this, we get
\begin{equation}
	\partial_t \vect A^\perp 
	= \left(1 + (\gamma+1) \frac{\phi}{c^2}\right) \frac{\vect\Pi^\perp}{\varepsilon_0} 
		- (\gamma+1) \left(\phi_\text{el.}^{(0)} \frac{\vect\nabla\phi}{c^2}\right)^{\kern-.3em\perp} 
		+ \Or(c^{-4}).
\end{equation}
Expressing the first part of the external electromagnetic 
Lagrangian \eqref{eq:L_em_ext} in terms of this, we have
\begin{align}
	&\frac{\varepsilon_0}{2} \int\D^3\ivect x \, 
			\left(1 - (\gamma+1) \frac{\phi}{c^2}\right) 
			(\partial_t \vect A^\perp)^2 \nonumber\\
	&\quad= \frac{\varepsilon_0}{2} \int\D^3\ivect x \, 
			\left( \frac{\vect\Pi^\perp}{\varepsilon_0} 
				- (\gamma+1) \, \phi_\text{el.}^{(0)} 
					\frac{\vect\nabla\phi}{c^2} \right) 
			\cdot (\partial_t \vect A^\perp) 
		+ \Or(c^{-4}) \nonumber\\
	&\quad= \frac{\varepsilon_0}{2} \int\D^3\ivect x \, 
			\left(1 + (\gamma+1) \frac{\phi}{c^2}\right) 
			(\vect\Pi^\perp/\varepsilon_0)^2 
		- \int\D^3\ivect x \, (\gamma+1) \, \phi_\text{el.}^{(0)} 
			\frac{\vect\nabla\phi}{c^2} \cdot \vect\Pi^\perp 
		+ \Or(c^{-4}).
\end{align}
Furthermore, we have
\begin{align}
	\int\D^3\ivect x \, \vect\Pi^\perp \cdot \partial_t \vect A^\perp 
	&= \varepsilon_0 \int\D^3\ivect x \, \left(1 + (\gamma+1) \frac{\phi}{c^2}\right) 
			(\vect\Pi^\perp/\varepsilon_0)^2 \nonumber\\
		&\quad - \int\D^3\ivect x \, (\gamma+1) \, \phi_\text{el.}^{(0)} 
			\frac{\vect\nabla\phi}{c^2} \cdot \vect\Pi^\perp 
		+ \Or(c^{-4}).
\end{align}
Thus, the Hamiltonian for the external electromagnetic 
field and the external-internal interaction is
\begin{align}
	H_\text{em,ext.,ext.-int.} 
	&= \int\D^3\ivect x \, \vect\Pi^\perp \cdot \partial_t \vect A^\perp 
		- L_\text{em,ext.} - L_\text{em,ext.-int.} \nonumber\\
	&= \int\D^3\ivect x \, \vect\Pi^\perp \cdot \partial_t \vect A^\perp 
		- L_\text{em,ext.} \nonumber\\
		&\quad - \int\D^3\ivect x \, \vect j \cdot \vect A^\perp 
		- \varepsilon_0 \int\D^3\ivect x \, (\gamma+1) \, \phi_\text{el.}^{(0)} 
			\frac{\vect\nabla\phi}{c^2} 
			\cdot \underbrace{\partial_t \vect A^\perp}
				_{\mathrlap{= \vect\Pi^\perp/\varepsilon_0 + \Or(c^{-2})}}
		+ \Or(c^{-4}) \nonumber\\
	&= \frac{\varepsilon_0}{2} \int\D^3\ivect x \, 
			\left(1 + (\gamma+1) \frac{\phi}{c^2}\right) 
			\left[ (\vect\Pi^\perp/\varepsilon_0)^2 
				+ c^2 (\vect\nabla \times \vect A^\perp)^2 \right] \nonumber\\
		&\quad- \int\D^3\ivect x \, \vect j \cdot \vect A^\perp 
		- \int\D^3\ivect x \, (\gamma+1) \, \phi_\text{el.}^{(0)} 
			\frac{\vect\nabla\phi}{c^2} \cdot \vect\Pi^\perp
		+ \Or(c^{-4}).
\end{align}

When including the gravitational corrections to 
electromagnetism, the final total classical Hamiltonian 
thus will differ from the one without these corrections, 
as given by \eqref{eq:Hamiltonian_class_no_em} and 
\eqref{eq:Hamiltonian_class_orig}, in the following points:
\begin{itemize}
	\item The external `field energy' $\displaystyle 
		\frac{\varepsilon_0}{2} \int\D^3\ivect x \, 
		\left[ (\vect\Pi^\perp/\varepsilon_0)^2 
		+ c^2 (\vect\nabla \times \vect A^\perp)^2 \right]$ 
		gains a prefactor $\left(1 + (\gamma+1) \frac{\phi}{c^2}\right)$,

	\item the Coulomb term gains a prefactor 
		$\left(1 + (\gamma+1) \frac{\phi(\vect r_1) 
		+ \phi(\vect r_2)}{2c^2}\right)$, and

	\item there is an additional term
		\begin{equation}
			- \int\D^3\ivect x \, (\gamma+1) \, \phi_\text{el.}^{(0)} 
			\frac{\vect\nabla\phi}{c^2} \cdot \vect\Pi^\perp.
		\end{equation}
\end{itemize}

\subsubsection{Canonical quantisation, \acronym{PZW} transformation, and introduction of centre of mass coordinates}

We can now canonically quantise this classical Hamiltonian 
and perform the \acronym{PZW} transformation precisely 
as in the case without the gravitational corrections to 
electromagnetism~-- we just have to see how the correction 
terms transform. The resulting final multipolar Hamiltonian 
differs from the one without these corrections, as given by 
\eqref{eq:Hamiltonian_mult_no_em} and 
\eqref{eq:Hamiltonian_mult_orig}, in the following points:
\begin{itemize}
	\item The transformed external `field energy' $\displaystyle 
		\frac{\varepsilon_0}{2} \int\D^3\ivect x \, 
		\left[ \frac{(\vect{\tilde\Pi}^\perp 
			+ \vect{\mathcal P}_d^\perp)^2}{\varepsilon_0^2} 
		+ c^2 (\vect\nabla \times \vect A^\perp)^2 \right]$ 
		gains a prefactor $\left(1 + (\gamma+1) \frac{\phi}{c^2}\right)$,

	\item the Coulomb term gains a prefactor 
		$\left(1 + (\gamma+1) \frac{\phi(\vect r_1) 
		+ \phi(\vect r_2)}{2c^2}\right)$, and

	\item there are additional terms
		\begin{equation}
			- \int\D^3\ivect x \, (\gamma+1) \, \phi_\text{el.}^{(0)} 
			\frac{\vect\nabla \phi}{c^2} 
			\cdot (\vect{\tilde\Pi}^\perp + \vect{\mathcal P}_d^\perp).
		\end{equation}
\end{itemize}
For the Coulomb term, expanding $\phi$ to linear order, 
we have
\begin{align}
	\phi(\vect r_1) + \phi(\vect r_2) 
	&= 2 \phi(\vect R) 
		+ (\vect r_1 - \vect R + \vect r_2 - \vect R) 
			\cdot \vect\nabla\phi(\vect R) \nonumber\\
	&= 2 \phi(\vect R) 
		+ \frac{m_2 - m_1}{M} \vect r \cdot \vect\nabla\phi(\vect R).
\end{align}
Using this, we can rewrite the corrected Coulomb term as
\begin{align}
	- \left(1 + (\gamma+1) \frac{\phi(\vect r_1) 
			+ \phi(\vect r_2)}{2c^2}\right) 
			\frac{e^2}{4\pi \varepsilon_0 r} 
	&= -\frac{e^2}{4\pi \varepsilon_0 r} 
			\left(1 + (\gamma+1) \frac{\phi(\vect R)}{c^2}\right) \nonumber\\
		&\quad- \frac{\gamma+1}{c^2} \, \frac{e^2}{8\pi \varepsilon_0 r} \, 
			\frac{m_2 - m_1}{M} \vect r \cdot \vect\nabla \phi(\vect R)
\end{align}
in terms of centre of mass and relative coordinates.

\subsubsection{The total Hamiltonian}

Putting everything together, we arrive at the total 
Hamiltonian describing our simple atomic system in 
external electromagnetic and post-Newtonian gravitational 
fields. Here it is, in its full glory:
\begin{subequations}
	\label{eq:Hamiltonian_com_final}
\begin{align}
	H_\text{[com],final} &= H_\text{C,final} + H_\text{A,final} 
		+ H_\text{AL,final} + H_\text{L,final} + H_\text{X} + H_\text{deriv,new} 
		+ \Or(c^{-4})\\
	H_\text{C,final} 
	&= \frac{\vect P^2}{2M} 
			\left[1 - \frac{1}{M c^2} \left(\frac{\vect p_{\vect r}^2}{2\mu} 
				- \frac{e^2}{4\pi\varepsilon_0 r}\right)\right] 
		+ \left[M + \frac{1}{c^2} \left(\frac{\vect p_{\vect r}^2}{2\mu} 
			- \frac{e^2}{4\pi \varepsilon_0 r}\right) \right] \phi(\vect R) \nonumber\\
		&\quad - \frac{\vect P^4}{8M^3 c^2} 
		+ \frac{2\gamma+1}{2Mc^2} \vect P \cdot \phi(\vect R) \vect P 
		+ (2\beta-1) \frac{M \phi(\vect R)^2}{2 c^2} 
	\label{eq:Hamiltonian_com_C_final} \\
	H_\text{A,final} 
	&= \left(1 + 2\gamma\frac{\phi(\vect R)}{c^2}\right) 
			\frac{\vect p_{\vect r}^2}{2\mu} 
		- \left(1 + \gamma\frac{\phi(\vect R)}{c^2}\right) 
			\frac{e^2}{4\pi\varepsilon_0 r} \nonumber\\
		&\quad - \frac{m_1^3 + m_2^3}{M^3} \, \frac{\vect p_{\vect r}^4}{8 \mu^3 c^2} 
		- \frac{e^2}{4\pi\varepsilon_0} \, \frac{1}{2\mu M c^2} 
			\left( \vect p_{\vect r} \cdot \frac{1}{r} \vect p_{\vect r} 
			+ \vect p_{\vect r} \cdot \vect r \frac{1}{r^3} \vect r 
				\cdot \vect p_{\vect r} \right) \nonumber\\
		&\quad - \frac{2\gamma+1}{2c^2} \, \frac{m_1 - m_2}{m_1 m_2} 
			\vect p_{\vect r} \cdot (\vect r \cdot \vect\nabla\phi(\vect R)) 
			\vect p_{\vect r}
		- \frac{\gamma+1}{c^2} \, \frac{e^2}{8\pi \varepsilon_0 r} \, 
			\frac{m_2 - m_1}{M} \vect r \cdot \vect\nabla \phi(\vect R) 
		\label{eq:Hamiltonian_com_A_final} \\
	H_\text{AL,final} 
	&= \left(1 + (\gamma+1) \frac{\phi(\vect R)}{c^2}\right) 
			\frac{\vect{\tilde\Pi}^\perp(\vect R)}{\varepsilon_0} 
			\cdot \vect d 
		+ \frac{1}{2M} \{\vect P \cdot [\vect d \times \vect B(\vect R)] 
			+ \text{H.c.}\} \nonumber\\
		&\quad - \frac{m_1 - m_2}{4 m_1 m_2} \{\vect p_{\vect r} 
			\cdot [\vect d \times \vect B(\vect R)] + \text{H.c.}\} \nonumber\\
		&\quad + \frac{1}{8\mu} (\vect d \times \vect B(\vect R))^2 
		+ \frac{1}{2\varepsilon_0} \int\D^3\ivect x \, 
			\left(1 + (\gamma+1) \frac{\phi}{c^2}\right) 
			{\vect{\mathcal P}_d^\perp}^2 (\vect x, t) \nonumber\\
		&\quad - \int\D^3\ivect x \, (\gamma+1) \, \phi_\text{el.}^{(0)} 
			\frac{\vect\nabla \phi}{c^2} \cdot (\vect{\tilde\Pi}^\perp 
				+ \vect{\mathcal P}_d^\perp) 
	\label{eq:Hamiltonian_com_AL_final} \\
	H_\text{L,final} 
	&= \frac{\varepsilon_0}{2} \int\D^3\ivect x \, 
		\left(1 + (\gamma+1) \frac{\phi}{c^2}\right) 
		\left[ (\vect{\tilde\Pi}^\perp/\varepsilon_0)^2 
			+ c^2 (\vect\nabla \times \vect A^\perp)^2 \right] 
	\label{eq:Hamiltonian_com_L_final} \\
	H_\text{X} 
	&= - \frac{(\vect P \cdot \vect p_{\vect r})^2}{2 M^2 \mu c^2} 
		+ \frac{e^2}{4\pi\varepsilon_0 r} \, 
			\frac{(\vect P\cdot \vect r / r)^2}{2 M^2 c^2} \nonumber\\
		&\quad + \frac{m_1 - m_2}{2\mu M^2 c^2} 
			\bigg\{ (\vect P \cdot \vect p_{\vect r}) \vect p_{\vect r}^2 / \mu 
				- \frac{e^2}{8\pi\varepsilon_0}
					\left[\frac{1}{r} \vect P \cdot \vect p_{\vect r} 
					+ \frac{1}{r^3} (\vect P \cdot \vect r) (\vect r \cdot \vect p_{\vect r}) 
					+ \text{H.c.}\right] \bigg\} \\
	H_\text{deriv,new} 
	&= \frac{2\gamma+1}{2Mc^2} 
		[\vect P \cdot (\vect r 
			\cdot \vect\nabla\phi(\vect R)) \vect p_{\vect r} + \text{H.c.}]
\end{align}
\end{subequations}
Here, we have included the term $-\gamma \frac{1}{c^2} 
\frac{e^2}{4\pi \varepsilon_0 r} \phi(\vect R)$ (from the 
`corrections' to the Coulomb term) into $H_\text{A,final}$ 
(instead of into $H_\text{C,final}$) since it can be 
combined with the original Coulomb term from $H_\text{A}$ 
into
\begin{equation} \label{eq:Coulomb_metr}
	-\frac{e^2}{4\pi \varepsilon_0 r} 
		\left(1 + \gamma\frac{\phi(\vect R)}{c^2}\right) 
	= -\frac{e^2}{4\pi \varepsilon_0 \sqrt{{^{(3)}g_{\vect R}} (\vect r, \vect r)}} \; ,
\end{equation}
i.e.\ a Coulomb term expressed with the correct, metric relative distance.

To correctly interpret the atom--light interaction 
Hamiltonian \eqref{eq:Hamiltonian_com_AL_final}, one has 
to keep in mind that the field variables $\tilde{\vect\Pi}^\perp 
= \vect\Pi^\perp - \vect{\mathcal P}^\perp$ and $\vect B 
= \vect\nabla \times \vect A^\perp$ appearing in it do not 
refer to an orthonormal frame in the physical spacetime 
metric $g$ in the presence of gravitational fields, but 
are related to components of the electromagnetic field 
tensor in the \emph{coordinate} frame~-- or, more 
geometrically speaking, to an inertial frame with respect 
to the background Minkowski metric. This issue will be 
discussed in more detail in section \ref{sec:EM_ONB}.

Since the cross terms $H_\text{X}$ are the same as in 
\cite{sonnleitner18}, we could now introduce new canonical 
variables $\vect Q,\vect q,\vect p$ literally as in 
\eqref{eq:coords_decoup_orig} to eliminate these cross 
terms. Since the gravitational correction terms are of 
order $\Or(c^{-2})$, for them this canonical transformation 
would just amount to the replacements $\vect R \to \vect Q, 
\vect r \to \vect q, \vect p_{\vect r} \to \vect p$ at our 
order of approximation. Since it will not alter the 
following discussion, we will not perform this coordinate 
change in order to avoid adding an extra layer of 
potentially confusing notation.

\subsection{The system as a composite point particle}

We now take another look at the central and the internal 
Hamiltonian \eqref{eq:Hamiltonian_com_C_final}, 
\eqref{eq:Hamiltonian_com_A_final}, where we rewrite the 
latter in the form
\begin{align} \label{eq:Hamiltonian_com_A_final_geom}
	H_\text{A,final} 
		&= \frac{{^{(3)}g^{-1}_{\vect R}} (\vect p_{\vect r}, \vect p_{\vect r})}{2\mu} 
			- \frac{e^2}{4\pi\varepsilon_0 \sqrt{{^{(3)}g_{\vect R}} (\vect r, \vect r)}} \nonumber\\
			&\quad - \frac{m_1^3 + m_2^3}{M^3} \, \frac{\vect p_{\vect r}^4}{8 \mu^3 c^2} 
			- \frac{e^2}{4\pi\varepsilon_0} \, \frac{1}{2\mu M c^2} 
				\left( \vect p_{\vect r} \cdot \frac{1}{r} \vect p_{\vect r} 
				+ \vect p_{\vect r} \cdot \vect r \frac{1}{r^3} \vect r 
					\cdot \vect p_{\vect r} \right) \nonumber\\
			&\quad - \frac{2\gamma+1}{2c^2} \, \frac{m_1 - m_2}{m_1 m_2} 
				\vect p_{\vect r} \cdot (\vect r \cdot \vect\nabla\phi(\vect R)) 
				\vect p_{\vect r}
			- \frac{\gamma+1}{c^2} \, \frac{e^2}{8\pi \varepsilon_0 r} \, 
				\frac{m_2 - m_1}{M} \vect r \cdot \vect\nabla \phi(\vect R)
\end{align}
by combining the gravitational correction terms which do 
not involve the potential derivative $\vect\nabla\phi$ 
into metrically defined kinetic energy and Coulomb terms 
as in \eqref{eq:internal_kin_metr}, \eqref{eq:Coulomb_metr}.

Now comparing the central Hamiltonian $H_\text{C,final}$ 
\eqref{eq:Hamiltonian_com_C_final} to the Hamiltonian of 
a single point particle of mass $m$ in the \acronym{PPN} 
metric,
\begin{equation} \label{eq:Hamiltonian_point}
	H_\text{point}(\vect P, \vect R; m) 
	= \frac{\vect P^2}{2m} + m \phi(\vect R) 
		- \frac{\vect P^4}{8m^3 c^2} 
		+ \frac{2\gamma+1}{2mc^2} \vect P \cdot \phi(\vect R) \vect P 
		+ (2\beta-1) \frac{m \phi(\vect R)^2}{2 c^2} \; ,
\end{equation}
we see that the central Hamiltonian has, up to (and 
including) $\Or(c^{-2})$, exactly this form, with the 
mass $m$ replaced by $M + \frac{H_\text{A,final}}{c^2}$,
\begin{equation} \label{eq:Hamiltonian_composite_point}
	H_\text{C,final} 
	= H_\text{point}\bigg(\vect P, \vect R; 
		M + \frac{H_\text{A,final}}{c^2}\bigg),
\end{equation}
as could be naively expected from mass--energy equivalence. 
Thus, starting from first principles, we have shown that 
\emph{the system behaves as a `composite point particle' 
whose (inertial as well as gravitational) mass is comprised 
of the rest masses of the constituent particles as well as 
the internal energy.}
\enlargethispage{\baselineskip}

Note that this conclusion depends on the identification 
of terms as being `kinetic' and `interaction' energies, 
which in turn depends on the metric structure in their 
expressions. Had we not rewritten the internal kinetic 
energy \eqref{eq:internal_kin_metr} and the Coulomb 
interaction \eqref{eq:Coulomb_metr} in terms of the 
physical metric $g$, but included only the corresponding 
terms $\frac{\vect p_{\vect r}^2}{2\mu}$ and 
$-\frac{e^2}{4\pi\varepsilon_0 r}$ with respect to 
the background metric into the internal Hamiltonian, 
the above conclusion could not have resulted. Rather, 
having additional terms in the central Hamiltonian, 
to obtain it in a `composite particle' point of view 
we would have had to replace the inertial mass in 
\eqref{eq:Hamiltonian_point} by $M + H_\text{A}/c^2$ 
and the gravitational mass by $M + (2\gamma+1) 
\frac{\vect p_{\vect r}^2}{2\mu c^2} 
- (\gamma+1) \frac{e^2}{4\pi \varepsilon_0 r c^2} 
= M + \frac{H_\text{A}}{c^2} + \gamma 
\left(2 \frac{\vect p_{\vect r}^2}{2\mu c^2} 
- \frac{e^2}{4\pi \varepsilon_0 r c^2}\right)$, 
which one could have erroneously interpreted as a violation 
of some naive form of the weak equivalence principle. But, 
clearly, such a conclusion would be premature, for it is 
based on the identification of terms --~like inertial and 
gravitational mass~-- that is itself ambiguous. That 
ambiguity is here seen as a dependence on the background 
structure, which is used to define distances of positions 
and squares of momenta. Once these quantities are measured 
with the physical metric $g$, ambiguities and apparent 
conflicts with naive expectations disappear. That point 
has also been made in \cite{zych19}.

The quantities $\vec p'^2$ and $r'$ entering the 
Hamiltonian in \cite[eq. (18)]{zych19}, which are, in the 
language of \cite{zych19}, the square of the internal 
momentum and the distance `in the \acronym{CM} rest frame', 
are nothing but the geometric expressions 
${^{(3)}g^{-1}_{\vect R}} (\vect p_{\vect r}, \vect p_{\vect r})$ 
and $\sqrt{{^{(3)}g_{\vect R}} (\vect r, \vect r)}$ from 
above, measured using the physical metric of space. The 
internal Hamiltonian \eqref{eq:Hamiltonian_com_A_final_geom} 
thus consists of kinetic and Coulomb interaction energies 
in terms of the physical geometry, in agreement with the 
expressions from \cite{zych19}, as well as the expected 
special-relativistic and `Darwin' corrections, and terms 
involving the gravitational potential's derivative\footnote
	{Since all these corrections are themselves of order $1/c^2$, 
	the deviations of the physical from the flat metric do not 
	enter here.}.

\subsection{The electromagnetic expressions in terms of components with respect to orthonormal frames}
\label{sec:EM_ONB}

The expressions derived above in \eqref{eq:Hamiltonian_com_final} 
include components of the external electromagnetic field 
with respect to coordinates which, albeit not chosen 
arbitrarily, have no direct metric significance. We 
recall that we used coordinates that are adapted to the 
background structure $(\eta,u)$, in the sense that 
$u = \partial/\partial t$ and $\eta = \eta_{\mu\nu} \, 
\D x^\mu \otimes \D x^\nu$ with $(\eta_{\mu\nu}) 
= \mathrm{diag}(-1,1,1,1)$. The corresponding local 
reference frames $(\partial_\mu = \partial/\partial x^\mu) 
_{\mu = 0,1,2,3}$ are orthonormal with respect to the 
background metric $\eta$, but not with respect to the 
physical metric $g$.
\enlargethispage{\baselineskip}

In this section we will re-express our findings in 
terms of components with respect to $g$-orthonormal 
frames, which we will call the `physical 
components', as opposed to the `coordinate components' 
used so far. We stress that, despite this terminology, 
there is nothing wrong or `unphysical' with representing 
fields in terms of components of non-orthonormal bases, 
as long as the metric properties are spelled out at the 
same time. Yet it is clearly convenient to be able to 
read off metric properties, which bear direct metric 
significance, from the expressions involving the components 
alone, without at the same time having to recall the values 
of the metric components as well.

\subsubsection{Electromagnetic quantities in non-orthonormal and orthonormal frames}

At first, we will discuss the meaning of several 
`electromagnetic quantities' in non-orthonormal and 
orthonormal frames in general, before specialising to the 
case of our \acronym{PPN} metric and considering the terms 
in our Hamiltonian. Suppose that we are given our usual 
`background' coordinate system $(x^0,x^a)$, that is 
possibly non-orthonormal with respect to the physical 
spacetime metric $g$, as well as a time- and space-oriented 
orthonormal frame / `tetrad' $(\E_{\ul{\mu}})_{\ul{\mu} = 0,1,2,3}$ 
for the physical metric, where from now on underlined 
indices refer to components with respect to the tetrad. We 
write the tetrad fields in terms of the coordinate basis 
fields, and vice versa, as
\begin{equation}
	\E_{\ul{\mu}} = \E_{\ul{\mu}}^\nu \partial_\nu \; , \quad
	\partial_\mu = \E^{\ul{\nu}}_\mu \E_{\ul{\nu}} \; ,
\end{equation}
where the matrices of coefficients $(\E_{\ul{\mu}}^\nu)$ 
and $(\E^{\ul{\nu}}_\mu)$ are inverses of each other. Since 
the tetrad is orthonormal, i.e. the tetrad components 
$g_{\ul{\mu}\ul{\nu}}$ of the metric are numerically equal 
to the components $\eta_{\mu\nu}$ of the Minkowski metric 
in a Lorentz frame, we can express the coordinate 
components of the metric as
\begin{equation}
	g_{\mu\nu} = \E^{\ul{\rho}}_\mu \E^{\ul{\sigma}}_\nu g_{\ul{\rho}\ul{\sigma}}
	= \sum_{\rho, \sigma = 0}^3 \E^{\ul{\rho}}_\mu \E^{\ul{\sigma}}_\nu \eta_{\rho\sigma} \; .
\end{equation}

We make the further assumption that at any point, time and 
space as defined by both bases be the same, i.e.\ that
\begin{equation}
	\mathrm{span}\{e_0\} = \mathrm{span}\{\partial_0\}, \quad
	\mathrm{span}\{\E_1, \E_2, \E_3\} = \mathrm{span}\{\partial_1, \partial_2, \partial_3\},
\end{equation}
i.e.\ that the coefficients $\E_{\ul{0}}^a, \E_{\ul{a}}^0, 
\E^{\ul{0}}_a, \E^{\ul{a}}_0, g_{0a}, g^{0a}$ all vanish 
(which will in our later application to the electromagnetic 
Hamiltonian be satisfied to our order of expansion). This 
also implies that $\sqrt{-g_{00}} = \E^{\ul{0}}_0 = 1/\E_{\ul{0}}^0$.

Now we can consider how electromagnetic quantities are 
related to the components of the field tensor in the 
two different frames. The electric and magnetic field 
components with respect to the tetrad (which we call, as 
introduced above, the `physical' field components) are 
\begin{equation} \label{eq:EM_fields_tetrad}
	E_{\text{phys.}\ul{a}} 
	= c F_{\ul{a} \ul{0}} \; , \quad
	B_\text{phys.}^{\ul{a}} 
	= {^{(3)}\hat\varepsilon^{\ul{a}\ul{b}\ul{c}}} F_{\ul{b}\ul{c}}
\end{equation}
in terms of the tetrad components of the field tensor, 
where $^{(3)}\hat\varepsilon^{\ul{a}\ul{b}\ul{c}}$ is the 
three-dimensional totally antisymmetric symbol. Note that 
although written in component form, these formulae have 
an invariant geometric meaning that depends only on being 
given the time direction $\mathrm{span}\{\partial_0\} 
= \mathrm{span}\{\E_0\}$ and the (physical) metric $g$: the 
electric field is simply the spatial\footnote
	{Here `spatial' means geometric objects (i.e.\ tensor 
	densities) defined on the `spatial' submanifolds 
	$x^0 = \mathrm{const.}$, which integrate the spatial 
	distribution $\mathrm{span}\{\E_1, \E_2, \E_3\} 
	= \mathrm{span}\{\partial_1, \partial_2, \partial_3\}$.}
one-form obtained by inserting the unit future-pointing 
time direction vector $\E_0$ into the two-form $F$ (in its 
second argument), projecting onto space, and multiplying 
with $c$. Considering the magnetic field, we recall the 
well-known fact that we can view 
$^{(3)}\hat\varepsilon^{\ul{a}\ul{b}\ul{c}}$ as 
the components of a spatial tensor density 
$^{(3)}\hat\varepsilon$ of weight\footnote
	{A more well-known fact is probably that the \emph{covariant} 
	totally antisymmetric symbol is a tensor density of weight 
	$-1$; that the contravariant symbol may be considered a 
	density of weight $+1$ one sees in exactly analogous fashion.}
$+1$ which is defined by demanding that its components in 
\emph{any} positively oriented frame are given by the 
totally antisymmetric symbol. Thus, the magnetic field is a 
contraction of this tensor density $^{(3)}\hat\varepsilon$ 
and the (spatially projected) field tensor (which is a 
proper tensor, i.e.\ a density of weight $0$), i.e.\ itself 
a spatial vector density of weight $+1$.

This means that what one might call the `coordinate 
components' of the magnetic field, namely the expressions
\begin{equation}
	B_\text{coord.}^a 
	 = {^{(3)}\hat\varepsilon^{abc}} F_{bc}
\end{equation}
where ${^{(3)}\hat\varepsilon^{abc}}$ is the totally 
antisymmetric symbol, are in fact the components with 
respect to the coordinate frame of \emph{the same} 
geometric object as for the `physical components', namely 
the above tensor density. Thus, the components are related 
by the usual transformation formula for tensor densities, 
i.e.\
\begin{equation}
	B_\text{phys.}^{\ul{a}} 
	= \det(\E_{\ul{c}}^d) \cdot \E^{\ul{a}}_b B_\text{coord.}^b 
	= \frac{1}{\sqrt{^{(3)}g}} \E^{\ul{a}}_b B_\text{coord.}^b \; ,
\end{equation}
where $^{(3)}g$ denotes the determinant of the matrix of 
coordinate components of the spatial metric. Note that 
due to the numerical identity $^{(3)}\hat\varepsilon^{abc} 
= {^{(3)}\varepsilon^{abc}}$, where $^{(3)}\varepsilon^{abc}$ 
are the `index-raised' components of the background spatial 
volume form as introduced in section \ref{sec:geometric_notation_conventions}, 
the components $B_\text{coord.}^a$ are, in fact, 
numerically equal to the components of the `three-vector' 
$\vect B = \vect\nabla \times \vect A^\perp$ used in the 
previous sections (although we treated $\vect B$ as a 
different geometric object there, namely as a spatial 
vector field instead of a spatial vector field density).
The interpretation of the magnetic field as a vector 
density also goes nicely with an intuitive point of view, 
namely that of the field representing the spatial density 
of magnetic field lines.

Expressing the electric field's tetrad components in terms 
of the coordinate components of the field tensor, we 
directly obtain
\begin{equation}
	E_{\text{phys.}\ul{a}} 
	= \E_{\ul{a}}^b \E_{\ul{0}}^0 c F_{b0}
	= \frac{1}{\sqrt{-g_{00}}} \E_{\ul{a}}^b c F_{b0} \; .
\end{equation}
One could interpret $\frac{1}{\sqrt{-g_{00}}} c F_{b0} 
=: E_{\text{coord.}b}$ as the `coordinate components' of 
the electric field (interpreted as a spatial one-form as 
discussed above); however, we will not make much use of 
this notation.\footnote
	{In index-free notation, we can express the electric and 
	magnetic fields as follows. The electric field is 
	\begin{equation*}
		E = - c \kern.1em \iota_{\E_{\ul{0}}} F
	\end{equation*}
	where $\iota$ denotes the interior product of a vector 
	field and a differential form, i.e.\ insertion of the 
	vector field into the \emph{first} argument of the form. 
	Note that $E$ is spatial due to the antisymmetry of $F$.

	\newcommand{\voldens}{{^{(3)}\rlap{\raisebox{-.35ex}{$\kern-.06em\widetilde{\phantom{\mathrm{vol}}}$}}\mathrm{vol}}}
	The magnetic field can be expressed as
	\begin{equation*}
		B = \voldens \cdot \left[ {^{(3)}\tilde{*}} 
			\left( \mathrm{Pr}^\perp(F) \right) \right]^{\tilde\sharp} .
	\end{equation*}
	The objects occurring in this formula are the following: 
	$\mathrm{Pr}^\perp$ denotes the orthogonal projection map 
	onto three-space $\mathrm{span}\{\E_{\ul{0}}\}^\perp$, 
	extended to arbitrary tensors. The operator $^{(3)}\tilde{*}$ 
	is the spatial Hodge star with respect to the physical 
	spatial metric $^{(3)}g$. A superscript $\tilde\sharp$ 
	denotes the natural isomorphism from spatial one-forms to 
	spatial vector fields induced by the physical spatial 
	metric, i.e.\ ${^{(3)}g}(\alpha^{\tilde\sharp}, \cdot) 
	= \alpha$. Finally, $\voldens$ denotes the spatial `volume 
	density', i.e. the spatial scalar density whose value in 
	any frame is the $^{(3)}g$-volume of the parallelepiped 
	spanned by the frame's vectors. (The value of $\voldens$ is 
	given by the square root of the determinant of the matrix 
	of $^{(3)}g$'s components in the respective frame.)}

Next, we will consider electric dipole moments and 
(electric) polarisation. Imagining an ideal dipole in the 
usual way as arising in a limit process from two separated 
opposite point charges getting closer and closer, with 
their respective charges growing accordingly, the resulting 
dipole moment is to be a proper spatial vector (and not a 
density of non-zero weight): its magnitude is an invariant 
(i.\,e.\ frame-independent) quantity, namely the product of 
charge and distance of the two particles, held constant in 
the limit process, and its direction is the limit of the 
direction `from one particle to the other'. Therefore, when 
a dipole moment has coordinate components $d_\text{coord.}^a$, 
its tetrad components are simply
\begin{equation}
	d_\text{phys.}^{\ul{a}} = \E^{\ul{a}}_b d_\text{coord.}^b \; .
\end{equation}
Now, since a polarisation is simply a density of dipole 
moment per spatial volume, the natural geometric 
perspective is that polarisation is a spatial vector 
density field. Thus a polarisation field with coordinate 
components $\mathcal P_\text{coord.}^a$ has tetrad 
components
\begin{equation}
	\mathcal P_\text{phys.}^{\ul{a}} 
	= \det(\E_{\ul{c}}^d) \cdot \E^{\ul{a}}_b \mathcal P_\text{coord.}^b 
	= \frac{1}{\sqrt{^{(3)}g}} \E^{\ul{a}}_b \mathcal P_\text{coord.}^b \; .
\end{equation}

Finally, we turn to the electric displacement field. In 
vacuum, it has the interpretation of electric flux density, 
with the only contribution to its value coming from 
the electric field times $\varepsilon_0$. However, the 
displacement is to be a vector \emph{density}, so one has 
to identify the electric field spatial one-form with the 
corresponding vector field via the metric (`index raising') 
and consider the density that is metrically associated to 
this. In a medium, the displacement field is the sum of 
this vacuum displacement and the polarisation, i.e.\ we have
\begin{equation}
	D_\text{coord.}^a 
	= \varepsilon_0 \, {^{(3)}g^{ab}} E_{\text{coord.}b} 
			\cdot \sqrt{^{(3)}g} 
		+ \mathcal P_\text{coord.}^a \; .
\end{equation}
In tetrad components, it takes the form
\begin{equation}
	D_\text{phys.}^{\ul{a}} 
	= \varepsilon_0 \, {^{(3)}g^{\ul{a}\ul{b}}} E_{\text{phys.}\ul{b}} 
		+ \mathcal P_\text{phys.}^{\ul{a}} \; ,
\end{equation}
where $^{(3)}g^{\ul{a}\ul{b}}$ are the tetrad components of 
the physical spatial metric, which are numerically equal to 
the Kronecker delta $\delta^{ab}$.

\subsubsection{Application to the Hamiltonian}

We will now rewrite the parts of the total Hamiltonian 
\eqref{eq:Hamiltonian_com_final} in which the external 
electromagnetic field appears in terms of electromagnetic 
quantities in an orthonormal tetrad, as discussed above.
Due to the form of the Eddington--Robertson \acronym{PPN} 
metric, in order to obtain a tetrad, to our order of 
approximation we simply need to divide each of the 
coordinate basis vectors $\partial_\mu$ by the square root 
of the modulus of $g_{\mu\mu}$ (no summation):
\begin{subequations} \label{eq:tetrad}
\begin{align}
	\E_{\ul{0}} = \frac{1}{\sqrt{-g_{00}}} \partial_0 
	&= \left(1 - \frac{\phi}{c^2}\right) \partial_0 \\
	\E_{\ul{a}} = \frac{1}{\sqrt{g_{aa}}} \partial_a 
	&= \left(1 + \gamma \frac{\phi}{c^2}\right) \partial_a
\end{align}
\end{subequations}

Inserting this explicit form of the tetrad, the relevant 
equations from above relating the tetrad components of 
electromagnetic quantities to their coordinate components 
attain the following numerical forms:
\begin{subequations} \label{eq:EM_tetrad_explicit}
\begin{align}
	B_\text{phys.}^{\ul{a}} 
	&= \left(1 + 2 \gamma \frac{\phi}{c^2}\right) B_\text{coord.}^a 
	\label{eq:B_tetrad_explicit} \\
	d_\text{phys.}^{\ul{a}}
	&= \left(1 - \gamma \frac{\phi}{c^2}\right) d_\text{coord.}^a \\
	\mathcal P_\text{phys.}^{\ul{a}} 
	&= \left(1 + 2 \gamma \frac{\phi}{c^2}\right) \mathcal P_\text{coord.}^a \\
	D_\text{phys.}^{\ul{a}} 
	&= \varepsilon_0 \left(1 + (\gamma - 1) \frac{\phi}{c^2}\right) 
			c F_{a0} 
		+ \left(1 + 2 \gamma \frac{\phi}{c^2}\right) \mathcal P_\text{coord.}^a
	\label{eq:displacement_tetrad_explicit}
\end{align}
\end{subequations}
Comparing \eqref{eq:displacement_tetrad_explicit} to 
the form \eqref{eq:field_momentum_with_gravity} of the 
`would-be canonical field momentum' $\vect\Pi^\perp$ and 
its relation $\tilde{\vect\Pi}^\perp = \vect\Pi^\perp 
- \vect{\mathcal P}^\perp$ to the `would-be field momentum' 
after the \acronym{PZW} transformation, we can relate 
the coordinate components of the latter to the tetrad 
components of the displacement field by
\begin{equation} \label{eq:displacement_tetrad_momentum}
	D_\text{phys.}^{\perp \, \ul{a}} 
	= - \left(1 + 2 \gamma \frac{\phi}{c^2}\right) \tilde\Pi^{\perp \, a} 
		+ \varepsilon_0 (\gamma+1) \bigg( \phi_\text{el.}^{(0)} 
			\frac{\vect\nabla\phi}{c^2} \bigg)^{\kern-.3em\perp \, a} \; .
\end{equation}
Note that up to the additional second term arising from 
the additional gravitational coupling in the Lagrangian, 
this means that the canonical momentum is just minus 
the displacement (when interpreted as a spatial vector 
density), as in the non-gravitational case after a 
\acronym{PZW} transformation.

Now, using the relations in \eqref{eq:EM_tetrad_explicit} 
and \eqref{eq:displacement_tetrad_momentum}, we can 
express all the interaction terms from 
\eqref{eq:Hamiltonian_com_AL_final} in terms of `physical', 
i.e.\ tetrad, components of the external electromagnetic 
quantities. For example, the electric dipole interaction 
term $\left(1 + (\gamma+1) \frac{\phi(\vect R)}{c^2}\right) 
\frac{\vect{\tilde\Pi}^\perp (\vect R)}{\varepsilon_0} 
\cdot \vect d$ in the Hamiltonian takes the form
\begin{align}
	&\hspace{-1em}
	-\left(1 + \frac{\phi(\vect R)}{c^2}\right) \sum_a
			\frac{D_\text{phys.}^{\perp \, \ul{a}}(\vect R)}{\varepsilon_0} \, 
			d_\text{phys.}^{\ul{a}}
		+ (\gamma+1) \sum_a \bigg( \phi_\text{el.}^{(0)} 
			\frac{\vect\nabla\phi}{c^2} \bigg)^{\kern-.3em\perp \, a}
			\kern-.7em(\vect R) \, d_\text{phys.}^{\ul{a}} \nonumber\\
	&= -\sqrt{-g_{00}(\vect R)} \, \sum_a
			\frac{D_\text{phys.}^{\perp \, \ul{a}}(\vect R)}{\varepsilon_0} \, 
			d_\text{phys.}^{\ul{a}}
		+ (\gamma+1) \sum_a \bigg( \phi_\text{el.}^{(0)} 
			\frac{\vect\nabla\phi}{c^2} \bigg)^{\kern-.3em\perp \, a}
			\kern-.7em(\vect R) \, d_\text{phys.}^{\ul{a}}
\end{align}
when expressed in terms of physical components. The 
`gravitational time dilation' factor $\sqrt{-g_{00}}$ 
in this expression could now also be absorbed by referring 
the time evolution to the proper time of the observer 
situated at $\vect R$ instead of coordinate time, 
leading to a dipole coupling of the usual form 
`$- \frac{\vect D}{\varepsilon_0} \cdot \vect d$\,' 
\cite{marzlin95,laemmerzahl95} (up to the additional term 
originating from the additional $\partial_t \vect A^\perp$ 
coupling in $L_\text{em}$ \eqref{eq:Lagrangian_em_grav}).

Similarly, all the other interaction terms from the 
Hamiltonian \eqref{eq:Hamiltonian_com_AL_final} can be 
rewritten in terms of tetrad components. The only 
difficulty arises when considering the Röntgen term, i.e.\ 
the second term in the interaction Hamiltonian, since it 
involves the momentum $\vect P$, and the similar third 
term: if the components $P_a$ were just the components 
of a classical one-form field, there would be no problem 
in computing its tetrad components as
\begin{equation} \label{eq:momentum_phys}
	P_{\text{phys.}\ul{a}} 
	= \E^b_{\ul{a}} P_b 
	= \left(1 - \gamma \frac{\phi(\vect R)}{c^2}\right) P_a \; .
\end{equation}
However, the $P_a$ are operators that don't commute with 
the centre of mass position $\vect R$, such that in the 
application of \eqref{eq:momentum_phys} one has to deal 
with with operator ordering ambiguities (which is, of 
course, a well-known issue regarding curvilinear coordinate 
transformations in quantum mechanics). Of course, to avoid 
dealing with these ambiguities, one can stay with the 
coordinate components of the momentum and rewrite only the 
\emph{other} quantities in terms of tetrad components, 
arriving at 
\begin{equation}
	\frac{1}{2M} \{\vect P \cdot [\vect d \times \vect B(\vect R)] + \text{H.c.}\} 
	= \frac{1}{2M} \left\{ \sum_a P_a 
			\left(1 - \gamma \frac{\phi(\vect R)}{c^2}\right) 
			{^{(3)}\tilde\varepsilon_{\ul{a}\ul{b}\ul{c}}} \, 
			d_\text{phys.}^{\ul{b}} B_\text{phys.}^{\ul{b}}(\vect R) 
		+ \text{H.c.}\right\}
\end{equation}
for the Röntgen term, where $^{(3)}\tilde\varepsilon$ 
denotes the spatial volume form induced by the physical 
metric (with tetrad components given by the antisymmetric 
symbol). When doing so, to give a well-defined geometric 
meaning to the resulting expression on the right-hand side, 
one has to keep in mind that the components $P_a$ of the 
momentum refer to the coordinate basis and the components 
$d_\text{phys.}^{\ul{b}}, B_\text{phys.}^{\ul{b}}$ of the 
dipole moment and the magnetic field refer to the tetrad.
\enlargethispage{\baselineskip}

Rewriting all possible terms in the atom--light interaction 
Hamiltonian in terms of tetrad components, glossing over 
the just-described ordering ambiguities, we arrive at
\begin{align} \label{eq:Hamiltonian_com_AL_final_tetrad}
	H_\text{AL,final} 
	&= -\sqrt{-g_{00}(\vect R)} \,
			\frac{\vect D_\text{phys.}^\perp(\vect R)}{\varepsilon_0} 
			\cdot \vect d_\text{phys.} 
		+ \frac{1}{2M} \{\vect P_\text{phys.} 
				\cdot [\vect d_\text{phys.} 
				\times \vect B_\text{phys.}(\vect R)] 
			+ \text{H.c.}\}  \nonumber\\
		&\quad - \frac{m_1 - m_2}{4 m_1 m_2} 
			\{\vect p_{\vect r \; \text{phys.}} 
				\cdot [\vect d_\text{phys.} 
				\times \vect B_\text{phys.}(\vect R)] 
			+ \text{H.c.}\} \nonumber\\
		&\quad + \frac{1}{8\mu} \left(1 - 
				2 \gamma \frac{\phi(\vect R)}{c^2}\right) 
			(\vect d_\text{phys.} \times \vect B_\text{phys.}(\vect R))^2 
		+ \frac{1}{2\varepsilon_0} \int\D^3\ivect x \, \sqrt{-g} \, 
			{\vect{\mathcal P}_{d \; \text{phys.}}^\perp}^{\kern-1.5em 2} 
			\kern1em (\vect x, t) \nonumber\\
		&\quad + \int\D^3\ivect x \, (\gamma+1) \, \phi_\text{el.}^{(0)} 
			\frac{\vect\nabla \phi}{c^2} \cdot \vect D_\text{phys.}^\perp
		\; .
\end{align}
Here we employed `three-vector' notation also for 
three-tuples of tetrad components, i.e.\ a `dot product' 
$\vect X_\text{phys.} \cdot \vect Y_\text{phys.} 
:= \sum_{\ul{a}} X_\text{phys.}^{\ul{a}} Y_\text{phys.}^{\ul{a}}$ 
is a scalar product with respect to the \emph{physical} 
spatial metric, and a `cross product' 
$(\vect Y_\text{phys.} \times \vect Z_\text{phys.})_{\ul{a}} 
= {^{(3)}\tilde\varepsilon_{\ul{a}\ul{b}\ul{c}}} 
Y_\text{phys.}^{\ul{b}} \vect Z_\text{phys.}^{\ul{c}}$ 
is also defined by the spatial volume form 
$^{(3)}\tilde\varepsilon$ induced by the \emph{physical} 
spatial metric.

To the best of our knowledge, the atom--light interaction terms in the 
presence of gravity obtained in \eqref{eq:Hamiltonian_com_AL_final} 
and discussed above are new, save for the electric dipole 
coupling which was already discussed in \cite{marzlin95,laemmerzahl95}.
\enlargethispage{\baselineskip}

Finally, expressing the external field energy 
\eqref{eq:Hamiltonian_com_L_final} in terms of tetrad 
components, i.e.\ inserting \eqref{eq:B_tetrad_explicit} 
and \eqref{eq:displacement_tetrad_momentum}, we obtain
\begin{align}
	H_\text{L,final} 
	&= \frac{\varepsilon_0}{2} \int\D^3\ivect x \, 
			\left(1 + (1 - 3\gamma) \frac{\phi}{c^2}\right) 
			\left[ (\vect D_\text{phys.}^\perp/\varepsilon_0)^2 
				+ c^2 \vect B_\text{phys.}^2 \right] \nonumber\\
		&\quad - \int\D^3\ivect x \, (\gamma+1) \, \phi_\text{el.}^{(0)} 
			\frac{\vect\nabla \phi}{c^2} \cdot \vect D_\text{phys.}^\perp
		\nonumber \displaybreak[0]\\
	&= \frac{\varepsilon_0}{2} \int\D^3\ivect x \, \sqrt{-g}
			\left[ (\vect D_\text{phys.}^\perp/\varepsilon_0)^2 
				+ c^2 \vect B_\text{phys.}^2 \right] \nonumber\\
		&\quad - \int\D^3\ivect x \, (\gamma+1) \, \phi_\text{el.}^{(0)} 
			\frac{\vect\nabla \phi}{c^2} \cdot \vect D_\text{phys.}^\perp 
		\; .
\end{align}
Up to the second integral, which cancels with the last term 
from \eqref{eq:Hamiltonian_com_AL_final_tetrad}, this is 
the standard result of the flat-spacetime electromagnetic 
field energy \cite{jackson98} minimally coupled to gravity 
\cite{misner73}, as was to be expected.\footnote
	{This result would have been immediate if we did the whole 
	calculation in terms of tetrad components instead of 
	coordinate components, as would have some steps in the 
	calculation of the electromagnetic Lagrangian. However, as 
	stressed in section \ref{sec:including_gravity}, the 
	approach based on the background structures with adapted 
	coordinates enabled us to provide a direct comparison with 
	the original calculation of \cite{sonnleitner18}.}

As we have seen in the previous section for internal 
energies and in this section for electromagnetic quantities, 
several terms in the final post-Newtonian Hamiltonian 
\eqref{eq:Hamiltonian_com_final} obtain a natural 
interpretation when expressed in terms of quantities with 
direct metric significance, i.e.\ in terms of components 
with respect to an orthonormal tetrad frame. Note, however, 
that such a tetrad \eqref{eq:tetrad} depends on the metric 
$g$, i.e.\ on part of the physical field configuration. 
This entails that, when comparing physical situations in 
\emph{different} gravitational fields, i.e.\ with different 
physical metrics $g$, it is not at all conceptually obvious 
how to relate predictions made for the two situations to 
each other: even though the \emph{Hamiltonian} looks 
the same in both cases when expressed in terms of tetrad 
components, it may be the case that the quantum-mechanical 
state vector take different forms when expressed in terms 
of metric quantities in the two situations, due to some 
specific nature of its preparation procedure (which might, 
for example, depend on spacetime curvature in some way).

Thus, for a proper interpretation of calculational 
predictions for experimental situations, one has (in 
principle) to describe the \emph{whole} experimental 
situation, including all preparation and measurement 
procedures, in terms of operationally defined quantities, 
and express all predicted results in terms of these 
operational quantities. This is the only way to ensure 
true coordinate- and frame-independence of predictions.


\newtheorem{thm}{Theorem}[section]
\newenvironment{thmqed}{\pushQED{\qed}\begin{thm}}{\popQED\end{thm}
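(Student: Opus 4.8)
The plan is to establish the theorem in two movements, existence and uniqueness, using that the entire hypothesis is captured by the Poisson-bracket relations of the Poincar\'e generators $(H,\vect P,\vect J,\vect K)$ on the phase space together with positivity of the mass, $Mc^{2}=\sqrt{H^{2}-c^{2}\vect P^{2}}>0$, so that $H^{-1}$ and $(Mc^{2}+H)^{-1}$ are globally defined functions.

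For existence I would exhibit the Newton--Wigner observable by the closed formula $\vect Q_{\mathrm{NW}}=-c^{2}H^{-1}\vect K-c^{2}[H(Mc^{2}+H)]^{-1}\,\vect P\times\vect S$ (signs and placement depending on the sign conventions in force), where $\vect S$ is the canonical internal spin built from $\vect J,\vect K,\vect P,H$ by the classical Pauli--Lubanski prescription. I would first record the auxiliary brackets $\{S^{a},S^{b}\}=\epsilon^{abc}S^{c}$, $\{S^{a},P^{b}\}=0$, $\{S^{a},H\}=0$ and the infinitesimal boost action $\{K^{a},S^{b}\}$ on $\vect S$; with these in hand, verifying that $\vect Q_{\mathrm{NW}}$ satisfies each of the characterising conditions -- commuting components $\{Q^{a},Q^{b}\}=0$, canonical conjugacy $\{Q^{a},P^{b}\}=\delta^{ab}$, vector behaviour $\{J^{a},Q^{b}\}=\epsilon^{abc}Q^{c}$, and the prescribed boost/velocity relation -- is a finite, essentially mechanical computation in the bracket algebra.

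For uniqueness, let $\vect Q$ be any observable satisfying the conditions and put $\vect D:=\vect Q-\vect Q_{\mathrm{NW}}$. Canonical conjugacy applied to both forces $\{D^{a},P^{b}\}=0$; since $(\vect Q_{\mathrm{NW}},\vect P)$ together with $\vect S$ form a global Darboux-type coordinate system on the phase space (here one uses that the action is that of an elementary classical system, e.g.\ a Poincar\'e coadjoint orbit with $M>0$), this means $\vect D$ is a function of $\vect P$ and $\vect S$ alone. Rotational covariance then restricts $\vect D$ to the form $\alpha\vect P+\beta\vect S+\gamma\,\vect P\times\vect S$ with $\alpha,\beta,\gamma$ depending only on the rotation scalars $\vect P^{2},\vect S^{2},\vect P\cdot\vect S$. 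Finally I would feed in the last defining property -- the boost relation (equivalently, parity/time-reversal covariance of $\vect Q$) -- which turns into $\{K^{a},D^{b}\}=0$ (resp.\ into $\vect D$ being odd under the relevant discrete symmetry) together with the curl-type identity coming from $\{Q^{a},Q^{b}\}=0$; evaluating these on the ansatz and using the auxiliary brackets yields a closed system of first-order equations for $\alpha,\beta,\gamma$ whose only solution compatible with covariance as $\vect P\to 0$ is $\alpha=\beta=\gamma=0$, i.e.\ $\vect D=0$.

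The main obstacle I expect is the spin sector of the uniqueness step. In the spinless case the reduction is immediate: the conditions on $\{Q,P\}$ and on rotations leave only $\vect D=\vect\nabla_{\vect P}f(\vect P^{2})$, and the boost (or time-reversal) condition kills this gradient outright. Once $\vect S$ is present one must control $\{K^{a},S^{b}\}$ -- essentially the infinitesimal Wigner rotation, which is precisely where the hypothesis $M>0$ and the factor $(Mc^{2}+H)^{-1}$ enter -- and show that it cannot conspire with the $\vect S$- and $\vect P\times\vect S$-parts of $\vect D$ to leave a residual one-parameter family; in particular $\{Q^{a},Q^{b}\}=0$ must genuinely be used here, since parity and time reversal alone still permit a $\gamma(\vect P^{2},\vect S^{2},(\vect P\cdot\vect S)^{2})\,\vect P\times\vect S$ term. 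A secondary, more bookkeeping point is to justify rather than merely assume the ``global Darboux chart'' claim; for a Poincar\'e coadjoint orbit with positive mass this is standard, but it should be spelled out.
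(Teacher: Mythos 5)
Your proposal follows essentially the same route as the paper's proof. Both establish existence by exhibiting an explicit realisation (you via a closed formula in the generators $H,\vect P,\vect J,\vect K$, the paper by observing that $\vect x$ in the Arens--Bacry chart of theorem~\ref{thm:phase_space} \emph{is} the Newton--Wigner position), and both prove uniqueness by analysing the difference $\vect D := \vect X - \vect X^{\mathrm{NW}}$, which translation invariance reduces to a function of $\vect P$ and $\vect s$ and rotational covariance reduces to a combination of $\vect P$, $\vect s$, and $\vect P\times\vect s$ with scalar coefficients. The paper uses time-reversal invariance in a preliminary lemma to force $\vect D\cdot\vect P = 0$, leaving the two directions orthogonal to $\hatvect P$, and then extracts from the commuting-components hypothesis the specific identity
\begin{equation*}
  \{D^a,\vect x\cdot\vect P\} = D^a,
\end{equation*}
obtained by writing $\vect X\cdot\vect P = \vect x\cdot\vect P$ (using $\vect D\cdot\vect P=0$) and computing $\{X^a,\vect X\cdot\vect P\}=X^a$; since $\vect x\cdot\vect P$ is the dilation generator in $|\vect P|$, this is an Euler scaling relation that forces the coefficients to go like $|\vect P|^{-1}$, after which the $C^1$ hypothesis at $|\vect P|\to 0$ kills them. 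Your sketch reaches exactly this bottleneck but is vaguest at the decisive step: the phrase ``closed system of first-order equations'' glosses over the fact that one needs the \emph{specific} combination $\vect X\cdot\vect P$ as the object to Poisson-commute against in order to turn ``commuting components'' into a usable scaling constraint, and this trick deserves to be spelled out. Two small slips worth flagging: the theorem's final hypothesis is time-reversal invariance, which is \emph{not} equivalent to a boost condition $\{K^a,D^b\}=0$ and should not be presented as interchangeable with it; and $\vect D$ is $T_u$-\emph{even} (both $\vect X$ and $\vect X^{\mathrm{NW}}$ are assumed $T_u$-invariant), not odd as you write --- it is precisely this evenness, together with $\vect P\mapsto-\vect P$, $\vect s\mapsto-\vect s$, that kills the $\alpha\vect P$ and $\beta\vect s$ pieces of your ansatz and leaves the $\gamma\,\vect P\times\vect s$ term for the commutation-plus-regularity argument, exactly as you anticipate. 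You are also correct that the spinless case is easier; the paper indeed closes that case without ever invoking the Poisson-commutativity of the components.
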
} 
\newtheorem{lem}[thm]{Lemma}
\newenvironment{lemqed}{\pushQED{\qed}\begin{lem}}{\popQED\end{lem}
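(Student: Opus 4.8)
Since the excerpt breaks off immediately after the theorem-environment declarations, the verbatim statement is not before me; but the chapter heading and the abstract make it all but certain that the first main result of this chapter is the \emph{classical analogue of the Newton--Wigner theorem}: for a Poincaré-invariant classical Hamiltonian system of positive mass (with spin either zero or carried by internal degrees of freedom), there is a unique phase-space ``position'' $\vect Q$ satisfying (i) $\{Q^a,Q^b\}=0$; (ii) $\{Q^a,P_b\}=\delta^a_b$ with $\vect P$ the total momentum; (iii) $\vect Q$ transforms as a vector under the rotation generators $\vect J$; and (iv) an appropriate condition under the boost generators $\vect K$; and this $\vect Q$ is the Newton--Wigner position, which in the spinless case is $\vect Q=-c^2\vect K/H$ (the classical counterpart of the ``naive position operator'' alluded to in section~\ref{sec:KG_heur}). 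The plan below is written for that statement.

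I would argue entirely inside the Poisson algebra generated by the ten Poincaré generators $(H,\vect P,\vect J,\vect K)$ together with a candidate $\vect Q$. First comes a \emph{reduction}. The mass Casimir $Mc^2=\sqrt{H^2/c^2-\vect P^2}$ is nowhere vanishing, so $H$ is a function of $\vect P$ and $M$; setting $\vect S:=\vect J-\vect Q\times\vect P$ and $\vect W:=\vect K+\tfrac12\{H/c^2,\vect Q\}$ and feeding in the Poincaré relations, one checks that $\{S^a,\vect Q\}=\{S^a,\vect P\}=0$, that $\vect S$ realises an internal $\mathfrak{so}(3)$, and that $\vect W$ is built from this internal data. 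Conditions (i)--(ii) say $(\vect Q,\vect P)$ is a Darboux pair, so locally the phase space splits as $T^*\mathbb{R}^3$ times an ``internal'' factor on which $\vect Q,\vect P$ act trivially; hence any two candidates $\vect Q,\vect Q'$ obeying (i)--(ii) differ by $\vect f:=\vect Q'-\vect Q$ that depends only on $\vect P$ and the internal variables.

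It then remains to kill $\vect f$ using covariance. Rotational covariance (iii) forces $\vect f$ to be an $SO(3)$-covariant combination of $\vect P$ and $\vect S$; the boost condition (iv), re-expressed through $\vect W$ and the reduction above, pins down the $\vect P$-dependence and forces the spin-dependent part to vanish, giving $\vect f=0$ and hence uniqueness. Existence is settled in the opposite direction: take the explicit Newton--Wigner expression and verify (i)--(iv) by direct bracket computation from the Poincaré relations and the reduction. I expect the \textbf{main obstacle} to be making the splitting step intrinsic rather than invoking local Darboux coordinates, and --- closely related --- identifying precisely which boost condition is simultaneously \emph{necessary} (satisfied by $-c^2\vect K/H$) and \emph{sufficient} (strong enough to force $\vect f=0$ without over-constraining); striking that balance is exactly the delicate point on which the quantum Newton--Wigner theorem also hinges.
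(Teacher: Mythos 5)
You have guessed the right target (the classical Newton--Wigner uniqueness theorem), but your plan has a genuine gap exactly at the point you yourself flag as delicate: axiom (iv). The theorem as actually stated and proved imposes \emph{no} condition involving the boost generators. Its hypotheses are: the system is \emph{elementary} (transitive action of $\mathcal P_+^\uparrow$, with timelike four-momentum), and the observable is $C^1$ on all of phase space, has Poisson-commuting components, satisfies $\{X^a,P_b\}=\delta^a_b$, transforms as a vector under rotations, and is invariant under \emph{time reversal} $T_u$. Each of these does a job you leave to the unspecified boost axiom. Elementarity (via Arens' classification and the fact that $P_a$ and $\hatvect P\cdot\vect s$ form a complete involutive set on $\Gamma^*$) is what makes ``the difference depends only on $\vect P$ and internal data'' a global, rigorous statement — this is how the paper sidesteps the local-Darboux-chart issue you worry about, and for non-elementary systems the uniqueness claim would simply be false. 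Time-reversal invariance is what yields $\vect d\cdot\vect P=0$ for the difference $\vect d=\vect X-\vect x$ from the reference solution: any translation- and rotation-invariant scalar is a function of $|\vect P|$ and $\hatvect P\cdot\vect s$, both $T_u$-even, while $\vect d\cdot\vect P$ is $T_u$-odd.

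The decisive step that kills the spin-dependent freedom is also not a boost condition. Writing $\vect d = B\,\hatvect P\times\vect s + C\,\hatvect P\times(\hatvect P\times\vect s)$, the commutativity of the components of $\vect X$ together with $\{X^a,P_b\}=\delta^a_b$ gives $\{X^a,\vect x\cdot\vect P\}=\{X^a,\vect X\cdot\vect P\}=X^a$, hence $\vect d=-|\vect P|\,\partial\vect d/\partial|\vect P|$ at fixed $\hatvect P,\vect s$, so $B,C\propto|\vect P|^{-1}$; only the assumed $C^1$ regularity at $\vect P=0$ then forces $B=C=0$ (and in the spin-zero case commutativity is not needed at all). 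Your proposal never invokes regularity at $\vect P=0$ and instead asks an unidentified boost condition to eliminate the spin terms; that is a missing idea, not a technicality, since without the regularity hypothesis there genuinely are other rotation-covariant, canonically conjugate candidates. Two smaller points: your auxiliary quantity $\vect W:=\vect K+\tfrac12\{H/c^2,\vect Q\}$ is garbled — $\{H,\vect Q\}$ is a velocity, whereas the relevant combination is the boost moment $\vect K$ shifted by $H\vect Q/c^2$ (suitably symmetrised); and the existence half is indeed immediate in the explicit Arens realisation, where the base coordinate $\vect x$ of $T^*\mathbb R^3$ already satisfies all the axioms.
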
}
\newtheorem{prop}[thm]{Proposition}

\theoremstyle{definition}
\newtheorem{defn}[thm]{Definition}
\newtheorem{exmp}[thm]{Example}

\chapter{Classical perspectives on the Newton--Wigner position observable}
\label{chap:Newton--Wigner}

This chapter, which is thematically entirely independent from 
the rest of the thesis, deals with the Newton--Wigner position 
observable for Poincaré-invariant \emph{classical} systems. 
To explain at least the little connection to the rest of the 
thesis that there is, let me (the author) briefly describe how 
my interest in the topics of this chapter arose.
Sonnleitner and Barnett in \cite{sonnleitner18}, as 
well as myself in my calculations based on theirs as 
documented in chapter \ref{chap:atom_in_gravity}, employed 
\emph{Newtonian} centre of mass coordinates in the 
description of a (locally) Poincaré-symmetric composite 
system, simply for the sake of computational simplicity. 
This led me to the old question of what kind of central 
positions one could --~or perhaps should?~-- use for such 
descriptions. Thus, I was led to extending my knowledge 
of special-relativistic localisation and position 
observables, in particular with the beautiful geometric 
`hyperplane observable' perspective of Fleming 
\cite{Fleming:1965a}. In the course of this, I wondered if 
and how one could understand Fleming's `centre of spin' 
interpretation of the Newton--Wigner observable in a more 
geometric way, and also if the quantum Newton--Wigner 
theorem has a classical analogue (which it `should' have, 
morally speaking). This chapter is the outcome of those 
considerations. We will prove an existence and uniqueness 
theorem for elementary systems that parallels the 
well-known Newton--Wigner theorem in the quantum 
context, and also discuss and justify Fleming's geometric 
interpretation of the Newton--Wigner position as 
`centre of spin'.

Other than in the previous chapters, here we will be fully 
mathematically rigorous, and also adopt a more mathematical 
style of presentation.
The material in this chapter is also contained in the 
preprint \cite{Schwartz.Giulini:2020}, which is under 
consideration for publication as of the writing of this 
thesis.\enlargethispage*{2\baselineskip}

\section{Introduction}

Even though we shall in this chapter exclusively deal 
with \emph{classical} (i.e.\ non-quantum) aspects of 
the Newton--Wigner position observable, we wish to 
start with a brief discussion of its historic 
origin, which is based in the early history of 
relativistic quantum field theory (\acronym{RQFT}). After 
that we will briefly remark on its classical importance 
and give an outline of the investigation that is to follow.
A more detailed overview of the history of the localisation 
problem in special-relativistic quantum theory may be found 
in our preprint \cite{Schwartz.Giulini:2020}.

As is well-known, the Newtonian concepts of spatial 
position of elementary, i.e. \mbox{indecomposable}, systems and 
of centre of mass of composite systems satisfy 
the \mbox{expected} covariance properties under spatial 
translations and rotations, and \mbox{readily} translate to 
ordinary, Galilei-invariant quantum mechanics. There, 
concepts like `\mbox{position} operators' and the associated 
projection operators for positions within any measurable 
subset of space can be defined, again fulfilling the 
expected transformation rules under spatial motions.

However, serious difficulties with naive localisation 
concepts arise in attempts to combine quantum mechanics 
with special relativity, connected with the fact that 
negative-energy modes are necessarily introduced if a 
`naive position operator' (like multiplying a naive 
`wave function' with the position coordinate) is applied 
to a positive-energy state. However, in 1949, Newton and 
Wigner showed that it was nevertheless possible to define 
localised states in a special-relativistic quantum context
\cite{Newton.Wigner:1949}: their method was to 
write down axioms for what it meant that a 
system is `localised in space at a given time' 
and then investigate existence as well as uniqueness 
for corresponding position operators. It turned 
out that existence and uniqueness are indeed given 
for elementary systems (with fields being elements of 
irreducible representations of the Poincaré group), 
except for massless fields of higher helicity. 
A more rigorous derivation was later given by Wightman 
\cite{Wightman:1962}.

In 1965, Fleming gave a geometric discussion of 
special-relativistic position observables
\cite{Fleming:1965a} that highlighted the 
group-theoretic properties (regarding the group 
of spacetime automorphisms) underlying several 
constructions and thereby clarified many of the 
sometimes controversial issues regarding 
`covariance'. Fleming focussed on three position 
observables which he called `centre of inertia', 
`centre of mass', and the Newton--Wigner position 
observable, for which he, at the very end of his 
paper and almost in passing, suggested the name 
`centre of spin'. We shall give a more detailed 
geometric justification for that name in this chapter.

It should be emphasised that the Newton--Wigner 
notion of localisation still suffers from the 
acausal spreading of localisation domains that 
is typical of fields satisfying special-relativistic 
wave equations, an observation made many times in the 
literature in one form or another; see, e.g., \cite{Segal.Goodman:1965,Hegerfeldt:1974,Ruijsenaars:1981}: 
if a system is Newton--Wigner localised 
at a point in space at a time $t$, it is not strictly 
localised anymore in any bounded region of space at 
any time later than $t$ 
\cite{Newton.Wigner:1949,Wightman.Schweber:1955}. 
Conceptual issues of that sort, and related ones 
concerning, in particular, the relation between 
Newton--Wigner localisation and the Reeh-Schlieder 
theorem in \acronym{RQFT} have been discussed many times 
in the literature even up to the more recent past; see, 
e.g., \cite{Fleming.Butterfield:1999} and 
\cite{Fleming:2000,Halvorson:2001}. 
For us, however, these quantum field theoretic issues 
are not the point of interest.

Clearly, due to its historical development, 
most discussions of Newton--Wigner localisation 
put their emphasis on its relevance for \acronym{RQFT}: 
the study of deeply relativistic \emph{classical} systems 
was simply not considered relevant at the time 
when special-relativistic localisation was 
first investigated. However, that has clearly 
changed with the advent of modern relativistic 
astrophysics. For example, modern analytical studies 
of close compact binary-star systems also make use 
of various definitions of `centre of mass' in an 
attempt to separate the `overall' from the `internal' 
motion as far as possible. In that respect, it 
turns out that modern treatments of gravitationally 
interacting two-body systems within the theoretical 
framework of Hamiltonian general relativity show 
a clear preference for the Newton--Wigner position 
\cite{Steinhoff:2011,Schaefer.Jaranowski:2018}, 
emphasising once more its distinguished role, 
now in a purely classical context. A concise account 
of the various definitions of `centres' that 
have been used in the context of general 
relativity is given in \cite{Costa.EtAl:2018}, which 
also contains most of the original references 
in its bibliography. In our opinion, all this 
provides sufficient motivation for further attempts 
to work out the characteristic properties of 
Newton--Wigner localisation in the \emph{classical} realm.

The plan of our investigation is as follows. 
After setting up our notation and conventions in 
section \ref{sec:notation}, where we also 
introduce some mathematical background, we 
prove a few results in 
section \ref{sec:NW_centre_of_spin} which are intended to 
explain in what sense the Newton--Wigner position 
is indeed a `centre of spin' and in what sense 
it is uniquely so (theorem \ref{thm:NW_SSC_centre_of_spin}). 
We continue in section \ref{sec:NW_theorem} 
with the statement and proof of a classical 
analogue of the Newton--Wigner theorem, according to 
which the Newton--Wigner position is the 
unique observable satisfying a set of axioms. 
The result is presented in theorem \ref{thm:NW} 
and in a slightly different formulation in theorem \ref{thm:NW2}. 
They say that for a classical elementary 
Poincaré-invariant system with timelike four-momentum 
(as classified by Arens \cite{Arens:1971a,Arens:1971b}), 
there is a unique observable transforming 
`as a position should' under translations, rotations, 
and time reversal, having Poisson-commuting 
components, and satisfying a regularity condition 
(being $C^1$ on all of phase space). This observable 
is the Newton--Wigner position.

\section{Notation and conventions}
\label{sec:notation}

This section is meant to list our notation and 
conventions in the general sense, by also providing 
some background material on the geometric and 
group-theoretic setting onto which the following 
two sections are based.

\subsection{Minkowski spacetime and the Poincaré group}

As before, we use the `mostly plus' $(-{++}+)$ signature 
convention for the spacetime metric and stick to four 
spacetime dimensions. This is not to say 
that our analysis cannot be generalised to other 
dimensions. In fact, as will become clear as we 
proceed, many of our statements have an obvious 
generalisation to higher dimensions. On the 
other hand, as will also become 
clear, there are a few constructions which would 
definitely look different in other dimensions, 
like, e.g., the use of the Pauli--Lubański `vector' 
in section \ref{sec:PauliLubanski}, which 
becomes an $(n-3)$-form in $n$ dimensions, 
or the classification of elementary systems.

In this chapter, we will view Minkowski spacetime as 
an affine space $M$, and the corresponding vector 
space of `difference vectors' will be denoted by $V$. 
The Minkowski metric will be denoted by 
$\eta \colon V\times V \to \mathbb R$. 
The isomorphism of $V$ with its dual space 
$V^*$ induced by $\eta$ (`index lowering') will be denoted by a 
superscript `flat' symbol $\flat$, i.e.\ for a 
vector $v \in V$ the corresponding one-form is 
$v^\flat := \eta(v, \cdot) \in V^*$. The inverse 
isomorphism (`index raising') will be denoted by a superscript 
sharp symbol $\sharp$. Note that under 
a Lorentz transformation $\Lambda$, $v\in V$ transforms 
under the defining representation, 
$(\Lambda, v) \mapsto \Lambda v$, whereas its image 
$v^\flat\in V^*$ under the $\eta$-induced isomorphism 
transforms under the inverse transposed,
$(\Lambda, v^\flat) \mapsto (\Lambda^{-1})^\top v^\flat = v^\flat \circ \Lambda^{-1}$.

We fix an orientation and a time orientation on $M$. 
The (homogeneous) Lorentz group, i.e.\ the group of 
linear isometries of $(V,\eta)$, will be denoted 
by $\mathcal L := \mathsf{O}(V,\eta)$. The Poincaré 
group, i.e.\ the group of affine isometries of 
$(M,\eta)$, will be denoted by $\mathcal P$. 
The proper orthochronous Lorentz and Poincaré 
groups (i.e.\ the connected components of the identity) 
will be denoted by $\mathcal L_+^\uparrow$ and 
$\mathcal P_+^\uparrow$, respectively\footnote
	{Note that speaking of just orthochronous or proper 
	Lorentz / Poincaré transformations does not make 
	invariant sense without specifying a time direction.}.

We employ standard index notation for Minkowski 
spacetime, using lowercase Greek letters for 
spacetime indices. When working with respect to bases, 
we will, unless otherwise stated, assume them to be 
positively oriented and orthonormal, and we will 
use $0$ for the timelike and lowercase Latin letters 
for spatial indices. We will adhere to standard practice 
in physics where lowering and raising of indices are 
done while keeping the same kernel symbol; i.e.\ for a 
vector $v \in V$ with components $v^\mu$, the components 
of the corresponding one-form $v^\flat \in V^*$ will be 
denoted simply by $v_\mu$. For the sake of notational 
clarity, we will sometimes denote the Minkowski 
inner product of two vectors $u, v \in V$ simply by
\begin{equation}
	u \cdot v := \eta(u,v) = u_\mu v^\mu.
\end{equation}

We fix, once and for all, a reference point / 
\emph{origin} $o \in M$ in (affine) Minkowski 
spacetime, allowing us to identify $M$ with its 
corresponding vector space $V$ (identifying the 
reference point $o \in M$ with the zero vector 
$0 \in V$, i.e.\ via $M \ni x \mapsto (x-o) \in V$), 
which we will do most of the time. Using the 
reference point $o \in M$, the Poincaré group splits 
as a semidirect product
\begin{equation}
	\mathcal P = \mathcal L \ltimes V
\end{equation}
where the Lorentz group factor in this decomposition 
arises as the stabiliser of the reference point~-- 
i.e.\ a Poincaré transformation is considered a 
homogeneous Lorentz transformation if and only if 
it leaves $o$ invariant. Thus, a homogeneous 
Lorentz transformation $\Lambda \in \mathcal L$ 
acts on a point $x \in M \equiv V$ as 
$(\Lambda x)^\mu = \Lambda^\mu_{\hphantom{\mu}\nu} x^\nu$, 
and a Poincaré transformation $(\Lambda,a) \in \mathcal P$ 
acts as $((\Lambda,a) \cdot x)^\mu = \Lambda^\mu_{\hphantom{\mu}\nu} x^\nu + a^\mu$.

We will sometimes make use of the set of spacelike 
hyperplanes in (affine) Minkowski spacetime $M$, 
which we will denote by
\begin{equation}
	\label{eq:def_SpHp}
	\mathsf{SpHP} := \{\Sigma \subset M : \Sigma \; \text{spacelike hyperplane}\}.
\end{equation}
Since the image of a spacelike hyperplane under 
a Poincaré transformation is again a spacelike 
hyperplane, there is a natural action of the Poincaré 
group on $\mathsf{SpHP}$, which we will denote by 
$((\Lambda, a), \Sigma) \mapsto (\Lambda, a) \cdot \Sigma$ 
and spell out in more detail in \eqref{eq:action_on_SpHP} 
below.\enlargethispage{-\baselineskip}

\subsection{The Poincaré algebra}

When considering the Lie algebra $\mathfrak p$ 
of the Poincaré group (or symplectic representations 
thereof), we will denote the generators of translations 
by $P_\mu$ such that $a^\mu P_\mu$ is the 
`infinitesimal transformation' corresponding 
to the translation by $a\in V$, and the generators 
of homogeneous Lorentz transformations (with respect 
to the chosen origin $o$) by $J_{\mu\nu}$, 
such that $- \frac{1}{2} \omega^{\mu\nu} J_{\mu\nu}$ 
is the `infinitesimal transformation' corresponding 
to the Lorentz transformation $\exp(\omega)\in\mathcal
L_+^\uparrow \subset \mathsf{GL}(V)$ for 
$\omega \in \mathfrak l = \mathrm{Lie}(\mathcal L) \subset \mathrm{End}(V)$.

Since we are using the $(-{++}+)$ signature 
convention, the minus sign in the \mbox{expression} 
$- \frac{1}{2} \omega^{\mu\nu} J_{\mu\nu}$ is 
necessary in order that $J_{ab}$ generate rotations 
in the $\E_a$--$\E_b$ plane from $\E_a$ towards $\E_b$, 
which is the convention we want to adopt. A detailed 
\mbox{discussion} of these issues regarding sign conventions 
for the generators of special orthogonal groups 
can be found in appendix \ref{app:sign_convention_so}. 
Moreover, if $u \in V$ is a future-directed unit 
timelike vector, then $c P_\mu u^\mu$ (i.e.\ $c P_0$ 
in the Lorentz frame defined by $u=\E_0$), which is 
\emph{minus} the energy in the frame defined by $u$, 
is the generator of active time translations in the 
direction of $u$. Therefore, with our conventions, 
for the case of causal four-momentum $P \in V$ the energy 
(with respect to future-directed time directions) 
is positive if and only if $P$ is future-directed.

With our conventions, the commutation relations 
for the Poincaré generators are as follows:
\begin{subequations} 
\label{eq:Poinc_gen}
\begin{align}
\label{eq:Poinc_gen-a}
	[P_\mu, P_\nu] &= 0 \\
\label{eq:Poinc_gen-b}
	[J_{\mu\nu}, P_\rho] &= \eta_{\mu\rho} P_\nu - \eta_{\nu\rho} P_\mu \\
\label{eq:Poinc_gen-c}
	[J_{\mu\nu}, J_{\rho\sigma}] &= \eta_{\mu\rho} J_{\nu\sigma} + \text{(antisymm.)} \nonumber\\
		&= \Big(\eta_{\mu\rho} J_{\nu\sigma} - (\mu \leftrightarrow \nu)\Big) - \Big(\rho \leftrightarrow \sigma\Big)
\end{align}
\end{subequations}
As indicated, the abbreviation `antisymm.' 
stands for the additional three terms that one obtains 
by first antisymmetrising (without a factor of $1/2$) 
in the first pair of indices on the left hand 
side, here $(\mu\nu)$, and then the ensuing 
combination once more in the second set of 
indices, here $(\rho\sigma)$, again without a 
factor $1/2$.

\subsection{Symplectic geometry}

We employ the following sign conventions for symplectic
geometry (as used by \mbox{Abraham} and Marsden in 
\cite{Abraham.Marsden:1978}, but different to those 
of Arnold in \cite{Arnold:1989}). Let $(\Gamma, \omega)$ 
be a symplectic manifold. For a smooth function 
$f \in C^\infty(\Gamma)$, we define the Hamiltonian 
vector field $X_f\in \mathfrak X(\Gamma)$ ($\mathfrak X$ 
denoting the space of smooth vector fields) corresponding 
to $f$ by
\begin{equation} \label{eq:def_Ham_VF}
	\iota_{X_f} \omega := \omega(X_f, \cdot) = \D f,
\end{equation}
where $\iota$ denotes the interior product 
between vector fields and differential forms.
The Poisson bracket of two smooth functions 
$f,g \in C^\infty(\Gamma)$ is then defined as
\begin{equation}
	\{f, g\} := \omega(X_f, X_g) = \D f(X_g) = \iota_{X_g} \D f.
\end{equation}
These conventions give the usual coordinate forms 
of the Hamiltonian flow equations and the Poisson 
bracket if the symplectic form $\omega$ takes the 
coordinate form (sign-opposite to that in \cite{Arnold:1989})
\begin{equation}
	\omega = \D q^a \wedge \D p_a\,.
\end{equation}
It is important to note that $C^\infty(\Gamma)$ 
as well as $\mathfrak X(\Gamma)$ are (infinite dimensional) 
Lie \mbox{algebras} with respect to the Poisson bracket 
and the commutator respectively, and that, with 
respect to these Lie structures, the map 
$C^\infty(\Gamma) \to \mathfrak X(\Gamma), f \mapsto X_f$ 
is a Lie \emph{anti}-homomorphism, that is, 
\begin{equation} \label{eq:HVF_anti_hom}
	X_{\{f,g\}} = -\,[X_f,X_g].
\end{equation}

By saying that a one-parameter group 
$\phi_s \colon \Gamma \to \Gamma$ of symplectomorphisms 
is \emph{generated} by a function $g \in C^\infty(\Gamma)$, 
we mean that $\phi_s$ is the flow of the Hamiltonian 
vector field to $g$, i.e.\ that
\begin{equation}
	\frac{\D}{\D s} \phi_s(\gamma) = X_g(\phi_s(\gamma))
\end{equation}
for $\gamma \in \Gamma$, or equivalently
\begin{align} \label{eq:sympl_gen_PB}
	\frac{\D}{\D s} (f \circ \phi_s) &= \Big(\D f (X_g)\Big) \circ \phi_s \nonumber\\
	&= \{f,g\} \circ \phi_s
\end{align}
for $f \in C^\infty(\Gamma)$. Here both sides of
\eqref{eq:sympl_gen_PB} are to be understood as 
evaluated pointwise.

\subsection{Poincaré-invariant Hamiltonian systems and their momentum maps}

A \emph{classical Poincaré-invariant system} will be described 
by a phase space $(\Gamma, \omega)$ --~i.e.\ a symplectic 
manifold~-- with a symplectic action
\begin{equation}
	\Phi \colon \mathcal P \times \Gamma \to \Gamma, \; ((\Lambda, a), \gamma) \mapsto \Phi_{(\Lambda, a)} (\gamma)
\end{equation}
of the Poincaré group (in fact, for most of our purposes 
an action of $\mathcal P_+^\uparrow$ is enough). We will 
take $\Phi$ to be a left action, i.e.\ to satisfy\footnote
	{We refer to \cite{Giulini:2015} for a detailed discussion 
	of left versus right actions and the corresponding sign 
	conventions that will also play an important role in 
	the following.}
\begin{equation}
	\Phi_{(\Lambda_1, a_1)} \circ \Phi_{(\Lambda_2, a_2)} = \Phi_{(\Lambda_1 \Lambda_2, a_1 + \Lambda_1 a_2)} \; .
\end{equation}
We will denote such systems as 
$(\Gamma, \omega, \Phi)$.

The left action $\Phi$ of $\mathcal{P}$ on $\Gamma$
induces vector fields $V_\xi$ on $\Gamma$ (the 
so-called \emph{fundamental vector fields}), one for 
each $\xi$ in the Lie algebra $\mathfrak{p}$ of 
$\mathcal{P}$. They are given by
\begin{equation}
	V_\xi(\gamma) := \left. \frac{\D}{\D s} \Phi_{\exp(s\xi)}(\gamma) \right\vert_{s=0} \; ,
\end{equation}
so that the map $\mathfrak{p} \to \mathfrak X(\Gamma), 
\xi\mapsto V_\xi$, given by the differential 
of $\Phi$ with respect to its first argument and 
evaluated at the group identity, is clearly linear. 
In fact, it is straightforward to show that it 
is an anti-homomorphism from the Lie algebra 
$\mathfrak{p}$ into the Lie algebra $\mathfrak X(M)$,\footnote
	{Had we chosen $\Phi$ to be a right action, we would 
	have obtained a proper Lie homomorphism; compare 
	\cite[appendix B]{Giulini:2015}.}
i.e.\
\begin{equation} \label{eq:fundVF_anti_hom}
	\bigl[V_{\xi_1}, V_{\xi_2}\bigr] = - V_{[\xi_1, \xi_2]}.
\end{equation}
Moreover, a similar calculation shows 
\cite[appendix B]{Giulini:2015}
\begin{equation} \label{eq:fundVF_equivariance}
	(\DD \Phi_{(\Lambda,a)}) \circ V_\xi = V_{\mathrm{Ad}_{(\Lambda,a)}(\xi)} \circ \Phi_{(\Lambda,a)} \; ,
\end{equation}
where $\DD \Phi_{(\Lambda,a)} \colon T\Gamma \to T\Gamma$ 
denotes the differential of 
$\Phi_{(\Lambda,a)} \colon \Gamma \to \Gamma$.

As $\mathcal{P}$ acts by symplectomorphisms, 
the fundamental vector fields $V_\xi$ are locally Hamiltonian,
i.e.\ locally (in a neighbourhood of each point), for 
each $\xi \in \mathfrak p$ there exists a local function 
$f_\xi$ such that $\D f_\xi = \iota_{V_\xi} \omega$. 
In fact, due to the Poincaré algebra being perfect (in 
spacetime dimension greater than $2$), the $f_\xi$ can be 
shown to exist globally, so that each $V_\xi$ is a globally 
defined Hamiltonian
vector field (i.e.\ each one-parameter 
group $\Phi_{\exp(s\xi)}\colon \Gamma \to \Gamma$ of
symplectomorphisms is generated, in the sense of 
\eqref{eq:sympl_gen_PB}, by the corresponding function 
$f_\xi$). Moreover, due to $\mathfrak p$ having vanishing 
second cohomology, the $f_\xi$ can be chosen 
in such a way that the map $\xi\mapsto f_\xi$ 
from the Lie algebra $\mathfrak{p}$ to the Lie 
algebra $C^\infty(\Gamma)$ (the Lie product 
of the latter being the Poisson bracket) is a 
Lie homomorphism, i.e.\ 
\begin{equation} \label{eq:Poinc_gen_hom}
	\left\{f_{\xi_1},f_{\xi_2}\right\} = f_{[\xi_1,\xi_2]}.
\end{equation}
I.e., for spacetime dimension greater than $2$, any 
symplectic action of the Poincaré group is always a 
\emph{Poisson action}. Details of these arguments may 
be found in \cite{Schwartz.Giulini:2020}.
Note that, according to \eqref{eq:fundVF_anti_hom} 
and \eqref{eq:HVF_anti_hom}, both maps 
$\xi\mapsto V_\xi$ and $V_\xi\mapsto f_\xi$ are 
Lie \emph{anti}-homomorphisms. Hence their combination 
$\xi\mapsto f_\xi$ is a proper Lie homomorphism 
(no minus sign on the right-hand side of 
\eqref{eq:Poinc_gen_hom}).

Now, we will deduce the transformation properties of the 
generators $f_\xi$ under the action of $\mathcal{P}$. Taking the 
pullback of the equation $\omega(V_\xi, \cdot) = \D f_\xi$ 
with $\Phi_{(\Lambda,a)^{-1}}$ and using the invariance of 
$\omega$ as well as \eqref{eq:fundVF_equivariance}, we 
immediately deduce
\begin{equation} \label{eq:Poinc_gen_Ad_equivariance}
	\Phi_{(\Lambda,a)^{-1}}^* f_\xi := f_\xi \circ \Phi_{(\Lambda,a)^{-1}} = f_{\mathrm{Ad}_{(\Lambda,a)}(\xi)} \; ,
\end{equation}
which may also be read as the invariance of the 
real-valued function \mbox{$f\colon \mathfrak{p} \times \Gamma \to \mathbb R$}, 
$(\xi,\gamma) \mapsto f_\xi(\gamma)$, under the 
combined left action of $\mathcal{P}$ 
on $\mathfrak{p}\times\Gamma$ given by 
$\mathrm{Ad}\times\Phi$. \mbox{Alternatively}, since 
$\xi\mapsto f_\xi$ is linear, we may regard 
$f$ as $\mathfrak{p}^*$-valued function on 
$\Gamma$, where $\mathfrak{p}^*$ denotes the 
vector space dual to $\mathfrak{p}$. 
This map is called the \emph{momentum map}\footnote
	{See \cite[chap.\,4.2]{Abraham.Marsden:1978} 
	for a general discussion on the notion of `momentum map'
	and also \cite{Giulini:2015} for an account of its 
	use and properties restricted to the case of 
	Poincaré-invariant systems.}
for the given system $(\Gamma,\omega,\Phi)$, 
which according to \eqref{eq:Poinc_gen_Ad_equivariance} is then
$\mathrm{Ad}^*$-equivariant:
\begin{equation} \label{eq:momentum_map_Ad-star_equivariance}
	f \circ \Phi_{(\Lambda,a)} = \mathrm{Ad}^*_{(\Lambda,a)} \circ f
	\iff
	\mathrm{Ad}^*_{(\Lambda,a)} \circ f \circ \Phi_{(\Lambda,a)^{-1}} = f
\end{equation}
The second expression is again meant to 
stress that the condition of equivariance is 
equivalent to the invariance of the function $f$ 
under the combined left actions in its domain 
and target spaces (invariance of the graph). 
Note that $\mathrm{Ad}^*$ denotes the co-adjoint 
representation of $\mathcal{P}$ on $\mathfrak{p}^*$, 
given by $\mathrm{Ad}^*_{(\Lambda,a)} := (\mathrm{Ad}_{(\Lambda,a)^{-1}})^\top$ with superscript $\top$
denoting the transposed map.\enlargethispage{\baselineskip}

Points in $\Gamma$ faithfully represent the state 
of the physical system whereas observables 
correspond to functions on $\Gamma$. In order to 
implement time evolution we shall employ a 
`classical Heisenberg picture', in which the 
phase space point remains the same at all 
times, whereas the evolution will correspond to 
the changes of observables according to their 
association to different spacelike 
hyperplanes in spacetime. Although this is 
different from the (`Schrödinger picture') 
approach usually taken in classical mechanics 
(where the state of the system is given by a 
phase space point changing in `time', which is 
an external parameter), this point of view is 
clearly better adapted to the 
Poincaré-relativistic framework, in which 
there simply is no absolute notion of time.
\enlargethispage{\baselineskip}

Choosing a set 
of ten basis vectors $(P_\mu,J_{\mu\nu})$ for 
$\mathfrak{p}$ obeying \eqref{eq:Poinc_gen} (compare 
appendix \ref{app:sign_convention_so}), we can contract the 
$\mathfrak{p}^*$-valued momentum map with each of 
these basis vectors in order to obtain the 
corresponding ten real-valued component functions 
of the momentum map. By some abuse of notation 
we shall call these component functions by the same 
letters $(P_\mu,J_{\mu\nu})$ as the Lie algebra 
elements themselves. 
\eqref{eq:Poinc_gen_hom} now says that the map 
that sends the Lie algebra elements 
$P_\mu$ and $J_{\mu\nu}$ in $\mathfrak{p}$ 
to the corresponding component functions 
of the momentum map is a Lie homomorphism from 
$\mathfrak{p}$ to the Lie algebra 
$C^\infty(\Gamma,\mathbb{R})$ (the latter with 
Poisson bracket as Lie multiplication):
\begin{subequations}
	\begin{align}
	\{P_\mu, P_\nu\} &= 0 \\
	\{J_{\mu\nu}, P_\rho\} &= \eta_{\mu\rho} P_\nu - \eta_{\nu\rho} P_\mu \\
	\{J_{\mu\nu}, J_{\rho\sigma}\} &= \eta_{\mu\rho} J_{\nu\sigma} + \text{(antisymm.)}
	\end{align}
\end{subequations}
The $\mathrm{Ad}^*$-equivariance of the momentum map can now be 
written down in component form if we first set 
$\xi = P_\mu$ and then $\xi = J_{\mu\nu}$. 
Indeed, considering
\eqref{eq:Poinc_gen_Ad_equivariance} and 
recalling our abuse of notation in 
denoting the real-valued phase space functions 
$f_{P_\mu}$ and $f_{J_{\mu\nu}}$
again with the letters $P_\mu$ and 
$J_{\mu\nu}$, we can immediately read from 
\eqref{eq:ad_rep} of appendix 
\ref{app:adj_rep}, in which we need to 
replace $\E_a$ with $P_\mu$ and $B_{ab}$ 
with $-J_{\mu\nu}$ according to 
\eqref{eq:sign_convention_so} of 
appendix \ref{app:sign_convention_so}, that
\begin{subequations} \label{eq:coadjoint_rep_components}
\begin{alignat}{2}
\label{eq:coadjoint_rep_components_a}
	& P_\mu \circ \Phi_{(\Lambda, a)}
	&&= (\Lambda^{-1})^\nu_{\phantom{\nu}\mu} \, P_\nu \; , \\
\label{eq:coadjoint_rep_components_b}
	& J_{\mu\nu} \circ \Phi_{(\Lambda, a)}
	&&= (\Lambda^{-1})^\rho_{\phantom{\rho}\mu} (\Lambda^{-1})^\sigma_{\phantom{\sigma}\nu} \,
	J_{\rho\sigma} 
	+ a_\mu (\Lambda^{-1})^\rho_{\phantom{\rho}\nu} \, P_\rho
	- a_\nu (\Lambda^{-1})^\rho_{\phantom{\rho}\mu} \, P_\rho \; .
\end{alignat}
\end{subequations}
Note that the left-hand sides of \eqref{eq:coadjoint_rep_components} are 
precisely what we need; that is, we need 
the composition with $\Phi_{(\Lambda, a)}$ rather than 
$\Phi_{(\Lambda, a)^{-1}}$ to evaluate the momenta 
$P_\mu$ and $J_{\mu\nu}$ on the actively 
Poincaré-displaced phase space points. Note also 
that if we had put the 
indices upstairs and had used, e.g., 
$P^\mu = \eta^{\mu\nu} P_{\nu}$ rather than $P_\mu$ 
then the right-hand side of \eqref{eq:coadjoint_rep_components_a} would read 
$\Lambda^\mu_{\phantom{\mu}\nu} \, P^\nu$, and 
correspondingly in 
\eqref{eq:coadjoint_rep_components_b}. 
Finally recall that the last term on the right-hand 
side of \eqref{eq:coadjoint_rep_components_b} just 
reflects the familiar transformation of angular 
momentum (the momentum associated to spatial 
rotations) under spatial translations, which is 
typical for the \emph{co}-adjoint representation, 
which here gets extended to the momentum 
associated to boost transformations\footnote
	{One easily checks that the signs are 
	right: translating a system whose momentum points 
	in $y$-direction by a positive amount into the 
	$x$-direction should enhance the angular momentum 
	in $z$-direction. This is just what 
	\eqref{eq:coadjoint_rep_components_b} implies.}.

\subsection{The Pauli--Lubański vector}
\label{sec:PauliLubanski}
Given a classical Poincaré-invariant system, the 
Pauli--Lubański vector $W$ is the \mbox{$V$-valued} phase 
space function defined in components by
\begin{equation}
	W_\mu = - \frac{1}{2} \varepsilon_{\mu\nu\rho\sigma} P^\nu J^{\rho\sigma}
\end{equation}
where $\varepsilon$ denotes the volume form of 
Minkowski space (whose components in a \mbox{positively} 
oriented orthonormal basis are just given by the 
usual totally \mbox{antisymmetric} symbol, with 
$\varepsilon_{0123} = +1$). The sign convention 
in this definition can be understood as follows. 
We imagine a situation in which $P$ is timelike and 
future-directed (\mbox{positive} energy, see above), 
and consider the spatial components of $W$ with 
respect to an \mbox{orthonormal} basis $\{\E_0, \dots, \E_3\}$ 
of $V$ with $(\E_0)^\mu = P^\mu / \sqrt{-P_\nu P^\nu}$ 
(`momentum rest frame'). For those, we obtain
\begin{equation}
	\frac{W_a}{\sqrt{-P_\mu P^\mu}} = - \frac{1}{2} \varepsilon_{a0\rho\sigma} J^{\rho\sigma} = \frac{1}{2} {^{(3)}\varepsilon}_{abc} J^{bc}
\end{equation}
where the ${^{(3)}\varepsilon}_{abc}$ is the 
three-dimensional antisymmetric symbol / the 
components of the spatial volume form. Thus, since 
$J^{bc} = J_{bc}$ generates rotations from $\E_b$ 
towards $\E_c$, we see that $W_a/\sqrt{-P_\mu P^\mu}$ 
generates rotations `along the $\E_a$ axis' in the usual, 
three-dimensional sense. Thus, $W/\sqrt{-P_\mu P^\mu}$ 
can be interpreted as the `spatial spin vector' in the 
momentum rest frame, which is the usual interpretation 
of the Pauli--Lubański vector.

Rewriting the definition of $W$ as
\begin{equation}
	W_\mu = - \frac{1}{2} \varepsilon_{\mu\nu\rho\sigma} P^\nu J^{\rho\sigma} = \frac{1}{2} \varepsilon_{\nu\rho\sigma\mu} P^\nu J^{\rho\sigma} = \frac{1}{3!} \varepsilon_{\nu\rho\sigma\mu} (P^\flat \wedge J)^{\nu\rho\sigma},
\end{equation}
we see that in the language of exterior algebra
\begin{equation}
	W = (*(P^\flat\wedge J))^\sharp
\end{equation}
where $*$ is the Hodge star operator. Here we 
use the standard sign conventions for the Hodge 
operator, i.e.\ the definition 
$\alpha \wedge *\beta = \eta(\alpha,\beta) \, \varepsilon$; 
see for example \cite{Straumann:2013} or 
\cite[appendix A]{Giulini:2015}.
\enlargethispage{\baselineskip}

\section{The Newton--Wigner position as a `centre of spin'}
\label{sec:NW_centre_of_spin}

In this section we will explain our understanding 
and present our geometric clarification of 
Fleming's statement in \cite{Fleming:1965a} that 
the Newton--Wigner position may be understood as 
a `centre of spin'. To this end, we introduce 
Fleming's geometric framework for special-relativistic 
position observables, and then discuss the definition 
of position observables by spin supplementary conditions 
(\acronym{SSC}s). Finally, we introduce the notion of a position 
observable being a `centre of spin', and prove that 
the Newton--Wigner position is the only continuous 
position observable defined by an \acronym{SSC} that represents 
a centre of spin in that sense.\enlargethispage{\baselineskip}

\subsection{Position observables on spacelike hyperplanes}
\label{sec:pos_obs}

We start by describing the general framework 
developed by Fleming in \cite{Fleming:1965a} 
and also \cite{Fleming:1966} for the description 
of special-relativistic position observables, 
translated to our case of classical systems from 
Fleming's quantum language. Consider a classical 
Poincaré-invariant system $(\Gamma, \omega, \Phi)$. 
By a \emph{position observable} $\chi$ for this 
system we understand a `procedure' which, given 
any spacelike hyperplane $\Sigma \in \mathsf{SpHP}$ 
in (affine) Minkowski spacetime, allows us to 
`localise' the system on $\Sigma$. More precisely, 
this means that for any $\Sigma \in \mathsf{SpHP}$, 
we have an $M$-valued phase space function
\begin{equation}
	\chi(\Sigma) \colon \Gamma \to M
\end{equation}
with image contained in $\Sigma$, whose value 
$\chi(\Sigma)(\gamma)$ for $\gamma \in \Gamma$ is 
to be interpreted as the `$\chi$-position' of our 
system in state $\gamma$ on the hyperplane $\Sigma$.

Any spacelike hyperplane $\Sigma \in \mathsf{SpHP}$ 
is uniquely characterised by its (timelike) 
future-directed unit normal $u \in V$ and its distance 
$\tau\in\mathbb R$ to the origin $o \in M$, measured 
along the straight line through $o$ in direction $u$. 
In terms of these, it has the form 
\begin{equation} \label{eq:hyperplane_params}
	\Sigma = \{x\in M : u_\mu x^\mu = -\tau\},
\end{equation}
where we identified $M$ with $V$. 
From now on, whenever convenient, we will identify 
$\Sigma$ with the tuple $(u, \tau)$. The condition 
that the image of $\chi(\Sigma)$ be contained 
in $\Sigma$ then takes the form
\begin{equation} \label{eq:pos_obs_image}
	u_\mu \chi^\mu(u, \tau)(\gamma) = - \tau.
\end{equation}
We can now also spell out explicitly the left action 
of $\mathcal{P}$ on $\mathsf{SpHP}$ that is induced 
from the left action of $\mathcal{P}$ on $M$
(as already mentioned below \eqref{eq:def_SpHp}):
\begin{equation} \label{eq:action_on_SpHP}
	(\Lambda,a) \cdot (u,\tau) = (\Lambda u, \tau - \Lambda u \cdot a)
\end{equation}

Fixing $u$ and varying $\tau$ in 
\eqref{eq:hyperplane_params}, we obtain the spacelike 
hyperplanes corresponding to different `instants of 
time' $\tau$ in the Lorentz frame corresponding to 
$u$. Thus, for a fixed state $\gamma\in\Gamma$ and fixed 
frame $u$, the set
\begin{equation}
	\{\chi(u, \tau) (\gamma) : \tau \in \mathbb R\} \subset M
\end{equation}
gives the `worldline' of the $\chi$-position of the 
system. Following Fleming \cite{Fleming:1965a}, who 
says that this is a requirement `easily agreed upon', 
we require that this worldline should be parallel to 
the four-momentum\footnote
	{This assumption is natural for closed systems as we 
	consider here. For non-closed systems, i.e.\ systems 
	without local energy--momentum conservation, the 
	four-velocity is in general not parallel to the 
	four-momentum; see, e.g., the discussion at the beginning 
	of section 2.6 in \cite{Giulini:2018}.},
i.e.\ 
$\frac{\partial \chi(u, \tau)}{\partial \tau} \propto P$. 
Together with \eqref{eq:pos_obs_image}, this implies 
condition \eqref{eq:pos_obs_time_der} in the 
definition below, which is meant to sum up all the 
preceding considerations.
\begin{defn} \label{defn:pos_obs}
	A \emph{position observable} for a classical 
	Poincaré-invariant system $(\Gamma, \omega, \Phi)$ 
	with causal four-momentum is a map
	\begin{equation} \label{eq:pos_obs}
		\chi \colon \mathsf{SpHP} \times \Gamma \to M, \; (\Sigma, \gamma) \mapsto \chi(\Sigma)(\gamma)
	\end{equation}
	satisfying
	\begin{equation}
		\chi(\Sigma)(\gamma) \in \Sigma
	\end{equation}
	for all $\Sigma \in \mathsf{SpHP}$ and all 
	$\gamma \in \Gamma$ (or, equivalently, 
	\eqref{eq:pos_obs_image}), as well as
	\begin{equation} \label{eq:pos_obs_time_der}
		\frac{\partial \chi_\mu(u, \tau)}{\partial \tau} = \frac{1}{(- u \cdot P)} P_\mu \; .
	\end{equation}
	For fixed $\Sigma\in\mathsf{SpHP}$, we will often 
	view $\chi(\Sigma) \colon \Gamma \to M$ as a 
	phase space function in its own right.
\end{defn}
Note that \eqref{eq:pos_obs_time_der} and 
\eqref{eq:pos_obs_image} imply that the 
four-momentum must be causal for such a position 
observable to exist.
\enlargethispage{\baselineskip}

In addition to the demands of the positions 
$\chi(\Sigma)$ being located on $\Sigma$ and 
of `worldlines' in direction of the four-momentum, 
Fleming also introduces the following \mbox{covariance} 
requirement (which we, different to Fleming, do not 
include in the definition of a position observable):
\begin{defn} \label{defn:pos_obs_cov}
	A position observable for a classical 
	Poincaré-invariant system $(\Gamma, \omega, \Phi)$ 
	is said to be \emph{covariant} if and only if
	\begin{equation} \label{eq:pos_obs_cov_1}
		\chi \Big((\Lambda, a) \cdot \Sigma\Big) \Big(\Phi_{(\Lambda, a)}(\gamma)\Big) = (\Lambda, a) \cdot \Big( \chi(\Sigma)(\gamma) \Big)
	\end{equation}
	for all $\Sigma\in\mathsf{SpHP}$, $\gamma \in \Gamma$ 
	and $(\Lambda, a) \in \mathcal P$. This can be 
	read concisely as saying that the map 
	\eqref{eq:pos_obs} is invariant under 
	the natural left action induced from those on 
	the domain and target spaces (invariance of 
	$\chi$'s graph):
	\begin{equation} \label{eq:pos_obs_cov_2}
		\chi = (\Lambda,a) \circ \chi \circ \left( (\Lambda,a)^{-1} \times \Phi_{(\Lambda,a)^{-1}} \right).
	\end{equation}
\end{defn}

This is indeed a sensible notion of covariance: 
it demands that, for any \mbox{Poincaré} \mbox{transformation} 
$(\Lambda, a)$, the $\chi$-position of the 
transformed system $\Phi_{(\Lambda, a)}(\gamma)$ 
on the \mbox{transformed} hyperplane 
$(\Lambda, a) \cdot \Sigma$ be the transform 
of the `original position' $\chi(\Sigma)(\gamma)$. 
In terms of components, \eqref{eq:pos_obs_cov_1} 
assumes the form
\begin{equation}
	\chi^\mu (\Lambda u, \tau - \Lambda u\cdot a) \circ \Phi_{(\Lambda, a)} = \Lambda^\mu_{\hphantom{\mu}\nu} \chi^\nu(u, \tau) + a^\mu\,,
\end{equation}
taking into account \eqref{eq:action_on_SpHP}.

\subsection{Spin supplementary conditions}

The most important and widely used procedure to 
define special-relativistic position observables 
is by so-called \emph{spin supplementary conditions}. 
Suppose we are given a causal, future-directed 
vector $P \in V$ and an antisymmetric 2-tensor 
$J \in \bigwedge^2 V^*$, describing the four-momentum 
and the angular momentum (with respect to the origin 
$o \in M$) of some physical system. For any 
future-directed timelike vector $f \in V$, 
we then consider the equation
\begin{equation} \label{eq:SSC}
	0 = S_{\mu\nu} f^\nu
\end{equation}
with $S_{\mu\nu} := J_{\mu\nu} - x_\mu P_\nu + x_\nu P_\mu$, 
which we view as an equation for $x \in M$. Since 
$S$ is the angular momentum tensor with respect 
to the reference point $x$ (instead of the origin 
$o$ as for $J$), or the \emph{spin tensor} with 
respect to $x$, \eqref{eq:SSC} is called the 
\emph{spin supplementary condition} (\acronym{SSC}) with 
respect to $f$. As is well-known (and easily 
verified), the set of its solutions $x$ is a 
line in $M$ with tangent $P$, namely
\begin{equation} \label{eq:SSC_worldline}
	\{x \in M : 0 = S_{\mu\nu} f^\nu\} = \left\{x \in M : x_\mu = \frac{J_{\mu\rho} f^\rho}{f\cdot P} + \lambda P_\mu \; \text{with} \; \lambda \in \mathbb R\right\}.
\end{equation}
This line can be given the interpretation of the 
`centre of energy' worldline of our system with 
respect to the Lorentz frame defined by $f$. 
See \cite{Costa.EtAl:2018} and references therein 
for further discussion on the interpretation and 
impact of various \acronym{SSC}s as regards equations of motion 
in general relativity.
\enlargethispage{\baselineskip}

The idea is now to explicitly combine the 
\acronym{SSC}-based approach with Fleming's \mbox{geometric} ideas, 
thereby introducing the two independent parameters 
$f$ from \eqref{eq:SSC_worldline} and $u$ from 
\eqref{eq:hyperplane_params}. We define a position
observable in the sense of definition \ref{defn:pos_obs} 
in the following way: given a classical 
Poincaré-invariant system $(\Gamma, \omega, \Phi)$ 
with causal four-momentum and a state 
$\gamma \in \Gamma$, we consider the \acronym{SSC} worldline
defined by \eqref{eq:SSC} where we now take
$P_\mu(\gamma)$ for the four-momentum and 
$J_{\mu\nu}(\gamma)$ for the angular momentum tensor. 
We then simply define $\chi(\Sigma)(\gamma)$ to be 
the intersection of this worldline with the 
hyperplane $\Sigma = (u,\tau)$. This means that we
take the $x(\lambda)$ from \eqref{eq:SSC_worldline}
and determine the parameter $\lambda$
from \eqref{eq:pos_obs_image}, i.e.\ from 
$x(\lambda) \cdot u + \tau = 0$. Inserting the 
$\lambda = \lambda(u,\tau)$ so determined leads to
\begin{defn}
	The \emph{\acronym{SSC} position observable} with 
	respect to $f$ is given by
	\begin{equation} \label{eq:SSC_pos_obs}
		\chi_\mu(u, \tau) = \frac{J_{\mu\rho} f^\rho}{f\cdot P} 
		+ \frac{\tau P_\mu}{(- u \cdot P)} 
		- \frac{J_{\lambda\rho} u^\lambda f^\rho}{(- f \cdot P)}
		\, \frac{P_\mu}{(- u \cdot P)} \; .
	\end{equation}
	Let us again stress the interpretation of this 
	expression: it is the \acronym{SSC} position with respect to 
	$f$ (i.e.\ a point on the `centre of energy' worldline 
	with respect to $f$) as localised on the hyperplane 
	characterised by unit normal $u$ and distance $\tau$ 
	to the origin, i.e.\ as seen in the Lorentz frame 
	with respect to $u$ at `time' $\tau$.
\end{defn}

Note that for this definition to make sense, $f$ 
does not have to be a fixed timelike future-directed 
vector: it can depend on the normal $u$ (and could 
even depend on $\tau$), and it can also depend on 
phase space\footnote
	{Various choices for $f$ were given 
	distinguished names in the literature. 
	The main ones, different from the Newton--Wigner 
	condition to be discussed here, are as follows. 
	If $f$ is meant to just characterise a fixed 
	`laboratory frame', which may be preferred for 
	any reason, like rotational symmetries in that 
	frame, the \acronym{SSC} is named after 
	Corinaldesi \& Papapetrou 
	\cite{Corinaldesi.Papapetrou:1951}. If $f$ is 
	proportional to the total linear momentum of the 
	system, the \acronym{SSC} is named after Tulczyjew \cite{Tulczyjew:1959} and Dixon
	\cite{Dixon:1970}. If $f$ is chosen in a somewhat 
	self-referential way to be the four-velocity of 
	the worldline that is to be determined by the 
	very \acronym{SSC} containing that $f$, the condition is 
	named after Frenkel \cite{Frenkel:1926}, 
	Mathisson \cite{Mathisson:1937a,Mathisson:1937b}, 
	and Pirani \cite{Pirani:1956a,Pirani:1956b}.}.
Of course this means that according to this 
dependence of $f$, we will possibly be considering 
different worldlines for different choices of $u$.
\begin{exmp}
	\begin{enumerate}[label=(\roman*)]
		\item Choosing $f = u$, we are considering, for 
			each $u$, the \acronym{SSC} worldline with respect to $u$, 
			i.e.\ the \emph{centre of energy} worldline\footnote
				{Note that it was called `centre of mass' by Fleming \cite{Fleming:1965a}.}
			with respect to $u$. Using \eqref{eq:SSC_pos_obs}, 
			the centre of energy position observable has the form
			\begin{equation}
				\chi^\mathrm{CE}_\mu(u, \tau) = \frac{J_{\mu\rho} u^\rho}{u\cdot P} + \frac{\tau P_\mu}{(- u \cdot P)} \; .
			\end{equation}

		\item In the case of timelike four-momentum, we 
			can choose $f = P$ the four-momentum (the 
			Tulczyjew--Dixon \acronym{SSC}), such that the corresponding 
			\acronym{SSC} worldline is the centre of energy worldline in 
			the momentum rest frame of the system. This worldline, 
			which is obviously independent of $u$, was called the 
			\emph{centre of inertia} worldline by Fleming 
			\cite{Fleming:1965a}. The centre of inertia has the form
			\begin{equation}
				\chi^\mathrm{CI}_\mu(u, \tau) = -\frac{J_{\mu\rho} P^\rho}{m^2 c^2} + \frac{\tau P_\mu}{(- u \cdot P)} - \frac{J_{\lambda\rho} u^\lambda P^\rho}{m^2 c^2}\; \frac{P_\mu}{(- u \cdot P)} \; ,
			\end{equation}
			where $m = \sqrt{-P^2}/c$ is the mass of the system.
		\item Choosing $f = u + \frac{P}{mc}$ where 
			$m = \sqrt{-P^2}/c$ is the mass of the system (again 
			only possible in the case of timelike four-momentum), 
			we obtain the \emph{Newton--Wigner position observable}. 
			Evaluating \eqref{eq:SSC_pos_obs}, it has the form
			\begin{equation} \label{eq:NW_pos}
				\chi_\mu^\mathrm{NW}(u, \tau) = -\frac{J_{\mu\rho} \left(u^\rho + \frac{P^\rho}{mc}\right)}{mc - u\cdot P} + \frac{\tau P_\mu}{(- u \cdot P)} - \frac{J_{\lambda\rho} u^\lambda P^\rho}{mc (mc - u \cdot P)}\, \frac{P_\mu}{(- u \cdot P)} \; .
			\end{equation}
	\end{enumerate}
\end{exmp}

Of course, the \acronym{SSC} position observable 
\eqref{eq:SSC_pos_obs} will generally not be 
covariant in the sense of definition 
\ref{defn:pos_obs_cov} unless $f$ is also assumed 
to transform appropriately. If $f$ depends 
on $\Sigma\in\mathsf{SpHP}$ and $\gamma\in\Gamma$ 
and takes values in $V$ it seems obvious that for the 
resulting position to be covariant $f$ itself 
must be a covariant function under the 
combined actions on its domain and target spaces.
Indeed, we have
\begin{prop} \label{prop:SSC_pos_obs_cov}
	If the vector $f$ defining the \acronym{SSC} position 
	observable $\chi$ is a function
	\begin{equation}
		f \colon \mathsf{SpHP} \times \Gamma \to V, \quad (\Sigma,\gamma) \mapsto f(\Sigma)(\gamma),
	\end{equation}
	such that
	\begin{equation} \label{eq:SSC_vec_cov}
		f \Big((\Lambda, a) \cdot \Sigma\Big) \Big(\Phi_{(\Lambda, a)}(\gamma)\Big) = \Lambda \cdot \Big( f(\Sigma)(\gamma) \Big)
	\end{equation}
	for all $\Sigma \in \mathsf{SpHP}$, 
	$\gamma \in \Gamma$, and $(\Lambda, a) \in \mathcal P$,
	then $\chi$ is a covariant position observable. 
	Again we note that, just like in the transition 
	from \eqref{eq:pos_obs_cov_1} to \eqref{eq:pos_obs_cov_2}, 
	we may rewrite \eqref{eq:SSC_vec_cov} equivalently 
	as expressing the invariance of $f$ (i.e.\ its graph) 
	under simultaneous actions on its domain and target 
	spaces (using that translations act trivially on the 
	target space $V$):
	\begin{equation}
		f = \Lambda \circ f \circ \left( (\Lambda,a)^{-1} \times \Phi_{(\Lambda,a)^{-1}} \right)
	\end{equation}

	\begin{proof}
		At first, suppose we are given a future-directed 
		timelike four-momentum $P \in V$ and an angular 
		momentum tensor $J \in \bigwedge^2 V^*$, as well 
		as a future-directed timelike vector $f$ for the 
		definition of an \acronym{SSC}. In addition, fix a Poincaré 
		transformation $(\Lambda, a) \in \mathcal P$. 
		If we now consider (a) the \acronym{SSC} worldline for $P$ 
		and $J$ with respect to $f$, and (b) the \acronym{SSC} 
		worldline for the transformed four-momentum 
		$P'=\Lambda P$ and angular momentum 
		$J' = ((\Lambda^{-1})^\top \otimes (\Lambda^{-1})^\top)J + a^\flat \wedge (\Lambda^{-1})^\top P^\flat$ 
		(compare \eqref{eq:coadjoint_rep_components_b}) 
		with respect to the transformed vector $\Lambda f$, 
		it is easy to check that the second worldline is 
		the Poincaré transform by $(\Lambda, a)$ of the 
		first. That is, by Poincaré transforming the 
		four-momentum and angular momentum of the system 
		as well as the `direction vector' for the \acronym{SSC}, we 
		Poincaré transform the \acronym{SSC} worldline.

		Now, the \acronym{SSC} position $\chi(\Sigma)(\gamma)$ is 
		defined to be the intersection of the hyperplane 
		$\Sigma$ with the \acronym{SSC} worldline of $\gamma$ with 
		respect to $f(\Sigma)(\gamma)$. Thus, the `new position'
		\begin{equation}
			\chi \Big((\Lambda, a) \cdot \Sigma\Big) \Big(\Phi_{(\Lambda, a)}(\gamma)\Big)
		\end{equation}
		is the intersection of the transformed hyperplane 
		$(\Lambda, a) \cdot \Sigma$ with the \acronym{SSC} worldline 
		of the transformed system $\Phi_{(\Lambda, a)}(\gamma)$ 
		with respect to the transformed vector 
		$\Lambda \cdot \Big( f(\Sigma)(\gamma) \Big)$, 
		where we used the covariance requirement 
		\eqref{eq:SSC_vec_cov}. But according to our earlier 
		considerations, this means that the `new position' 
		is the intersection of the transformed hyperplane 
		with the transform of the original \acronym{SSC} worldline~-- 
		i.e.\ the transform of the original position 
		$\chi(\Sigma)(\gamma)$. This means that the position 
		observable is covariant.
	\end{proof}
\end{prop}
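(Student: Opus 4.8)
The plan is to prove Proposition \ref{prop:SSC_pos_obs_cov} in two stages, first establishing a ``kinematic'' equivariance lemma purely at the level of Minkowski space (four-momentum, angular momentum tensor, and direction vector), and then feeding in the phase-space dynamics via the $\mathrm{Ad}^*$-equivariance of the momentum map \eqref{eq:coadjoint_rep_components} together with the covariance hypothesis \eqref{eq:SSC_vec_cov} on $f$. Concretely, I would first fix a Poincar\'e transformation $(\Lambda,a)$ and a pair $(P,J)$ with $P$ future-directed timelike, and set $P' = \Lambda P$, $J' = ((\Lambda^{-1})^\top\otimes(\Lambda^{-1})^\top)J + a^\flat \wedge (\Lambda^{-1})^\top P^\flat$, which is exactly the transformation law read off from \eqref{eq:coadjoint_rep_components_b}. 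The claim at this stage is: if $x$ solves the SSC \eqref{eq:SSC} for $(P,J)$ with respect to a future-directed timelike vector $f$, then $(\Lambda,a)\cdot x$ solves the SSC for $(P',J')$ with respect to $\Lambda f$. This is a direct computation using the explicit solution line \eqref{eq:SSC_worldline}: one substitutes $x_\mu = J_{\mu\rho}f^\rho/(f\cdot P) + \lambda P_\mu$ and checks that the Poincar\'e-transformed point has the analogous form with $J,P,f$ replaced by the primed/transformed quantities; the affine shift by $a$ is absorbed precisely by the $a^\flat \wedge (\Lambda^{-1})^\top P^\flat$ term in $J'$, and Lorentz-invariance of $\eta$ handles the contractions $f\cdot P$ etc. I would present this as the main computational content, but keep it brief since it is the kind of index bookkeeping that is routine once the transformation laws are in hand.

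The second stage is then almost purely conceptual and is essentially the argument sketched in the excerpt's proof, which I would flesh out. Given a state $\gamma$ and a hyperplane $\Sigma$, the SSC position $\chi(\Sigma)(\gamma)$ is by definition the unique intersection point of the SSC worldline for $\big(P(\gamma),J(\gamma)\big)$ with respect to $f(\Sigma)(\gamma)$ and the hyperplane $\Sigma$ (uniqueness because the worldline has tangent $P$, which is timelike, while $\Sigma$ is spacelike, so they meet transversally in exactly one point). Now apply the stage-one lemma with $(P,J) = \big(P(\gamma),J(\gamma)\big)$: by the $\mathrm{Ad}^*$-equivariance \eqref{eq:momentum_map_Ad-star_equivariance} (equivalently \eqref{eq:coadjoint_rep_components}), the transformed momenta $P\big(\Phi_{(\Lambda,a)}(\gamma)\big)$ and $J\big(\Phi_{(\Lambda,a)}(\gamma)\big)$ are exactly $P'$ and $J'$ of stage one, and by the covariance hypothesis \eqref{eq:SSC_vec_cov} the direction vector $f\big((\Lambda,a)\cdot\Sigma\big)\big(\Phi_{(\Lambda,a)}(\gamma)\big)$ equals $\Lambda\cdot\big(f(\Sigma)(\gamma)\big)$. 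Hence the SSC worldline entering the definition of $\chi\big((\Lambda,a)\cdot\Sigma\big)\big(\Phi_{(\Lambda,a)}(\gamma)\big)$ is precisely the Poincar\'e transform (by $(\Lambda,a)$) of the original SSC worldline. Since Poincar\'e transformations map the hyperplane $\Sigma$ to $(\Lambda,a)\cdot\Sigma$ bijectively and preserve incidence, the intersection point transforms accordingly, giving exactly \eqref{eq:pos_obs_cov_1}.

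Two subsidiary points need a sentence each. First, I should verify that the constructed $\chi$ really is a position observable in the sense of Definition \ref{defn:pos_obs}: the image lands in $\Sigma$ by construction, and the $\tau$-derivative condition \eqref{eq:pos_obs_time_der} follows by differentiating the explicit formula \eqref{eq:SSC_pos_obs} with respect to $\tau$ (only the middle term depends on $\tau$, giving $P_\mu/(-u\cdot P)$) --- this is immediate and I would just note it. Second, I should be slightly careful that $f(\Sigma)(\gamma)$ is future-directed timelike so that the intersection and the formula \eqref{eq:SSC_pos_obs} make sense; this is part of the standing assumptions on the SSC, and it is preserved under $\Lambda \in \mathcal{L}_+^\uparrow$, so no issue arises (for the general orthochronous-plus-reversal case one restricts attention as in the rest of the chapter).

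The main obstacle, such as it is, is not conceptual but notational: getting the index placement in the transformation law for $J'$ consistent with the conventions of \eqref{eq:coadjoint_rep_components_b} and then carrying those contractions through the explicit solution \eqref{eq:SSC_worldline} without sign or inverse-transpose slips. I expect the cleanest route is to avoid components where possible in stage one --- work with $S = J - x\wedge P^\flat$ as a $2$-form, note that the SSC reads $\iota_{f} S = 0$ (contraction in the second slot, matching $S_{\mu\nu}f^\nu$), and use that $S$ transforms as $S \mapsto (\Lambda^{-1})^\top\otimes(\Lambda^{-1})^\top$ acting on it \emph{about the transformed point}, so that $\iota_{\Lambda f}(\text{transformed }S) = 0$ is equivalent to the original condition. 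Phrasing stage one index-free makes the whole proof short and makes the role of each hypothesis (equivariance of the momentum map, covariance of $f$, Lorentz-invariance of $\eta$) completely transparent; I would then only spell out components at the very end to connect with the displayed formula \eqref{eq:SSC_pos_obs} if needed.
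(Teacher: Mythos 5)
Your proof is correct and follows essentially the same route as the paper's: a kinematic equivariance lemma for \acronym{SSC} worldlines under Poincar\'e transformations, followed by the observation that the covariance hypothesis on $f$ and the $\mathrm{Ad}^*$-equivariance of the momentum map feed the right data into that lemma. The paper dispatches the kinematic step as ``easy to check''; your index-free formulation via the spin two-form $S$ is a clean way to carry out precisely that check, and your extra remarks on transversality of the worldline with $\Sigma$ and preservation of future-directedness are sensible care that the paper leaves implicit.
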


Since the vectors defining the centre of energy, 
the centre of inertia and the Newton--Wigner position 
satisfy \eqref{eq:SSC_vec_cov}, all of these are 
covariant position observables. We stress once 
more that for this to be true we need to take into 
account the action of the Poincaré group on 
$\mathsf{SpHP}$. This remark is particularly 
relevant in the Newton--Wigner case, in which $f$ 
is the sum of two vectors, $u$ and $P/(mc)$, the 
first being associated to an element of 
$\mathsf{SpHP}$ and the second to an element of 
$\Gamma$. Covariance cannot be expected to hold 
for non-trivial actions on $\Gamma$ alone. In the 
next section we will offer an insight as to why 
this somewhat `hybrid' combination for $f$ in 
terms of an `external' vector $u$ and an 
`internal' vector $P/(mc)$ appears. The latter is 
internal, or dynamical, in the sense that it is defined 
entirely by the physical state of the system, i.e.\ 
a point in $\Gamma$, while the former is external, or 
kinematical, in the sense that it refers to the choice 
of $\Sigma\in\mathsf{SpHP}$, which is entirely 
independent of the physical system and its state.

Finally, we will need the following well-known 
result for \acronym{SSC}s with respect to different vectors 
$f$, which was first shown by M{\o}ller in 1949 in 
\cite{Moeller:1949}; see also 
\cite[theorem 17]{Giulini:2015} for a recent 
and more geometric discussion:
\begin{thm}[Møller disc and radius] \label{thm:Moller_disc}
	Suppose we are given the future-directed timelike 
	four-momentum vector $P \in V$ and the angular momentum 
	tensor $J \in \bigwedge^2 V^*$ of some physical system. 
	Consider the bundle of all possible \acronym{SSC} worldlines 
	\eqref{eq:SSC_worldline} for this system, defined by 
	considering all future-directed timelike vectors $f$. 
	The intersection of this bundle with any hyperplane 
	$\Sigma \in \mathsf{SpHP}$ orthogonal to $P$ is a 
	two-dimensional disc (the so-called \emph{M\o ller disc}) 
	in the plane orthogonal to the Pauli--Lubański vector 
	$W = (*(P^\flat \wedge J))^\sharp$, whose centre is the 
	centre of inertia on $\Sigma$ and whose radius is the 
	\emph{Møller radius}
	\begin{equation}
		R_M = \frac{S}{mc} \; ,
	\end{equation}
	where $S = \sqrt{W^2}/(mc)$ is the spin of the system 
	and $m = \sqrt{-P^2}/c$ its mass.
\end{thm}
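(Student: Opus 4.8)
This statement — the Møller disc and radius theorem — is a classical result, and the plan is to prove it by direct computation using the explicit form of the SSC worldlines from \eqref{eq:SSC_worldline}. The idea is to parametrise the bundle of SSC worldlines by the future-directed timelike vector $f$, intersect each with a fixed hyperplane $\Sigma \perp P$, and then show that the resulting locus of intersection points is precisely the claimed two-dimensional disc.

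\medskip

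First I would fix a hyperplane $\Sigma = (u,\tau)$ with $u \propto P$ (so $\Sigma$ is orthogonal to the four-momentum), and without loss of generality work in the momentum rest frame, i.e.\ take $u^\mu = P^\mu/(mc)$. For a given future-directed timelike $f$, the SSC worldline is $\{x : x_\mu = J_{\mu\rho} f^\rho/(f \cdot P) + \lambda P_\mu\}$; intersecting with $\Sigma$ amounts to fixing $\lambda$ via $u \cdot x = -\tau$, which is a routine linear solve. The key algebraic step is then to decompose the resulting point $\chi(\Sigma)$ into the part that is independent of $f$ — which one identifies with the centre of inertia $\chi^{\mathrm{CI}}$ on $\Sigma$ (using the Tulczyjew--Dixon choice $f = P$, for which the $J_{\mu\rho}f^\rho$ term is purely along the timelike direction and drops out after projection) — plus an $f$-dependent displacement vector $\Delta(f)$ lying in the spatial hyperplane $\Sigma$. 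One must show that as $f$ ranges over all future-directed timelike vectors, $\Delta(f)$ sweeps out exactly a disc of radius $S/(mc)$ in the plane orthogonal to the Pauli--Lubański vector $W$.

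\medskip

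For that last step I would exploit the rest-frame description of $W$ developed in section \ref{sec:PauliLubanski}: in the momentum rest frame, $W/\sqrt{-P_\mu P^\mu} = W/(mc)$ is the spatial spin vector $\vect S$ with $|\vect S| = S$, and the spatial part of $J$ is essentially dual to $\vect S$ via ${^{(3)}\varepsilon}$. Writing $f = f^0 \E_0 + \vect f$ (rest-frame components), the displacement $\Delta(f)$ turns out to depend only on the spatial direction $\vect f/f^0$ — a vector in the open unit ball — and one computes $\Delta(f) = -\,\vect S \times (\vect f/f^0)/(mc) \cdot (\text{something})$; more precisely, the combination $J_{\mu\rho}f^\rho/(f\cdot P)$ projected onto $\Sigma$, after subtracting the $f=P$ value, is proportional to $\vect S \times \hat{\vect f}$ where $\hat{\vect f}$ has magnitude $|\vect f|/|f^0| < 1$. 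Since $\vect S \times \hat{\vect f}$ lies in the plane orthogonal to $\vect S$ (hence orthogonal to $W$) and has magnitude $S\,|\hat{\vect f}|\sin\theta < S$, and since every such vector in that plane of length $< S$ is realised for suitable $\hat{\vect f}$, the locus is exactly the open disc of radius $S/(mc)$ centred at $\chi^{\mathrm{CI}}$. Frame-independence of the final geometric statement (centre, radius, orthogonality to $W$) follows because $\chi^{\mathrm{CI}}$, $W$, $S$, $m$ are all Poincaré-covariant objects.

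\medskip

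The main obstacle I anticipate is purely organisational: keeping the index gymnastics clean when projecting the $f$-dependent term $J_{\mu\rho}f^\rho/(f\cdot P)$ onto the hyperplane $\Sigma$ and recognising the result as a cross product with the spin vector. The cleanest route is probably to do this in exterior-algebra language — writing $W = (*(P^\flat \wedge J))^\sharp$ as in section \ref{sec:PauliLubanski}, using $P^\flat \wedge J$ and Hodge duality to invert and express the spatial part of $J$ in terms of $W$ and $P$ — rather than wrestling with ${^{(3)}\varepsilon}$ component identities; this also makes the orthogonality $\Delta(f) \perp W$ manifest. Since the result is classical and a geometric proof is already available in \cite[theorem 17]{Giulini:2015} and the original \cite{Moeller:1949}, I would keep the argument brief and cite those for the parts that are standard.
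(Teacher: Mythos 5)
The paper does not prove theorem~\ref{thm:Moller_disc} at all: it is stated as a classical result, with a citation to Møller's original paper \cite{Moeller:1949} and to \cite[theorem 17]{Giulini:2015}. Your closing remark that you would keep the argument brief and cite those references is therefore exactly what the paper does, and is the relevant part of your answer.

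The computational sketch you give in the meantime is nonetheless essentially correct and is the standard route (it is, in spirit, Møller's own argument): in the rest frame of $P$ the $f$-dependent displacement reduces to a cross product $\vect{\Delta}(f)\propto \vect S\times\hat{\vect f}/(mc)$ with $\hat{\vect f}=\vect f/f^0$ ranging over the open unit ball, which sweeps out an open disc of radius $S/(mc)$ in the plane orthogonal to $\vect S$ (hence to $W$), and covariance of the ingredients carries the statement to all frames. Two small imprecisions worth tidying: the claim that for $f=P$ the term $J_{\mu\rho}f^\rho$ is ``purely along the timelike direction'' is misleading --- what makes $f=P$ special is that the spatial--spatial block $J_{ab}f^a$ vanishes because $\vect f=\vect 0$ in the rest frame, leaving only the $J_{b0}$ contribution, i.e.\ the centre of inertia; and you wrote ``length $<S$'' where it should read ``length $<S/(mc)$''. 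Neither affects the conclusion.
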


\subsection{The centre of spin condition}

For a system with timelike four-momentum, the 
Pauli--Lubański vector $W$ has the interpretation 
of being ($mc$ times) the spin vector in the momentum 
rest frame. We now define the spin vector in an 
arbitrary Lorentz frame by boosting $W/(mc)$ to the 
new frame:
\begin{defn}
	Given the timelike four-momentum $P \in V$ and the 
	Pauli--Lubański vector $W \in P^\perp$ of a 
	physical system, its \emph{spin vector} in the 
	Lorentz frame given by the future-directed unit 
	timelike vector $u$ is
	\begin{equation}
		s(u) := B(u) \cdot \frac{W}{mc} \; ,
	\end{equation}
	where $B(u) \in \mathcal L_+^\uparrow$ is the 
	unique Lorentz boost with respect to $\frac{P}{mc}$ 
	(i.e.\ containing $\frac{P}{mc}$ in its timelike 
	2-plane of action) that maps $\frac{P}{mc}$ to 
	$u$, with $m = \sqrt{-P^2}/c$ being the mass. In terms 
	of components, this boost is given by\footnote
		{Generally, given two unit 
		timelike future-pointing vectors $n_1$ and $n_2$,
		then the boost that maps $n_1$ onto $n_2$ and 
		fixes the spacelike plane orthogonal to 
		$\mathrm{span}\{n_1,n_2\}$ is given by the 
		combination $\rho_{n_1+n_2}\circ\rho_{n_1}$ of two 
		hyperplane-reflections, where 
		$\rho_n := \mathrm{id}_V - 2\frac{n\otimes n^\flat}{n^2}$ 
		is the reflection at the hyperplane orthogonal to 
		$n$. Setting $n_1 = P/(mc)$ and $n_2 = u$ gives 
		\eqref{eq:linking_boost_explicit}.}
	\begin{equation} \label{eq:linking_boost_explicit}
		B^\mu_{\hphantom{\mu}\nu}(u)
		= \delta^\mu_\nu + \frac{\left(\frac{P^\mu}{mc} 
		+ u^\mu\right) \left(\frac{P_\nu}{mc} 
		+ u_\nu\right)}{1 - u \cdot \frac{P}{mc}} 
		- 2 \frac{u^\mu P_\nu}{mc} \; .
	\end{equation}
\end{defn}

\begin{defn}
	A \emph{centre of spin} position observable for 
	a 	classical Poincaré-invariant system 
	$(\Gamma, \omega, \Phi)$ with timelike four-momentum 
	is a position observable $\chi$ satisfying
	\begin{equation}
		s_\mu(u) = - \frac{1}{2} \varepsilon_{\mu\nu\rho\sigma} u^\nu S^{\rho\sigma}(u),
	\end{equation}
	where $S_{\rho\sigma}(u) := J_{\mu\nu} - \chi_\mu(u, \tau) P_\nu + \chi_\nu(u, \tau) P_\mu$ 
	is the spin tensor\footnote
		{Since $\frac{\partial\chi(u,\tau)}{\partial\tau}$ is 
		proportional to $P$, the spin tensor is independent 
		of $\tau$.}
	with respect to $\chi$. Expressed in terms of the 
	Hodge operator, this condition reads
	\begin{equation}
		s(u) = (*(u^\flat \wedge S(u)))^\sharp.
	\end{equation}
\end{defn}
With respect to an orthonormal basis 
$\{u = \E_0, \dots, \E_3\}$ adapted to $u$, the 
centre of spin condition takes the form
\begin{equation}
	s_0(u) = 0, \quad
	s_a(u) = - \frac{1}{2} \varepsilon_{a0\rho\sigma} S^{\rho\sigma}(u) 
		= \frac{1}{2} {^{(3)}\varepsilon}_{abc} S^{bc}(u),
\end{equation}
through which it acquires an immediate interpretation: 
a position observable is a centre of spin if and only 
if, for any Lorentz frame $u$, the spin vector 
defined by boosting the Pauli--Lubański vector 
to $u$ really generates spatial rotations around 
the point given by the position observable.

We will now rewrite the centre of spin condition. 
Since $S(u) = J - (\chi(u,\tau))^\flat \wedge P^\flat$, 
we can rewrite the Pauli--Lubański vector as 
$W = \left[* \left(\frac{P^\flat}{mc} \wedge J \right) \right]^\sharp = \left[* \left(\frac{P^\flat}{mc} \wedge S(u) \right) \right]^\sharp$. 
Thus, the centre of spin condition takes the form
\begin{equation}
	(B(u)^{-1})^\top \left[*\left(\frac{P^\flat}{mc} \wedge S(u)\right)\right] = *(u^\flat \wedge S(u)).
\end{equation}
Since $B(u)$ is a Lorentz transformation, i.e.\ an 
isometry of $(V,\eta)$, and it maps $P/(mc)$ to $u$, 
this is equivalent to
\begin{equation}
	u^\flat \wedge \left((B(u)^{-1})^\top \otimes (B(u)^{-1})^\top\right) (S(u)) = u^\flat \wedge S(u).
\end{equation}
Using the explicit form \eqref{eq:linking_boost_explicit} 
of $B(u)$, we see that
\begin{equation}
	\left((B(u)^{-1})^\top \otimes (B(u)^{-1})^\top\right) (S(u)) = S(u) + \frac{\frac{P^\flat}{mc} \wedge \left(\iota_{u + \frac{P}{mc}} S(u)\right)}{1 - u \cdot \frac{P}{mc}} + u^\flat \wedge (\ldots).
\end{equation}
Thus, we have the following:
\begin{lemqed}
	The centre of spin condition is equivalent to
	\begin{equation} \label{eq:centre_of_spin_equiv_1}
		u^\flat \wedge P^\flat \wedge \left(\iota_{u + \frac{P}{mc}} S(u)\right) = 0. \qedhere
	\end{equation}
\end{lemqed}
Since the Newton--Wigner position observable is defined 
by the \acronym{SSC} $\iota_{u + \frac{P}{mc}} S(u) = 0$, the 
preceding result immediately implies
\begin{thmqed}
	The Newton--Wigner position observable $\chi^\mathrm{NW}$ 
	is a centre of spin.
\end{thmqed}

Further rewriting the centre of spin condition, we see 
that \eqref{eq:centre_of_spin_equiv_1} is equivalent to
\begin{equation}
	\iota_{u + \frac{P}{mc}} S(u) \in \mathrm{span} \{u^\flat, P^\flat\}.
\end{equation}
Due to the antisymmetry of $S(u)$, this is equivalent to
\begin{equation} \label{eq:centre_of_spin_equiv_2}
	\iota_{u + \frac{P}{mc}} S(u) \in \mathrm{span} \left\{u^\flat - \frac{P^\flat}{mc} \right\}.
\end{equation}
Using this, we can show:
\begin{lem} \label{lem:centre_of_spin_diff_NW}
	$\chi$ is a centre of spin $\iff$ 
	$\chi(u, \tau) - \chi^\mathrm{NW}(u, \tau) \in \mathrm{span}\{u, P\}$.

	\begin{proof}
		Writing $D := \chi(u, \tau) - \chi^\mathrm{NW}(u,\tau)$, 
		the spin tensor of $\chi$ may be expressed as $S(u) = S^\mathrm{NW}(u) - D^\flat \wedge P^\flat$. 
		Thus, \eqref{eq:centre_of_spin_equiv_2} is equivalent to
		\begin{equation} \label{eq:centre_of_spin_equiv_3}
			\iota_{u + \frac{P}{mc}} (D^\flat \wedge P^\flat) \in \mathrm{span} \left\{u^\flat - \frac{P^\flat}{mc} \right\}.
		\end{equation}
		We have $\iota_{u + \frac{P}{mc}} (D^\flat \wedge P^\flat) = (D\cdot u + \frac{D\cdot P}{mc}) P^\flat - (P\cdot u - mc) D^\flat$, 
		and thus \eqref{eq:centre_of_spin_equiv_3} implies that 
		for all $v \in u^\perp \cap P^\perp$, we have
		\begin{equation}
			v\cdot D = 0.
		\end{equation}
		But this means $D \in (u^\perp \cap P^\perp)^\perp = \mathrm{span}\{u, P\}$.

		Conversely, if $D \in \mathrm{span}\{u, P\}$, we 
		have $\iota_{u + \frac{P}{mc}} (D^\flat \wedge P^\flat) \in \mathrm{span} \left\{ \iota_{u + \frac{P}{mc}} (u^\flat \wedge P^\flat) \right\}$. 
		But now
		\begin{equation}
			\iota_{u + \frac{P}{mc}} (u^\flat \wedge P^\flat) = \left(-1 + u\cdot \frac{P}{mc}\right) P^\flat - (u\cdot P - mc) u^\flat = (mc - u\cdot P) \left(u^\flat - \frac{P^\flat}{mc}\right),
		\end{equation}
		and thus we have \eqref{eq:centre_of_spin_equiv_3}, 
		i.e.\ $\chi$ is a centre of spin.
	\end{proof}
\end{lem}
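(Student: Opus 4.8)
The plan is to reduce the whole statement to the already-established reformulation \eqref{eq:centre_of_spin_equiv_2} of the centre of spin condition and to work entirely with the single vector $D := \chi(u,\tau) - \chi^\mathrm{NW}(u,\tau)$, which is legitimate because (by the footnote to the definition of centre of spin) the spin tensor $S(u)$ does not depend on $\tau$. The key elementary fact is that moving the reference point by $D$ changes the spin tensor by a pure bivector: $S(u) = S^\mathrm{NW}(u) - D^\flat \wedge P^\flat$, where $S^\mathrm{NW}(u)$ is the spin tensor with respect to $\chi^\mathrm{NW}$. Since the Newton--Wigner position is defined by the spin supplementary condition $\iota_{u + P/(mc)} S^\mathrm{NW}(u) = 0$, this gives $\iota_{u + P/(mc)} S(u) = - \iota_{u + P/(mc)} (D^\flat \wedge P^\flat)$, so by \eqref{eq:centre_of_spin_equiv_2} the lemma is equivalent to the purely linear-algebraic claim that $\iota_{u + P/(mc)} (D^\flat \wedge P^\flat) \in \mathrm{span}\{u^\flat - P^\flat/(mc)\}$ if and only if $D \in \mathrm{span}\{u, P\}$.

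For the forward implication I would first expand the contraction with $\iota_w(\alpha \wedge \beta) = \alpha(w)\,\beta - \beta(w)\,\alpha$ and use $P \cdot P = -m^2 c^2$ to get $\iota_{u+P/(mc)}(D^\flat \wedge P^\flat) = (D \cdot u + \frac{D \cdot P}{mc}) P^\flat - (P \cdot u - mc) D^\flat$. Pairing this identity with an arbitrary $v \in u^\perp \cap P^\perp$ annihilates both the $P^\flat$-term on the left and the entire right-hand span, leaving $(mc - P \cdot u)(v \cdot D) = 0$. Since $u$ and $P$ are future-directed timelike one has $P \cdot u < 0$, so the prefactor is strictly positive and hence $v \cdot D = 0$ for every $v \in u^\perp \cap P^\perp$. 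As $\mathrm{span}\{u,P\}$ contains the timelike vector $P$ it is non-degenerate, so $(u^\perp \cap P^\perp)^\perp = \mathrm{span}\{u,P\}$, and the conclusion $D \in \mathrm{span}\{u,P\}$ follows.

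For the converse I would observe that $D \in \mathrm{span}\{u,P\}$ makes $D^\flat \wedge P^\flat$ a scalar multiple of $u^\flat \wedge P^\flat$ (the $P^\flat \wedge P^\flat$ part vanishing), so it suffices to compute $\iota_{u+P/(mc)}(u^\flat \wedge P^\flat)$. Using $u \cdot u = -1$ and $P \cdot P = -m^2 c^2$ and collecting terms, this equals $(mc - u \cdot P)(u^\flat - P^\flat/(mc))$, which manifestly lies in the required span; hence \eqref{eq:centre_of_spin_equiv_2} holds and $\chi$ is a centre of spin.

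I do not expect a real obstacle here: once \eqref{eq:centre_of_spin_equiv_2} is available the argument is a short computation with two-forms in four dimensions, and the explicit coordinate expression \eqref{eq:NW_pos} of $\chi^\mathrm{NW}$ is never needed. The only points deserving care are the non-vanishing of the factor $mc - P \cdot u$ --~which is precisely where future-directedness of $u$ and $P$ is used~-- and the non-degeneracy of $\mathrm{span}\{u,P\}$ that licenses the double-orthogonal-complement step; the degenerate case $u \parallel P$, where $\mathrm{span}\{u,P\}$ collapses to $\mathrm{span}\{P\}$, is handled by the same computation. The genuinely useful idea, rather than any hard step, is to package the condition as a contraction of the bivector $D^\flat \wedge P^\flat$ instead of manipulating components.
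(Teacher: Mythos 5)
Your proposal is correct and follows essentially the same route as the paper: the same reduction via $S(u) = S^\mathrm{NW}(u) - D^\flat \wedge P^\flat$ to the linear-algebraic condition \eqref{eq:centre_of_spin_equiv_3}, the same expansion of the interior product, the same pairing with $v \in u^\perp \cap P^\perp$ and double-orthogonal-complement step, and the same computation of $\iota_{u + P/(mc)}(u^\flat \wedge P^\flat)$ for the converse. Your added remarks on the strict positivity of $mc - P\cdot u$ (from future-directedness) and the non-degeneracy of $\mathrm{span}\{u,P\}$ are genuinely needed and are left implicit in the paper, so you have if anything been slightly more careful.
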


We can now prove the main result of this section.
\begin{thm} \label{thm:NW_SSC_centre_of_spin}
	The Newton--Wigner position observable $\chi^\mathrm{NW}$ 
	is the only centre of spin position observable that is 
	continuous and defined by an \acronym{SSC}.
	
	\begin{proof}
		Let $\chi$ be an \acronym{SSC} position observable. Writing 
		$D(u,\tau) := \chi(u, \tau) - \chi^\mathrm{NW}(u,\tau)$, 
		we know by the M\o ller disc theorem (theorem 
		\ref{thm:Moller_disc}) that the projection of 
		$D(u, \tau)$ orthogonal to $P$ is orthogonal to the 
		Pauli--Lubański vector $W$. Thus, since $P$ itself 
		is orthogonal to $W$, we have
		\begin{equation} \label{eq:NW_SSC_unique_proof_orth}
			D(u, \tau) \perp W
		\end{equation}
		for any $(u, \tau) \in \mathsf{SpHP}$. In addition, 
		we know that $D(u, \tau) \perp u$; in particular, 
		$D(u, \tau)$ is spacelike for any 
		$(u, \tau) \in \mathsf{SpHP}$.

		Now suppose that $\chi$ is a centre of spin. By lemma 
		\ref{lem:centre_of_spin_diff_NW} this means that
		\begin{equation}
			D(u, \tau) \in \mathrm{span}\{u, P\}
		\end{equation}
		for all $(u, \tau) \in \mathsf{SpHP}$. Using 
		\eqref{eq:NW_SSC_unique_proof_orth} and $P \perp W$, 
		we conclude that
		\begin{equation}
			\text{for all} \; u \; \text{with} \; u\cdot W \ne 0 : D(u, \tau) \in \mathrm{span}\{P\}.
		\end{equation}
		Since $D(u, \tau)$ has to be spacelike, we thus have shown
		\begin{equation}
			D(u, \tau) = 0 \; \text{for all} \; u \; \text{with} \; u\cdot W \ne 0.
		\end{equation}
		If $W \ne 0$, the set of future-directed unit timelike 
		$u$ satisfying $u \cdot W \ne 0$ is dense in the 
		hyperboloid of all possible $u$, and thus assuming 
		continuity of $\chi$, we conclude that $D(u, \tau) = 0$ 
		for all $u$, finishing the proof.

		If $W = 0$, then by the M\o ller disc theorem all 
		\acronym{SSC} worldlines coincide, and thus we also have 
		$\chi = \chi^\mathrm{NW}$.
	\end{proof}
\end{thm}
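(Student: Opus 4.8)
The plan is to combine the three preparatory results already established in this section --- lemma \ref{lem:centre_of_spin_diff_NW}, the M\o ller disc theorem (theorem \ref{thm:Moller_disc}), and the explicit form \eqref{eq:NW_pos} via the \acronym{SSC} construction --- to pin down the difference vector $D(u,\tau) := \chi(u,\tau) - \chi^\mathrm{NW}(u,\tau)$ to be identically zero. The strategy is to accumulate enough orthogonality constraints on $D$ that, for generic $u$, the only remaining possibility is $D(u,\tau) \in \mathrm{span}\{P\}$; then a spacelikeness argument forces $D = 0$ on a dense set of $u$, and continuity of $\chi$ extends this to all of $\mathsf{SpHP}$.

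First I would let $\chi$ be an arbitrary \acronym{SSC} position observable that is a centre of spin. From the M\o ller disc theorem, every \acronym{SSC} worldline for a given state meets a hyperplane orthogonal to $P$ inside the M\o ller disc, which lies in the plane orthogonal to the Pauli--Lubań{}ski vector $W$; the centre of the disc is the centre of inertia, which is the $f = P$ \acronym{SSC} position. Since $\chi^\mathrm{NW}$ is also defined by an \acronym{SSC}, both $\chi(u,\tau)$ and $\chi^\mathrm{NW}(u,\tau)$ lie on \acronym{SSC} worldlines, so the component of their difference orthogonal to $P$ is a difference of two points in the M\o ller disc --- hence orthogonal to $W$. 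Because $P$ itself is orthogonal to $W$ (a standard identity for the Pauli--Lubań{}ski vector), this yields $D(u,\tau) \perp W$ outright, with no need to separate out the $P$-component. Simultaneously, both $\chi(u,\tau)$ and $\chi^\mathrm{NW}(u,\tau)$ lie on the hyperplane $\Sigma = (u,\tau)$ by \eqref{eq:pos_obs_image}, so $D(u,\tau) \perp u$; in particular $D(u,\tau)$, lying in a spacelike hyperplane, is spacelike (or zero).

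Next I would invoke lemma \ref{lem:centre_of_spin_diff_NW}: being a centre of spin is equivalent to $D(u,\tau) \in \mathrm{span}\{u,P\}$ for all $(u,\tau)$. Intersecting this with the constraint $D(u,\tau) \perp W$ and using $P \perp W$: if $u$ is chosen so that $u \cdot W \ne 0$, then the only vectors in $\mathrm{span}\{u,P\}$ orthogonal to $W$ are the multiples of $P$, so $D(u,\tau) \in \mathrm{span}\{P\}$. But $D(u,\tau)$ is spacelike (or zero) while $P$ is timelike, so a nonzero multiple of $P$ is impossible; hence $D(u,\tau) = 0$ whenever $u \cdot W \ne 0$. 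The set of future-directed unit timelike $u$ with $u \cdot W \ne 0$ is dense in the mass-shell hyperboloid of admissible $u$ (its complement being the intersection with a hyperplane), so if $W \ne 0$, continuity of $\chi$ --- and hence of $D$ in the $u$-argument --- gives $D \equiv 0$, i.e.\ $\chi = \chi^\mathrm{NW}$. The remaining degenerate case $W = 0$ is handled directly: by the M\o ller disc theorem the M\o ller radius $R_M = S/(mc) = \sqrt{W^2}/(m^2 c^2)$ vanishes, so all \acronym{SSC} worldlines coincide, forcing $\chi = \chi^\mathrm{NW}$ trivially.

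The main obstacle, such as it is, is bookkeeping rather than conceptual: one must be careful that the ``difference of two points in the M\o ller disc is orthogonal to $W$'' step uses the disc lying in a $P$-orthogonal hyperplane, whereas the positions $\chi(u,\tau)$ themselves live on the $u$-hyperplane, so one should either argue on a common $P$-orthogonal slice (noting that moving along an \acronym{SSC} worldline changes a point only by a multiple of $P$, which is already in $\mathrm{span}\{u,P\}$ and orthogonal to $W$) or phrase everything modulo $\mathrm{span}\{P\}$ from the start. I expect the cleanest route is to observe from the outset that $D(u,\tau)$ only ever needs to be controlled modulo $\mathrm{span}\{P\}$ until the very last spacelikeness step, which then rules out the $P$-direction for good. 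Everything else is an immediate assembly of the cited lemmas.
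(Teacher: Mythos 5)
Your proof is correct and follows essentially the same route as the paper's: apply the M\o ller disc theorem to get $D(u,\tau)\perp W$, combine with the centre-of-spin characterisation $D\in\mathrm{span}\{u,P\}$ from lemma \ref{lem:centre_of_spin_diff_NW} to force $D\in\mathrm{span}\{P\}$ when $u\cdot W\neq 0$, then use spacelikeness, density, and continuity, with the $W=0$ case handled by the collapse of the M\o ller radius. Your closing remark about the disc living on a $P$-orthogonal slice while the positions live on the $u$-hyperplane is a valid observation and is resolved exactly as you say (moving along the worldline changes the point only by multiples of $P$, which is orthogonal to $W$); the paper glosses over this by speaking directly of the $P$-orthogonal projection of $D$.
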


Looking back into the various steps of the proofs 
it is interesting to note how the 
`extrinsic--intrinsic' combination $u+P/(mc)$ 
for $f$ came about. It entered through 
the unique boost transformation 
\eqref{eq:linking_boost_explicit} that 
was needed in order to transform an intrinsic 
quantity to an externally specified rest frame. 
The intrinsic quantity is the spin vector 
in the momentum rest frame, i.e.\ the
Pauli--Lubański vector, which is a function 
of $\Gamma$ only, and the externally specified 
frame is defined by $u$, which is independent 
of $\Gamma$ and determined through the choice of 
$\Sigma\in\mathsf{SpHP}$.

\section[A Newton--Wigner theorem for classical elementary systems]{A Newton--Wigner theorem for classical\newline elementary systems}
\label{sec:NW_theorem}

For elementary Poincaré-invariant quantum systems 
--~i.e.\ quantum systems with an irreducible unitary 
action of the Poincaré group~-- the Newton--Wigner 
position operator is uniquely characterised by 
transforming `as a position should' under translations, 
rotations and time reversal, having commuting components 
and satisfying a regularity condition. This has been 
well-known since the original publication by Newton 
and Wigner \cite{Newton.Wigner:1949}. As advertised in 
the introduction, we shall now prove an analogous 
statement for classical systems.

For the whole of this section, we fix a future-directed 
unit timelike vector $u$ defining a Lorentz frame, 
and an adapted positively oriented orthonormal basis 
$\{u = \E_0, \dots, \E_3\}$. Unless otherwise stated, 
phrases such as `temporal', `spatial' and the like 
refer to the preferred time direction given by $u$. 
We will raise and lower spatial indices by the Euclidean 
metric $\delta$ induced by the Minkowski metric $\eta$ 
on the orthogonal complement of $u$; the components of 
$\delta$ in the adapted basis are simply given by the 
usual Kronecker delta. We denote the spatial volume form 
by $^{(3)}\varepsilon = \iota_u\varepsilon$.

Similar to the notation introduced in chapter \ref{chap:geometric_structures}, 
we will employ a `three-vector' notation for spatial 
vectors, for example writing $\vect A = (A^a)$. We then use 
the usual three-vector notations for the Euclidean scalar 
product $\vect A \cdot \vect B = A_a B^a$, the Euclidean 
norm $|\vect A| := \sqrt{\vect A^2}$ and the vector product 
$(\vect A \times \vect B)_a = {^{(3)}\varepsilon_{abc}} A^b B^c$.

\subsection{Classical elementary systems}

In the quantum case, an elementary system is given by a 
Hilbert space with an \emph{irreducible} unitary action 
of the Poincaré group~-- i.e.\ each state of the system 
is connected to any other by a Poincaré transformation. 
In direct analogy, we define the notion of a classical 
elementary system:
\begin{defn}
	A \emph{classical elementary system} is a classical 
	Poincaré-invariant system $(\Gamma, \omega, \Phi)$, 
	where $\Phi$ is a transitive action of the proper 
	orthochronous Poincaré group $\mathcal P_+^\uparrow$.
\end{defn}
Note the we only assumed an action of the identity 
connected component of the Poincaré group, whereas 
Arens in \cite{Arens:1971b} considered the whole 
Poincaré group. In the classical context, simple 
transitivity replaces irreducibility in the quantum 
case.

Arens classified the classical elementary systems\footnote
	{In fact, Arens classified what he called 
	\emph{one-particle} elementary systems (systems 
	that admit a map from $\Gamma$ to the set of lines 
	in Minkowski space which is equivariant with respect 
	to a certain subgroup of $\mathcal P_+^\uparrow$). 
	However, he also proved that this `one-particle' 
	condition is fulfilled for an elementary system 
	if and only if the four-momentum is not zero.}
in \cite{Arens:1971b}; the classification proceeds 
in terms of the system's four-momentum and 
Pauli--Lubański vector (similar to the Wigner
classification in the quantum case \cite{Wigner:1939}). We are only 
interested in the case of timelike four-momentum. 
For this case, the phase space can be explicitly 
constructed as follows:

\begin{thm}[Phase space of a classical elementary system] \label{thm:phase_space}
	Any classical elementary system with timelike 
	four-momentum is equivalent (in the sense of a 
	symplectic isomorphism respecting the action of 
	$\mathcal P_+^\uparrow$) to precisely one of the 
	following two cases:
	\begin{enumerate}[label=(\roman*)]
		\item (Spin zero, one parameter $m \in \mathbb R_+$)
			\begin{itemize}
				\item Phase space $\Gamma = T^*\mathbb R^3$ with 
					coordinates $(\vect x, \vect p)$, symplectic 
					form $\omega = \D x^a \wedge \D p_a$
				\item Poincaré generators (i.e.\ component functions of
					the momentum map):
					\begin{subequations}
					\begin{align}
						\text{spatial translations} \quad P_a &= p_a \\
						\text{time translation} \quad P_0 &= -\sqrt{m^2 c^2 + \vect p^2} \\
						\text{rotations} \quad J_{ab} &= x_a p_b - x_b p_a \\
						\text{boosts} \quad J_{a0} &= P_0 x_a
					\end{align}
					\end{subequations}
			\end{itemize}

		\item (Spin non-zero, two parameters $m, S \in \mathbb R_+$)
			\begin{itemize}
				\item Phase space $\Gamma = T^*\mathbb R^3 \times \mathsf S^2$ 
					with coordinates $(\vect x, \vect p)$ for $T^*\mathbb R^3$, 
					symplectic form $\omega = \D x^a \wedge \D p_a + S \cdot \D\Omega^2$ 
					where $\D\Omega^2$ is the standard volume form 
					on $\mathsf S^2$. We denote the phase space function 
					projecting onto the second factor $\mathsf S^2$ by 
					$\hatvect s \colon \Gamma \to \mathsf S^2 \subset\mathbb R^3$. 
					The \emph{spin vector} observable is the 
					$\mathsf S^2_S$-valued phase space function 
					$\vect s := S \cdot \hatvect s$; its components satisfy 
					the Poisson bracket relations
					\begin{equation}
						\{s_a, s_b\} = {^{(3)}\varepsilon_{abc}} s^c.
					\end{equation}
					Here $\mathsf S^2_S \subset \mathbb R^3$ denotes 
					the 2-sphere of radius $S$ in $\mathbb R^3$.
				\item Poincaré generators (i.e.\ component functions of
					the momentum map):
					\begin{subequations}
					\begin{align}
						\text{spatial translations} \quad P_a &= p_a \\
						\text{time translation} \quad P_0 &= -\sqrt{m^2 c^2 + \vect p^2} \\
						\text{rotations} \quad J_{ab} &= x_a p_b - x_b p_a + {^{(3)}\varepsilon_{abc}} s^c \\
						\text{boosts} \quad J_{a0} &= P_0 x_a - \frac{(\vect p \times \vect s)_a}{mc - P_0}
					\end{align}
					\end{subequations}
			\end{itemize}
	\end{enumerate}
\end{thm}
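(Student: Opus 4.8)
The statement is a classification theorem, so the natural strategy is the orbit method / coadjoint-orbit approach: since a classical elementary system with timelike four-momentum is a transitive symplectic $\mathcal P_+^\uparrow$-space, it must be a covering of a coadjoint orbit of $\mathcal P_+^\uparrow$ (in fact, since the Poincaré algebra has vanishing second cohomology, the momentum map is equivariant and $\Gamma$ maps onto a single coadjoint orbit). The plan is therefore: first, identify the relevant coadjoint orbits; second, show that for timelike $P$ the orbit is simply connected so that $\Gamma$ \emph{is} (symplectomorphically, equivariantly) the orbit itself; third, construct an explicit symplectomorphism from the orbit to the model space $T^*\mathbb R^3$ (spin zero) or $T^*\mathbb R^3 \times \mathsf S^2_S$ (spin nonzero) and read off the stated formulae for the generators.

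\textbf{Key steps.} First I would invoke the general structure theory: a transitive Hamiltonian $G$-space with equivariant momentum map $f$ is, up to covering, a coadjoint orbit $\mathcal O \subset \mathfrak p^*$, with $f$ the (covering composed with) inclusion. The orbit is determined by the Casimirs, which for the Poincaré group are $P^2 = P_\mu P^\mu$ and $W^2 = W_\mu W^\mu$; restricting to timelike future-directed $P$ with $P^2 = -m^2c^2$, the spacelike Pauli--Lubański vector satisfies $W^2 = m^2 c^2 S^2 \ge 0$, giving the two cases $S = 0$ and $S > 0$. Second, I would compute the stabiliser of a base point: for $S=0$ it is (the double cover of) $\mathsf{SO}(3)$ extended by time translations along $P$ in a way that makes the orbit diffeomorphic to $\mathbb R^6$; for $S>0$ the little group is again $\mathsf{SO}(3)$ but now acting nontrivially on an $\mathsf S^2$ factor, so the orbit is diffeomorphic to $\mathbb R^6 \times \mathsf S^2$. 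In both cases the orbit is simply connected, so no nontrivial covers exist and $\Gamma \cong \mathcal O$ as symplectic $\mathcal P_+^\uparrow$-spaces. Third, I would introduce the explicit coordinates: take $\vect p = (P_a)$ as the spatial momentum, and define $\vect x$ via the boost generators so that $J_{a0} = P_0 x_a$ (spin zero) or $J_{a0} = P_0 x_a - (\vect p \times \vect s)_a/(mc - P_0)$ (spin nonzero) — this last shift is exactly the Newton--Wigner/centre-of-inertia split, and $\vect x$ is recognised as the Newton--Wigner position at $\tau = 0$ in the frame $u = \E_0$. For $S>0$, $\hatvect s := \vect W/(mcS)$ restricted to $P$-rest-frame data gives the $\mathsf S^2$-factor. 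Then one checks directly that $\omega$ pulls back to $\D x^a \wedge \D p_a$ (plus $S\,\D\Omega^2$), equivalently that the listed generators satisfy the Poisson relations \eqref{eq:Poinc_gen}, and that the symplectic form is the Kirillov--Kostant--Souriau form on $\mathcal O$.

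\textbf{Main obstacle.} The genuinely delicate point is the covering issue and the verification of \emph{uniqueness} (that the list is exhaustive and that the two cases are inequivalent): one must argue carefully that for timelike four-momentum the coadjoint orbit is simply connected — here the restriction to $\mathcal P_+^\uparrow$ rather than the full Poincaré group matters, as does Arens's observation that the one-particle condition is automatic for nonzero $P$ — and that no discrete quotient or extra discrete parameter can appear. A secondary computational nuisance is checking closure of the Poisson algebra for the $S>0$ generators, in particular that $\{J_{a0}, J_{b0}\} = -J_{ab}$ holds with the $(\vect p \times \vect s)$ correction term, which requires the spin Poisson relations $\{s_a, s_b\} = {^{(3)}\varepsilon_{abc}} s^c$ together with the constraint $|\vect s| = S$; this is routine but sign-sensitive. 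Since the theorem is attributed to Arens \cite{Arens:1971b}, I would present the construction and defer the full classification argument to that reference, focusing the proof on producing the explicit model phase spaces and confirming the generator formulae.
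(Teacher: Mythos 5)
Your proposal takes essentially the same route as the paper, which itself gives no self-contained proof: the paper cites Arens's classification \cite{Arens:1971b} (done via coadjoint orbits of $\mathcal P_+^\uparrow$), remarks that the stated model phase spaces can be checked to be transitive $\mathcal P_+^\uparrow$-spaces, and then invokes Arens's uniqueness result to place them as representatives of their classes. Your more detailed outline of the orbit method (Casimirs $P^2$ and $W^2$, stabilisers, simple connectivity, explicit coordinates) fleshes out what Arens actually does, but, as you yourself note, ultimately defers the classification to the same reference and verifies the same two things the paper relies on — transitivity of the model action and closure of the generator Poisson algebra.
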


Note that in fact the explicit construction of the systems 
in \cite{Arens:1971b} as co-adjoint orbits of $\mathcal P_+^\uparrow$ 
is quite different in appearance to the forms given above. 
However, one can show that the above systems are indeed 
elementary systems (i.e.\ that the action of $\mathcal P_+^\uparrow$ 
is transitive), and thus due to Arens' uniqueness result 
they are possible representatives of their respective 
classes. We will use the forms given above, which were 
anticipated by Bacry in \cite{Bacry:1967}, since they will 
be easier to explicitly work with. To unify notation, we 
let $S = 0, \vect s := 0$ in the case of zero-spin systems. 
Furthermore, we introduce the open subset of phase space 
$\Gamma^* := \Gamma \setminus \{|\vect P| = 0\}$ and the 
$\mathsf S^2$-valued function $\hatvect P := \frac{\vect P}{|\vect P|}$ 
on $\Gamma^*$.\looseness-1
\enlargethispage{\baselineskip}

Using the explicit form of the systems given in theorem 
\ref{thm:phase_space}, one directly checks:
\begin{lemqed} \label{lem:mom_spin_CIV}
	For a classical elementary system with timelike 
	four-momentum, the functions $P_a, \hatvect P \cdot \vect s$ 
	(or just the $P_a$ in the case of zero spin) form a 
	complete involutive set on $\Gamma^*$ (or the whole of 
	$\Gamma$ in the case of zero spin).
\end{lemqed}

The behaviour of the momentum and spin vectors under translations 
and rotations is also easily obtained:
\begin{lem} \label{lem:mom_spin_trafo}
	For a classical elementary system with timelike 
	four-momentum, $\vect P$ and $\vect s$ are invariant under 
	translations and `transform as vectors' under spatial 
	rotations, i.e.\ we have
	\begin{equation}
		\{P_a, V_b\} = 0, \quad 
		\{J_{ab}, V_c\} = \delta_{ac} V_b - \delta_{bc} V_a 
		\quad \text{for}\quad 
		\vect V = \vect P, \vect s.
	\end{equation}
	\begin{proof}
		For $\vect P$, these are part of the Poincaré algebra relations and thus true by definition. For $\vect s$, they are easily confirmed using the explicit form of the Poincaré generators.
	\end{proof}
\end{lem}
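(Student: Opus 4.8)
\textbf{Proof plan for Lemma \ref{lem:mom_spin_trafo}.}
The plan is to verify the stated Poisson bracket relations directly from the explicit forms of the Poincaré generators given in Theorem \ref{thm:phase_space}, treating the spin-zero and spin-nonzero cases together with the convention $S = 0$, $\vect s := 0$. For the momentum vector $\vect P = (p_a)$ there is nothing new to do: the relation $\{P_a, P_b\} = 0$ is precisely \eqref{eq:Poinc_gen-a} (in its momentum-map/Poisson form), and $\{J_{ab}, P_c\} = \delta_{ac} P_b - \delta_{bc} P_a$ is the spatial part of \eqref{eq:Poinc_gen-b}, since $\eta_{ac} = \delta_{ac}$ for spatial indices. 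So the momentum case follows immediately from the fact that the $P_\mu, J_{\mu\nu}$ are the component functions of a momentum map and hence obey the Poisson-bracket version of the Poincaré algebra.

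For the spin vector $\vect s$ in the nonzero-spin case, I would use the symplectic form $\omega = \D x^a \wedge \D p_a + S \cdot \D\Omega^2$ on $\Gamma = T^*\mathbb R^3 \times \mathsf S^2$. First I would note that $\vect s = S \hatvect s$ depends only on the $\mathsf S^2$ factor, while $P_a = p_a$ depends only on the cotangent-bundle factor; since the symplectic form is a direct sum over the two factors, their Hamiltonian vector fields live on different factors and Poisson-commute, giving $\{P_a, s_b\} = 0$. For the rotation relation, I would write $J_{ab} = x_a p_b - x_b p_a + {}^{(3)}\varepsilon_{abc} s^c$ and compute $\{J_{ab}, s_c\}$. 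The orbital part $x_a p_b - x_b p_a$ Poisson-commutes with $s_c$ (again, different factors), so only the spin part contributes: $\{J_{ab}, s_c\} = \{{}^{(3)}\varepsilon_{abd} s^d, s_c\} = {}^{(3)}\varepsilon_{abd} \{s^d, s_c\}$. Using the bracket $\{s_a, s_b\} = {}^{(3)}\varepsilon_{abe} s^e$ stated in Theorem \ref{thm:phase_space} (which is itself just the statement that $S\cdot \D\Omega^2$ induces the standard angular-momentum algebra on the sphere of radius $S$), this becomes ${}^{(3)}\varepsilon_{abd} {}^{(3)}\varepsilon^{d}_{\phantom{d}ce} s^e$, and the standard contraction identity ${}^{(3)}\varepsilon_{abd} {}^{(3)}\varepsilon_{dce} = \delta_{ac}\delta_{be} - \delta_{ae}\delta_{bc}$ yields $\delta_{ac} s_b - \delta_{bc} s_a$, as claimed.

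There is no real obstacle here; the only point requiring a little care is bookkeeping with the ${}^{(3)}\varepsilon$ contractions and index raising/lowering by $\delta$, and making sure the sign conventions for ${}^{(3)}\varepsilon$ and for the Hamiltonian vector fields / Poisson bracket (as fixed in the Notation section, following Abraham--Marsden) are applied consistently. I would also remark explicitly that the spin-zero case is covered trivially by the convention $\vect s := 0$, so that the single statement of the lemma holds uniformly. This is exactly the short computation the text signals by ``easily confirmed using the explicit form of the Poincaré generators.''
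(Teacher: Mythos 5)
Your proposal is correct and follows exactly the route the paper indicates: the $\vect P$ relations come directly from the Poincaré algebra in Poisson-bracket form, and the $\vect s$ relations follow by direct computation with the explicit generators from theorem \ref{thm:phase_space}, the factorised symplectic form, and the $\{s_a,s_b\}={^{(3)}\varepsilon_{abc}}s^c$ bracket. You have simply spelt out the short computation the paper declares to be "easily confirmed", and your index manipulations and treatment of the spin-zero case via $\vect s := 0$ are all in order.
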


For our considerations, we will need to know how the 
\emph{time reversal} operation with respect to the 
hyperplane in $M$ through the origin $o \in M$ and 
orthogonal to $u = \E_0$ is implemented on phase 
space. In order to get this right, we recall that 
the incorporation of time reversal in the context 
of special relativity corresponds, by its very 
definition, to a particular upward $\mathbb{Z}_2$ 
extension\footnote
	{Here we are using the terminology of 
	\cite[p.\,xx]{Conway.EtAl:AOFG}, according to which 
	a group $G$ with normal subgroup $A$ and quotient 
	$G/A \cong B$ is either called an 
	\emph{upward extension} of $A$ by $B$ or a 
	\emph{downward extension} of $B$ by $A$.}
of $\mathcal{P}_+^\uparrow$, i.e.\ the formation 
of a new group called 
$\mathcal{P}_+^\uparrow\cup\mathcal{P}_-^\downarrow$ 
of which $\mathcal{P}_+^\uparrow$ is a normal 
subgroup with $(\mathcal{P}_+^\uparrow\cup\mathcal{P}_-^\downarrow)/\mathcal{P}_+^\uparrow\cong\mathbb{Z}_2$. 
It is the particular nature of this extension that 
eventually defines what is meant by 
\emph{time reversal}: it consists in the requirement 
that the outer automorphism induced by the only 
non-trivial element of $\mathbb{Z}_2$ on the Lie 
algebra $\mathfrak{p}$ of $\mathcal{P}_+^\uparrow$ 
shall be the one which reverses the sign of spatial 
translations and rotations and leaves invariant 
boosts and time translations; see, e.g., 
\cite{Bacry.LevyLeblond:1968}. Implementing time 
reversal on phase space then means to extend the 
action of $\mathcal{P}_+^\uparrow$ to an action 
of $\mathcal{P}_+^\uparrow\cup\mathcal{P}_-^\downarrow$.

Now, according to this scheme, we can immediately 
write down how our particular time reversal transformation 
on phase space, $T_u\colon\Gamma\to\Gamma$, acts on the 
Poincaré generators, i.e.\ the component functions of the 
momentum map: 
\begin{equation}
	P_a \circ T_u = - P_a \; , \quad 
	J_{ab} \circ T_u = - J_{ab} \; , \quad 
	J_{a0} \circ T_u = J_{a0} \; , \quad 
	P_0 \circ T_u = P_0
\end{equation}
From this the well-known result follows 
that time reversal (as defined above) necessarily 
corresponds to an \emph{anti}-symplectomorphism 
(inverting the sign of the symplectic form). 
Hence, in the process of extending our symplectic 
action of $\mathcal{P}_+^\uparrow$ on $\Gamma$ to an action 
of $\mathcal{P}_+^\uparrow\cup\mathcal{P}_-^\downarrow$ 
satisfying the time reversal criterion above, we had to 
generalise to possibly \emph{anti}-symplectomorphic 
actions. This is akin to the situation in quantum 
mechanics, where, as is well-known, time reversal 
necessarily corresponds to an \emph{anti}-unitary 
transformation.

It is now clear how time reversal is implemented 
in the case at hand:
\begin{lemqed} \label{lem:time_rev}
	For an elementary system as in theorem
	\ref{thm:phase_space}, time reversal with respect 
	to the hyperplane through the origin and orthogonal 
	to $u = \E_0$ is given by
	\begin{equation}
			T_u \colon (\vect x, \vect p, \hatvect s) \mapsto (\vect x, - \vect p, - \hatvect s). \qedhere
		\end{equation}
\end{lemqed}
Unless otherwise stated, in the following we will 
always mean time reversal with respect to the 
hyperplane through the origin and orthogonal to 
$u = \E_0$ when saying `time reversal'.
\enlargethispage{\baselineskip}

\subsection{Statement and interpretation of the Newton--Wigner theorem}

The classical Newton--Wigner theorem we are going to 
prove can be formulated very similar to the quantum case:
\begin{thm}[Classical Newton--Wigner theorem] \label{thm:NW}
	For a classical elementary system with timelike 
	four-momentum, there is a unique $\mathbb R^3$-valued 
	phase space function $\vect X$ that
	\begin{enumerate}[label=(\roman*)]
		\item is $C^1$,
		\item has Poisson-commuting components,
		\item \label{ass:NW_transl} satisfies the canonical 
			Poisson relations $\{X^a, P_b\} = \delta^a_b$ 
			with the generators of spatial translations with 
			respect to $u = \E_0$,
		\item \label{ass:NW_rot} transforms `as a (position) 
			vector' under spatial rotations with respect to 
			$u = \E_0$, i.e.\ satisfies $\{J_{ab}, X^c\} = \delta_a^c X_b - \delta_b^c X_a$, and
		\item is invariant under time reversal with respect to 
			the hyperplane through the origin and orthogonal to 
			$u = \E_0$, i.e.\ satisfies $\vect X \circ T_u = \vect X$.
	\end{enumerate}

	In terms of the Poincaré generators, it is given by
	\begin{equation} \label{eq:NW_pos_thm}
		X_a = -\frac{J_{a0}}{mc} - \frac{J_{ab} P^b}{mc (mc - P_0)} - \frac{J_{b0} P^b}{P_0 mc (mc - P_0)} P_a \; ,
	\end{equation}
	where $m = \sqrt{P_0^2 - \vect P^2}/c$ is the mass of the system.
\end{thm}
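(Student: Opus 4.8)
The plan is to prove existence by direct verification and uniqueness by exploiting the completeness of the momentum/spin variables together with the regularity assumption. For existence, I would take the formula \eqref{eq:NW_pos_thm} as an ansatz, or better yet identify it with the Newton--Wigner position observable $\chi^\mathrm{NW}$ from \eqref{eq:NW_pos} specialised to the frame $u = \E_0$ at $\tau = 0$, and then check properties (i)--(v) one by one. Using the explicit forms of the Poincaré generators from theorem \ref{thm:phase_space}, the expression \eqref{eq:NW_pos_thm} reduces in both the spin-zero and spin-nonzero cases to something manifestly $C^1$ (indeed real-analytic) on all of phase space, since $mc - P_0 = mc + \sqrt{m^2c^2 + \vect p^2}$ is bounded away from zero. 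Property (v) is immediate from lemma \ref{lem:time_rev}: under $T_u$ we have $J_{a0} \mapsto J_{a0}$, $J_{ab} \mapsto -J_{ab}$, $P_a \mapsto -P_a$, $P_0 \mapsto P_0$, so each of the three terms in \eqref{eq:NW_pos_thm} is invariant. Property (iv) follows from lemma \ref{lem:mom_spin_trafo} together with the Poincaré algebra relations for $\{J_{ab}, J_{c0}\}$. Properties (ii) and (iii) require a short computation with the Poisson brackets: substituting the generators, \eqref{eq:NW_pos_thm} should simplify dramatically --- I expect it to collapse to essentially $X^a = x^a$ plus spin-dependent terms that Poisson-commute appropriately, or at least to a canonical position conjugate to $\vect p$ with vanishing mutual brackets. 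This is the routine part.

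For uniqueness, the strategy is: suppose $\vect X$ and $\vect X'$ both satisfy (i)--(v), and set $\vect Y := \vect X - \vect X'$. Then $\vect Y$ is $C^1$, Poisson-commutes with all $P_b$ by property (iii) (the canonical relations cancel), transforms as a vector under rotations by (iv), and is time-reversal invariant by (v). The key step is to show $\vect Y = 0$. Since $\{Y^a, P_b\} = 0$ and, for nonzero spin, one also wants information tying $\vect Y$ to the spin sector, I would invoke lemma \ref{lem:mom_spin_CIV}: the functions $P_a$ (and $\hatvect P \cdot \vect s$) form a complete involutive set on $\Gamma^*$. A $C^1$ function Poisson-commuting with a complete involutive set of this type must be a function of those variables alone --- here I would use that $\{Y^a, P_b\} = 0$ forces $\vect Y$ to be independent of $\vect x$, so $\vect Y = \vect Y(\vect p, \hatvect s)$ (with no $\hatvect s$-dependence in the spin-zero case). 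Rotational covariance then constrains $\vect Y(\vect p, \hatvect s)$ to be built from the available vectors: $\vect Y = \alpha \vect p + \beta \hatvect s + \gamma (\vect p \times \hatvect s)$ for scalar functions $\alpha, \beta, \gamma$ of the rotational invariants $|\vect p|$, $\vect p \cdot \hatvect s$. Finally, time-reversal invariance (lemma \ref{lem:time_rev}: $\vect p \mapsto -\vect p$, $\hatvect s \mapsto -\hatvect s$) kills the $\vect p$ term and the $\hatvect s$ term (both are odd), while the $\vect p \times \hatvect s$ term is even and survives; to eliminate it I would use the remaining Poisson-commutativity condition $\{Y^a, Y^b\}$-type constraints or, more directly, the fact that $\vect X$ and $\vect X'$ both have Poisson-commuting components, which forces $\{Y^a, X'^b\} + \{X'^a, Y^b\} + \{Y^a, Y^b\} = 0$ and, combined with the known nontrivial brackets $\{X'^a, s^b\}$, pins down $\gamma = 0$. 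One must be slightly careful that $\hatvect P = \vect P/|\vect P|$ is only defined on $\Gamma^*$; by continuity ($C^1$, hence $C^0$) of $\vect Y$ on all of $\Gamma$, vanishing on the dense open set $\Gamma^*$ suffices.

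The main obstacle I anticipate is the elimination of the $\vect p \times \hatvect s$ term in the uniqueness argument: time-reversal invariance alone does not remove it, so one genuinely needs to bring in the Poisson-commutativity of the components of $\vect X$ (property (ii)) in an essential way, and the bookkeeping with the mixed brackets $\{x^a, s^b\} = 0$, $\{s_a, s_b\} = {^{(3)}\varepsilon_{abc}} s^c$, and the nontrivial structure of $J_{a0}$ in the spin-nonzero case (which contains the term $-(\vect p \times \vect s)_a/(mc - P_0)$) must be done carefully. An alternative, perhaps cleaner route for this last step is to observe that the Newton--Wigner position is characterised among $C^1$ functions of $(\vect p, \hatvect s)$-type corrections by demanding that the full set $\{X^a, P_b, \ldots\}$ close into the standard Heisenberg-type relations, i.e.\ that adding $\vect Y$ preserves not just $\{X^a, P_b\} = \delta^a_b$ but also $\{X^a, X^b\} = 0$; spelling out $\{X^a, X^b\} = 0$ for $\vect X = \vect X^\mathrm{NW} + \vect Y$ with $\vect Y = \gamma(|\vect p|, \vect p \cdot \hatvect s)\,(\vect p \times \hatvect s)$ and using $\{X^\mathrm{NW}_a, s_b\}$ (computable from the generators) should yield a first-order PDE for $\gamma$ whose only solution consistent with $C^1$-regularity at $\vect p \times \hatvect s = 0$ is $\gamma \equiv 0$. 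I would also double-check the degenerate directions: when $\vect p \parallel \hatvect s$ or $\vect p = 0$ the vector $\vect p \times \hatvect s$ vanishes, and the regularity assumption (i) is exactly what is needed to conclude global vanishing rather than merely vanishing on a generic stratum. Reconciling \eqref{eq:NW_pos_thm} with the abstract formula \eqref{eq:NW_pos} (evaluated at $u = \E_0$, $\tau = 0$) is a useful consistency check but not logically necessary for the proof.
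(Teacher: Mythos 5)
Your structural plan matches the paper's proof quite closely: existence by direct verification with the explicit generators of theorem \ref{thm:phase_space}, uniqueness by examining a difference vector that is translation-invariant, rotation-covariant, and time-reversal invariant, decomposing it in a basis built from $\hatvect P$ and $\hatvect s$, and using properties (ii) and (i) to kill the remaining coefficient. (The paper takes the difference $\vect d = \vect X - \vect x$ with the explicit coordinate $\vect x$ rather than $\vect X - \vect X'$, which shortens the bookkeeping but is not conceptually different.) Your time-reversal argument eliminating the $\alpha\vect p$ and $\beta\hatvect s$ terms is correct and is in fact slightly more economical than the paper's route, which uses time reversal only to establish $\vect d\cdot\vect P=0$ (lemma \ref{lem:invar_vector_no_mom_comp}) and then kills both remaining coefficients $B,C$ in one stroke via the scaling argument.

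The genuine gap is in the final step, and it is twofold. First, you never actually exhibit the equation that Poisson-commutativity (ii) imposes on $\gamma$. The paper's key move is not to expand $\{X^a,X^b\}=0$ term by term; instead it exploits the \emph{dilation generator} $\vect x\cdot\vect P$: property (ii) together with (iii) gives $\{X^a,\vect X\cdot\vect P\}=X^a$, and since $\vect d\cdot\vect P=0$ one may replace $\vect X\cdot\vect P$ by $\vect x\cdot\vect P$, yielding the clean first-order ODE $\vect d = -|\vect P|\,\partial\vect d/\partial|\vect P|$ (at fixed $\hatvect P,\vect s$). The brute-force expansion of $\{X^a,X^b\}$ that you sketch would have to reproduce this, but at the cost of wrestling with the mixed brackets $\{X^{\mathrm{NW}}_a,s_b\}$ and the non-trivial spin structure in $J_{a0}$; without the $\vect x\cdot\vect P$ trick you have no reason to expect the result to be a tractable single scaling law. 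Second, you misplace where regularity bites: the ODE forces the coefficient to behave like $|\vect P|^{-2}$, so $\vect Y=\gamma(\vect p\times\hatvect s)\sim|\vect P|^{-1}$, which diverges as $|\vect P|\to 0$ along any fixed direction $\hatvect P$ with $\hatvect s$ not parallel to it. The $C^1$ (indeed $C^0$) hypothesis is needed precisely at the single point $|\vect P|=0$ of phase space, \emph{not} on the locus $\vect p\times\hatvect s=0$ (where $\vect Y$ vanishes automatically regardless of $\gamma$, so regularity there constrains nothing). Unless you fix both of these --- produce the scaling equation and apply regularity at $|\vect P|=0$ --- the argument does not close.
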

Before proving the theorem in the next section, 
we will now discuss the interpretation of the 
`position' $\vect X$ it characterises. We want to 
interpret the value of $\vect X$ (in some state 
$\gamma \in \Gamma$) as the spatial components 
of a point in Minkowski spacetime $M$. Since $\vect X$ 
is invariant under time reversal with respect to 
the hyperplane through the origin and orthogonal 
to $u = \E_0$, it can be interpreted as defining a 
point on this hyperplane. Thus, if we want to use 
the phase space function from the Newton--Wigner 
theorem to define a position observable $\chi$ in 
the sense of section \ref{sec:pos_obs}, we should 
set (in our basis adapted to $u$)
\begin{equation} \label{eq:pos_obs_def_by_NW_thm}
	\chi^a(u, \tau = 0) := X^a \; , \quad
	\chi^0(u, \tau = 0) := 0.
\end{equation}
The transformation behaviour of $\vect X$ under spatial 
translations and rotations (i.e.\ assumptions 
\ref{ass:NW_transl} and \ref{ass:NW_rot} of theorem 
\ref{thm:NW}) will then ensure that the position 
observable $\chi$ be covariant (in the sense of definition 
\ref{defn:pos_obs_cov}) regarding these transformations.

In fact, comparing \eqref{eq:NW_pos_thm} to the expression 
\eqref{eq:NW_pos} for the Newton--Wigner position 
observable $\chi^\mathrm{NW}$, we see that we have 
(in our adapted basis)
\begin{equation}
	\chi^{\mathrm{NW},a}(u, \tau = 0) = X^a \; , \quad
	\chi^{\mathrm{NW},0}(u, \tau = 0) = 0\colon
\end{equation}
the position $\vect X$ characterised by theorem 
\ref{thm:NW} is the one given by the Newton--Wigner 
position observable $\chi^\mathrm{NW}$ on the 
hyperplane $(u, 0) \in \mathsf{SpHP}$ (which is a 
covariant position observable due to proposition 
\ref{prop:SSC_pos_obs_cov}). Let us also remark that 
since any position observable's dependence on $\tau$ 
is fixed by \eqref{eq:pos_obs_time_der}, a position 
observable satisfying \eqref{eq:pos_obs_def_by_NW_thm} 
is equal to the Newton--Wigner observable 
$\chi^\mathrm{NW}$ on the whole family of hyperplanes 
$\Sigma \in \mathsf{SpHP}$ with normal vector $u$.

Combining this identification with the observation that 
we can freely choose the origin $o\in M$, we can restate 
the Newton--Wigner theorem in the following form:
\begin{thmqed}[Classical Newton--Wigner theorem, version 2] \label{thm:NW2}
	For a classical elementary system with timelike 
	four-momentum, given any hyperplane $\Sigma = (u, \tau) \in \mathsf{SpHP}$, 
	there is a unique $\Sigma$-valued phase space function 
	$\chi^\mathrm{NW}(\Sigma)$ that
	\begin{subequations}
	\begin{enumerate}[label=(\roman*)]
		\item is $C^1$,
		\item has Poisson-commuting components, i.e.
			\begin{equation}
				\left\{ \chi^{\mathrm{NW},\mu}(\Sigma), \chi^{\mathrm{NW},\nu}(\Sigma) \right\} = 0,
			\end{equation}
		\item satisfies the canonical Poisson relations with 
			the generators of spatial translations with respect 
			to $u$, i.e.
			\begin{equation}
				v_\mu w^\nu \left\{ \chi^{\mathrm{NW},\mu}(\Sigma), P_\nu \right\} = v\cdot w \; \text{for} \; v, w \in u^\perp,
			\end{equation}
		\item transforms `as a position' under spatial rotations 
			with respect to $u$, i.e.\ satisfies
			\begin{equation}
				v^\mu \tilde v^\nu w_\rho \left\{ J_{\mu\nu}, \chi^{\mathrm{NW},\rho}(\Sigma) \right\} = v^\mu \tilde v^\nu w_\rho \left[\delta_\mu^\rho \chi^\mathrm{NW}_\nu(\Sigma) - \delta_\nu^\rho \chi^\mathrm{NW}_\mu(\Sigma)\right] \; \text{for} \; v, \tilde v, w \in u^\perp,
			\end{equation}
			and
		\item is invariant under time reversal with respect 
			to $\Sigma$.
	\end{enumerate}
	\end{subequations}

	These $\chi^\mathrm{NW}(\Sigma)$ together form the 
	Newton--Wigner observable as given by \eqref{eq:NW_pos}.
\end{thmqed}

\subsection{Proof of the Newton--Wigner theorem}

\begin{proof}[Proof of theorem \ref{thm:NW}]
For the whole of the proof, we will work with the 
explicit form of the phase space of our elementary 
system given in theorem \ref{thm:phase_space}. It is 
easily verified that in this explicit form, $\vect x$ 
(i.e.\ the coordinate of the base point in 
$T^*\mathbb R^3$) is a phase space function with the 
properties demanded for $\vect X$. Thus we need to 
prove uniqueness. Our proof will follow the proof of 
the quantum-mechanical Newton--Wigner theorem given 
by Jordan in \cite{Jordan:1980}, some parts of which 
can be applied literally to the classical case.

We will several times need the following.
\begin{lem} \label{lem:invar_function}
	Consider a classical elementary system with timelike 
	four-momentum, with phase space $\Gamma$, and some 
	open subset $\tilde\Gamma$ of 
	$\Gamma^* = \Gamma \setminus \{|\vect P| = 0\}$. 
	Let $f$ be an $\mathbb R$-valued $C^1$ function 
	defined on $\tilde\Gamma$ that is invariant under 
	spatial translations and rotations, i.e.\ 
	$\{P_a, f\} = 0 = \{J_{ab}, f\}$. Then $f$ is a function of 
	$|\vect P|, \hatvect P \cdot \vect s$. \footnote
		{By `$f$ is a function of $|\vect P|, \hatvect P \cdot \vect s$ ' 
		we mean that $f$ depends on phase space only via 
		$|\vect P|, \hatvect P \cdot \vect s$, i.e.\ that there 
		is a $C^1$ function $F\colon U \to \mathbb R$, 
		$U = \left\{(|\vect P|(\gamma), (\hatvect P \cdot \vect s)(\gamma)) : \gamma \in \tilde\Gamma\right\} \subset \mathbb R_+ \times [-S,S]$ 
		satisfying
		\[f(\gamma) = F(|\vect P|(\gamma), (\hatvect P \cdot \vect s)(\gamma)) \; \text{for all} \; \gamma \in \tilde\Gamma.\]}

	\begin{proof}
		$f$ Poisson-commutes with $\vect P$ and $J_{ab}$. 
		Therefore it also Poisson-commutes with $\vect P$ 
		and $\frac{1}{2} {^{(3)}\varepsilon^{abc}} \hat P_a J_{bc} = \hatvect P \cdot \vect s$. 
		Now $\vect P, \hatvect P \cdot \vect s$ form a complete 
		involutive set on $\Gamma^*$ (lemma 
		\ref{lem:mom_spin_CIV}), so since $f$ Poisson-commutes 
		with them, it must be a function of 
		$\vect P, \hatvect P \cdot \vect s$. Since $f$ and 
		$\hatvect P \cdot \vect s$ are rotation invariant (by 
		lemma \ref{lem:mom_spin_trafo}), $f$ must be a 
		function of $|\vect P|, \hatvect P \cdot \vect s$.
	\end{proof}
\end{lem}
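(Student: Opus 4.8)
\textbf{Proof plan for Lemma \ref{lem:invar_function}.}
The plan is to reduce everything to the completeness of the involutive set $\{P_a,\hatvect P\cdot\vect s\}$ established in Lemma \ref{lem:mom_spin_CIV}, together with the elementary transformation behaviour of $\vect P$ and $\vect s$ from Lemma \ref{lem:mom_spin_trafo}. The whole argument is essentially a repackaging of the fact that on a symplectic manifold a $C^1$ function that Poisson-commutes with a complete involutive (i.e.\ maximal isotropic, pointwise) set of functions is, at least locally, a function of those. So the structure will be: first, from the translation invariance $\{P_a,f\}=0$ and the rotation invariance $\{J_{ab},f\}=0$, derive that $f$ Poisson-commutes with $\hatvect P\cdot\vect s = \tfrac12\,{^{(3)}\varepsilon^{abc}}\hat P_a J_{bc}$; this is a direct computation using the Jacobi identity and the fact that both $\vect P$ and each $J_{bc}$ Poisson-commute with $f$, noting that $\hat P_a = P_a/|\vect P|$ is itself built from the $P_a$ and hence commutes with $f$. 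Second, invoke Lemma \ref{lem:mom_spin_CIV}: since $f$ Poisson-commutes with the complete involutive set $\{P_a,\hatvect P\cdot\vect s\}$ on $\Gamma^\ast$ (and $\tilde\Gamma\subset\Gamma^\ast$ is open), $f$ must locally be expressible as a $C^1$ function of $P_1,P_2,P_3,\hatvect P\cdot\vect s$. Third, use the rotation invariance of $f$ and of $\hatvect P\cdot\vect s$ (Lemma \ref{lem:mom_spin_trafo}) to conclude that the dependence on $(P_1,P_2,P_3)$ can only be through the rotation-invariant combination $|\vect P| = \sqrt{\delta^{ab}P_aP_b}$, so that in fact $f$ is a $C^1$ function of $|\vect P|$ and $\hatvect P\cdot\vect s$ alone.

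I would carry out the second step carefully, since `complete involutive set' needs to be turned into the concrete statement `$f$ is a function of these'. The clean way is: a complete involutive set of $n$ functions on a $2n$-dimensional symplectic manifold has, at each point of the open dense set where their differentials are independent, Hamiltonian vector fields spanning a Lagrangian subspace of the tangent space; $\D f$ annihilates all these Hamiltonian vector fields (that is exactly what $\{f,\text{-}\}=0$ against each of them says), hence $\D f$ lies in the annihilator of a Lagrangian subspace, which — because $\omega$ identifies that subspace with its own annihilator — forces $\D f$ to be a linear combination of the $\D(\text{generators})$. Integrating (using that $\tilde\Gamma$ is open, and, if necessary, shrinking to connected pieces or appealing to the explicit product structure $\Gamma = T^\ast\mathbb R^3$ or $T^\ast\mathbb R^3\times\mathsf S^2$ of Theorem \ref{thm:phase_space}, on which the level sets of $(\vect P,\hatvect P\cdot\vect s)$ are connected) then yields the desired $F$. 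In the zero-spin case the argument is the same but shorter: $\{P_a,f\}=0$ alone already makes $f$ a function of $\vect P$ by completeness, and rotation invariance cuts it down to a function of $|\vect P|$.

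The main obstacle I anticipate is purely the careful handling of this second step — turning the abstract completeness statement into a genuine functional-dependence statement with the right regularity ($C^1$) and on the right domain. One has to be slightly cautious about the locus $\hatvect P\cdot\vect s = \pm S$ (where $\hatvect s = \pm\hatvect P$, so the spin sphere direction degenerates), about connectedness of the level sets so that a locally defined $F$ glues to a global one on the image set $U$, and about the open set $\tilde\Gamma$ possibly being disconnected. All of these are handled either by the explicit coordinates of Theorem \ref{thm:phase_space} or by simply stating the conclusion on the natural image set $U\subset\mathbb R_+\times[-S,S]$ as in the footnote; none of it is deep, but it is where the proof needs to be written out rather than waved at. Everything else — the Poisson-bracket identity in step one and the rotation-invariance reduction in step three — is a short routine computation.
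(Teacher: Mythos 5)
Your proof is correct and follows the paper's argument step for step: derive $\{\hatvect P\cdot\vect s,f\}=0$ from the invariances, invoke Lemma \ref{lem:mom_spin_CIV} to express $f$ as a function of $\vect P,\hatvect P\cdot\vect s$, then use rotation invariance (Lemma \ref{lem:mom_spin_trafo}) to reduce the $\vect P$-dependence to $|\vect P|$. The extra care you devote to unpacking ``complete involutive set $\Rightarrow$ functional dependence'' (via the Lagrangian-subspace/annihilator argument and the connectedness of level sets) is exactly the part the paper leaves implicit, and is a reasonable thing to spell out, but it is a fleshing-out of the same route rather than a different one.
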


Let now $\vect X$ be an observable as in the 
statement of theorem \ref{thm:NW}, and consider 
the difference $\vect d := \vect X - \vect x$. Due to 
the assumptions of theorem \ref{thm:NW}, $\vect d$ 
is $C^1$, is invariant under translations 
(i.e.\ $\{d^a, P_b\} = 0$), transforms as a vector under 
spatial rotations (i.e.\ $\{J_{ab}, d^c\} = \delta_a^c d_b - \delta_b^c d_a$) 
and is invariant under time reversal with respect to 
the hyperplane through the origin and orthogonal to 
$u$ (i.e.\ $\vect d \circ T_u = \vect d$).

\begin{lem} \label{lem:invar_vector_no_mom_comp}
	Let $\vect A$ be a $\mathbb R^3$-valued $C^1$ phase 
	space function on a classical elementary system with 
	timelike four-momentum that is invariant under 
	translations, transforms as a vector under spatial 
	rotations and is invariant under time reversal. Then 
	$\vect A \cdot \vect P = 0$.

	\begin{proof}
		Since $\vect P$ is invariant under translations and 
		a vector under rotations, $\vect A \cdot \vect P$ is 
		invariant under translations and rotations. By lemma 
		\ref{lem:invar_function}, $\left. \vect A \cdot \vect P \right|_{\Gamma^*}$ 
		is a function of $|\vect P|, \hatvect P \cdot \vect s$. 
		This means we have
		\begin{equation} \label{eq:pf_lem_scalar_product_mom_1}
			\left. \vect A \cdot \vect P \right|_{\Gamma^*} = F(|\vect P|, \hatvect P \cdot \vect s)
		\end{equation}
		for some function $F\colon \mathbb R_+ \times [-S,S] \to \mathbb R$.

		Now considering time reversal $T_u$, on the one hand 
		we have (using lemma \ref{lem:time_rev})
		\begin{subequations}
		\begin{equation}
			|\vect P| \circ T_u = |\vect P \circ T_u| = |-\vect P| = |\vect P|
		\end{equation}
		and
		\begin{align}
			(\hatvect P \cdot \vect s) \circ T_u &= \left(\frac{1}{2} {^{(3)}\varepsilon^{abc}} \hat P_a J_{bc}\right) \circ T_u \nonumber\\
			&= \frac{1}{2} {^{(3)}\varepsilon^{abc}} (\hat P_a \circ T_u) (J_{bc} \circ T_u) \nonumber\\
			&= \frac{1}{2} {^{(3)}\varepsilon^{abc}} (-\hat P_a) (-J_{bc}) \nonumber\\
			&= \frac{1}{2} {^{(3)}\varepsilon^{abc}} \hat P_a J_{bc} \nonumber\\
			&= \hatvect P \cdot \vect s,
		\end{align}
		\end{subequations}
		implying
		\begin{equation} \label{eq:pf_lem_scalar_product_mom_2}
			F(|\vect P|, \hatvect P \cdot \vect s) \circ T_u = F(|\vect P| \circ T_u, (\hatvect P \cdot \vect s) \circ T_u) = F(|\vect P|, \hatvect P \cdot \vect s).
		\end{equation}
		On the other hand, $\vect A$ is invariant under time 
		reversal while $\vect P$ changes its sign, implying that 
		$(\vect A \cdot \vect P) \circ T_u = - \vect A \cdot \vect P$. 
		Combining this with \eqref{eq:pf_lem_scalar_product_mom_1} 
		and \eqref{eq:pf_lem_scalar_product_mom_2}, we obtain 
		$\left. \vect A \cdot \vect P \right|_{\Gamma^*} = 0$, 
		and continuity implies $\vect A \cdot \vect P = 0$.
	\end{proof}
\end{lem}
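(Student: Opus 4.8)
The plan is to prove lemma \ref{lem:invar_vector_no_mom_comp} by combining two observations: a rotational/translational invariance argument that forces $\vect A \cdot \vect P$ to depend only on the two rotation- and translation-invariant scalars $|\vect P|$ and $\hatvect P \cdot \vect s$, and a parity argument under time reversal that forces this scalar function to be odd, hence zero. The invariance argument is already packaged in lemma \ref{lem:invar_function}: since $\vect P$ is itself translation-invariant and transforms as a vector under rotations (lemma \ref{lem:mom_spin_trafo}), the contraction $\vect A \cdot \vect P = A_a P^a$ is both translation- and rotation-invariant provided $\vect A$ has those transformation properties. First I would verify these two facts by a short Leibniz-rule computation: $\{P_b, A_a P^a\} = \{P_b, A_a\} P^a + A_a \{P_b, P^a\} = 0$ using $\{P_b, A_a\} = 0$ and $\{P_b, P^a\} = 0$; and likewise $\{J_{ab}, A_c P^c\} = (\delta_a^c A_b - \delta_b^c A_a) P^c + A^c(\delta_{ac} P_b - \delta_{bc} P_a) = (A_b P_a - A_a P_b) + (A_a P_b - A_b P_a) = 0$. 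Restricting to the open dense subset $\Gamma^*$ where $|\vect P| \neq 0$, lemma \ref{lem:invar_function} then yields a $C^1$ function $F\colon \mathbb R_+ \times [-S,S] \to \mathbb R$ with $\left.\vect A \cdot \vect P\right|_{\Gamma^*} = F(|\vect P|, \hatvect P \cdot \vect s)$, which is exactly \eqref{eq:pf_lem_scalar_product_mom_1}.

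Next I would run the time-reversal computation. Under $T_u\colon (\vect x, \vect p, \hatvect s) \mapsto (\vect x, -\vect p, -\hatvect s)$ from lemma \ref{lem:time_rev}, we have $\vect P = \vect p$, so $|\vect P| \circ T_u = |\vect P|$, and writing $\hatvect P \cdot \vect s = \tfrac12 {^{(3)}\varepsilon^{abc}} \hat P_a J_{bc}$ (using $J_{bc} = x_b p_c - x_c p_b + {^{(3)}\varepsilon_{bcd}} s^d$, whose antisymmetric part contributes $s$) both factors $\hat P_a$ and $J_{bc}$ flip sign under $T_u$, so their contraction is invariant: this is the chain \eqref{eq:pf_lem_scalar_product_mom_2}. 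Consequently $F(|\vect P|, \hatvect P \cdot \vect s) \circ T_u = F(|\vect P|, \hatvect P \cdot \vect s)$. On the other hand, by hypothesis $\vect A \circ T_u = \vect A$ while $\vect P \circ T_u = -\vect P$, so $(\vect A \cdot \vect P) \circ T_u = -\,\vect A \cdot \vect P$. Equating the two expressions gives $F = -F$ on $\Gamma^*$, hence $\left.\vect A \cdot \vect P\right|_{\Gamma^*} = 0$, and since $\Gamma^*$ is dense in $\Gamma$ and $\vect A \cdot \vect P$ is continuous (indeed $C^1$), we get $\vect A \cdot \vect P = 0$ everywhere.

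I do not expect any serious obstacle here — the lemma is essentially a bookkeeping exercise once lemmas \ref{lem:invar_function}, \ref{lem:mom_spin_trafo} and \ref{lem:time_rev} are in hand. The one point requiring a little care is the density/continuity step at the end: one must check that $\{|\vect P| = 0\}$ has empty interior (true since $\Gamma = T^*\mathbb R^3$ or $T^*\mathbb R^3 \times \mathsf S^2$ and $\vect P = \vect p$) so that a continuous function vanishing on $\Gamma^*$ vanishes on all of $\Gamma$. A second small subtlety is making sure the hypotheses on $\vect A$ (translation invariance, vector under rotations, time-reversal invariance) are exactly what is needed to invoke both lemma \ref{lem:invar_function} and the parity argument — which they are, by design, since $\vect d = \vect X - \vect x$ will later be fed into this lemma with precisely those properties inherited from the axioms of theorem \ref{thm:NW}.
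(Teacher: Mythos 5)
Your proposal is correct and follows essentially the same route as the paper's own proof: invoke lemma \ref{lem:invar_function} to reduce $\vect A\cdot\vect P$ on $\Gamma^*$ to a function of $|\vect P|$ and $\hatvect P\cdot\vect s$, observe that both arguments are even under $T_u$ while $\vect A\cdot\vect P$ is odd, and conclude by continuity on all of $\Gamma$. The extra detail you supply (the Leibniz-rule check of rotation/translation invariance and the remark that $\{|\vect P|=0\}$ has empty interior) is left implicit in the paper but matches it in substance.
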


For zero spin, we can easily complete the proof of the 
Newton--Wigner theorem. Since the difference vector 
$\vect d$ is translation invariant and the $P_a$ form a 
complete involutive set on $\Gamma$, $\vect d$ must be a 
function of $\vect P$. Then since it is a vector under 
rotations, it must be of the form
\begin{equation}
	\vect d(\vect P) = F(|\vect P|) \vect P
\end{equation}
for some function $F$ of $|\vect P|$. Then, since 
according to lemma \ref{lem:invar_vector_no_mom_comp} 
$\vect d \cdot \vect P$ is zero, $\vect d$ is zero. Thus, 
for the spin-zero case, we have proved the Newton--Wigner 
theorem without any use of the condition of 
Poisson-commuting components of the position observable.

For the non-zero spin case, we continue as follows.

\begin{lem} \label{lem:invar_vector_in_basis}
	Let $\vect A$ be a $\mathbb R^3$-valued $C^1$ phase space 
	function on a classical elementary system with timelike 
	four-momentum and non-zero spin that is invariant under 
	translations, transforms as a vector under spatial 
	rotations and satisfies $\vect A \cdot \vect P = 0$. 
	Then it is of the form
	\begin{equation} \label{eq:invar_vector_in_basis}
		\vect A = B \hatvect P \times \vect s + C \hatvect P \times (\hatvect P \times \vect s)
	\end{equation}
	on $\Gamma^* \setminus \{\vect s \parallel \hatvect P\}$, 
	where $B$ and $C$ are $C^1$ functions of $|\vect P|$ 
	and $\hatvect P \cdot \vect s$, i.e.\ $C^1$ functions
	\[B,C \colon \mathbb R_+ \times (-S,S) \to \mathbb R.\]

	\begin{proof}
		For the whole of this proof, we will work on 
		$\tilde\Gamma := \Gamma^* \setminus \{\vect s \parallel \hatvect P\}$. 
		Since evaluated at each point of $\tilde\Gamma$, the 
		$\mathbb R^3$-valued functions 
		$\hatvect P, \hatvect P \times \vect s, \hatvect P \times (\hatvect P \times \vect s)$ 
		form an orthogonal basis of $\mathbb R^3$, and since 
		we have $\vect A \cdot \vect P = 0$, we can write 
		$\vect A$ in the form \eqref{eq:invar_vector_in_basis} with 
		coefficients $B,C$ given by
		\begin{align}
			B &= \frac{\vect A \cdot (\hatvect P \times \vect s)}{|\hatvect P \times \vect s|} \; , \\
			C &= \frac{\vect A \cdot (\hatvect P \times (\hatvect P \times \vect s))}{|\hatvect P \times (\hatvect P \times \vect s)|} \; .
		\end{align}
		Since $\vect A$, $\vect P$ and $\vect s$ are invariant 
		under translations and vectors under rotations, these 
		equations imply that $B,C$ are invariant under 
		translations and rotations. The result follows with 
		lemma \ref{lem:invar_function}.
	\end{proof}
\end{lem}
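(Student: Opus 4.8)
The plan is to exploit the observation that, on $\tilde\Gamma := \Gamma^* \setminus \{\vect s \parallel \hatvect P\}$, the three vectors $\hatvect P$, $\hatvect P \times \vect s$, and $\hatvect P \times (\hatvect P \times \vect s)$ form a pointwise orthogonal frame of $\mathbb R^3$: they are mutually orthogonal by the standard triple-product identities, and all nonzero because $\vect s$ is nowhere parallel to $\hatvect P$ on $\tilde\Gamma$ — indeed $\hatvect P \times (\hatvect P \times \vect s) = (\hatvect P \cdot \vect s)\hatvect P - \vect s$, so $|\hatvect P \times (\hatvect P \times \vect s)| = |\hatvect P \times \vect s| \ne 0$ there. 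First I would expand $\vect A$ in this frame as $\vect A = A\,\hatvect P + B\,\hatvect P \times \vect s + C\,\hatvect P \times (\hatvect P \times \vect s)$, the coefficients being the projections of $\vect A$ onto the respective frame vectors. The hypothesis $\vect A \cdot \vect P = 0$ is equivalent to $\vect A \cdot \hatvect P = 0$ on $\Gamma^*$, which forces $A = 0$ and yields precisely the claimed form \eqref{eq:invar_vector_in_basis}, with
\[
	B = \frac{\vect A \cdot (\hatvect P \times \vect s)}{|\hatvect P \times \vect s|^2}, \qquad
	C = \frac{\vect A \cdot (\hatvect P \times (\hatvect P \times \vect s))}{|\hatvect P \times (\hatvect P \times \vect s)|^2}.
\]

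Next I would check that $B$ and $C$ are $C^1$ functions of $|\vect P|$ and $\hatvect P \cdot \vect s$ alone. They are manifestly $C^1$ on $\tilde\Gamma$, since $\vect A$ is $C^1$ by hypothesis and the frame vectors are nonvanishing there. By assumption $\vect A$ is invariant under spatial translations and transforms as a three-vector under the rotation generators $J_{ab}$; the same holds for $\vect P$ and $\vect s$ by Lemma \ref{lem:mom_spin_trafo}. Since the Poisson bracket acts as a derivation, every Euclidean scalar built from $\vect A$, $\vect P$, $\vect s$ — in particular the numerators and denominators in the formulae for $B$ and $C$ — has vanishing Poisson bracket with $P_a$ and $J_{ab}$, so $B$ and $C$ are themselves translation- and rotation-invariant. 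Lemma \ref{lem:invar_function} then gives that $B$ and $C$ depend on phase space only through $|\vect P|$ and $\hatvect P \cdot \vect s$. It remains to pin down the domain: on $\tilde\Gamma$ one has $|\vect s| = S$ with $\vect s$ not parallel to $\hatvect P$, so $\hatvect P \cdot \vect s$ ranges over the open interval $(-S, S)$, giving $B, C \colon \mathbb R_+ \times (-S, S) \to \mathbb R$ as claimed.

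I do not expect a genuine obstacle here: the substance of the statement lies entirely in Lemma \ref{lem:invar_function}, and the present lemma amounts to the geometric bookkeeping needed to package $\vect A$ into scalar functions to which that lemma applies. The only points calling for a little care are verifying orthogonality and nonvanishing of the frame on $\tilde\Gamma$ (routine vector algebra with the identity above) and confirming that the coefficient functions inherit $C^1$-regularity and invariance — both immediate once one has restricted to the complement of $\{\vect s \parallel \hatvect P\}$.
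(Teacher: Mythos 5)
Your proof is correct and follows the paper's approach exactly: decompose $\vect A$ in the pointwise orthogonal frame $\{\hatvect P,\;\hatvect P\times\vect s,\;\hatvect P\times(\hatvect P\times\vect s)\}$ on $\tilde\Gamma$, use $\vect A\cdot\vect P=0$ to eliminate the $\hatvect P$-component, observe that the remaining coefficients are rotation- and translation-invariant scalars built from $\vect A$, $\vect P$, $\vect s$, and conclude via lemma~\ref{lem:invar_function}. One small remark: your formulas for $B$ and $C$ carry the squared norms $|\cdot|^2$ in the denominators, which is the correct projection coefficient for an orthogonal but non-orthonormal frame; the paper's displayed formulas omit the squares (apparently a typo), though this does not affect the argument since either normalisation produces invariant scalars to which lemma~\ref{lem:invar_function} applies.
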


Now we consider again the difference vector 
$\vect d = \vect X - \vect x$. It satisfies 
$\vect d \cdot \vect P = 0$ by lemma \ref{lem:invar_vector_no_mom_comp}, 
and thus we have
\begin{equation}
	\vect X \cdot \vect P = \vect x \cdot \vect P.
\end{equation}
Since we assume that the components of $\vect X$ 
Poisson-commute with each other and that 
$\{X^a, P_b\} = \delta^a_b$, this implies
\begin{equation}
	\{X^a, \vect x \cdot \vect P\} = \{X^a, \vect X \cdot \vect P\} = X^a.
\end{equation}
Combining this with $\{x^a, \vect x \cdot \vect P\} = x^a$, 
we obtain
\begin{equation}
	\{d^a, \vect x \cdot \vect P\} = d^a.
\end{equation}
On the other hand, for any function $F$ of $\vect P$ and 
$\vect s$, we have
\begin{equation}
	\{F(\vect P, \vect s), \vect x \cdot \vect P\} = \{F(\vect P, \vect s), x^a\} P_a = - \frac{\partial F(\vect P, \vect s)}{\partial P_a} P_a = - |\vect P| \left. \frac{\partial F}{\partial |\vect P|} \right|_{\hatvect P = \mathrm{const.}, \vect s = \mathrm{const.}}.
\end{equation}
This implies
\begin{equation} \label{eq:der_diff}
	\vect d = - |\vect P| \left. \frac{\partial \vect d}{\partial |\vect P|} \right|_{\hatvect P = \mathrm{const.}, \vect s = \mathrm{const.}}.
\end{equation}
Combining lemmas \ref{lem:invar_vector_no_mom_comp} and 
\ref{lem:invar_vector_in_basis}, we know that $\vect d$ 
has the form \eqref{eq:invar_vector_in_basis} on 
$\Gamma^* \setminus \{\vect s \parallel \hatvect P\}$ for two 
functions $B, C \colon \mathbb R_+ \times (-S,S) \to \mathbb R$. 
Thus \eqref{eq:der_diff} implies the two equations
\begin{equation}
	B(|\vect P|, \hatvect P \cdot \vect s) = - |\vect P| \frac{\partial B(|\vect P|, \hatvect P \cdot \vect s)}{\partial |\vect P|}, \; C(|\vect P|, \hatvect P \cdot \vect s) = - |\vect P| \frac{\partial C(|\vect P|, \hatvect P \cdot \vect s)}{\partial |\vect P|}
\end{equation}
on $\mathbb R_+ \times (-S,S)$. These equations 
determine the $|\vect P|$ dependence of $B$ and $C$; 
they must be proportional to $|\vect P|^{-1}$. However, 
for $\vect d$ to be $C^1$ on the whole of $\Gamma$, in 
fact for \eqref{eq:invar_vector_in_basis} not to diverge 
as $|\vect P|\to 0$ even when coming from a \emph{single} 
direction $\hatvect P$, we then need $B$ and $C$ to vanish. 
Continuity implies $\vect d = 0$ on all of $\Gamma$. This 
finishes the proof of the Newton--Wigner theorem.
\end{proof}

\section{Conclusion}

In this chapter we have studied the localisation 
problem for classical system whose phase space 
is a symplectic manifold. We focussed on the 
Newton--Wigner position observable and asked for precise 
characterisations of it in order to gain additional 
understanding, over and above that already known from 
its practical use for the solution of concrete problems 
of motion, e.g., in general-relativistic astrophysics 
\cite{Steinhoff:2011,Schaefer.Jaranowski:2018}. 
We proved two theorems that we believe advance our 
understanding in the desired direction: 
first we showed how Fleming's geometric scheme 
\cite{Fleming:1965a} in combination with the 
characterisation of worldlines through \acronym{SSC}s (Spin 
Supplementary Conditions) allows to give a 
precise meaning to, and proof of, the fact 
that the Newton--Wigner position is the unique 
centre of spin. Given that interpretation, it 
also offers an insight as to why the Newton--Wigner 
\acronym{SSC} uses a somewhat unnatural looking `hybrid' 
combination $f = u + \frac{P}{mc}$, where $u$ is 
`external' or `kinematical', and $P$ is `internal' 
or `dynamical'. Then, restricting to elementary 
systems, i.e.\ systems whose phase space admits a 
transitive action of the proper orthochronous 
Poincaré group, we proved again a uniqueness result 
to the effect that the Newton--Wigner observable 
is the unique phase space function whose 
components satisfy the `familiar' Poisson 
relations, provided it is continuously 
differentiable, time-reversal invariant, and transforms 
as a vector under spatial rotations. These properties 
seem to be the underlying reason for the distinguished 
rôle it plays in solution strategies like those of 
\cite{Steinhoff:2011,Schaefer.Jaranowski:2018}, 
despite the fact that on a more general level of 
theorisation other choices (characterised by other 
\acronym{SSC}s) are often considered more appropriate; 
see, e.g., \cite{puetzfeld15}. We believe 
that our results add a conceptually clear and 
mathematically precise Hamiltonian underpinning 
of what the choice of the Newton--Wigner observable 
entails, at least in a special-relativistic 
context or, more generally, in general-relativistic 
perturbation theory around Minkowski space.


\chapter{Conclusion}
\label{chap:conclusion}

In this thesis, we have developed and analysed systematic 
methods for the description of quantum-mechanical systems 
to post-Newtonian gravitational fields.
As explained in the introduction, we see the virtue 
of our systematic calculations in their firm rooting
in explicitly spelled out principles, that leave no 
doubt concerning the questions of consistency and 
completeness of the obtained `relativistic corrections'. 
This, in our opinion, distinguishes our work from previous 
ones by other authors, who were also concerned with the 
coupling of composite particle quantum systems --~like 
atoms or molecules~-- to external gravitational fields,
who phrase their account of `relativistic corrections' 
in terms of semi-classical notions, like smooth 
worldlines and comparisons of their associated lengths 
(i.e.\ `proper time' and `redshift'); e.g.\ 
\cite{dimopoulos08,zych11,pikovski15,roura18,giese19,loriani19,zych19}. 
In our opinion, answers to the fundamental question of 
gravity--matter coupling in quantum mechanics should 
not be based on \emph{a priori} restricted states that 
imply a semi-classical behaviour of some of the 
(factorising) degrees of freedom. Rather, they should 
apply to all states in an equally valid fashion.
\enlargethispage{2\baselineskip}

In chapter \ref{chap:PNSE}, we have shown how to 
systematically derive a Schrödinger equation with 
post-Newtonian correction terms describing a single 
quantum particle in a general post-Newtonian curved 
background spacetime by means of a \acronym{WKB}-like 
formal expansion of the minimally coupled \KGe. We 
extended this method to account for, in principle, 
post-Newtonian terms of arbitrary orders in $c^{-1}$, 
although it gets recursive at higher orders, making it 
computationally more difficult to handle than methods 
based on formal quantisation of the classical description 
of the particle. Nevertheless, we believe this scheme to 
be better suited for concrete predictions, since it is 
more firmly based on first principles and also more 
systematic than \emph{ad hoc} canonical quantisation or 
path integral procedures as employed widely in the literature. 
For example, no operator ordering ambiguities arise; instead, 
the \acronym{WKB} method can be seen as \emph{predicting} 
the ordering.

Comparing the Klein--Gordon expansion method to canonical 
quantisation, we have found that in the case of a general 
metric, even at lowest post-Newtonian order, the two 
procedures lead to slightly different quantum Hamiltonians, 
\emph{independent of ordering ambiguities}\footnote
	{At least if only simple symmetrising procedures 
	are allowed for as ordering schemes in canonical 
	quantisation, see the discussion at the end of section 
	\ref{sec:WKB_trafo_flat}.}.
For the concrete case of the metric of the 
Eddington--Robertson \acronym{PPN} test theory, the 
Hamiltonians obtained from the two methods differ in a 
term including the Eddington--Robertson parameter $\gamma$, 
depending on the ordering scheme employed in canonical 
quantisation. Although the relevant term is proportional 
to the Laplacian of the Newtonian potential, i.e. (in 
lowest order) to the mass density generating the 
gravitational field, and thus is irrelevant in physical 
situations concerning the \emph{outside} of the generating 
matter distribution, this example shows that for the 
interpretation of tests of general relativity with quantum 
systems, the method used to derive the quantum Hamiltonian 
plays a decisive rôle.
\enlargethispage{\baselineskip}

For the case of stationary background metrics, without 
employing any expansion of the metric, we 
showed that up to linear order in spatial momenta, the 
Hamiltonians obtained from canonical quantisation 
and from the Klein--Gordon equation agree. 
In particular, this means that the lowest-order coupling 
to the `gravitomagnetic' field components $g^{0a}$ is 
independent of the gravity--quantum matter coupling method.

Concerning the applicability of the \acronym{WKB}-like expansion
method for concrete calculations, it could be an interesting 
question for future research if and how the transformation of 
the Hamiltonian from the Klein--Gordon inner product to an 
$\mathrm L^2$-scalar product --~be it flat or with respect 
to the induced metric measure~-- 
can be implemented more systematically, not relying on 
direct calculations with the already-computed Hamiltonian.

Turning to the description of composite systems, in chapter 
\ref{chap:atom_in_gravity} we extended the calculation of 
\cite{sonnleitner18} of a Hamiltonian describing 
an electromagnetically interacting 
two-particle system so as to include post-Newtonian gravity 
as described by the Eddington--Robertson \acronym{PPN} metric. 
Starting from first principles, we 
performed a post-Newtonian expansion in terms 
of the inverse velocity of light that led to 
leading-order corrections comprising special- 
and general-relativistic effects. The former were 
fully encoded in \cite{sonnleitner18}, but the 
latter are new. As in \cite{sonnleitner18} we neglected 
all terms of third and higher order in $c^{-1}$, which 
physically means that we neglected radiation-reaction 
and also that we avoided obstructions on the applicability 
of the Hamiltonian formalism that result from the 
infamous `no-interaction theorem' \cite{currie63,sudarshan16},
whose impact only starts at the 6th order in a 
$c^{-1}$ expansion \cite{martin78}.

Similar to the gravity-free case, we now \emph{derived}
the result that the centre of mass motion of the system 
can be viewed as that of a `composite point particle', including 
in its mass the internal energy of the system. This result may 
be anticipated in a heuristic fashion on semi-classical 
grounds, but, as seen, its proper derivation requires 
some efforts. We stress once more 
that for this interpretation it was crucial to express the 
Hamiltonian in terms of the physical space-time metric. 
As a result, our work lends some justification to current 
experimental proposals in atom interferometry that so far were 
based on these heuristic ideas, on the basis of which completeness 
of the relativistic effects could not be reliably judged; 
e.g.\ \cite{zych11,pikovski15,roura18,giese19,loriani19,zych19}. 

However, in order to obtain a fully solid framework for 
the discussion of atom interferometry in post-Newtonian 
gravity, stopping at the Hamiltonian is not enough: 
one has to describe the whole experimental situation 
solely in terms of operationally defined quantities. Such 
a systematic operational analysis of atom interferometers 
under gravity, which is now possible based on the Hamiltonian 
we obtained, we see as the most important future application 
of the results of this thesis. This may lead to interesting 
new possibilities of testing gravitational effects with 
quantum systems: in particular it might enable the measuring 
of post-Newtonian parameters, i.e.\ proper tests of general 
relativity, with laboratory experiments.

\appendix
\numberwithin{equation}{chapter}

\chapter{Calculation of the classical Hamiltonian of a free particle} \label{app:details_class_Ham}

Here, we will give a full exposition of the calculation of 
the classical Hamiltonian of a free particle in a curved 
spacetime in $3+1$ decomposition.
\enlargethispage*{2\baselineskip}

In $3+1$ decomposition, spacetime is foliated into 
three-dimensional spacelike Cauchy surfaces that are 
labelled by a `foliation parameter' $t$. We employ 
\mbox{adapted} \mbox{coordinates} $(x^0 = ct, x^a)$ where $x^a$ are 
coordinates on these Cauchy surfaces. This gives a 
\mbox{decomposition} of the spacetime metric as
\begin{equation}
	g_{ab} = {^{(3)}g_{ab}}, \quad 
	g_{0a} = {^{(3)}g_{ab}} \beta^b =: \beta_a, \quad
	g_{00} = -\alpha^2 + {^{(3)}g_{ab}} \beta^a \beta^b \; ,
\end{equation}
where $^{(3)}g$ is the induced metric on the Cauchy 
surfaces, $\beta$ is the shift vector field and $\alpha$ 
is the lapse function. Geometrically speaking, lapse and 
shift arise from decomposing the `time evolution' vector 
field\footnote
	{Denoting the embeddings defining the foliation as 
	$\mathcal E_t\colon \Sigma \to M, \; t \in \mathbb R$ where 
	$\Sigma$ is the abstract Cauchy surface, the time evolution 
	vector field is given as the derivation 
	\[\left.\frac{\partial}{\partial t}\right|_{\mathcal E_s(q)} f 
		:= \left.\frac{\D}{\D s'} f(\mathcal E_{s'}(q))\right|_{s' = s}\]
	for $q\in \Sigma$, $s\in\mathbb R$ and $f \in C^\infty(M)$; 
	i.e.\ this vector field is independent of the choice of 
	coordinates and depends just on the foliation, even if it 
	was expressed above as a coordinate vector field. 
	\cite[(17.43)]{giulini14}\looseness-1}
$\partial_0 = c^{-1} \partial/\partial t$ into its 
components tangential and normal to the Cauchy surfaces as
\begin{equation}
	\frac{1}{c} \frac{\partial}{\partial t} = \alpha n + \beta,
\end{equation}
where $n$ is the future-directed unit normal to the Cauchy 
surfaces and $\beta$ is the tangential component 
\cite[(17.44)]{giulini14}.

Parametrising the worldline of a free particle by 
$t$, its Lagrangian (compare the classical action 
\eqref{eq:class_action}) in these coordinates is
\begin{equation} \label{eq:class_Lagr}
	L = -mc \sqrt{-g_{\mu\nu} \dot x^\mu \dot x^\nu} 
	= -mc \left(\alpha^2 c^2 - {^{(3)}g_{ab}} \beta^a \beta^b c^2 
		- 2 c \, {^{(3)}g_{ab}} \dot x^a \beta^b 
		- {^{(3)}g_{ab}} \dot x^a \dot x^b\right)^{1/2},
\end{equation}
where a dot denotes differentiation with respect to $t$.
 
From this, we compute the momentum $p_a$ conjugate to $x^a$ 
to be
\begin{equation}
	p_a = \frac{\partial L}{\partial \dot x^a} 
	= \frac{mc}{(\ldots)^{1/2}}(c \beta_a + {^{(3)}g_{ab}} \dot x^b).
\end{equation}
Contracting with the inverse $^{(3)}g^{ab}$ of $^{(3)}g_{ab}$, 
we obtain
\begin{equation} \label{eq:dotx}
	\dot x^a = \frac{(\ldots)^{1/2}}{mc} \, {^{(3)}g^{ab}} p_b - c \beta^a \; .
\end{equation}

To fully express the velocity $\dot x^a$ in terms of 
the momentum $p_a$, we have to express $(\ldots)^{1/2} 
= \left(\alpha^2 c^2 - {^{(3)}g}(\dot{\vect x} + c \vect \beta, 
\dot{\vect x} + c \vect \beta)\right)^{1/2}$ in terms of 
$p_a$. Using \eqref{eq:dotx}, we have
\begin{align}
	^{(3)}g(\dot{\vect x} + c \vect \beta, \dot{\vect x} + c \vect \beta) 
	&= \frac{(\ldots)}{m^2c^2} \, {^{(3)}g^{ab}} p_a p_b \nonumber\\
	&= \frac{\alpha^2 c^2 - {^{(3)}g}(\dot{\vect x} + c \vect\beta, 
		\dot{\vect x} + c \vect\beta)}{m^2 c^2} \, {^{(3)}g^{ab}} p_a p_b \; .
\end{align}
Writing $^{(3)}g^{-1}(\vect p,\vect p) 
:= {^{(3)}g^{ab}} p_a p_b$, this is equivalent to
\begin{align}
	^{(3)}g(\dot{\vect x} + c \vect\beta, \dot{\vect x} + c \vect\beta) 
	&= \frac{\alpha^2 \, {^{(3)}g^{-1}}(\vect p,\vect p)}{m^2} 
		\, \frac{1}{1 + {^{(3)}g^{-1}}(\vect p,\vect p) / (m^2 c^2)} 
		\nonumber\\
	&= \frac{c^2 \alpha^2 \, {^{(3)}g^{-1}}(\vect p,\vect p)}{m^2c^2 
		+ {^{(3)}g^{-1}}(\vect p,\vect p)}
\end{align}
Using this, we get
\begin{equation} \label{eq:sqrt_expression}
	(\ldots)^{1/2} 
	= \left(\alpha^2 c^2 - \frac{c^2 \alpha^2 \, 
		{^{(3)}g^{-1}}(\vect p,\vect p)}{m^2 c^2 
			+ {^{(3)}g^{-1}}(\vect p,\vect p)}\right)^{1/2} 
	\kern-1em= \frac{m c^2 \alpha}{\left[m^2 c^2 
		+ {^{(3)}g^{-1}}(\vect p,\vect p)\right]^{1/2}} \; .
\end{equation}
Inserting \eqref{eq:sqrt_expression} into \eqref{eq:dotx}, 
we can express the velocities in terms of the momenta as
\begin{equation} \label{eq:dotx_momentum}
	\dot x^a 
	= \frac{\alpha c}{\left[m^2 c^2 
			+ {^{(3)}g^{-1}}(\vect p,\vect p)\right]^{1/2}} \, 
		{^{(3)}g^{ab}} p_b - c \beta^a \; .
\end{equation}
\enlargethispage{2\baselineskip}

Using \eqref{eq:sqrt_expression} and \eqref{eq:dotx_momentum}, 
the Hamiltonian corresponding to the Lagrangian 
\eqref{eq:class_Lagr} is
\begin{align}
	H &= p_a \dot x^a - L \nonumber\\
	&= \frac{\alpha c}{\left[m^2 c^2 
				+ {^{(3)}g^{-1}}(\vect p,\vect p)\right]^{1/2}} \, 
			{^{(3)}g^{-1}}(\vect p,\vect p) 
		- c \beta^a p_a 
		+ \frac{m^2 c^3 \alpha}{\left[m^2 c^2 
			+ {^{(3)}g^{-1}}(\vect p,\vect p)\right]^{1/2}} \nonumber\\
	&= \alpha c \left[m^2 c^2 + {^{(3)}g^{-1}}(\vect p,\vect p)\right]^{1/2} 
		\kern-.8em- c \beta^a p_a \; .
\end{align}
Rewriting this in terms of the components of the spacetime 
metric using the relations $g^{00} = -\alpha^{-2}$, 
$g^{0a} = \alpha^{-2} \beta^a$, $g^{ab} = {^{(3)}g^{ab}} 
- \alpha^{-2} \beta^a \beta^b$, the Hamiltonian reads
\begin{equation}
	H = \frac{1}{\sqrt{-g^{00}}} \, m c^2 \left[1 
			+ \left(g^{ab} - \frac{1}{g^{00}} g^{0a} g^{0b}\right) 
				\frac{p_a p_b}{m^2 c^2}\right]^{1/2} 
		\kern-.8em+ \frac{c}{g^{00}} g^{0a} p_a \; .
\end{equation}


\chapter{Christoffel symbols of the Eddington--Robertson \acronym{PPN} metric}
\label{app:Christoffel}

Here, we compute the Christoffel symbols
\begin{equation}
	\Gamma^\rho_{\mu\nu} = \frac{1}{2} g^{\rho\sigma} (\partial_\mu g_{\nu\sigma} + \partial_\nu g_{\mu\sigma} - \partial_\sigma g_{\mu\nu})
\end{equation}
of the Eddington--Robertson \acronym{PPN} metric as given 
by \eqref{eq:ER_PPN_metric}, \eqref{eq:ER_PPN_metric_inv}, 
keeping full track of all details of the $c^{-1}$ expansion.
\enlargethispage{2\baselineskip}

\begin{align}
	\Gamma^0_{00} &= \frac{1}{2} g^{00} \partial_0 g_{00} + \Or(c^{-7}) \nonumber\\
	&= \frac{1}{2} \left(\frac{1}{g_{00}} + \Or(c^{-6})\right) \partial_0 g_{00} + \Or(c^{-7}) \nonumber\\
	&= \frac{1}{2c} \partial_t {\underbrace{\ln(-g_{00})}_{\mathrlap{= 2\frac{\phi}{c^2} + 2(\beta-1) \frac{\phi^2}{c^4} + \Or(c^{-6})}}} + \Or(c^{-7}) \nonumber\\
	&= \frac{\partial_t \phi}{c^3} + 2(\beta-1) \frac{\phi \partial_t \phi}{c^5} + \Or(c^{-7})
\end{align}

\begin{align}
	\Gamma^0_{0a} &= \frac{1}{2} g^{00} (\cancel{\partial_0 g_{0a}} + \partial_a g_{00} - \cancel{\partial_0 g_{0a}}) + \Or(c^{-7}) \nonumber\\
	&= \frac{1}{2} \partial_a \ln(-g_{00}) + \Or(c^{-6}) \nonumber\\
	&= \frac{\partial_a \phi}{c^2} + 2(\beta-1) \frac{\phi \partial_a \phi}{c^4} + \Or(c^{-6})
\end{align}

\begin{align}
	\Gamma^0_{ab} &= \frac{1}{2} g^{00} ({\underbrace{\partial_a g_{0b} + \partial_b g_{0a}}_{= \Or(c^{-5})}} - \partial_0 g_{ab}) + \Or(c^{-7}) \nonumber\\
	&= \frac{1}{2} \left(-1 + \Or(c^{-2})\right) \frac{-1}{c} \partial_t \left(\left(1 - 2\gamma \frac{\phi}{c^2}\right) \delta_{ab} + \Or(c^{-4})\right) + \Or(c^{-5}) \nonumber\\
	&= -\gamma \frac{\partial_t \phi}{c^3} \delta_{ab} + \Or(c^{-5})
\end{align}

\begin{align}
	\Gamma^a_{00} &= \frac{1}{2} g^{ab} (2{\underbrace{\partial_0 g_{0b}}_{\mathclap{= \Or(c^{-6})}}} - \partial_b g_{00}) + \Or(c^{-7}) \nonumber\\
	&= \frac{1}{2} \left(\left(1 + 2\gamma \frac{\phi}{c^2}\right) \delta^{ab} + \Or(c^{-4})\right) \partial_b \left(1 + 2 \frac{\phi}{c^2} + 2\beta \frac{\phi^2}{c^4} + \Or(c^{-6})\right) + \Or(c^{-6}) \nonumber\\
	&= \delta^{ab}\left(\frac{\partial_b \phi}{c^2} + 2(\beta + \gamma) \frac{\phi\partial_b \phi}{c^4}\right) + \Or(c^{-6})
\end{align}

\begin{align}
	\Gamma^a_{0b} &= \frac{1}{2} g^{ac} (\partial_0 g_{bc} + {\underbrace{\partial_b g_{0c} - \partial_c g_{0b}}_{\mathclap{= \Or(c^{-5})}}}) + \Or(c^{-7}) \nonumber\\
	&= \frac{1}{2} \left(\delta^{ac} + \Or(c^{-2})\right) \frac{1}{c} \partial_t \left(\left(1 - 2\gamma \frac{\phi}{c^2}\right) \delta_{bc} + \Or(c^{-4})\right) + \Or(c^{-5}) \nonumber\\
	&= -\gamma \delta^a_b \frac{\partial_t \phi}{c^3} + \Or(c^{-5})
\end{align}

\begin{align}
	\Gamma^a_{bc} &= \frac{1}{2} g^{ad} (\partial_b g_{cd} + \partial_c g_{bd} - \partial_d g_{bc}) + \Or(c^{-7}) \nonumber\\
	&= \frac{1}{2} \left(\delta^{ad} + \Or(c^{-2})\right) (-2\gamma) \left(\delta_{cd} \frac{\partial_b \phi}{c^2} + \delta_{bd} \frac{\partial_c \phi}{c^2} - \delta_{bc} \frac{\partial_d \phi}{c^2} + \Or(c^{-4})\right) + \Or(c^{-7}) \nonumber\\
	&= -\gamma \frac{\delta^a_c \partial_b \phi + \delta^a_b \partial_c \phi - \delta_{bc} \delta^{ad} \partial_d \phi}{c^2} + \Or(c^{-4})
\end{align}
The last result implies $g^{bc} \Gamma^a_{bc} = \delta^{ab} \gamma\frac{\partial_b \phi}{c^2} + \Or(c^{-4})$, in turn implying
\begin{align} \label{eq:trace_Christoffel_spatial}
	g^{\mu\nu} \Gamma^a_{\mu\nu} &= g^{00} \Gamma^a_{00} + g^{bc} \Gamma^a_{bc} + \Or(c^{-8}) \nonumber\\
	&= (-1 + \Or(c^{-2})) \left(\delta^{ab} \frac{\partial_b \phi}{c^2} + \Or(c^{-4})\right) + \delta^{ab} \gamma\frac{\partial_b \phi}{c^2} + \Or(c^{-4}) \nonumber\\
	&= (\gamma-1) \delta^{ab} \frac{\partial_b \phi}{c^2} + \Or(c^{-4}).
\end{align}


\chapter{Sign conventions for generators of special orthogonal groups}
\label{app:sign_convention_so}

Here we discuss our choice of sign convention for the 
generators of special orthogonal groups, in particular 
the Lorentz group.\enlargethispage*{3\baselineskip}

Let $V$ be a finite-dimensional real vector space 
with a non-degenerate, symmetric bilinear form 
$g\colon V \times V \to \mathbb R$. Note that we do 
not assume anything about the signature of $g$. 
We introduce the `musical isomorphism'
\begin{equation} \label{eq:musical_isom}
	V \to V^*, v \mapsto v^\flat := g(v, \cdot)
\end{equation}
induced by $g$.

We fix a basis $\{\E_a\}_a$ of $V$. As bases for its 
dual vector space $V^*$ we distinguish its natural 
dual basis $\{\theta^a\}_a$, where $\theta^a(\E_b)=\delta^a_b$,
and the ($g$-dependent) image of $\{\E_a\}_a$ under 
\eqref{eq:musical_isom}, which is just $\{e^\flat_a\}_a$, 
where $e^\flat_a=g_{ab}\theta^b$, so that $e^\flat_a(\E_b)=g_{ab}$.
The reason for this will become clear now. 

For each $a,b \in \{1, \dots, \dim V\}$ we introduce 
the endomorphism
\begin{equation} \label{eq:lie_basis-1}
	B_{ab} := \E_a \otimes \E_b^\flat - \E_b \otimes \E_a^\flat \in \mathrm{End}(V)
\end{equation}
which satisfies 
\begin{equation}
	g(v, B_{ab}(w)) = g(v,\E_a) g(\E_b,w) - g(v,\E_b) g(\E_a,w) = -g(B_{ab}(v), w).
\end{equation}
This means that $B_{ab}$ is anti-self-adjoint 
with respect to $g$ and hence that it is an 
element of the Lie algebra $\mathfrak{so}(V,g)$ 
of the Lie group $\mathsf{SO}(V,g)$ of special 
orthogonal transformations of $(V,g)$:
\begin{equation}
	B_{ab} \in \mathfrak{so}(V,g).
\end{equation}
As $B_{ab} = -B_{ba}$, it is the set 
$\{B_{ab}: 1\le a < b\le \dim V\}$ which is linearly 
independent and of the same dimension as 
$\mathfrak{so}(V,g)$. Hence this set forms a basis of 
$\mathfrak{so}(V,g)$ so that any 
$\omega \in \mathfrak{so}(V,g)$ can be uniquely 
written in the form
\begin{equation}
	\omega = \sum_{1\le a < b \le \dim V} \omega^{ab} B_{ab} = \frac{1}{2} \omega^{ab} B_{ab} \; ,
\end{equation}
where 
\begin{equation} \label{eq:lie_basis_anti_self_adj}
	\omega^{ab} = -\omega^{ba} \; .
\end{equation}
This representation can easily be compared 
to the usual one in terms of the metric-independent basis 
$\{\E_a\otimes\:\theta^b : 1\le a,b \le \dim V\}$ of 
$\mathrm{End}(V)$ in the following way: for 
$\omega=\omega^a_{\hphantom{a}c}\, \E_a \otimes \: \theta^c$,
we have $\omega\in \mathfrak{so}(V,g)$ if and only if 
\begin{equation} \label{eq:end_basis_anti_self_adj}
	\omega^a_{\hphantom{a}c} \, g^{cb} = - \omega^b_{\hphantom{a}c} \, g^{ca} \; .
\end{equation}
It is the obvious simplicity of \eqref{eq:lie_basis_anti_self_adj}
as opposed to \eqref{eq:end_basis_anti_self_adj}
as conditions for $\omega\in\mathrm{End}(V)$
being contained in  
$\mathfrak{so}(V,g)\subset\mathrm{End}(V)$ that 
makes it easier to work with the basis 
$\E_a\otimes \E_b^\flat$ of $\mathrm{End}(V)$ rather 
than $\E_a\otimes\:\theta^b$. Note that the components 
of $\omega$ with respect to the two bases considered 
above are connected by the equation
\begin{equation}
	\omega^{ab} = \omega^a_{\hphantom{a}c} \, g^{cb} \; .
\end{equation}

The basis elements $B_{ab}$ satisfy 
the commutation relations
\begin{align}
	[B_{ab}, B_{cd}] &= g_{bc} B_{ad} + g_{ad} B_{bc} - g_{ac} B_{bd} - g_{bd} B_{ac} \nonumber\\
	&= g_{bc} B_{ad} + \text{(antisymm.)},
\end{align}
where `antisymm.' denotes antisymmetrisation as shown 
in the first line of the equation.

From now on, we will assume the basis $\{\E_a\}_a$ to be 
orthonormal. For notational convenience, 
for $a,b\in \{1, \dots, \dim V\}$ we define
\begin{equation}
	\varepsilon_{ab} := g_{aa} g_{bb} = \pm 1
\end{equation}
which has the value $+1$ if $g_{aa} = g(\E_a, \E_a)$ 
and $g_{bb} = g(\E_b, \E_b)$ have the same sign, 
and $-1$ if they have opposite signs\footnote
	{Note that repeated indices on the same 
	level, i.e.\ both up or both down, are not to be 
	summed over.}.

We now want to compute the exponential  
$\exp(\alpha B_{ab}) \in \mathsf{SO}(V,g)$. 
At first, we note that
\begin{align}
	(B_{ab})^2 
	&= - g_{bb} \E_a \otimes \E_a^\flat 
	 - g_{aa} \E_b \otimes \E_b^\flat \nonumber\\
	&= - \varepsilon_{ab} \; \mathrm{Pr}_{ab} \; ,
\end{align}
where $\mathrm{Pr}_{ab} := \mathrm{Pr}_{\mathrm{span}\{\E_a, \E_b\}}$ 
denotes the $g$-orthogonal projector onto the plane 
$\mathrm{span}\{\E_a, \E_b\}$ in $V$.\footnote
	{In the general case of two linearly independent 
	vectors $v,w \in V$, not necessarily orthonormal, the 
	orthogonal projector is given by
		\begin{equation*}
			\mathrm{Pr}_{\mathrm{span}(v,w) = \frac{1}{g(v,v) g(w,w) - (g(v,w))^2}} \left[ g(w,w) \; v \otimes v^\flat + g(v,v) \; w \otimes w^\flat - g(v,w) \; (v \otimes w^\flat + w \otimes v^\flat) \right],
		\end{equation*}
		implying
		\begin{align*}
			(v \otimes w^\flat - w \otimes v^\flat)^2 &= - g(w,w) \; v \otimes v^\flat - g(v,v) \; w \otimes w^\flat + g(v,w) \; (v \otimes w^\flat + w \otimes v^\flat) \nonumber\\
			&= - \left[ g(v,v) g(w,w) - (g(v,w))^2 \right] \mathrm{Pr}_{\mathrm{span}(v,w)}.
		\end{align*}
	}
Using this and $B_{ab} \circ \mathrm{Pr}_{ab} = B_{ab}$, 
the exponential series evaluates to
\begin{align}
	\exp(\alpha B_{ab}) 
	&= (\mathrm{id}_V - \mathrm{Pr}_{ab}) 
	 + \sum_{k=0}^\infty \frac{1}{(2k)!} \, 
	   \alpha^{2k} (-\varepsilon_{ab})^k \, 
	   \mathrm{Pr}_{ab} \nonumber\\ &\quad+ 
	   \sum_{k=0}^\infty \frac{1}{(2k+1)!} \, 
	   \alpha^{2k+1} (-\varepsilon_{ab})^k \, 
	    B_{ab} \circ \mathrm{Pr}_{ab} \nonumber\\
	&= (\mathrm{id}_V - \mathrm{Pr}_{ab})
	 + \left\{ \!\! \begin{aligned}
		&\cos(\alpha) \, \mathrm{id}_V + \sin(\alpha) \, 
		 B_{ab} \, , & \varepsilon_{ab} = +1 \\
		&\cosh(\alpha) \, \mathrm{id}_V + \sinh(\alpha) \, 
		 B_{ab} \, , & \varepsilon_{ab} = -1
	\end{aligned} \right\} \circ \mathrm{Pr}_{ab} \; .
\end{align}
Geometrically, this transformation is either a 
rotation by angle $\alpha$ (for $\varepsilon_{ab} = +1$) 
or a boost by rapidity $\alpha$ (for 
$\varepsilon_{ab} = -1$) in the plane 
$\mathrm{span}\{\E_a, \E_b\}$. The direction of the 
transformation depends on the signs of $g_{aa}, g_{bb}$:
\begin{itemize}
	\item $\varepsilon_{ab} = +1$:
		\begin{enumerate}[label=(\roman*)]
			\item $g_{aa} = g_{bb} = +1$: We have 
				$B_{ab}(\E_a) = -\E_b, B_{ab}(\E_b) = \E_a$. 
				Thus, $\exp(\alpha B_{ab})$ is a rotation 
				by $\alpha$ \emph{from $\E_b$ towards $\E_a$}.
			\item $g_{aa} = g_{bb} = -1$: We have 
				$B_{ab}(\E_a) = \E_b, B_{ab}(\E_b) = -\E_a$. 
				Thus, $\exp(\alpha B_{ab})$ is a rotation 
				by $\alpha$ \emph{from $\E_a$ towards $\E_b$}.
		\end{enumerate}
	\item $\varepsilon_{ab} = -1$:
		\begin{enumerate}[label=(\roman*)]
			\item $g_{aa} = +1, g_{bb} = -1$: We have 
				$B_{ab}(\E_a) = -\E_b, B_{ab}(\E_b) = -\E_a$. 
				Thus, $\exp(\alpha B_{ab})$ is a boost 
				by $\alpha$ \emph{`away' from $\E_a + \E_b$}.
			\item $g_{aa} = -1, g_{bb} = +1$: We have 
				$B_{ab}(\E_a) = \E_b, B_{ab}(\E_b) = \E_a$. 
				Thus, $\exp(\alpha B_{ab})$ is a boost 
				by $\alpha$ \emph{`towards' $\E_a + \E_b$}.
		\end{enumerate}
\end{itemize}
\enlargethispage{\baselineskip}

Now we will apply the preceding considerations to the case 
of (the `difference' vector space of) Minkowski spacetime, 
where for now \emph{we leave open the signature convention 
for the metric} (either $(+{--}-)$ or $(-{++}+)$). We work 
with respect to a positively oriented orthonormal basis 
$\{\E_\mu\}_{\mu = 0, \dots, 3}$ where $\E_0$ is timelike. 
Latin indices will denote spacelike directions.

In the case of `mostly minus' signature $(+{--}-)$, 
$B_{ab}$ generates rotations from $\E_a$ towards $\E_b$ and 
$B_{a0}$ generates boosts (with respect to $\E_0$) in 
direction of $\E_a$. In the case of `mostly plus' signature 
$(-{++}+)$, $B_{ba} = - B_{ab}$ generates rotations from 
$\E_a$ towards $\E_b$ and $B_{0a} = - B_{a0}$ generates 
boosts (with respect to $\E_0$) in direction of $\E_a$.

Thus, since we want to use the notation $J_{ab}$ for the 
spacelike rotational generator generating rotations from 
$\E_a$ towards $\E_b$, we have to set
\begin{equation} \label{eq:sign_convention_so}
	J_{\mu\nu} = \begin{cases}
		B_{\mu\nu} & \text{for $(+{--}-)$ signature},\\
		-B_{\mu\nu} & \text{for $(-{++}+)$ signature}
	\end{cases}
\end{equation}
for the Lorentz generators. Adopting this convention, 
boosts in direction of $\E_a$ are then generated 
by $J_{a0}$. The commutation relations for the 
$J_{\mu\nu}$ are
\begin{equation}
	[J_{\mu\nu}, J_{\rho\sigma}] = \begin{cases}
		\eta_{\mu\sigma} J_{\nu\rho} + \text{(antisymm.)} & \text{for $(+{--}-)$ signature},\\
		\eta_{\mu\rho} J_{\nu\sigma} + \text{(antisymm.)} & \text{for $(-{++}+)$ signature},
	\end{cases}
\end{equation}
and general Lorentz algebra elements 
$\omega \in \mathrm{Lie}(\mathcal L)$ can be written as
\begin{equation}
	\omega = \pm \frac{1}{2} \omega^{\mu\nu} J_{\mu\nu} \; \text{with} \; \omega^{\mu\nu} = \omega^\mu_{\hphantom{\mu}\rho} \; \eta^{\rho\nu}
\end{equation}
in terms of their components $\omega^\mu_{\hphantom{\mu}\rho}$ 
as endomorphisms, where the upper/lower sign holds for 
$(+{--}-)$/$(-{++}+)$ signature.


\chapter{Notes on the adjoint representation}
\label{app:adj_rep}

Here we wish to make a few remarks and collect a 
few formulae concerning the adjoint and co-adjoint 
representation of the general linear group of a vector 
space $V$.

In the defining representation on $V$, an element 
$\Lambda \in \mathsf{GL}(V)$ is given in terms of the 
basis $\{\E_a\}_a$ by the coefficients 
$\Lambda^a_{\phantom{a}_b}$, where
\begin{equation} \label{eq:coeff_def_rep}
	\Lambda \E_a = \Lambda^b_{\phantom{b}a} \E_b \; .
\end{equation}  
This defines a left action of $\mathsf{GL}(V)$ on 
$V$. The corresponding left action of 
$\mathsf{GL}(V)$ on the dual space $V^*$ is given 
by the inverse-transposed, i.e.\  
$\mathsf{GL}(V) \times V^* \to V^*$, 
$(\Lambda,\alpha) \mapsto (\Lambda^{-1})^\top
\alpha := \alpha \circ \Lambda^{-1}$. For the  
basis $\{\theta^a\}_a$ of $V^*$ dual to the 
basis $\{\E_a\}_a$ this means
\begin{equation}
	\theta^a\circ \Lambda^{-1} = (\Lambda^{-1})^a_{\phantom{a}b} \, \theta^b \; .
\end{equation}  
In contrast, for the basis $\{\E_a^\flat\}_a$
of $V^*$, this reads in general
\begin{equation}
	{\E_b^\flat} \circ \Lambda^{-1} = g^{ac} g_{bd} (\Lambda^{-1})^d_{\phantom{d}c} \E_a^\flat \; ,
\end{equation}  
which for isometries $\Lambda \in \mathsf{O}(V,g)$ simply becomes
\begin{equation} \label{eq:coeff_def_rep_dual_isom}
	{\E_b^\flat} \circ \Lambda^{-1} = \Lambda^a_{\phantom{a}b} \E_a^\flat \; .
\end{equation}  

The adjoint representation of $\mathsf{GL}(V)$ on 
$\mathrm{End}(V)\cong V\otimes V^*$ or any Lie subalgebra
of $\mathrm{End}(V)$ is by conjugation, which for 
our basis \eqref{eq:lie_basis-1} implies, using \eqref{eq:coeff_def_rep} and \eqref{eq:coeff_def_rep_dual_isom},
\begin{equation}
	\mathrm{Ad}_\Lambda B_{ab}
	= \Lambda \circ B_{ab} \circ \Lambda^{-1}
	= \Lambda^c_{\phantom{c}a} \Lambda^d_{\phantom{d}b} \, B_{cd}
	\quad \text{for} \; \Lambda \in \mathsf{O}(V,g).
\end{equation}

The adjoint 
representation of the inhomogeneous 
group $\mathsf{GL}(V)\ltimes V$ on its Lie algebra 
$\mathrm{End}(V)\oplus V$ is given by, for any 
$X \in \mathrm{End}(V)$ and $y \in V$,
\begin{equation}
	\mathrm{Ad}_{(\Lambda,a)}(X,y) =
	\left( \Lambda\circ X\circ\Lambda^{-1} ,
	\Lambda y - (\Lambda \circ X \circ \Lambda^{-1}) a \right).
\end{equation}
In the main text we will use this formula for 
$(\Lambda,a)$ being replaced by its inverse 
$(\Lambda,a)^{-1} = (\Lambda^{-1}, -\Lambda^{-1}a)$:
\begin{equation}
	\mathrm{Ad}_{(\Lambda,a)^{-1}}(X,y) =
	\left( \Lambda^{-1} \circ X \circ \Lambda ,
	\Lambda^{-1} y + (\Lambda^{-1} \circ X) a \right)
\end{equation}  
Applied to the basis vectors separately, 
i.e.\ to $(X,y) = (0,\E_b)$ and $(X,y) = (B_{bc},0)$,
for $\Lambda \in \mathsf{O}(V,g)$ we get 
\begin{subequations} \label{eq:ad_rep}
\begin{alignat}{2}
	&\mathrm{Ad}_{(\Lambda,a)^{-1}}(0,\E_b)
	&&= \left( 0, (\Lambda^{-1})^c_{\phantom{c}b} \E_c \right), \\
	&\mathrm{Ad}_{(\Lambda,a)^{-1}}(B_{bc},0)
	&&= \left( (\Lambda^{-1})^d_{\phantom{d}b} (\Lambda^{-1})^e_{\phantom{e}c} \, B_{de} ,
	- a_b (\Lambda^{-1})^d_{\phantom{d}c} \E_d
	+ a_c (\Lambda^{-1})^d_{\phantom{d}b} \E_d \right)
\end{alignat}  
\end{subequations}
where $a_b := \E_b^\flat(a) = g_{bc} a^c$ in the second 
equation. From these equations we immediately 
deduce \eqref{eq:coadjoint_rep_components} in the 
case of four spacetime dimensions (greek indices) 
and signature mostly plus, in which case 
$J_{\mu\nu} = -B_{\mu\nu}$ according to
\eqref{eq:sign_convention_so}.

\printbibliography


\chapter*{Curriculum Vitae}

Philip Klaus Schwartz, born 12.07.1994 in Langenhagen

\section*{Education and professional experience}
\begin{tabular}{r|p{12.2cm}}
	2000--2004 & Freie Evangelische Schule Hannover\\
	2004--2011 & Leibnizschule Hannover, Abitur 2011\\
	2011--2015 & Studies of physics and mathematics at the Leibniz University Hannover, 2014 Bachelor of Science in Physics\\
	2015--2016 & `Part \acronym{III} of the Mathematical Tripos' at the University of Cambridge, 2016 Master of Advanced Study in Applied Mathematics\\
	since 2016 & \emph{Wissenschaftlicher Mitarbeiter} (scientific employee) / PhD student at the Institute for Theoretical Physics of the Leibniz University Hannover
\end{tabular}

\section*{Scholarships}
\begin{tabular}{r|p{12.2cm}}
	2014--2016 & \emph{Studienstiftung des Deutschen Volkes}\\
	2015--2016 & \emph{Trinity Studentship in Mathematics} from Trinity College, Cambridge
\end{tabular}

\end{document}